\DeclareMathOperator{\DREG}{\mathsf{DREG}}
\DeclareMathOperator{\REG}{\mathsf{REG}}
\DeclareMathOperator{\DFA}{\mathsf{DFA}}
\DeclareMathOperator{\NFA}{\mathsf{NFA}}
\DeclareMathOperator{\eNFA}{\emptyword-\mathsf{NFA}}
\DeclareMathOperator{\DRX}{\mathsf{DRX}}
\DeclareMathOperator{\RX}{\mathsf{RX}}
\DeclareMathOperator{\fDRX}{\mathsf{DRX_{vsf}}}
\DeclareMathOperator{\fRX}{\mathsf{RX_{vsf}}}
\DeclareMathOperator{\TMFA}{\mathsf{TMFA}}
\DeclareMathOperator{\DTMFA}{\mathsf{DTMFA}}
\DeclareMathOperator{\fDTMFA}{\mathsf{DTMFA_{mcf}}}
\DeclareMathOperator{\fTMFA}{\mathsf{TMFA_{mcf}}}
\DeclareMathOperator{\DTMFAacc}{\mathsf{DTMFA^{acc}}}
\DeclareMathOperator{\DTMFArej}{\mathsf{DTMFA^{rej}}}
\DeclareMathOperator{\TMFAacc}{\mathsf{TMFA^{acc}}}
\DeclareMathOperator{\TMFArej}{\mathsf{TMFA^{rej}}}
\DeclareMathOperator{\fTMFArej}{\mathsf{TMFA^{rej}_{mcf}}}
\DeclareMathOperator{\fTMFAacc}{\mathsf{TMFA^{acc}_{mcf}}}
\DeclareMathOperator{\ellDTMFA}{\ell-\DTMFA}
\DeclareMathOperator{\MFA}{\mathsf{MFA}}
\DeclareMathOperator{\PSPACE}{\mathsf{PSPACE}}
\DeclareMathOperator{\NPSPACE}{\mathsf{NPSPACE}}
\DeclareMathOperator{\NP}{\mathsf{NP}}
\DeclareMathOperator{\coNP}{\mathsf{coNP}}
\DeclareMathOperator{\trapstate}{\mathsf{[trap]}}
\DeclareMathOperator{\open}{\mathtt{o}}
\DeclareMathOperator{\close}{\mathtt{c}}
\DeclareMathOperator{\reset}{\mathtt{r}}
\DeclareMathOperator{\unchanged}{\diamond}
\DeclareMathOperator{\opened}{\mathtt{O}}
\DeclareMathOperator{\closed}{\mathtt{C}}
\DeclareMathOperator{\prefequi}{\equiv}
\DeclareMathOperator{\nprefequi}{\not \equiv}
\newcommand{\ta}{\ensuremath{\mathtt{a}}}
\newcommand{\tb}{\ensuremath{\mathtt{b}}}
\newcommand{\tc}{\ensuremath{\mathtt{c}}}
\newcommand{\td}{\ensuremath{\mathtt{d}}}
\DeclareMathOperator{\df}{:=}
\DeclareMathOperator{\ror}{\vee}
\DeclareMathOperator*{\bigror}{\bigvee}
\DeclareMathOperator*{\bigland}{\bigwedge} 
\DeclareMathOperator*{\biglor}{\bigvee}
\DeclareMathOperator{\sepA}{\#}
\DeclareMathOperator{\sepB}{\$}
\DeclareMathOperator{\sepC}{\cent}
\newcommand{\lang}{\mathcal{L}}
\newcommand{\langcl}{\mathcal{L}}
\newcommand{\emptyword}{\varepsilon}
\newcommand{\mdif}{\setminus}
\newcommand{\glush}{\mathcal{M}}
\DeclareMathOperator{\var}{\mathsf{var}}
\newcommand{\bind}[2]{\langle#1\colon #2\rangle}
\newcommand{\rr}[1]{\&#1}
\newcommand{\refl}{\mathcal{R}}
\newcommand{\deref}{\mathcal{D}}
\DeclareMathOperator{\unmark}{\mathsf{unmark}}
\newcommand{\vop}[1]{{[_{#1}}}
\newcommand{\vcl}[1]{{]_{#1}}}
\newcommand{\markpos}[1]{\tilde{#1}}
\newcommand{\mGamma}{\markpos{\Gamma}}
\newcommand{\mSigma}{\markpos{\Sigma}}
\newcommand{\mXi}{\markpos{\Xi}}
\newcommand{\ograph}[1]{G_{\markpos{#1}}}
\newcommand{\onodes}[1]{V_{\markpos{#1}}}
\newcommand{\oedges}[1]{E_{\markpos{#1}}}
\DeclareMathOperator{\src}{\mathsf{src}}
\DeclareMathOperator{\snk}{\mathsf{snk}}
\newcommand{\cheat}{\mathsf{ndet}}
\newcommand{\descitem}[1]{\textbf{\textsf{#1}}}
\newcommand{\ECreg}{\mathsf{EC^{reg}}}
\newcommand{\const}[2]{\mathsf{C}_{#1}(#2)}
\newcommand{\hx}{\hat{x}}
\newcommand{\alpf}[1]{\alpha_{\mathsf{#1}}}
\newcommand{\Lcopy}{L_{\mathsf{copy}}}
\newcommand{\Lex}{L_{\mathsf{ex}}}
\newcommand{\clex}{\overline{\Lex}}
\DeclareMathOperator{\tin}{\mathsf{in}}
\DeclareMathOperator{\tout}{\mathsf{out}}
\newcommand{\alphasq}{\alpha_{\textsf{sq}}}
\newcommand{\fibword}{F_{\omega}}
\newcommand{\netvar}[1]{\mathsf{net}_{#1}}
\DeclareMathOperator{\gfirst}{\mathsf{first}}
\DeclareMathOperator{\gfollow}{\mathsf{follow}}
\DeclareMathOperator{\glast}{\mathsf{last}}
\newcommand{\alphar}{\alpha_{\refl}}
\newcommand{\glab}{\nu}
\DeclareMathOperator{\gmin}{\mathsf{min}}
\definecolor{lboro-purple}{RGB}{81, 45, 109}
\definecolor{lboro-pink}{RGB}{229,0,125}
\newtheorem{lemma}{Lemma}
\newtheorem{theorem}{Theorem}
\newtheorem{proposition}{Proposition}
\newtheorem{definition}{Definition}
\newtheorem{example}{Example}
\newtheorem{corollary}{Corollary}
\newtheorem{remark}{Remark}
\title{Deterministic Regular Expressions With~Back-References\thanks{This work represents an extended version of the paper ``Deterministic Regular Expressions with Back-References'' presented at STACS 2017 and published in LIPICS (\url{http://dx.doi.org/10.4230/LIPIcs.STACS.2017.33}).}}
\author[1]{Dominik D. Freydenberger}
\author[2]{Markus L. Schmid}
\affil[1]{Loughborough University, United Kingdom, \texttt{ddfy@ddfy.de}}
\affil[2]{University of Trier, Germany, \texttt{MLSchmid@MLSchmid.de}}
\begin{document}

\maketitle

\begin{abstract}
Most modern libraries for regular expression matching allow back-references (i.\,e., repetition operators) that substantially increase expressive power, but also lead to intractability. In order to find a better balance between expressiveness and tractability, we combine these with the notion of determinism for regular expressions used in XML DTDs and XML Schema. This includes the definition of a suitable automaton model, and a generalization of the Glushkov construction. We demonstrate that, compared to their non-deterministic superclass, these deterministic regular expressions with back-references have desirable algorithmic properties (i.\,e., efficiently solvable membership problem and some decidable problems in static analysis), while, at the same time, their expressive power exceeds that of deterministic regular expressions without back-references. 
\end{abstract}

%!TEX root=det_SIAM.tex
\section{Introduction}
%Regular expressions were introduced in 1956 by Kleene~\cite{kle:rep}  and quickly found wide use in both theoretical and applied computer science. While the theoretical interpretation of regular expressions remains mostly unchanged (as expressions that describe exactly the class of regular languages), modern applications use variants that vary greatly in expressive power and algorithmic properties. This paper tries to find common ground between two of these variants with opposing approaches to the balance between expressive power and tractability.

Regular expressions were introduced in 1956 by Kleene~\cite{kle:rep} and quickly found wide use in both theoretical and applied computer science, including applications in bioinformatics~\cite{mou:bio}, programming languages~\cite{wal:pro}, model checking~\cite{var:fro}, and XML schema languages~\cite{spe:xml}. While the theoretical interpretation of regular expressions remains mostly unchanged (as expressions that describe exactly the class of regular languages), modern applications use variants that vary greatly in expressive power and algorithmic properties. This paper tries to find common ground between two of these variants with opposing approaches to the balance between expressive power and tractability.

%\todomcom{Changed your smallskip-nonindent-textsf to unnumbered subsubsection (also, paragraph headings in non-serif clashes with section headings in serif).}
\subsubsection*{REGEX}
%\smallskip \noindent \textbf{\textsf{REGEX.}} 
The first variant that we consider are \emph{regex}, regular expressions that are extended with a \emph{back-reference operator}. This operator is used in almost all modern programming languages (like e.\,g.\ Java, PERL, and .NET). For example, the regex $\bind{x}{(\ta\ror\tb)^*}\cdot \rr{x}$ defines $\{ww \mid w\in\{\ta,\tb\}^*\}$,
as $(\ta\ror\tb)^*$ can create a $w\in\{\ta,\tb\}^*$, which is then stored in the variable $x$ and repeated with the reference $\rr{x}$. Hence,  back-references allow to define non-regular languages; but with the side effect that the membership problem is $\NP$-complete (cf.\ Aho~\cite{aho:alg}). 

Regex were first examined from a theoretical point of view by Aho~\cite{aho:alg}, but without fully defining the semantics. There were various proposals for semantics, of which we mention the first by C\^ampeanu, Salomaa, Yu~\cite{cam:afo}, and the recent one by Schmid~\cite{sch:cha}, which is the basis for this paper. 
Apart from defining the semantics, there was work on the expressive power~\cite{cam:afo,car:one,fre:doc},  the static analysis~\cite{car:one,fre:ext,fre:splog}, and the tractability of the membership problem (investigated in terms of a strongly restricted subclass of regex)~\cite{FerSch2015, FerSchVil2015}.
They have also been compared to related models in database theory, e.\,g.\ graph databases~\cite{bar:ext-c,fre:exp} and information extraction~\cite{fag:spa, fre:splog}.

%\smallskip \noindent \textbf{\textsf{Deterministic Regular Expressions.}} 
\subsubsection*{Deterministic Regular Expressions} 
The second variant, \emph{deterministic regular expressions} (also known as \emph{1-unambiguous regular expressions}), uses an opposite approach, and achieves a more efficient membership problem than regular expressions by defining only a strict subclass of the regular languages.

Intuitively, a regular expression is deterministic if, when matching a word from left to right with no lookahead, %each symbol of the word uniquely  
it is always clear where in the expression the next symbol must be matched. 
This property has a characterization via the \emph{Glushkov construction} that converts every regular expression $\alpha$ into a  (potentially non-deterministic) finite automaton $\glush(\alpha)$, by treating each terminal position in $\alpha$ as a state. Then $\alpha$ is deterministic if $\glush(\alpha)$ is deterministic. As a consequence, the membership problem for deterministic regular expressions can be solved more efficiently than for regular expressions in general (more details can be found in~\cite{gro:det}). Hence, in spite of their limited expressive power, deterministic regular expressions are used in actual applications: Originally defined for the ISO~standard for SGML (see Br\"uggemann-Klein and Wood~\cite{bru:one}), they are a central part of the W3C~recommendations on XML~DTDs~\cite{w3c:dtd} and XML~Schema~\cite{w3c:xsd} (see Murata et al.~\cite{mur:tax}). Following the original paper~\cite{bru:one}, deterministic regular expressions have been studied extensively. 
Aspects include computing the Glushkov automaton and deciding the membership problem (e.\,g.~\cite{bru:reg,gro:det,pon:new}), static analysis (cf.~\cite{mar:com}), 
deciding whether a regular language is deterministic (e.\,g.~\cite{cze:dec,gro:det,lu:dec}), 
closure properties and descriptional complexity~\cite{los:clo}, and learning (e.\,g.~\cite{bex:lea}). 
One noteworthy extension are  counter operators (e.\,g.~\cite{gel:reg,gro:det,lat:def}), which we briefly address in Section~\ref{sec:conclusions}.

%\smallskip \noindent \textbf{\textsf{The Best of Two Worlds: Deterministic REGEX.}} 
\subsubsection*{Deterministic REGEX}
%\subsubsection*{The Best of Two Worlds: Deterministic REGEX} 
The goal of this paper is finding common ground between these two variants, by combining the capability of backreferences with the concept of determinism in regular expressions. Generally, our definition of determinism for regex mimics that for classical regular expressions, i.\,e., we define a Glushkov-like conversion from regex into suitable automata and then say that a regex is deterministic if and only if its Glushkov-automaton is. The thus defined class of deterministic regex is denoted by $\DRX$, and $\langcl(\DRX)$ refers to the corresponding language class.\par
The underlying automaton model for this approach is a slight modification of the memory automata ($\MFA$) proposed by Schmid~\cite{sch:cha} as a characterisation for the class of regex-languages. More precisely, we introduce \emph{memory automata with trap-state} ($\TMFA$), for which the deterministic variant, the $\DTMFA$, is better suited for complementation than the deterministic $\MFA$.\par
As indicated by the title of this subsection, it is our hope to preserve, on the one hand, the increased expressive power provided by backreferences, and, on the other hand, the tractability that usually comes with determinism. While it is not surprising that $\DRX$ do not achieve these goals to the full extent, a comprehensive study reveals that their expressive power clearly exceeds that of (deterministic) regular expressions, while, at the same time, being much more tractable than the full class of regex. \par
We shall now outline our main results according to these two aspects, and we start with the algorithmic features of $\DRX$:
%\todo[inline, color=red!30]{MS: Should we also refer to the Theorems of the paper in these summaries of the main results? Not doing this gives us more leeway in stating the results here.}
%\todom{So, not referring is less work, and gives us more leeway? So let's keep things like they are.}
\begin{enumerate}
\item\label{mainAlgoResultsOne} We can decide in time $O(|\Sigma||\alpha|k)$, whether a regex $\alpha$ with $k$ variables and over alphabet $\Sigma$ is deterministic (if so, we can construct its Glushkov-automaton in the same time).
\item\label{mainAlgoResultsTwo} We can decide in time $O(|\Sigma||\alpha|n+ k|w|)$, whether $w \in \Sigma^*$ can be generated by an $\alpha \in \DRX$ with $k$ variables and $n$ occurrences of terminal symbols or variable references.
\item\label{mainAlgoResultsThree} The intersection-emptiness problem for $\DRX$ is undecidable, but in $\PSPACE$ for variable-star-free\footnote{A regex is variable-star-free if each of its sub-regexes under a Kleene-star contains no variable operations (see Section~\ref{sec:static}).} $\DRX$ (as well as the inclusion and equivalence problem). 
%The intersection-emptiness problem, the containment problem and the equivalence problem for variable-star-free\footnote{A regex is variable-star-free if each of its sub-regexes under a Kleene-star contains no variable operations (see Section~\ref{sec:static}).} $\DRX$ are in $\PSPACE$. 
\end{enumerate}

Results~\ref{mainAlgoResultsOne}~and~\ref{mainAlgoResultsTwo} are a consequence of the Glushkov-construction for regex. In view of the $\NP$-hardness of the membership problem for the full class of regex, result~\ref{mainAlgoResultsTwo} demonstrates that the membership problem for $\DRX$ can be solved almost as efficiently as for deterministic regular expression (which is possible in time $O(|\Sigma||\alpha|+|w|)$ \cite{bru:reg, pon:new} or $O(|\alpha|+ |w| \cdot \log \log |\alpha|)$ \cite{gro:det}). The positive results of~\ref{mainAlgoResultsThree} are based on encoding the intersection-emptiness problem in the existential theory of concatenation with regular constraints. With respect to~\ref{mainAlgoResultsThree}, we observe that it is in fact the determinism, which makes the inclusion and equivalence problem for variable-star-free $\DRX$ decidable, since these problems are known to be undecidable for non-deterministic variable-star-free regex (see~\cite{fre:ext}).
%note that the inclusion and equivalence problem are undecidable for non-deterministic variable-star-free regex (see~\cite{fre:ext}). 
Moreover, result~\ref{mainAlgoResultsThree} also yields a $\PSPACE$ minimization algorithm for variable-star-free $\DRX$ (enumerate all smaller candidates and check equivalence).\par
Throughout the paper, there are numerous examples that demonstrate the expressive power of $\DRX$. We also provide a tool for proving non-expressibility, which does not use a pumping argument. In fact, it can be shown that, despite the automata-theoretic characterisation of deterministic regex, $\langcl(\DRX)$ contains infinite languages that cannot be pumped (in the sense as regular languages are ``pumpable''). In addition, we show the following results with respect to $\DRX$'s expressiveness:
\begin{enumerate}
%\item Despite the automata-theoretic characterisation of $\DRX$, $\langcl{\DRX}$ contains infinite languages that are not ``pumpable'' (i.\,e., repeating factors of sufficiently large words arbitrarily often).
\item\label{mainExprResultsOne} There are regular languages that are not in $\langcl(\DRX)$.
\item\label{mainExprResultsTwo} $\langcl(\DRX)$ contains regular languages that are not deterministic regular.
\item\label{mainExprResultsThree} $\langcl(\DRX)$ contains all unary regular languages.
\item\label{mainExprResultsFour} $\langcl(\DRX)$ is not closed under union, concatenation, reversal,  complement, homomorphism, inverse homomorphism, intersection, and intersection with deterministic regular languages.
\end{enumerate}

While result~\ref{mainExprResultsOne} fits to the situation that there are also regular languages that are not deterministic regular languages (and, thus, points out that our definition of determinism restricts regex in a similar way as classical regular expressions), result~\ref{mainExprResultsTwo} points out that in the case of regex this restriction is not as strong. With respect to result~\ref{mainExprResultsThree}, note that not all unary regular languages are deterministic regular (see~\cite{los:clo}). From a technical point of view, it is worth mentioning that in some of our proofs, we use subtleties of the back-reference operator in novel ways. Intuitively speaking, defining and referencing variables under a Kleene-star allows for shifting around factors between different variables (even arbitrarily often in loops), which makes it possible to abuse variables for generating seemingly non-deterministic regular structures in a deterministic way, instead of generating non-regular structures.

As a last strong point in favour of our definition of determinism for regex, we examine a natural   relation of the definition of determinism (i.\,e., requiring determinism only with respect to a constant look-ahead). We prove that checking whether a regex is deterministic under this more general notion is intractable (even for the class of variable-star-free regex).

Summing up, from the perspective of deterministic regular expressions, we propose a natural extension that significantly increases the expressive power, while still having a tractable membership problem. From a regex point of view, we restrict regex to their deterministic core, thus obtaining a tractable subclass. Hence, the authors intend this paper as a starting point for further work, as it opens a new direction on research into making regex tractable. 

%\smallskip \noindent \textbf{\textsf{Organisation of the paper.}} 

\subsubsection*{Structure of the paper} 

In Section~\ref{sec:prelim}, we define some basic concepts and the syntax and semantics of regex; in addition, due to the fact that this aspect is often neglected in the existing literature, which caused misunderstandings, we also provide a thorough discussion of existing variants of regex in theory and practice. Section~\ref{sec:TMFA} is devoted to the definition of memory automata with trap state and their deterministic subclass. In addition, we provide an extensive automata-theoretic toolbox in this section that, besides showing interesting facts about $\TMFA$, shall play an important role for our further results. Next, in Section~\ref{sec:detregex}, we define deterministic regex and provide the respective Glushkov-construction. The expressive power of $\DRX$ (and of related classes resulting from different variants of $\TMFA$) is investigated in Section~\ref{sec:expressive} and the decidability and hardness results of the static analysis of $\DRX$ are provided in Section~\ref{sec:static}. Finally, in Section~\ref{sec:relax}, we discuss the above mentioned relaxation of determinism, and we conclude the paper by giving some conclusions in Section~\ref{sec:conclusions}.

\section{Preliminaries}\label{sec:prelim} 
We use $\emptyword$ to denote the \emph{empty word}. The subset and proper subset relation are denoted by $\subseteq$ and $\subset$, respectively. Let $\Sigma$ be a finite terminal alphabet (unless otherwise noted, we assume $|\Sigma|\geq 2$) and let $\Xi$ be an infinite variable alphabet with $\Xi\cap\Sigma = \emptyset$. For a word $w \in \Sigma^*$ and for every $i$, $1 \leq i \leq |w|$, $w[i]$ denotes the symbol at position $i$ of $w$. We define $w^0 \df \emptyword$ and $w^{i+1}\df w^i \cdot w$ for all $i\geq 0$, and, for $w= a_1\cdots a_n$ with $a_i\in\Sigma$, let  $w^{m+\frac{i}{n}} = w^m \cdot a_1 \cdots a_i$ for all $m\geq 0$ and all $i$ with $0\leq i \leq n$. A $v \in \Sigma^*$ is a \emph{factor} of $w$ if there exist $u_1, u_2\in \Sigma^*$ with $w = u_1 v u_2$. If $u_2=\emptyword$, $v$ is also a $\emph{prefix}$ of $w$. 

We use the notions of deterministic and non-deterministic finite automata ($\DFA$ and $\NFA$) like~\cite{hop:int}. If an $\NFA$ can have $\emptyword$-transitions, we call it an $\eNFA$. Given a class $\mathcal{C}$ of language description mechanisms (e.\,g., a class of automata or regular expressions), we use $\langcl(\mathcal{C})$ to denote the class of all languages $\lang(C)$ with $C\in\mathcal{C}$. The \emph{membership problem for $\mathcal{C}$} is defined as follows: Given a $C\in\mathcal{C}$ and a $w\in\Sigma^*$, is $w\in\lang(C)$?\par
Next, we define the syntax and semantics of regular expressions with backreferences.
%\subsection{Regex}\label{sec:regex}
\begin{definition}[Syntax of regex]\label{def:rx}
The set $\RX$ of \emph{regex} over $\Sigma$ and $\Xi$ is recursively defined as follows:\medskip\\
%	We define $\RX$, the set of \emph{regex} over $\Sigma$ and $\Xi$, recursively:\\
%	\begin{description}
%		\item[Terminals and $\emptyword$:] $a\in\RX$ and $\var(a)=\emptyset$ for every $a\in(\Sigma\cup\{\emptyword\})$.
%		\item[Variable reference:] $\rr{x}\in\RX$ and $\var(\rr{x})= \{x\}$ for every $x\in \Xi$.
%		\item[Concatenation:] $(\alpha\cdot\beta)\in \RX$ and $\var(\alpha\cdot\beta)=\var(\alpha)\cup\var(\beta)$ if $\alpha,\beta\in \RX$.
%		\item[Disjunction:] $(\alpha\ror\beta)\in \RX$ and $\var(\alpha\ror\beta)=\var(\alpha)\cup\var(\beta)$  if $\alpha,\beta\in \RX$. 
%		\item[Kleene plus:]$(\alpha^+)\in \RX$ and $\var(\alpha^+)=\var(\alpha)$ if $\alpha\in\RX$. 
%		\item[Variable binding:]$\bind{x}{\alpha} \in \RX$ and $\var(\bind{x}{\alpha}) = \var(\alpha)\cup\{x\}$  if $\alpha\in \RX$ with $x\in\Xi\mdif\var(\alpha)$.
%	\end{description}
\textbf{Terminals and $\emptyword$:} $a\in\RX$ and $\var(a)=\emptyset$ for every $a\in(\Sigma\cup\{\emptyword\})$.\\
\textbf{Variable reference:} $\rr{x}\in\RX$ and $\var(\rr{x})= \{x\}$ for every $x\in \Xi$.\\
\textbf{Concatenation:} $(\alpha\cdot\beta)\in \RX$ and $\var(\alpha\cdot\beta)=\var(\alpha)\cup\var(\beta)$ if $\alpha,\beta\in \RX$.\\
\textbf{Disjunction:} $(\alpha\ror\beta)\in \RX$ and $\var(\alpha\ror\beta)=\var(\alpha)\cup\var(\beta)$  if $\alpha,\beta\in \RX$. \\
\textbf{Kleene plus:} $(\alpha^+)\in \RX$ and $\var(\alpha^+)=\var(\alpha)$ if $\alpha\in\RX$. \\
\textbf{Variable binding:}$\bind{x}{\alpha} \in \RX$ and $\var(\bind{x}{\alpha}) = \var(\alpha)\cup\{x\}$  if $\alpha\in \RX$ with $x\in\Xi\mdif\var(\alpha)$.\medskip\\
In addition, we allow $\emptyset$ as a regex (with $\var(\emptyset)=\emptyset$), but we do not allow $\emptyset$ to occur in any other regex. An $\alpha\in\RX$ with $\var(\alpha)=\emptyset$ is called a  \emph{proper regular expression}, or just \emph{regular expression}. We use $\REG$ to denote the set of all  regular expressions. We add and omit parentheses freely, as long as the meaning remains clear. We use the Kleene star $\alpha^*$ as shorthand for $\emptyword\ror\alpha^+$, and $A$ as shorthand for $\bigror_{a\in A}a$ for non-empty  $A\subseteq \Sigma$.
\end{definition}

We define the semantics of regex using the \emph{ref-words} (short for \emph{reference words}) by Schmid~\cite{sch:cha}. A ref-word is a word over $(\Sigma\cup\Xi\cup\Gamma)$, where $\Gamma\df\{\vop{x}, \vcl{x},\mid x\in\Xi\}$. Intuitively, the symbols $\vop{x}$ and $\vcl{x}$ mark the beginning and the end of the match that is stored in the variable $x$, while an occurrence of $x$ represents a reference to that variable. Instead of defining the language of a regex~$\alpha$ directly, we first treat $\alpha$ as a generator of ref-words by defining its \emph{ref-language} $\refl(\alpha)$ as follows. 
\begin{itemize}
\item For every $\alpha \in \Sigma\cup\{\emptyword\}$, $\refl(\alpha)\df\{\alpha\}$.
\item For every $x\in \Xi$, $\alpha \in \RX$,
\begin{itemize}
\item $\refl(\rr{x})\df \{x\}$, 
\item $\refl(\bind{x}{\alpha})\df (\vop{x} \refl(\alpha)\vcl{x})$.
\end{itemize}
\item For every $\alpha, \beta \in \RX$, 
\begin{itemize}
\item $\refl(\alpha\cdot\beta)\df\refl(\alpha)\cdot\refl(\beta)$, 
\item $\refl(\alpha\ror\beta)\df\refl(\alpha)\cup\refl(\beta)$, and 
\item $\refl(\alpha^+)\df\refl(\alpha)^+$.
\end{itemize}
\end{itemize}
%If $\alpha\in \Sigma\cup\{\emptyword\}$, $\refl(\alpha)\df\{\alpha\}$; and $\refl(\rr{x})\df \{x\}$ for all $x\in \Xi$. Furthermore, $\refl(\alpha\cdot\beta)\df\refl(\alpha)\cdot\refl(\beta)$, $\refl(\alpha\ror\beta)\df\refl(\alpha)\cup\refl(\beta)$, and $\refl(\alpha^+)\df\refl(\alpha)^+$. Finally, $\refl(\bind{x}{\alpha})\df (\vop{x} \refl(\alpha)\vcl{x})$. 
In particular, if $\alpha$ is a regular expression, then $\lang(\alpha)=\refl(\alpha)$. An alternative definition of the ref-language would be $\refl(\alpha) \df \lang(\alpha_{\refl})$, where $\alpha_{\refl}$ is the proper regular expression obtained from $\alpha$ by replacing each sub-regex $\bind{x}{\beta}$ by $\vop{x} \beta_{\mathcal{R}} \vcl{x}$, and each $\rr{x}$ by $x$.

Intuitively speaking, every occurrence of a variable $x$ in some $r \in \refl(\alpha)$ functions as a pointer to the next factor $\vop{x} v \vcl{x}$ to the left of this occurrence (or to $\emptyword$ if no such factor exists). In this way, a ref-word $r$ compresses a word over $\Sigma$, the so-called dereference $\deref(r)$ of $r$, which can be obtained by replacing every variable occurrence $x$ by the corresponding factor $v$ (note that $v$ might again contain variable occurrences, which need to be replaced as well), and removing all symbols $\vop{x},\vcl{x}\in\Gamma$ afterwards. See~\cite{sch:cha} for a more detailed definition, or the following Example~\ref{refWordExample} for an illustration. Finally, the language of a regex $\alpha$ is defined by $\lang(\alpha)\df \{\deref(r)\mid r\in \refl(\alpha)\}$. 
\begin{example}\label{refWordExample}
Let $\alpha\df \bigl(\bind{x}{(\ta \ror \tb)^+}\rr{x}\big)^+ $. Then 
\begin{equation*}
\refl(\alpha) = \{\vop{x}w_1\vcl{x}\cdot x \cdots \vop{x}w_n\vcl{x}\cdot  x \mid n\geq 1, w_i\in\{\ta, \tb\}^+\}\,,
\end{equation*}
or, equivalently, $\lang(\alpha) = (\Lcopy)^+$, with $\Lcopy \df \{ww\mid w\in\{\ta,\tb\}^+\}$. \par
An interesting example is the regex $\alphasq \df \bigl(\bind{x}{\rr{y}}\bind{y}{\rr{x}\cdot \ta}\bigr)^*$ with ref-language $\refl(\alphasq) = \{r_i \mid i\geq 0\}$, where $r_i\df \bigl(\vop{x}y\vcl{x}\cdot\vop{y}x\cdot \ta\vcl{y}\bigr)^i$. For example, for 
$$r_3 = \vop{x}y\vcl{x}\cdot\vop{y}x\cdot \ta\vcl{y}\cdot\vop{x}y\vcl{x}\cdot\vop{y}x\cdot \ta\vcl{y}\cdot\vop{x}y\vcl{x}\cdot\vop{y}x\cdot \ta\vcl{y},$$ we have $r_3\in \refl(\alphasq)$ with $\deref(r_3) = \ta^{9}$. Using induction, we can verify that $\deref(r_i)=\ta^{i^2}$. Thus, $\lang(\alphasq)=\{\ta^{n^2}\mid n\geq 0\}$.%\todomcom{Included definition of $r_i$, which was missing.}
\end{example}

Hence, unlike  regular expressions, regex can define non-regular languages. The expressive power comes at a price:  their membership problem is $\NP$-complete~(follows from Angluin~\cite{ang:fin}), and various other problems are undecidable (Freydenberger~\cite{fre:ext}).
Starting with Aho~\cite{aho:alg}, there have been various  approaches to specifying syntax and semantics of regex. While~\cite{aho:alg} only sketched the intuition behind the semantics, the first formal definition (using parse trees) was proposed by C\^ampeanu, Salomaa, Yu~\cite{cam:afo}, followed by the ref-words  of Schmid~\cite{sch:cha}. In the following, we provide a more detailed discussion of the different approaches and actual implementations of regex.
% compare these approaches and actual implementations in more detail
%
%For a comparison between these approaches and actual implementations, see the full version~\cite{fre:drx}.

\subsection{Regex in Theory and Practice}\label{sec:regexstuff}
In this section, we motivate the choice of the formalization of regex syntax and semantics that are used in the current paper, in particular in comparison to~\cite{cam:afo}, and then connect these to  the use of back-references in actual implementations. Note that in order to explain how our results and concepts can be adapted to various alternative definitions of syntax and semantics, we anticipate some of the technical content of our paper.
%\todomcom{Thanks. Made some slight changes.}

\subsubsection{Choices behind the definition}\label{sec:choices}
We begin with a discussion of semantics of back-references, which most actual implementations define in terms of the used matching algorithm\footnote{From a theory point of view, this might be considered a rather generous use of the term ``define''.}. For a theoretical analysis, this approach is not satisfactory.

 To the authors' knowledge, the first theoretical analysis of regular expressions is due to Aho~\cite{aho:alg}, who defined the semantics informally.
 C\^ampeanu, Salomaa, Yu~\cite{cam:afo} then proposed a definition using parse trees, which was precise, but rather technical and unwieldy. Schmid~\cite{sch:cha} then introduced the definition with ref-words that we use in the current paper. 
The two  definitions differ only in some semantical particularities, which we discuss further down.
%\todom{Change previous sentence. ``similar up to'' was a little too weak for my taste.}

The most obvious difference in approaches to syntax is that some formalizations, like~\cite{cam:afo}, do not use variables, but numbered back-references. For example,  $\bind{x}{\ta^*}\tb\cdot \rr{x}$ would be written as $(\ta^*)\tb \backslash 1$, where $\backslash 1$ refers to the content of the first pair of parentheses (called the first \emph{group}).

After working with this definition for some time, the authors of the present paper came to the conclusion that using numbered back-references instead of named variables is inconvenient (both when reading and writing regex). The developers of actual implementations seem to agree with this sentiment: While using numbered back-references was well-motivated when considering PERL at the time~\cite{cam:afo} was published,  most current regex  dialects allow the use of \emph{named groups}, which basically act like our variables (depending on the actual dialect, see below). The choice between variables and numbered groups is independent of the choice of semantics, as parse trees can also be used with variables, see~\cite{fre:ext}.
Hence, using variables instead of numbers is a natural choice. 

Building on this, the next question is whether the same variable can be bound at different places in the regex (which is automatically excluded by the use of numbered groups as in~\cite{cam:afo}), i.\,e., whether one allows expressions like 
$$\bigl(\bind{x}{\ta^+}\bind{y}{\tb^+}\bigr)\ror\bigl(\bind{y}{\tb^+}\bind{x}{\ta^+}\bigr)\tc\cdot \rr{x}\cdot \rr{y}.$$

While some implementations that have developed from back-references forbid these constructions to certain degrees 
%(again, see below)
(see Section~\ref{sec:ActImpl} below), there seems to be no particular reason for this decision when approaching this question without this historical baggage. In fact, one can argue from a point of applications that expressions like the following make sense (abstracting away some details that would be needed in actual use):
$$\Sigma^*\Bigl(\bigl(\texttt{Name:}\bind{x}{\Sigma^+} \texttt{ Title:}\bind{y}{\Sigma^+} \bigr) \ror \bigl(\texttt{Title:}\bind{y}{\Sigma^+} \texttt{ Name:}\bind{x}{\Sigma^+} \bigr)  \Bigr)\Sigma^*$$
In fact, these constructions are explicitly allowed in the regex formulas of Fagin et al.~\cite{fag:spa}, that are closely related to regex. In particular, both the semantic definitions (ref-words and parse trees) allow this choice. Thus, there seems to be no particular practical reason to disallow these constructions when considering only the model (instead of its algorithmic properties). 

 In addition to disallowing the repeated binding of the same variable described above, 
the regex definition in~\cite{cam:afo} also includes a  syntactic restriction that changes the expressive power considerably: It requires that a backreference $\backslash n$ can only appear in a regex if it occurs \emph{to the right} of corresponding group number~$n$. In~\cite{cam:afo}, otherwise, the expression is called a ``semi-regex''. 
Consider  $\alphasq=\bigl(\bind{x}{\rr{y}}\bind{y}{\rr{x}\cdot \ta}\bigr)^*$ from Example~\ref{refWordExample}. In the numbered notation of~\cite{cam:afo}, this  would be expressed as   $\beta\df \bigl((_{2}\backslash 3)_{2}   (_{3}\backslash 2 \cdot \ta)_{3}\bigr)^*$, when adding group numbers to the groups to increase readability. But using definitions from~\cite{cam:afo}, $\beta$ is only  a semi-regex,  as the reference $\backslash 3$ occurs to the left of group~3. 

The motivation behind this restriction is not explained in~\cite{cam:afo}. While one might argue that this was chosen to avoid referencing unused groups, the definition of semantics in~\cite{cam:afo} still needs to deal with this problem in regexes like $((_{2}\ta)_{2}\ror (_{3}\ta)_{3})\cdot\backslash 2\cdot\backslash 3$, and handles them by assigning~$\emptyword$ (like the definition from~\cite{sch:cha}, which we use as well). Hence, even on ``semi-regex'', the parse tree semantics behave like the ref-word semantics.

Arguably, the restriction has an advantage from a theoretical point of view, as it allows C\^ampeanu, Salomaa, Yu~\cite{cam:afo}  and Carle and Narendran~\cite{car:one} to define pumping lemmas for this class. Using these, it is possible to show that languages like $\lang(\alphasq)$ from Example~\ref{refWordExample} or the language from Lemma~\ref{lem:fibonacci} cannot be expressed with the regex model from~\cite{cam:afo}. But in other areas, there seems to be no advantage in this choice: Even under this restriction, the membership problem is $\NP$-complete (since it is still possible to describe Angluin's pattern languages~\cite{ang:fin}), the undecidability results from~\cite{fre:ext} on various problems of static analysis are unaffected by this choice, and even the proof of Theorem~\ref{thm:intersectundec} (the undecidability of the disjointness problem for deterministic $\RX$) directly works on this subclass.
In summary, the authors of the current paper see no reason to adapt this restriction.

%For full disclosure, the second author points out that he misinterpreted the regex definition of~\cite{cam:afo} when citing the paper in his own articles~\cite{sch:ins,sch:cha}. 
%Hence, although those papers refer to~\cite{cam:afo} for the full definition of regex, they talk about the language class $\langcl(\RX)$ of the current paper.
%All mentions of  ``regex'' in those papers refer to the expressions that are called ``semi-regex'' in~\cite{cam:afo}. 
%\todo[inline, color=red!30]{MS: I rephrased the next paragraph. I think it is better not to explain what I meant in my papers in terms of semi-regex. I think the term semi-regex does not really cover any class of regex that we are really interested in.}
For full disclosure, the second author points out that in his own articles~\cite{sch:ins,sch:cha}, using~\cite{cam:afo} as reference for a full definition of regex is not entirely correct, since the restrictions of~\cite{cam:afo} discussed above are not used in these papers; instead they talk about the language class $\langcl(\RX)$ of the current paper.
%\todomcom{I though about the issue with the semiregex and the thing the Dortmund people pointed out. For most of this section, we do not really need to go into detail with the difference. But in this paragraph, we should address this in some way. In a way, the problem is more with the claims in your old papers than with the claims in this paper.}

The last choice in the definition that we need to address is how we deal with referencing undefined variables. Both~\cite{cam:afo} and~\cite{sch:cha} default those references to $\emptyword$ (as do others, like~\cite{fre:ext}); but there is also literature, like~\cite{car:one}, that uses~$\emptyset$ as default value (under these semantics, a ref-word that contains a variable that de-references to  $\emptyset$ cannot generate any terminal words; the same holds for a parse tree that contains such a reference). This choice can easily be implemented in both semantics by discarding a ref-word or parse tree that contains such a reference; and a $\TMFA$ (see Section~\ref{sec:TMFA}) can reject if a run encounters a reference to such an undefined memory.

While these ``$\emptyset$-semantics'' are also used in some actual implementations, the authors of the current paper are against this approach. One of the reasons is that using $\emptyset$ as default allows the use of curious synchronization effects that distract from the main message of this paper. For example, let $\Sigma=\{a_1,\ldots,a_n\}$ for some $n\geq 1$, and define 
$$\alpha_n \df \bigl(  \bigror_{i=1}^n (a_i \cdot\bind{x_i}{\emptyword})  \bigr)^n \cdot \rr{x_1} \cdots \rr{x_n}.$$
If unbound variables default to $\emptyset$, this regex generates the language
$$\bigl\{a_{\pi(1)}\cdots a_{\pi(n)} \mid \text{$\pi$ is a permutation of $\{1,\ldots,n\}$} \bigr\},$$
as every variable $x_i$ needs to be assigned $\emptyword$ exactly once (otherwise, a reference would return $\emptyset$ and block). Hence, using this semantics, even variables that are bound only to $\emptyword$ can be used for synchronization effects. While this can lead to interesting constructions, the authors think that it provides more insight to study the effects of back-references on lower bounds without relying on these additional features. This way, there is no question whether the  hardness of the examined problems  is due to the effects of the $\emptyset$-semantics. 

Furthermore, all examples in the present paper can be adapted from the used $\emptyword$-semantics to $\emptyset$-semantics: Given an $\alpha\in\RX$ with the variables $x_1,\ldots,x_k$, define $\alpha' \df \bind{x_1}{\emptyword}\cdots\bind{x_k}{\emptyword}\cdot \alpha$. First, we observe that the language that is defined by $\alpha'$ under $\emptyset$-semantics is the same language that $\alpha$ defines under $\emptyword$-semantics. Furthermore, note that if $\alpha$ is deterministic (in the sense as shall be defined in Section~\ref{sec:detregex}), $\alpha'$ is also deterministic. The analogous construction can be used for $\TMFA$ (and $\DTMFA$).

While it is possible to adapt most of the results in the current paper directly to this alternative semantics, the authors chose to keep the paper focused on $\emptyword$-semantics.

\subsubsection{Actual implementations}\label{sec:ActImpl} We now give a brief overview of how back-references are used in some actual implementations. For a good introduction on various dialects, the authors recommend~\cite{goy:reg}, in particular the section on back-references and named groups.  As this behavior is often under-defined, badly documented, and implementation dependent, this can only be a very short and superficial summary of some behavior. 

Before we go into details, we address why back-references are used, in spite of the resulting $\NP$-hard membership problem: Most regex libraries use a backtracking algorithm that can have exponential running time, even on many proper regular expressions  (see Cox~\cite{cox:reg}). From this point of view, back-references can be added with little implementation effort and without changing the efficiency of the program.

Most modern dialects of regex not only support numbered back-references as used by~\cite{cam:afo}, but also  named capture groups, which work like our variables. In some dialects like e.\,g.\ Python, PERL, and PCRE, these act as aliases for back-references with numbers; hence, $\bind{x}{\ta^*}\tb \rr{x}$ would be interpreted as $(\ta^*)\tb \backslash 1$.  As a consequence, each name resolves to a well-defined number.  As some of these dialects assign the empty set as default value of unbound back-references (or group names), the resulting behavior is similar implicitly requiring the restriction from~\cite{cam:afo}. This implementation of named capture groups seems to be mostly for historical reasons (as back-references were introduced earlier).

In contrast to this, there are other dialects that use numbered back-references and explicitly allow references to access groups that occur to their right in the expression. For example, the W3C recommendation for XPath and XQuery functions and operators~\cite{w3c:xfun} defines regular expressions with back-references for the use in \texttt{fn:matches}. There, it is possible to refer to capture groups that occur to the right of the reference (although only for the capture groups 1 to 9, but not for 10 to 99, which might be considered a peculiar decision). As this dialect defaults unbound references to $\emptyword$, it is possible to directly express $\alphasq$ by renaming the variable references to back-references.

Furthermore, {.NET} allows the same name to be used for different groups,  for example $((\bind{x}{\ta^+}\ror \bind{x}{\tb^+}\tc)\rr{x})^*$. While .NET  defaults unset variables to $\emptyset$, it is possible to express $\lang(\alphasq)$, by using an expression like $\bind{x}{\emptyword}\cdot\alphasq$.  In the same way, every regex in the sense of our paper can be converted into an equivalent .NET regex. 

Finally, in 2007 (just four years after the publication of~\cite{cam:afo}),  PERL 5.10 introduced   \emph{branch reset groups} (which were also adapted in PCRE). These reset the numbering inside disjunctions, and  allow expressions that behave like the expression $((\bind{x}{\ta^+}\ror \bind{x}{\tb^+}\tc)\rr{x})^*$.  This allows PERL regex to replicate a large part of the behavior of .NET regex. 

In conclusion, it seems that almost every formalization of regex syntax and semantics can be justified by finding the right dialect; but every restriction might be superseded by the continual evolution of regex dialects. Hence, the current paper attempts to avoid restrictions; and when in doubt, we choose natural definitions over trying to directly emulate a single dialect. Therefore, we use variables instead of numbered back-references, and allow multiple uses of the same variable name. 

The authors acknowledge that most actual implementations of ``regular expressions'' allow additional operators. Common features are \emph{counters}, which allow constructions like e.\,g. $\ta^{2,5}$ that define the language $\{\ta^i\mid i\in \{2,\ldots,5\}\}$, \emph{character classes} and \emph{ranges}, which are shortcuts for sets like ``all alphanumeric symbols'' or ``all letters from \texttt{b} to \texttt{y}'', and \emph{look ahead} and \emph{look behind}, which can be understood as allowing the expression to call another expressions as a kind of subroutine. 

While these operators are outside of the scope of the current paper, we briefly address the issue of counters. These are used in XML~DTDs and XML~Schema, and were studied in connection to determinism. In particular, Gelade, Gyssens, Martens~\cite{gel:reg} described how counters can be added to finite automata and proposed an appropriate extension of determinism and Glushkov construction to this model.  Although the current paper does not address this matter (in order to keep the paper focussed), the $\TMFA$ that we introduce in Section~\ref{sec:TMFA} can also be extended with counters (like the extension to $\NFA$ in~\cite{gel:reg}). Likewise, the Glushkov constructions of~\cite{gel:reg} and the current paper can be combined, as can the notions of determinism. The membership problem for the resulting class of deterministic regex with counters can then be solved as efficiently as for deterministic regex (see Theorem~\ref{thm:drxmembership}).
%!TEX root=det_SIAM.tex
\section{Memory Automata with Trap State}\label{sec:TMFA}

In this section, we define memory automata with trap-state, the deterministic variant of which will be the algorithmic foundation for deterministic regex. Before moving on to the actual definition of deterministic regex in Section~\ref{sec:detregex} and the applications of memory automata with trap-state, we subject this automaton model to a thorough theoretical analysis. Most of the thus obtained insights will have immediate consequences and applications for proving the main results regarding deterministic regex, while others have the mere purpose of supporting our understanding of memory automata (and therefore the important class of regex languages).\par
%\todo[inline]{MS: Reading this section feels a bit like reading a classical automata theory paper. I think we should avoid this impression, if possible. Maybe we can add some paragraph that explains that in this section we mainly define and proof tools that we need later and everything else formal-languages-like that we get on the way are just interesting additions.}
\emph{Memory automata} \cite{sch:cha} are a simple automaton model that characterizes $\langcl(\RX)$. Intuitively speaking, these are classical finite automata that can record consumed factors in memories, which can be recalled later on in order to consume the same factor again. However, for our applications, we need to slightly adapt this model to \emph{memory automata with trap-state}.
\begin{definition}\label{TMFADefinition}
For every $k \in \mathbb{N}$, a \emph{$k$-memory automaton with trap-state}, denoted by $\TMFA(k)$, is a tuple $M = (Q, \Sigma, \delta, q_0, F)$, where $Q$ is a finite set of \emph{states} that contains the \emph{trap-state} $\trapstate$, $\Sigma$ is a finite \emph{alphabet}, $q_0 \in Q$ is the \emph{initial state}, $F \subseteq Q$ is the set of \emph{final states} and $\delta\colon Q \times (\Sigma \cup \{\varepsilon\} \cup \{1, 2, \ldots, k\}) \rightarrow \mathcal{P}(Q  \times \{\open, \close, \reset, \unchanged\}^k)$ is the \emph{transition function} (where $\mathcal{P}(A)$ denotes the power set of a set $A$), which satisfies $\delta(\trapstate, b) = \{(\trapstate, \unchanged, \unchanged, \ldots, \unchanged)\}$, for every $b \in \Sigma \cup \{\varepsilon\}$, and $\delta(\trapstate, i) = \emptyset$, for every $i$, $1 \leq i \leq k$. The elements $\open$, $\close$, $\reset$ and $\unchanged$ are called \emph{memory instructions} (they stand for \underline{\textsf{o}}pening, \underline{\textsf{c}}losing and \underline{\textsf{r}}eseting a memory, respectively, and $\unchanged$ leaves the memory unchanged).

A \emph{configuration} of $M$ is a tuple $(q, w, (u_1, r_1), \ldots, (u_k, r_k))$, where $q \in Q$ is the \emph{current state}, $w$ is the \emph{remaining input} and, for every $i$, $1 \leq i \leq k$, $(u_i, r_i)$ is the \emph{configuration of memory $i$}, where $u_i \in \Sigma^*$ is the \emph{content of memory $i$} and $r_i \in \{\opened, \closed\}$ is the \emph{status of memory $i$} (i.\,e., $r_i = \opened$ means that memory $i$ is open and $r_i = \closed$ means that it is closed). The \emph{initial configuration} of $M$ (\emph{on input $w$}) is the configuration $(q_0, w, (\varepsilon, \closed), \ldots, (\varepsilon, \closed))$, a configuration $(q, w, (u_1, r_1), \ldots, (u_k, r_k))$ is an \emph{accepting configuration}  if $w = \varepsilon$ and  $q \in F$.

$M$ can change from a configuration $c = (q, v w, (u_1, r_1), \ldots, (u_k, r_k))$ to a configuration $c' = (p, w, (u'_1, r'_1), \ldots, (u'_k, r'_k))$, denoted by $c \vdash_{M} c'$, if there exists a transition $\delta(q, b) \ni (p, s_1, \ldots, s_k)$ with either ($b \in (\Sigma \cup \{\varepsilon\})$ and $v = b$) or ($b \in \{1, 2, \ldots, k\}$, $s_b = \close$ and $v = u_b$), and, for every $i$, $1 \leq i \leq k$, 
\begin{itemize}
\item $(s_i = \unchanged) \wedge (r_i = \opened) \Rightarrow (u'_i, r'_i) = (u_i v, r_i)$, 
\item $(s_i = \unchanged) \wedge (r_i = \closed) \Rightarrow (u'_i, r'_i) = (u_i, r_i)$,
\item $s_i = \open \Rightarrow (u'_i, r'_i) = (v, \opened)$, 
\item $s_i = \close \Rightarrow (u'_i, r'_i) = (u_i, \closed)$,
\item $s_i = \reset \Rightarrow (u'_i, r'_i) = (\varepsilon, \closed)$.
\end{itemize}
%\begin{tabular}{llllll}
%$s_i = \unchanged \wedge r_i = \opened$ & $\Rightarrow$ & $(u'_i, r'_i) = (u_i v, r_i)$, & $s_i = \unchanged \wedge r_i = \closed$ & $\Rightarrow$ & $(u'_i, r'_i) = (u_i, r_i)$,\\
%$s_i = \open$ & $\Rightarrow$ & $(u'_i, r'_i) = (v, \opened)$, & $s_i = \close$ & $\Rightarrow$ & $(u'_i, r'_i) = (u_i, \closed)$,\\
%$s_i = \reset$ & $\Rightarrow$ & $(u'_i, r'_i) = (\varepsilon, \closed)$. & & &
%\end{tabular}\\
%\begin{inparaenum}\todo{D: Compactified}
%	\item $s_i = \unchanged$ and $r_i = \opened$  implies  $(u'_i, r'_i) = (u_i v, r_i)$,
%	\item $s_i = \unchanged$ and $r_i = \closed$ implies $(u'_i, r'_i) = (u_i, r_i)$,
%	\item $s_i = \open$  implies  $(u'_i, r'_i) = (v, \opened)$,
%	\item $s_i = \close$  implies  $(u'_i, r'_i) = (u_i, \closed)$,
%	\item $s_i = \reset$  implies  $(u'_i, r'_i) = (\varepsilon, \closed)$.
%\end{inparaenum}
Furthermore, $M$ can change from a configuration $(q, v w, (u_1, r_1), \ldots, (u_k, r_k))$ to the configuration $(\trapstate, w, (u_1, r_1), \ldots, (u_k, r_k))$, if $\delta(q, b) \ni (p, s_1, \ldots, s_k)$ for some $p \in Q$, $b \in \{1, 2, \ldots, k\}$ and $s_b = \close$, such that $u_b = v v'$ with $v' \neq \varepsilon$ and $v'[1] \neq w[1]$.

A transition $\delta(q, b) \ni (p, s_1, s_2, \ldots, s_k)$ is an \emph{$\varepsilon$-transition} if $b = \varepsilon$ and is called \emph{consuming}, otherwise (if all transitions are consuming, then $M$ is called \emph{$\varepsilon$-free}). If $b \in \{1, 2, \ldots, k\}$, it is called a \emph{memory recall transition} and the situation that a memory recall transition leads to the state $\trapstate$, is called a \emph{memory recall failure}. 

The symbol $\vdash_{M}^*$ denotes the reflexive and transitive closure of $\vdash_{M}$. A $w \in \Sigma^*$ is \emph{accepted} by $M$ if $c_{\text{init}} \vdash^*_{M} c_f$, where $c_{\text{init}}$ is the initial configuration of $M$ on $w$ and $c_f$ is an accepting configuration. The set of words accepted by $M$ is denoted by $\lang(M)$.
\end{definition}
Note that executing the open action $\open$ on a memory that already contains some word discards the previous contents of that memory. A crucial part of $\TMFA$ is the trap-state $\trapstate$, in which computations terminate, if a memory recall failure happens. If $\trapstate$ is not accepting, then $\TMFA$ are (apart from  negligible formal differences) identical to the memory automata introduced in \cite{sch:cha}, which characterize the class of regex language.
If, on the other hand, $\trapstate$ is accepting, then every computation with a memory recall failure is accepting (independent from the remaining input). While it seems counter-intuitive to define the words of a language via ``failed'' back-references, the possibility of having an accepting trap-state yields closure under complement for \emph{deterministic} $\TMFA$ (see Theorem~\ref{closureComplementTheorem}). It will be convenient to consider the partition of $\TMFA$ into $\TMFArej$ and $\TMFAacc$ (having a rejecting and an accepting trap-state, respectively).\par
Next, we illustrate the concept of memory automata with trap state by some examples\footnote{For the sake of convenience, we present $\TMFA$ in the form of the usual automata diagrams (initial states are marked by an unlabeled incoming arc, accepting states by an additional circle and arcs are labelled with the transition tuples).} (further illustrations can be found in~\cite{sch:cha}).\par
Intuitively speaking, in a single step of a computation of a $\TMFA$, we first change the memory statuses according to the memory instructions $s_i$, $1 \leq i \leq k$, and then a (possibly empty) prefix $v$ of the remaining input ($v$ is either from $\Sigma \cup \{\varepsilon\}$ or it equals the content of some memory that, according to the definition, has been closed by the same transition) is consumed and appended to the content of every memory that is currently open (note that here the new statuses after applying the memory instructions count). The changes of memory configurations caused by a transition are illustrated in Figure~\ref{fig:MemConfChanges}.

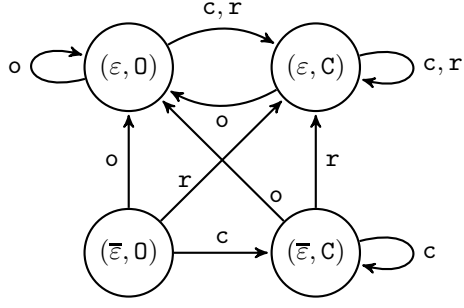
\begin{figure}
\begin{center}
\begin{tikzpicture}[->,>=stealth',shorten >=1pt,auto,node distance=7em, thick]
  \tikzstyle{every state}=[fill=none,draw=black,text=black]
  \node[state] (emptyopened)   [fill= none, draw=black,text=black]    {$(\varepsilon, \opened)$};
  \node[state] (nonemptyopened) [below of=emptyopened]       					{$(\overline\varepsilon, \opened)$};
  \node[state]         (emptyclosed)   [right of=emptyopened]                           {$(\varepsilon, \closed)$}; 
  \node[state]         (nonemptyclosed)   [right of=nonemptyopened]                           {$(\overline\varepsilon, \closed)$};
  \path (emptyopened) edge [bend left] node {$\close, \reset$} (emptyclosed);
  \path (emptyopened) edge [loop left] node {$\open$} (emptyopened);
  \path (nonemptyopened) edge node {$\close$} (nonemptyclosed);
  \path (nonemptyopened) edge node {$\open$} (emptyopened);
  \path (nonemptyopened) edge [pos=0.2, above] node {$\reset$} (emptyclosed);
  \path (emptyclosed) edge [loop right] node {$\close, \reset$} (emptyclosed);
  \path (emptyclosed) edge [bend left] node {$\open$} (emptyopened);
  \path (nonemptyclosed) edge [loop right] node {$\close$} (nonemptyclosed);
  \path (nonemptyclosed) edge [pos=0.2, right] node {$\open$} (emptyopened);
  \path (nonemptyclosed) edge [right] node {$\reset$} (emptyclosed);
\end{tikzpicture}
\end{center}
\caption{Possible configuration changes of a fixed memory. Note that by $\varepsilon$ and $\overline\varepsilon$, we denote an empty or non-empty memory content, respectively; the instruction $\unchanged$ is omitted. Moreover, the diagram only shows configuration changes caused by memory instructions (in particular, $\varepsilon$ can only change into $\overline\varepsilon$ by consuming transitions).}
\label{fig:MemConfChanges}
\end{figure}
\begin{example}\label{ex:tmfa1}
	Consider the following  $\TMFArej$~$M$ with two memories over $\Sigma=\{\ta,\tb\}$:
		\begin{center}
			\begin{tikzpicture}[node distance=15mm,on grid,>=stealth',auto, 
			state/.style={circle,draw=black,inner sep=1pt,minimum size=5mm}]
			
			\node[state,initial,initial by arrow,initial text={}]         (q0)                 				{$q_0$};
			\node[state]  (q1)   [ right=of q0]   {$q_1$};
			\node[state]  (q2)   [ right=of q1]   {$q_2$};
			\node[state]  (q3)   [ right=of q2]   {$q_3$};
			\node[state,accepting]	(q4)	[right = of q3] {$q_4$};	 	
			\path[->] 
			(q0) edge[bend left] node {$\ta,\open,\unchanged$} (q1)
			(q0) edge[bend right] node[below] {$\tb,\open,\unchanged$} (q1)
			(q1) edge[loop above] node[align=center] {$\ta,\unchanged,\unchanged$\\$\tb,\unchanged,\unchanged$} (q1)
			(q1) edge[bend left] node {$\ta,\close,\open$} (q2)
			(q1) edge[bend right] node[below] {$\tb,\close,\open$} (q2)
			(q2) edge[loop above] node[align=center] {$\ta,\unchanged,\unchanged$\\$\tb,\unchanged,\unchanged$} (q2)
			(q2) edge node[above] {$2,\unchanged,\close$} (q3)		
			(q3) edge node[above] {$1,\close,\unchanged$} (q4)		
			;
			\draw[->, rounded corners] (q4.south) -- +(0,-.75) -| node [pos=0.1,above]{$\emptyword,\unchanged,\unchanged$} (q0.south);
			\end{tikzpicture}
		\end{center}
This $\TMFA$ works as follows. First, in state $q_1$, we record a non-empty word over $\{\ta, \tb\}$ in the first memory, then, in state $q_2$, a non-empty word over $\{\ta, \tb\}$ in the second memory, and then, by moving through states $q_3$ and $q_4$, these words are repeated in reverse order by first recalling the second and then the first memory (note that in the transition from $q_3$ to $q_4$, an already closed memory is closed again, since according to Definition~\ref{TMFADefinition}, every memory that is recalled must be closed in the same transition). Due to the $\emptyword$-transition from $q_4$ to $q_0$, $M$ describes the Kleene-plus of such words, i.\,e., $\lang(M) = \lang(\alpha)$, where $\alpha = (\bind{x}{(\ta\ror\tb)^+}\bind{y}{(\ta\ror\tb)^+}\cdot\rr{y}\cdot\rr{x})^+ = (\{uvvu \mid u, v \in \{\ta, \tb\}^+\})^+$. 

		Note that each of the two memory recall transitions closes the respective memory. This is required by definition, as a transition can only recall a memory if it ensures that it is closed.

\end{example}

\begin{example}\label{ex:tmfa2}
	Consider the following  $\TMFArej$~$M$ with two memories over $\Sigma=\{\ta,\tb,\td\}$:
	\begin{center}
		\begin{tikzpicture}[node distance=15mm,on grid,>=stealth',auto, 
		state/.style={circle,draw=black,inner sep=1pt,minimum size=5mm}]
		
		\node[state,initial,initial by arrow,initial text={}]         (q0)                 				{$q_0$};
		\node[state]  (q1)   [ right=of q0]   {$q_1$};
		\node[state]  (q2)   [ right=of q1]   {$q_2$};
		\node[state]  (q3)   [ right=of q2]   {$q_3$};
		\node[state]  (q4)   [ right=of q3]   {$q_4$};
		\node[state]  (q5)   [ right=of q4]   {$q_5$};
		\node[state,accepting]	(q6)	[right = of q5] {$q_6$};	 	
		\path[->] 
		(q0) edge node[above] {$\emptyword,\open,\unchanged$} (q1)
		(q1) edge[loop above] node[] {$\ta,\unchanged,\unchanged$} (q1)
		(q1) edge node[above] {$\emptyword,\unchanged,\open$} (q2)
		(q2) edge[loop above] node[] {$\tb,\unchanged,\unchanged$} (q2)
		(q2) edge node[above] {$\emptyword,\close,\unchanged$} (q3)
		(q3) edge[loop above] node[] {$\td,\unchanged,\unchanged$} (q3)
		(q3) edge node[above] {$\emptyword,\unchanged,\close$} (q4)		
		(q4) edge node[above] {$1,\close,\unchanged$} (q5)		
		(q5) edge node[above] {$2,\unchanged,\close$} (q6)		
		;
		\end{tikzpicture}
	\end{center}
%	Clearly, $M$ is not deterministic, as the three states $q_1$ to $q_3$ each have a terminal transition and an $\emptyword$-transition.
	The behavior of $M$ can be described as follows: First, $M$ opens memory~1 and reads $\ta^i$, $i\geq 0$. After that, $M$ opens the second memory, reads $\tb^j$ (which is stored in \emph{both} memories), $j\geq 0$, closes the first memory, reads $\td^k$, $k\geq 0$, and closes the second memory. Hence, after reading $\ta^i \tb^j \td^k$, the first memory contains $\ta^i\tb^j$, and the second $\tb^j \td^k$. Finally, $M$ recalls memory~1 and then~2. Hence, $\lang(M) = \{ \ta^i \tb^j \td^k\ta^i \tb^{2j} \td^k \mid i,j,k\geq 0  \}$.
	
	Now, note that in each input word $w$, memory~2 is opened and closed after memory~1. Hence, if $j>0$, the areas in $w$ where the two memories are open overlap, instead of being nested. This cannot happen in a regex, as it is ensured from the syntax of variable bindings that these ``areas'' in the word are properly nested. For this reason, it seems impossible to express $\lang(M)$ with a regex with only two variables. But this does not mean that $\lang(M)$ is not a regex language, as $\lang(M)=\lang(\alpha)$ for  $\alpha = \bind{x}{\ta^*}\bind{y}{\tb^*}\bind{z}{\td^*}\cdot \rr{x}\rr{y}\cdot\rr{y}\rr{z}$. In other words, the key idea is expressing each memory with two variables (one for the overlapping parts of the memories, and one for each rest).\footnote{Proving that these overlaps can always be resolved is the main step in showing the equivalence of $\langcl(\RX)$ and $\langcl(\TMFA)$, which is provided in~\cite{sch:cha} (see also the discussion at the end of the proof of Theorem~\ref{thm:TMFAisMFA}).}
\end{example}

%Clearly, the $\TMFA$ of Examples~\ref{ex:tmfa1}~and~\ref{ex:tmfa2} are not deterministic (in Example~\ref{ex:tmfa1}, there are different transitions for the same state that consume the same symbol and in Example~\ref{ex:tmfa2}, there are states for which $\emptyword$-transitions exists in addition to other transitions). By minor changes of the $\TMFA$ of Example~\ref{ex:tmfa1}, a $\DTMFA$ can be easily constructed for the language $(\{u \td v \td v u \mid u, v \in \{\ta, \tb\}^+\})^+$, the details are left to the reader.

Next, we shall see that every $\TMFAacc$ can be transformed into an equivalent $\TMFArej$, which implies $\langcl(\TMFA) = \langcl(\TMFArej)$; thus, it follows from \cite{sch:cha} that $\TMFA$ characterize $\langcl(\RX)$.
The idea of this construction is as follows. Every memory $i$ is simulated by two memories $(i, 1)$ and $(i, 2)$, which store a (nondeterministically guessed) factorisation of the content of memory $i$. This allows us to guess and verify if a memory recall failure occurs, i.\,e., $(i, 1)$ stores the longest prefix that can be matched 
and $(i, 2)$ starts with the first mismatch. For correctness, it is crucial that every possible factorisation of the content of a memory $i$ can be guessed. \par
We first need the following definition. An $M \in \TMFA$ is in \emph{normal form} if no empty memory is recalled, no open memory is opened, no memory is reset, and, for every transition $\delta(q, b) \ni (p, s_1, \ldots, s_k)$, 
\begin{itemize}
\item if $b \neq \varepsilon$, then $s_i = \unchanged$, $1 \leq i \leq k$,
\item if $s_i \neq \unchanged$, for some $i$, $1 \leq i \leq k$, then $b = \varepsilon$ and $s_j = \unchanged$, for every $j$, $1 \leq j \leq k$, $i \neq j$.
\end{itemize}

\begin{proposition}
Any $\TMFA$ can be transformed into an equivalent $\TMFA$ in normal form.
\end{proposition}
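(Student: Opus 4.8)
The plan is to transform an arbitrary $M = (Q, \Sigma, \delta, q_0, F) \in \TMFA(k)$ into an equivalent normal-form automaton $M'$ by (i) enriching the finite-state control so that it always ``knows'' the status and emptiness of every memory, and then (ii) replacing each original transition by a small gadget of $\varepsilon$-transitions that performs the memory instructions one at a time, together with a single consuming transition carrying only $\unchanged$-instructions. The trap-state $\trapstate$ and its acceptance status are kept unchanged throughout.

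First I would augment the state set to $Q' = (Q \times (\{\opened,\closed\}\times\{\mathsf{e},\mathsf{n}\})^k) \cup \{\trapstate\}$, where the additional component records, for each memory, whether it is currently open or closed and whether its content is empty ($\mathsf{e}$) or nonempty ($\mathsf{n}$). This bookkeeping is consistent because both the status and the emptiness of a memory are fully determined by the sequence of operations applied to it during a run: an $\open$ (which, after the separation step below, always occurs on an $\varepsilon$-transition, hence with consumed factor $\emptyword$) yields empty-open; consuming a terminal while a memory is open turns it nonempty; a $\close$ preserves content; and a $\reset$ yields empty-closed. Hence each transition can update this component deterministically, and within any fixed run it stays synchronised with the actual memory configuration of $M$.

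With this information available in the control, I would realise the normal-form requirements as follows. To separate memory instructions from consumption and to allow at most one instruction per transition, I replace a transition $\delta(q,b)\ni(p,s_1,\dots,s_k)$ by a chain of $\varepsilon$-transitions, one for each memory $i$ with $s_i\neq\unchanged$, in a fixed order, followed by the consuming transition for $b$ carrying only $\unchanged$-instructions. The only delicate point is the $\open$ action, whose semantics set the memory content to the factor consumed \emph{by the same transition}; performing the $\open$ on an $\varepsilon$-transition sets the content to $\emptyword$, after which the subsequent consuming transition appends the actual symbol to the (now open) memory, giving the same final content. To forbid resets I simulate $\reset$ using the status knowledge: an already empty-closed memory needs nothing, a nonempty-closed memory is emptied by an $\open$ followed by a $\close$, and an open memory is first closed and then emptied the same way, so that no reset is ever used. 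To forbid re-opening an open memory, whenever I must $\open$ a memory that the control reports as $\opened$, I first insert a $\close$. Finally, to avoid recalling an empty memory, at every recall of memory $i$ I branch on the recorded emptiness: if nonempty I keep the recall transition (with $i$ already closed, so its mandatory $\close$ acts as a no-op on content), and if empty I replace it by a plain $\varepsilon$-transition to the target, since recalling an empty memory consumes $\emptyword$ anyway.

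Correctness I would establish by induction on the length of runs, via a simulation relating each configuration of $M$ to the configuration of $M'$ whose added component equals the status/emptiness vector of $M$'s memories and whose remaining input and memory contents coincide; the intermediate states introduced by the gadgets correspond to the ``mid-transition'' stages of $M$. The main obstacle is not the routine bookkeeping but verifying that \emph{memory contents are preserved exactly} through the gadgets---especially for the $\open$ handled as open-then-append---because the whole trap-state behaviour hinges on it: a memory recall failure is triggered precisely by the prefix-mismatch between a memory's content and the remaining input, so the single surviving recall transition must trap in $M'$ exactly when it does in $M$, which holds only if the content $u_i$ at recall time is identical in both automata. Once this content-preservation invariant is in place, the equality $\lang(M') = \lang(M)$ (including the $\trapstate$ semantics) follows, and by construction every transition of $M'$ is either a consuming transition with all $\unchanged$ or an $\varepsilon$-transition with a single non-$\unchanged$ instruction, with no resets, no re-openings, and no empty recalls, i.e.\ $M'$ is in normal form.
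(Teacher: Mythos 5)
Your proposal is correct and follows essentially the same route as the paper: track memory status and emptiness in the finite control, split each transition into a chain of single-instruction $\varepsilon$-transitions followed by an all-$\unchanged$ consuming transition, simulate $\reset$ and re-opening by $\close$/$\open$ sequences, and turn recalls of (known-)empty memories into $\varepsilon$-transitions. You spell out one detail the paper leaves implicit -- that moving $\open$ onto an $\varepsilon$-transition still yields the right content because the subsequently consumed factor is appended to the now-open memory -- but this is an elaboration of the same construction, not a different argument.
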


\begin{proof}
An arbitrary $\TMFA$ can be changed into an equivalent one in normal form as follows. By introducing $\varepsilon$-transitions, we can make sure that every transition is of the form stated in the proposition. Furthermore, by adding states, we can keep track of the memory configurations (i.\,e., their status and whether or not they are empty; this simple technique is also explained in more detail in the proof of Theorem~\ref{closureComplementTheorem}). This allows us to replace transitions that are recalling an empty memory by $\varepsilon$-transitions. Furthermore,  transitions that open an open memory $i$ are replaced by transitions applying the memory instructions $\close$ and $\open$ in this order to memory $i$, and transitions that reset a memory $i$ are replaced by transitions applying the memory instructions $\close$, $\open$ and $\close$ in this order to memory $i$ (the correctness of this can be easily checked with the help of Figure~\ref{fig:MemConfChanges}). The $\TMFA$ is then in normal form and, by definition, these modifications do not change the accepted language. 
\end{proof}

Now, we can formally prove the claimed characterisation.

\begin{theorem}\label{thm:TMFAisMFA}
$\langcl(\TMFA) = \langcl(\TMFArej) = \langcl(\RX)$.
\end{theorem}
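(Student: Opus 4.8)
My plan is to split the equality chain into one routine identification and one genuine construction. The identification is $\langcl(\TMFArej) = \langcl(\RX)$: as already observed in the discussion following Definition~\ref{TMFADefinition}, a $\TMFArej$ is, up to the negligible formal differences mentioned there, exactly a memory automaton in the sense of Schmid~\cite{sch:cha}, and those characterise $\langcl(\RX)$; so I would just invoke~\cite{sch:cha} here. The inclusion $\langcl(\TMFArej) \subseteq \langcl(\TMFA)$ is immediate from $\TMFArej \subseteq \TMFA$. Since $\TMFA$ is partitioned into $\TMFAacc$ and $\TMFArej$, we have $\langcl(\TMFA) = \langcl(\TMFAacc) \cup \langcl(\TMFArej)$, so the entire theorem reduces to showing $\langcl(\TMFAacc) \subseteq \langcl(\TMFArej)$, i.e.\ that every $M \in \TMFAacc$ has an equivalent $M' \in \TMFArej$.

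For this main step I would realise the doubling idea sketched before the normal-form proposition. First bring $M$ into normal form (using that proposition), so that memory instructions and symbol consumption never collide on a single transition. Then build $M'$ over the doubled memory set $\{(i,1),(i,2)\}$, together with a fresh \emph{accepting} sink $q_{\mathsf{acc}}$ carrying $\unchanged$-labelled self-loops on all of $\Sigma$, and the rejecting trap $\trapstate$. $M'$ operates in two modes. In \emph{simulation mode} it mimics $M$ faithfully while maintaining, for each memory $i$, a guessed factorisation $u_i = u_i^{(1)}u_i^{(2)}$ with $u_i^{(1)}$ in $(i,1)$ and $u_i^{(2)}$ in $(i,2)$: during recording it appends consumed symbols to $(i,1)$ up to a nondeterministically chosen switch point, then closes $(i,1)$, opens $(i,2)$, and records there; a \emph{successful} recall of $i$ is simulated by recalling $(i,1)$ and then $(i,2)$, which for any factorisation consumes exactly $u_i$, so ordinary accepting runs of $M$ are reproduced verbatim. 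At any recall of a memory $i$, $M'$ may instead enter \emph{verification mode}: it recalls only $(i,1)$ (consuming the guessed matched prefix), consumes one further input symbol $c'$, and checks $c' \neq c$, where $c$ is the first symbol of $u_i^{(2)}$ recorded in the finite control at the switch point; on success it moves to $q_{\mathsf{acc}}$, which then absorbs the rest of the input and accepts. All bad guesses fall into $\trapstate$.

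For correctness I would prove that verification mode accepts at a recall of $i$ iff $M$ can take its accepting-trap transition there, which (together with the faithful simulation of ordinary runs) gives $\lang(M') = \lang(M)$. The crucial point, flagged as essential in the text, is that \emph{every} factorisation of $u_i$ is available, so in particular the split at the longest common prefix $v$ of $u_i$ and the remaining input can be guessed; with $u_i^{(1)} = v$ the recall of $(i,1)$ succeeds and $c = u_i[|v|{+}1] \neq w[1] = c'$ is precisely the memory recall failure from Definition~\ref{TMFADefinition}, whereupon $q_{\mathsf{acc}}$ mirrors the unconditional acceptance of the remaining input by $M$'s accepting trap. A split that is too short forces $c = c'$ (the strings still agree there) and fails the check; a split that is too long makes the recall of $(i,1)$ fall into $\trapstate$; hence only the genuinely failing configuration survives. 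I expect the main obstacle to be the bookkeeping rather than any conceptual difficulty: one must check that a single switch per memory suffices even when a memory is reopened and recalled several times before the guessed failure (each $\open$ discards the previous content, so only the last recording is relevant), and one must keep in mind that a recall defective only because the input is too short is \emph{not} an accepting-trap step (the formal condition needs $w[1]$ to exist), which is exactly why verification mode insists on reading an actual mismatching symbol $c'$ rather than merely running out of input.
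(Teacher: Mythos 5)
Your proposal follows essentially the same route as the paper's proof: reduce everything to $\langcl(\TMFAacc)\subseteq\langcl(\TMFArej)$, normalise, double each memory $i$ into $(i,1),(i,2)$ holding a guessed factorisation with the first symbol of the second part cached in the finite control, and simulate a recall failure by recalling $(i,1)$, reading one mismatching symbol, and jumping to an accepting sink. You correctly identify the one nontrivial lemma (that every factorisation of a memory's content is reachable, which in the paper is proved by unwinding the chain of recalls that filled the memory), so the proposal is sound and matches the paper's argument.
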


%\begin{theorem}\label{thm:TMFAisMFA}
%$\langcl(\TMFA) = \langcl(\TMFArej)$.
%\end{theorem}

\begin{proof}
We first note that $\langcl(\TMFArej) = \langcl(\RX)$ follows from \cite{sch:cha} (we briefly discuss this at the end of the proof). Since $\langcl(\TMFArej) \subseteq \langcl(\TMFA)$ and $\TMFA = \TMFArej \cup \TMFAacc$, it only remains to prove $\langcl(\TMFAacc) \subseteq \langcl(\TMFArej)$. To this end, let $M$ be a $\TMFAacc$ in normal form. 
%
%
%, it only remains to show $\langcl(\TMFArej) = \langcl(\TMFArej)$
%
%Let $M$ be a $\TMFA$ in normal form. If the state $\trapstate$ of $M$ is not accepting, then we can obtain an $\MFA$ $M'$ by removing the state $\trapstate$. Obviously, every computation of $M'$ on some $w \in \Sigma^*$ that leads to an accepting configuration can be carried out by $M$ in the same way and, furthermore, in every computation of $M$ on some $w \in \Sigma^*$ that leads to an accepting configuration, the state $\trapstate$ cannot be visited; thus, this computation can be carried out by $M'$ in the same way. This implies $L(M) = L(M')$.\par
%If $\trapstate$ is accepting, then the construction is more complicated. 
First, we replace every memory $i$, $1 \leq i \leq k$, by two memories $(i, 1)$ and $(i, 2)$ and we implement in the finite state control a list $(x_1, x_2, \ldots, x_k)$ with entries from $\Sigma \cup \{\varepsilon\}$, which initially satisfies $x_i = \varepsilon$, $1 \leq i \leq k$. Then, we change the transitions of $M$ such that the new memories $(i, 1)$ and $(i, 2)$ simulate the old memory $i$, i.\,e., memory $i$ stores some word $u$ if and only if memories $(i, 1)$ and $(i, 2)$ store $u_1$ and $u_2$, respectively, with $u = u_1u_2$. Moreover, the element $x_i$ always equals the first symbol of the content of memory $(i, 2)$. More precisely, this can be done as follows. Let $\delta(q, b) \ni (p, s_1, \ldots, s_k)$ be an original transition of~$M$.
\begin{itemize}
\item If $s_i = \open$ or $s_i = \close$, for some $i$, $1 \leq i \leq k$, then instead we open memory $(i, 1)$ or close memory $(i, 2)$, respectively.
\item If $b \in \Sigma$, then, for every open memory $(i, 1)$, we nondeterministically choose to close it and open memory $(i, 2)$ instead and set $x_i = b$. Then we read $b$ from the input and change to state $p$.
\item If $b \in \{1, 2, \ldots, k\}$, then we first recall memory $(b, 1)$ and then, for every open memory $(i, 1)$, we nondeterministically choose to close it and open memory $(i, 2)$ instead and set $x_i = x_b$. Then, we recall memory $(b, 2)$ and change to state $p$. 
\end{itemize}
All these modifications can be done by introducing intermediate states and using $\varepsilon$-transitions and the accepted language of $M$ does not change. \par
The automaton $M$ now stores some content $u$ of an original memory $i$ factorised into two factors $u_1$ and $u_2$ in the memories $(i, 1)$ and $(i, 2)$, respectively. For the sake of convenience, we simply say that $u$ is stored in $(i, 1) \cdot (i, 2)$ in order to describe this situation. Next, we show that if $u$ is stored in $(i, 1) \cdot (i, 2)$, then any way of how $u$ is factorised into the content of $(i, 1)$ and $(i, 2)$ is possible. More precisely, we show that, for every $w, u, u_1, u_2 \in \Sigma^*$ with $u_1 u_2 = u$, $M$ can reach state $p$ by consuming $w$ with $u$ stored in $(i, 1) \cdot (i, 2)$ if and only if $M$ can reach state $p$ by consuming $w$ with $u_1$ and $u_2$ stored in $(i, 1)$ and $(i, 2)$, respectively .\par
The \emph{if} part of this statement is trivial. We now assume that $M$ can reach state $p$ by consuming $w$ with $u$ stored in $(i, 1) \cdot (i, 2)$. This implies that we reach the situation that $(i, 1)$ is open, currently stores $u'_1$ and the next consuming transition consumes $u''_1 u'_2$, where $u_1 = u'_1 u''_1$ and $u_2 = u'_2 u''_2$ with $u'_2 \neq \varepsilon$. If $u''_1 = \varepsilon$, then $M$ can choose to close $(i, 1)$ and then open $(i, 2)$, which results in $u_1$ and $u_2$ being stored in $(i, 1)$ and $(i, 2)$, respectively. If, on the other hand, $u''_1 \neq \varepsilon$, then the next transition recalls memories $(j, 1)$ and $(j, 2)$ such that $u''_1 u'_2$ is stored in $(j, 1) \cdot (j, 2)$. If $u''_1$ and $u'_2$ are stored in $(j, 1)$ and $(j, 2)$, respectively, then $M$ first recalls $(j, 1)$, chooses to close $(i, 1)$ and open $(i, 2)$, and then recalls $(j, 2)$, which results in $u_1$ and $u_2$ being stored in $(i, 1)$ and $(i, 2)$. Consequently, we have to repeat this argument for memories $(j, 1)$ and $(j, 2)$, i.\,e., we have to show that it is possible that $u''_1 u'_2$ is stored in $(j, 1) \cdot (j, 2)$ in such a way that $u''_1$ is stored in $(j, 1)$ and $u'_2$ is stored in $(j, 2)$. Repeating this argument, we will eventually arrive at a memory that is not filled by any memory recalls; thus, we necessarily have the case $u''_1 = \varepsilon$.\par
Now, we turn $M$ into a $\TMFArej$ $M'$, i.\,e., the state $\trapstate$ becomes non-accepting, and, in addition, we add a new accepting state $q_t$ (simulating the old accepting $\trapstate$) with $\delta(q_t, x) = \{(q_t, \unchanged, \ldots, \unchanged)\}$, $x \in \Sigma$, and we change all ordinary transitions (i.\,e., transitions that are not recall failure transitions) of the former $M$ that lead to $\trapstate$ such that they now lead to $q_t$. 
%Finally, we add a new accepting state $q_t$, i.\,e., all the transitions leading to $\trapstate$ now lead to $q_t$, with $\delta(q_t, x) = \{(q_t, \unchanged, \ldots, \unchanged)\}$, $x \in \Sigma$, we can turn $M$ into an $\MFA$ $M'$.
%By replacing $\trapstate$ by a new accepting state $q_t$, i.\,e., all the transitions leading to $\trapstate$ now lead to $q_t$, with $\delta(q_t, x) = \{(q_t, \unchanged, \ldots, \unchanged)\}$, $x \in \Sigma$, we can turn $M$ into an $\MFA$ $M'$. 
Furthermore, we change this $M'$ such that for every memory recall, there is also the nondeterministic choice to only recall $(i, 1)$, then check whether $x_i$ does not equal the next symbol on the input and, if this is the case, enter state $q_t$. 
Obviously, this simulates the memory recall failure of $M$. \par
Every word accepted by $M$ without memory recall failures can be accepted by $M'$ in the same way, every word accepted by $M$ due to a recall failure can be accepted by $M'$ by guessing and simulating this memory recall failure. On the other hand, if $M'$ accepts a word with a simulated memory recall failure, then $M$ will accept this word by a proper memory recall failure, and if $M'$ accepts a word without a simulated memory recall failure, then, since $M' \in \TMFArej$, there is no memory recall failure in the computation and $M$ can accept the word by the same computation.\par
This completes the proof of $\langcl(\TMFAacc) \subseteq \langcl(\TMFArej)$.\par
We shall conclude this proof by briefly sketching why $\langcl(\TMFArej) = \langcl(\RX)$ holds. For an $\alpha \in \RX$, it is straightforward to obtain an equivalent $\TMFArej$: Transform $\alpha$ into a proper regular expression $\alpha_{\mathcal{R}}$ with $\lang(\alpha_{\mathcal{R}}) = \refl(\alpha)$ (by just renaming variable bindings and references), then transform $\alpha_{\mathcal{R}}$ into an equivalent $\NFA$~$M$, and finally interpret $M$ as a $\TMFArej$ by interpreting transition labels $[_{x}$, $]_{x}$ as memory instructions and transition labels $x$ as memory recalls. The other direction relies on first resolving overlaps of memories (i.\,e., the case that two memories store factors that overlap in the input word, see also Example~\ref{ex:tmfa2}) and then transforming the $\TMFArej$ $M$ into a proper regular expression for a ref-language that dereferences to $\lang(M)$, which can then directly be interpreted as a regex (due to the non-overlapping property of memories, which translates into a well-formed nesting of the parentheses $[_x$, $]_x$). This works in the same way as for the case of memory automata without trap-states (see~\cite{sch:cha} for details).
\end{proof}

A consequence of the proof is that $\TMFA$ inherits the $\NP$-hardness of the membership problem from $\RX$. We do not devote more attention to this, as we  focus on deterministic $\TMFA$.

\subsection{Deterministic $\TMFA$}\label{sec:DTMFA}

A $\TMFA$ is \emph{deterministic} (or a $\DTMFA$, for short) if $\delta$ satisfies $|\delta(q, b)| \leq 1$, for every $q \in Q$ and $b \in \Sigma \cup \{\varepsilon\} \cup \{1, 2, \ldots, k\}$ (for the sake of convenience, we then interpret $\delta$ as a partial function with range $Q  \times \{\open, \close, \reset, \unchanged\}^k$), and, furthermore, for every $q \in Q$, if $\delta(q, x)$ is defined for some $x \in \{1, 2, \ldots, k\} \cup \{\varepsilon\}$, then, for every $y \in (\Sigma \cup \{\varepsilon\} \cup \{1, 2, \ldots, k\}) \setminus \{x\}$, $\delta(q, y)$ is undefined.\footnote{Note that in~\cite{sch:cha} deterministic memory automata \emph{without} trap-state are considered.} Analogously to $\TMFA$, we partition  $\DTMFA$ into  $\DTMFAacc$ and $\DTMFArej$.\par
Clearly, the $\TMFA$ of Examples~\ref{ex:tmfa1}~and~\ref{ex:tmfa2} are not deterministic (in Example~\ref{ex:tmfa1}, there are different transitions for the same state that consume the same symbol and in Example~\ref{ex:tmfa2}, there are states for which $\emptyword$-transitions exists in addition to other transitions). By minor changes of the $\TMFA$ of Example~\ref{ex:tmfa1}, a $\DTMFA$ can be easily constructed for the language $(\{u \td v \td v u \mid u, v \in \{\ta, \tb\}^+\})^+$, the details are left to the reader.\par
The algorithmically most important feature of $\DTMFA$ is that their membership can be solved efficiently by running the automaton on the input word. However, for each processed input symbol, there might be a delay of at most $|Q|$ steps, due to $\emptyword$-transitions and recalls of empty memories, which leads to $O(|Q||w|)$. Removing such non-consuming transitions first, is possible, but problematic. In particular, recalls of empty memories depend on the specific input word and could only be determined beforehand by storing for each memory whether it is empty, which is too expensive. However, by an $O(|Q|^2)$ preprocessing, we can compute the information that is needed in order to determine in $O(k)$ where to jump if certain memories are empty, and which memories are currently empty can be determined on-the-fly while processing the input. This leads to a delay of only $k$, the number of memories:

%\begin{theorem}\label{thm:dtmfamembership}
%Given $M \in \DTMFA$ with $n$ states and $k$ memories, and $w \in \Sigma^*$, we can decide in time $O(n^2 + k|w|)$, whether or not $w \in \lang(M)$.
%\end{theorem}

%\todo[inline, color=blue!30]{MS: I restated this theorem in order to highlight the ``preprocessing'' aspect a little (see also the following corollary). I have no strong opinion here, so change back if you like the old version better.}
%\todomcom{How about this?}
%\begin{theorem}\label{thm:dtmfamembership}
%Given $M \in \DTMFA$ with $n$ states and $k$ memories, and $w \in \Sigma^*$, we can decide in time $O(n|w|)$ (or in time $O(k|w|)$ after an $O(n^2)$ preprocessing) whether or not $w \in \lang(M)$.
%\end{theorem}
\begin{theorem}\label{thm:dtmfamembership}
	Given $M \in \DTMFA$ with $n$ states and $k$ memories, and $w \in \Sigma^*$, we can decide  whether or not $w \in \lang(M)$
	\begin{itemize}
		\item in time $O(n|w|)$ without preprocessing, or
		\item in time $O(k|w|)$ after an $O(n^2)$ preprocessing.
	\end{itemize}
\end{theorem}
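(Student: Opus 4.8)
The plan is to prove both running-time bounds by simulating the unique run of the $\DTMFA$ $M$ on $w$, where the determinism guarantees that from each configuration there is at most one applicable transition. The central difficulty, as the preceding discussion flags, is that a $\DTMFA$ can take many non-consuming steps between two consecutive input symbols: $\varepsilon$-transitions, and memory-recall transitions that recall an \emph{empty} memory (which consume nothing). I first argue that between two consumptions of input symbols, the automaton traverses a path of such non-consuming transitions, and that this path cannot revisit a state-plus-relevant-configuration without either looping forever (in which case $w\notin\lang(M)$ and we halt) or being irrelevant. This bounds the non-consuming ``delay'' by $O(n)$, since determinism means the non-consuming transition out of each state is unique and depends only on the current state together with which memories are currently empty.

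For the first bound, $O(n|w|)$ without preprocessing, I would run the simulation directly. I maintain the current state, the remaining input, and for each memory its status in $\{\opened,\closed\}$ together with enough information to decide recalls. The key observation is that to decide where a memory-recall transition leads, I only need to compare the memory content against the upcoming input; by the definition of $\vdash_M$ and the trap-state rule, recalling a non-empty memory either consumes a genuine prefix of the remaining input (advancing the input pointer) or routes to $\trapstate$, while recalling an empty memory is a non-consuming step. So each non-consuming segment is a simple path through at most $n$ states (if it repeats a state with the same emptiness pattern, the run loops and I reject), and each consuming step then advances the input. This yields total time $O(n|w|)$, and I would note that the trap-state behaviour is handled uniformly: once in $\trapstate$ with a rejecting/accepting convention, acceptance is decided by whether $\trapstate$ (or the accepting surrogate) is final.

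For the sharper $O(k|w|)$ bound after $O(n^2)$ preprocessing, the plan is to precompute, for each state $q$, the effect of following the maximal chain of non-consuming transitions, \emph{as a function of which memories are currently empty}. Concretely, since the only data a non-consuming chain branches on is the emptiness of each recalled memory, I would compute for each state $q$ a compact ``jump'' description: the state reached and the memory instructions applied once we first hit a state whose applicable transition is consuming (or is a recall of a non-empty memory that consumes input), given the current empty/non-empty profile of the memories. The subtlety the text stresses is that emptiness depends on the actual input and cannot be fixed in advance; I resolve this by having the preprocessing produce a structure that, given only the $k$ bits of current emptiness, returns the next consuming configuration in $O(k)$ time, rather than re-walking the chain. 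Computing these jump tables amounts to a transitive-closure-style traversal of the $\varepsilon$- and empty-recall transition graph over the $n$ states, doable in $O(n^2)$; thereafter each processed input symbol costs $O(k)$ to update emptiness and follow one jump, giving $O(k|w|)$ overall.

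The main obstacle I anticipate is making precise that the $O(n^2)$ preprocessing correctly summarizes all relevant non-consuming behaviour while keeping the per-symbol cost at $O(k)$ rather than $O(n)$: one must argue that the jump target depends only on the $k$-bit emptiness profile and that updating this profile after each consuming step (a single memory may become non-empty, or be reset) is an $O(k)$ operation consistent with Figure~\ref{fig:MemConfChanges}. I would handle this by a careful invariant tying the finite-control bookkeeping of memory emptiness to the actual memory configurations, exactly as in the normal-form and complementation constructions, and then verify that following a precomputed jump plus one consuming transition preserves the invariant.
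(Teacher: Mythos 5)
Your overall plan coincides with the paper's: simulate the unique run directly for the $O(n|w|)$ bound, and for the $O(k|w|)$ bound precompute, for each state, a summary of the maximal chain of non-consuming transitions ($\varepsilon$-transitions and recalls of empty memories) that can be evaluated quickly at runtime. However, the crux of the second bound is exactly the step you flag as ``the main obstacle'' and then leave unresolved: you assert that preprocessing can ``produce a structure that, given only the $k$ bits of current emptiness, returns the next consuming configuration in $O(k)$ time,'' but you never say what that structure is, and a table indexed by emptiness profiles has $2^k$ entries while re-walking the raw chain costs $O(n)$. The missing idea is a \emph{contraction} of the chain itself: list the consecutive recall transitions out of $q$ as $(q_1,\ell_1),(q_2,\ell_2),\ldots$ and delete every entry whose memory $\ell_i$ has already appeared earlier in the list. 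This is sound because the simulation only advances past entry $(q_i,\ell_i)$ when memory $\ell_i$ is currently empty, and during non-consuming steps memories can only become empty (via $\open$ or $\reset$), never non-empty; hence any later recall of $\ell_i$ in the same chain is also an empty recall and its entry is redundant. The contracted list has at most one entry per memory, so at most $k$ entries, and the runtime step is a left-to-right scan of this list checking emptiness bits until a non-empty memory is found --- that is where $O(k)$ actually comes from. Without this deduplication argument the second bullet of the theorem is not established.

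A secondary, fixable error: you claim that if the non-consuming segment revisits a state ``the run loops and I reject.'' This is wrong when the input has already been fully consumed and the loop (or the skipped portion of the chain) passes through an accepting state: by the acceptance condition a configuration with empty remaining input and $q\in F$ reached via further $\varepsilon$-moves or empty recalls still accepts. The paper handles this by marking, in each contracted list, which entries subsume an accepting intermediate state and by recording whether the terminal loop contains one; your algorithm needs the analogous bookkeeping, since acceptance along a skipped subchain cannot be folded into the static accepting-state set of the automaton (it depends on which memories happen to be empty on the given input).
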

%\subsection{Proof of Theorem~\ref{thm:dtmfamembership}}
\begin{proof}
We first modify $M$ with respect to its $\emptyword$-transitions as follows. 
%For every $\emptyword$-transition that is followed by at least another $\emptyword$-transition as follows.
Let $p \in Q$ be a state with an $\emptyword$-transition that is followed by another $\emptyword$-transition. If $p$ is contained in a cycle $q_1, q_2, \ldots, q_n$ of $\emptyword$-transitions, we simply replace this cycle by a single state $q'$ (i.\,e., all incoming edges of any $q_i$, $1 \leq i \leq n$, then point to $q'$) that is accepting if and only if some $q_i$, $1 \leq i \leq n$, is (note that, since $M$ is deterministic, no $q_i$ has any other transition). Otherwise, there are states $q_1, q_2, \ldots, q_n$, $p = q_1$, with transitions $\delta(q_i, \emptyword) = (q_{i + 1}, s_{i, 1}, \ldots, s_{i, k})$, $1 \leq i \leq n-1$, such that $q_n$ has no $\emptyword$-transition. We can now remove the transition $\delta(q_1, \emptyword) = (q_{2}, s_{1, 1}, \ldots, s_{1, k})$ and add a transition $\delta(q_1, \emptyword) = (q_{n}, t_{1}, \ldots, t_{k})$, where, for every $j$, $1 \leq j \leq k$, $t_j$ is a memory instruction that has the same effect as applying instructions $s_{1, j}, s_{2, j}, \dots, s_{n-1, j}$ in this order. Moreover, if, for some $i$, $2 \leq i \leq n-1$, $q_i \in F$, then we define $q_1$ as accepting. By	applying this modification for every $\emptyword$-transition that is followed by another $\emptyword$-transition, we can modify $M$ such that no $\emptyword$-transition is followed by another $\emptyword$-transition. Hence, since $M$ is deterministic, there are at most $|Q|$ $\emptyword$-transitions and for each, we have to determine the states $q_1, q_2, \ldots, q_n$ and perform the modifications described above, which can be done in time $O(|Q|)$, as well. Consequently, the whole procedure can be carried out in $O(|Q|^2)$. 
	
	Next, we consider states with a memory recall transition. Similar as for states with $\emptyword$-transition, such states are followed by a (possibly empty) sequence of consecutive memory recall or $\emptyword$-transitions that either ends in a state with neither memory recall nor $\emptyword$-transition or eventually forms a loop. We first consider the case, where this sequence does not contain any $\emptyword$-transitions and does not form a loop. 
%	without any $\emptyword$-transition and without loop. 
Let $q_1$ be the state with memory recall transition and let $(q_1, \ell_1), (q_2, \ell_2), \ldots, (q_n, \ell_n), q_{n+1}$ be the sequence of the following states with consecutive memory recall transitions along with the memory that is recalled. More precisely, the transition from $q_{i}$ to $q_{i+1}$, $1 \leq i \leq n$, recalls $\ell_{i}$ and the last element $q_{n+1}$ is the first state without memory recall transition (and, by assumption, also without $\emptyword$-transition). We now contract this list by the following algorithm. Initially, let $A = \emptyset$. Then we move through the list from left to right and for every element $(q_i, \ell_i)$ (except for $q_{n+1}$), we proceed as follows. If $\ell_i \in A$, then we remove $(q_i, \ell_i)$ and if $\ell_i \notin A$, then we keep $(q_i, \ell_i)$ and add $\ell_i$ to $A$. Obviously, this results in a list $(p_1, r_1), \ldots, (p_{n'}, r_{n'}), q_{n+1}$ with $n' \leq k$. The idea is that if we move from left to right through this new list, it tells us which state to enter if the memory of the current memory recall is empty, i.\,e., if memory $r_1$ is non-empty, we recall it in state $p_1$, if memory $r_1$ is empty, we can directly jump to state $p_2$ and either recall $r_2$, if it is non-empty, or jump to $p_3$ otherwise, and so on. If all memories (that occur somewhere in the list) are empty, we end up in state $q_{n+1}$.\par
	In the presence of $\emptyword$-transitions, we simply ignore these and always only consider the next transition that recalls a memory, i.\,e., it is possible that for elements $(q_i, \ell_i)$ and $(q_{i + 1}, \ell_{i + 1})$ of the non-contracted list, there is an intermediate state $p$ with recall transition from $q_i$ to $p$ and $\emptyword$-transitions from $p$ to $q_{i + 1}$ (note that due to the construction from above, there are no consecutive $\emptyword$-transitions), but the contraction works in the same way. Moreover, if $(q_i, \ell_i)$ and $(q_j, \ell_j)$ (or $q_{n+1}$, the last element) with $i < j$ are consecutive elements of the contracted list (i.\,e., all elements $(q_{r}, \ell_{r})$, $i + 1 \leq r \leq j-1$, have been deleted by the algorithm), then we replace $(q_i, \ell_i)$ by $(q^{\textsc{acc}}_i, \ell_i)$ (where the marker $\textsc{acc}$ means that we can accept), if for some $r$, $i + 1 \leq r \leq j$, $q_{r} \in F$. Note that this is analogous to the modification from above, where we define states as accepting, if they are connected to an accepting state by a sequence of $\emptyword$-transitions, but here we cannot change acceptance of the actual states, since it depends on the current contents of memories, whether we can reach an accepting state by only recalls of empty memories or $\emptyword$-transitions.	\par
	If the sequence of memory recall transitions enters a loop, we construct the list only up to the first time a state is repeated, say $p$, and have $(p, \textsc{loop})$ as the last element of the list. Then we apply the contraction in the same way as before, where $(p, \textsc{loop})$ plays the role of $q_{n+1}$. 
	%As before, if all memories are empty, then we reach the element $(p, \textsc{loop})$, which tells us that the automaton has entered an infinite $\emptyword$-loop. 
	Similarly as before, we mark elements $(q_i, \ell_i)$ as accepting if a pair was removed that contained an accepting state.\par
	In addition to the states, we also store in the list the memory instructions that have to be applied in order to  jump to the next state (this can be done similiar as for contracting the $\emptyword$-transitions above). We construct such a list for every state with a memory recall transition. Every single list can be constructed in time $O(|Q|)$, so we need time $O(|Q|^2)$ in total. \par
	Now we check whether or not $w \in \lang(M)$ by running (the modified) $M$ on input $w$ in a special way. We first initialise a list $(1, \closed, \emptyword), (2, \closed, \emptyword), \ldots, (k, \closed, \emptyword)$ indicating that every memory is closed and empty. Then we simulate $M$ on input $w$ as follows. Every transition that consumes a single symbol as well as every $\emptyword$-transition is just carried out. Whenever a memory status is changed, we store this in the list and we also store whether a memory is currently empty or not (note that we have to know the current statuses in order to do this). When a memory is recalled in state $q$, then we move through the list stored for state $q$ until we find a recall of a memory that is currently non-empty, jump in the automaton to the corresponding state and apply the memory instructions. Whenever we reach an element $(q^{\textsc{acc}}_i, \ell_i)$ in the list, then we check whether the input has been fully consumed and if yes, we conclude $w \in \lang(M)$. If we reach in a list an element $(p, \textsc{loop})$, then we conclude $w \in \lang(M)$, if $p \in F$ and $w \notin \lang(M)$ otherwise. If in the computation the input has been completely consumed, then we conclude $w \in \lang(M)$ if and only if $M$ is in an accepting state. \par
	Since there are no consecutive $\emptyword$-transitions, every consumption of a single symbol from the input by a transition is done in constant time. Every consumption by a memory recall transition requires time $O(k)$, since we have to move through a list of size $O(k)$. Consequently, the total running time is $O(|Q|^2 + k|w|)$.
\end{proof}

Note that the preprocessing in the proof of Theorem~\ref{thm:dtmfamembership} is only required once, which implies the following corollary.
%so we can solve the membership for several words $w_i$ in $O(n^2 + k\sum|w_i|)$. 

\begin{corollary}
Given $M \in \DTMFA$ with $n$ states and $k$ memories, and words $w_i \in \Sigma^*$, $1 \leq i \leq \ell$, we can decide whether or not $w_i \in \lang(M)$, $1 \leq i \leq \ell$, in total time $O(n^2 + k\sum|w_i|)$. 
\end{corollary}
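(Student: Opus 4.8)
The plan is to observe that the $O(n^2)$ preprocessing in the proof of Theorem~\ref{thm:dtmfamembership} depends only on the automaton $M$ and not on any particular input word, so that it can be performed once and then reused for all $\ell$ inputs. First I would carry out this preprocessing a single time: this comprises eliminating consecutive $\emptyword$-transitions (so that no $\emptyword$-transition is immediately followed by another), and, for each state with a memory recall transition, constructing the associated contracted list of length $O(k)$ that records, for each memory in the recall sequence, which state to jump to and which memory instructions to apply once certain memories turn out to be empty. Each of these data structures is a function of the transition structure of $M$ alone, and their construction takes $O(n^2)$ time in total.

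Next, for each word $w_i$ I would run the membership test of Theorem~\ref{thm:dtmfamembership} on the already-preprocessed automaton. The only input-dependent information used during a run is which memories are currently empty, and the theorem shows this can be maintained on-the-fly while scanning $w_i$: each single-symbol transition and each $\emptyword$-transition is processed in constant time (since there are no consecutive $\emptyword$-transitions after preprocessing), and each memory recall is resolved in $O(k)$ time by walking through the precomputed list until a non-empty memory is found. Hence each $w_i$ is decided in time $O(k|w_i|)$, with no further preprocessing required.

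Summing over all inputs, the total running time is $O(n^2) + \sum_{i=1}^{\ell} O(k|w_i|) = O(n^2 + k\sum_{i=1}^{\ell}|w_i|)$, as claimed. The only point that requires care -- and the closest thing to an obstacle in an otherwise immediate argument -- is to confirm that the preprocessing step genuinely factors out \emph{all} input-independent work, i.e.\ that neither the modified automaton nor the auxiliary lists ever need to be recomputed for a new word. This is exactly what the proof of Theorem~\ref{thm:dtmfamembership} establishes, since there the per-word cost is precisely the $O(k|w|)$ term while the $O(n^2)$ term is incurred purely in the construction phase; separating these two contributions is all that the corollary needs.
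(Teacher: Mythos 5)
Your proposal is correct and follows exactly the paper's argument: the paper derives this corollary in one line by noting that the $O(n^2)$ preprocessing in the proof of Theorem~\ref{thm:dtmfamembership} depends only on $M$ and is therefore required only once, after which each word $w_i$ costs $O(k|w_i|)$. You have simply spelled out the same observation in more detail.
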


Moreover, if it is guaranteed that no empty memories are recalled, then membership can be solved in $O(n + |w|)$ (where $O(n)$ is needed in order to remove $\emptyword$-transitions). \par
Similar to $\DFA$, it is possible to complement $\DTMFA$ by toggling the acceptance of states. However, for $\DTMFA$, we have to remove $\emptyword$-transitions and recalls of empty memories. In particular, our construction uses the finite control to store whether memories are empty or not, which causes a blow-up that is exponential in the number of memories.

%\subsection{Proof of Theorem~\ref{closureComplementTheorem}}\label{app:complement}
We first extend the notion of completeness from $\DFA$ to $\DTMFA$, by saying that a $\DTMFA$ is \emph{complete} if, for every $q \in Q$, either $\delta(q, x)$ is defined, for every $x \in \Sigma$, or $\delta(q, i)$ is defined, for some $i$, $1 \leq i \leq k$, or $\delta(q, \varepsilon)$ is defined. This means that a complete $\DTMFA$ has, for every state, either exactly $|\Sigma|$ transitions (which are all consuming transitions, but not memory recall transitions), exactly one memory recall transition, or exactly one $\varepsilon$-transition. 

For deterministic automata, it is usually possible to apply the state complementation technique (i.\,e., toggling acceptance of states) in order to show closure under complement. However, we also need completeness and $\emptyword$-freeness, since otherwise it may happen that a word is not accepted because its computation gets stuck or enters an infinite $\emptyword$-loop and therefore is not entirely processed, which leads to a word which is accepted neither by the original nor by the complement automaton. The requirement of completeness and $\emptyword$-freeness is not a restriction for $\DTMFA$, since these properties can be achieved by classical techniques. However, recalling empty memories, which are special cases of $\emptyword$-transition, can cause the same problems and therefore we have to get rid of them as well. This can be done by storing in the finite-state control whether the memories are currently empty or non-empty and then treating recalls of empty memories as $\emptyword$-transitions and remove them along with the other $\emptyword$-transition in the classical way (note that the trick of handling empty memories that has been used in the context of Theorem~\ref{thm:dtmfamembership} cannot applied here, since the automaton needs to store the information for all possible runs on input words). 

 %(all these transformations are formally carried out in a detailed way in the Appendix).

We need a few more definitions: Let $\Gamma = \{\open, \close, \reset, \unchanged\}$ and let $\circledcirc$ be a binary operator on $\Gamma$ defined by $x \circledcirc y = y$, if $y \neq \unchanged$ and $x \circledcirc y = x$, if $y = \unchanged$. Furthermore, we extend $\circledcirc$ to $\Gamma^k$ by $(x_1, \ldots, x_k) \circledcirc (y_1, \ldots, y_k) = (x_1 \circledcirc y_1, \ldots, x_k \circledcirc y_k)$. We note that $\circledcirc$ is associative and some memory instructions $s_1, s_2, \ldots, s_n \in \Gamma$ applied to some memory in this order have the same result as the memory instruction $s_1 \circledcirc s_2 \circledcirc \ldots \circledcirc s_n$ (this can be easily verified with the help of Figure~\ref{fig:MemConfChanges}).\par
Next, we prove a sequence of propositions (that are all proved in a straightforward way by applying classical automata constructions):

\begin{proposition}\label{uniqueComputationProposition}
Let $M \in \DTMFA$. For every $w \in \Sigma^*$ and every configuration $c$ for $M$, there exists at most one configuration $c'$ with $c \vdash_M c'$. 
\end{proposition}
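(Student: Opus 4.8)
The plan is to prove the statement by a case analysis on the state $q$ of the given configuration $c = (q, w, (u_1,r_1),\ldots,(u_k,r_k))$, exploiting the two determinism conditions in the definition of $\DTMFA$. The first thing I would record is that these conditions force the transitions available at any non-trap state $q$ into exactly one of three mutually exclusive \emph{modes}: either $q$ carries a single $\emptyword$-transition and nothing else, or a single memory-recall transition $\delta(q,i)$ and nothing else, or only consuming transitions $\delta(q,a)$ with $a\in\Sigma$ (at most one per symbol, by $|\delta(q,b)|\le 1$). If no transition at all is defined there is no successor. Hence it suffices to establish uniqueness within each mode.

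A second preliminary observation that I would state once and reuse is that the memory-update rules of Definition~\ref{TMFADefinition} are \emph{functional}: once the firing transition $(p,s_1,\ldots,s_k)$ and the consumed factor $v$ are fixed, each new memory pair $(u_i',r_i')$ is determined by $(u_i,r_i)$ and $s_i$ alone, and the new state is $p$. Thus for the $\emptyword$-mode and the consuming mode uniqueness is immediate: for an $\emptyword$-transition $v=\emptyword$ and the transition is unique; for the consuming mode a transition $\delta(q,a)$ fires only when $v=a$ is a prefix of $w$, so nothing fires if $w=\emptyword$ and otherwise only the (unique) transition on $w[1]$ applies.

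The memory-recall mode is the step I expect to be the main obstacle, because here two different clauses of Definition~\ref{TMFADefinition} can in principle yield a successor: an ordinary recall (consuming $v=u_i$ and entering $p$) and a recall failure (consuming a matched prefix and entering $\trapstate$). The single recall transition $\delta(q,i)=(p,s_1,\ldots,s_k)$ can fire at all only if $s_i=\close$, so the only freedom left is the choice of clause and of the consumed factor. The key fact I would isolate is that the failure clause forces $v$ to be the \emph{longest common prefix} of $u_i$ and $w$: writing $u_i=vv'$ with $v'\neq\emptyword$ and demanding $v'[1]\neq w[|v|+1]$ says exactly that $v$ is a common prefix of $u_i$ and $w$ that cannot be extended. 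This pins $v$ down uniquely and makes the two clauses mutually exclusive — an ordinary recall needs $u_i$ to be a prefix of $w$ (no mismatch), a failure needs a genuine mismatch, and when $w$ is a proper prefix of $u_i$ neither clause applies (no successor). In all cases at most one $c'$ arises.

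Finally I would address $q=\trapstate$, which falls outside the three-mode partition, since its fixed transitions in Definition~\ref{TMFADefinition} comprise both an $\emptyword$-self-loop and consuming self-loops. Every such transition returns to $\trapstate$ with unchanged memories, so the natural resolution is to note that the $\emptyword$-self-loop is redundant for a $\DTMFA$ and to drop it; then from $\trapstate$ only the consuming self-loop on $w[1]$ is applicable (and none if $w=\emptyword$), again giving a unique successor. With the three genuine modes plus this trap-state case, at most one $c'$ with $c\vdash_M c'$ exists in every case.
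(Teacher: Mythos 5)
Your proof is correct and follows essentially the same route as the paper's: a case analysis on which kind of transition the current state admits, with the crux being that in the memory-recall case the ordinary-recall clause and the failure clause are mutually exclusive and each determines the successor uniquely (the failure clause pinning the consumed prefix down as the longest common prefix of $u_i$ and the remaining input). Your explicit treatment of $\trapstate$, whose simultaneous $\emptyword$- and consuming self-loops the paper's proof silently passes over, is a welcome extra bit of care rather than a genuinely different approach.
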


\begin{proof}
Let $c = (q, v, (u_1, r_1), \ldots, (u_k, r_k))$. If no $\delta(q, i)$, $1 \leq i \leq k$, is defined, then there is obviously at most one $c'$ with $c \vdash_m c'$. If $\delta(q, i) = (p, s_1, \ldots, s_k)$, for some $i$, $1 \leq i \leq k$, then either $v = u_i v'$, which implies that $c \vdash_M (p, v', (u'_1, r'_1), \ldots, (u'_k, r'_k))$, where the $(u'_j, r'_j)$, $1 \leq j \leq k$, are uniquely determined by $u_i$ and the $s_j$, $1 \leq j \leq k$, or $u_i$ is not a prefix of $v$, which implies that $c \vdash_M (\trapstate, v'', (u_1, r_1), \ldots, (u_k, r_k))$, where $v = v' v''$ and $v'$ is the largest common prefix of $v$ and $u_i$. In both cases, there is at most one configuration $c'$ with $c \vdash_M c'$.
\end{proof}

\begin{proposition}\label{removeEpsilonTransitionsProposition}
For every $M \in \DTMFA$ there exists an $\varepsilon$-free $M' \in \DTMFA$ with $\lang(M) = \lang(M')$.
\end{proposition}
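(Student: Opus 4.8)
The plan is to adapt the classical $\varepsilon$-elimination for finite automata, exploiting the fact that determinism \emph{forces} every $\varepsilon$-transition. Indeed, by the determinism condition, whenever $\delta(q, \varepsilon)$ is defined it is the only transition leaving $q$; hence in the subgraph consisting of the $\varepsilon$-transitions every state has out-degree at most one. Consequently, starting from any state $q$ and repeatedly following $\varepsilon$-transitions yields a \emph{unique} sequence $q = q_0, q_1, q_2, \ldots$, which, since $Q$ is finite, is eventually periodic: it either reaches a state with no $\varepsilon$-transition or runs into an $\varepsilon$-cycle. The only genuinely special case is the trap state: its mandated $\varepsilon$-self-loop is inert, being dominated by the $\Sigma$-self-loops and never required in any run, so it can be disregarded without affecting $\lang(M)$. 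For every other state with an outgoing $\varepsilon$-transition I would process the two cases of its $\varepsilon$-closure as follows, always reading off the data from the \emph{original} $M$ so that the order of processing is immaterial.

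\emph{Finite $\varepsilon$-path.} Suppose the $\varepsilon$-path from $q$ is $q_0, \ldots, q_m$, where $q_m$ is the first state with no $\varepsilon$-transition. Using the operator $\circledcirc$ I would accumulate the memory instructions along the path into a single vector $\tau \in \Gamma^k$, so that applying $\tau$ has the same effect on the memories as executing the whole path (which, since $\varepsilon$-transitions consume nothing, only toggles statuses and never appends symbols). Then, for every non-$\varepsilon$ transition $\delta(q_m, b) = (p, s_1, \ldots, s_k)$ with $b \in \Sigma \cup \{1, \ldots, k\}$, I would add to $M'$ the transition $\delta(q, b) = (p, \tau \circledcirc (s_1, \ldots, s_k))$, and finally delete the $\varepsilon$-transition out of $q$. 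Because no input is consumed along the path, reaching any final $q_j$ with empty remaining input already suffices for acceptance; I would therefore mark $q$ as final iff some $q_j$, $0 \le j \le m$, lies in $F$. Since $q_m$ is left untouched and is already deterministic and $\varepsilon$-free, the copied transitions keep $q$ deterministic. Note that this machinery also correctly absorbs $\varepsilon$-edges leading into $\trapstate$, by treating the $\Sigma$-self-loops of $\trapstate$ as its non-$\varepsilon$ transitions.

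\emph{$\varepsilon$-cycle.} If instead the $\varepsilon$-path from $q$ enters a cycle, then from $q$ the automaton loops forever without ever consuming input. As an accepting configuration requires the input to be fully consumed, such a run can accept only when the remaining input is already empty, in which case it accepts precisely when some state on the tail-and-cycle is final. I would therefore delete all outgoing transitions of $q$ (so that with nonempty input the run gets stuck, matching the non-accepting infinite loop) and mark $q$ as final iff some state reachable from $q$ by $\varepsilon$-transitions is final. Memory contents are irrelevant here, as they do not influence acceptance.

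Performing these replacements at all states yields an $\varepsilon$-free $M' \in \DTMFA$, and correctness would follow by induction on the length of computations, via the central claim that a single step of $M'$ out of $q$ faithfully simulates the maximal $\varepsilon$-prefix of a computation of $M$ out of $q$ followed by one consuming or recall step. The step I expect to be the main obstacle is verifying this claim at the level of memory configurations: I must check that first setting the statuses via $\tau$ and then executing $(p, s_1, \ldots, s_k)$ --- which appends the consumed symbol to exactly those memories that are open \emph{after} the instructions are applied --- produces the same memory contents as the original execution of the $\varepsilon$-path followed by that transition. This hinges on the defining property of $\circledcirc$ (that $\tau \circledcirc (s_1, \ldots, s_k)$ realises the sequential composition of the statuses) together with the fact that $\varepsilon$-steps never append symbols, so that every append still occurs exactly at the one consuming step; the remaining cases (recalls of empty memories, and the accepting trap state of a $\DTMFAacc$) are then routine.
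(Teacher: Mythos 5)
Your proposal is correct and follows essentially the same route as the paper's proof: determinism makes the $\varepsilon$-successor of each state unique, so one accumulates the memory instructions along the (unique, eventually periodic) $\varepsilon$-path via $\circledcirc$, redirects the non-$\varepsilon$ transitions of the path's endpoint back to its start, and dead-ends states whose $\varepsilon$-path enters a cycle. You are in fact a bit more explicit than the paper's own proof of this proposition about two bookkeeping points --- marking a state accepting when an accepting state lies in its $\varepsilon$-closure, and the trap state's mandated $\varepsilon$-self-loop --- which the paper leaves to ``easily verified'' here but spells out in its proof of Theorem~\ref{thm:dtmfamembership}.
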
 

\begin{proof}
Let $M = (Q, \Sigma, \delta, q_0, F)$. For every $p \in Q$, if, for some $q \in Q$, $\delta(p, \varepsilon) = (q, s_1, \ldots, s_k)$, then we define $\mathcal{S}_{\varepsilon, 1}(p) = q$ and $\mathcal{M}(p, q) = (s_1, \ldots, s_k)$. For every $p \in Q$ and every $i$, $2 \leq i \leq |Q| - 1$, we define $\mathcal{S}_{\varepsilon, i}(p) = \mathcal{S}_{\varepsilon, 1}(\mathcal{S}_{\varepsilon, i-1}(p))$ and, if $\mathcal{S}_{\varepsilon, i}(p)$ is defined, we define (or redefine) $\mathcal{M}(p, \mathcal{S}_{\varepsilon, i}(p)) = \mathcal{M}(p, \mathcal{S}_{\varepsilon, i - 1}(p)) \circledcirc (s_1, \ldots, s_k)$, where $\delta(\mathcal{S}_{\varepsilon, i - 1}(p), \varepsilon) = (\mathcal{S}_{\varepsilon, i}(p), s_1, \ldots, s_k)$.\par
For every $p \in Q$ with $\delta(p, \varepsilon)$ defined, we now remove the $\varepsilon$-transitions as follows. Let $i$, $1 \leq i \leq |Q|-1$, be such that $\mathcal{S}_{\varepsilon, i}(p) = q$ and $\mathcal{S}_{\varepsilon, i + 1}(p)$ is undefined. Furthermore, let $\delta(q, x_j) = (t_j, s_{j,1}, \ldots, s_{j,k})$, $1 \leq j \leq \ell$, for some $\ell$ with $0 \leq \ell \leq |\Sigma|$, be all the transitions from $q$ (note that $\ell = 1$ and $x_1 \in \{1, 2, \ldots, k\}$ covers the case of a single memory recall transition and, furthermore, $x_j = \varepsilon$ is by definition not possible). We now add new transitions $\delta(p, x_j) = (t_j, s'_1 \circledcirc s_{j,1}, \ldots, s'_k \circledcirc s_{j,k})$, where $\mathcal{M}(p, q) = (s'_1, \ldots, s'_k)$. Then, we simply delete all $\varepsilon$-transitions (note that this may produce states that are not reachable anymore, which are deleted as well). It can be easily verified that this results in an $M' \in \DTMFA$ with $\lang(M) = \lang(M')$. 
\end{proof}

\begin{proposition}\label{makeCompleteProposition}
For every $M \in \DTMFA$ there exists a complete $M' \in \DTMFA$ with $\lang(M) = \lang(M')$.
\end{proposition}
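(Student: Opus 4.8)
The plan is to adapt the classical completion technique for $\DFA$---adding an absorbing, non-accepting sink state to which all missing transitions are routed---while respecting the determinism condition and the trap-state convention of $\DTMFA$. First I would pin down exactly which states fail completeness. By definition, a state $q$ is already complete if $\delta(q, \varepsilon)$ is defined, or if $\delta(q, i)$ is defined for some memory $i$, or if $\delta(q, a)$ is defined for every $a \in \Sigma$. By the determinism condition, any state carrying an $\varepsilon$-transition or a memory recall transition has no further outgoing transition, so such states are automatically complete; likewise the trap-state $\trapstate$ is complete, since $\delta(\trapstate, a)$ is defined for every $a \in \Sigma$. Hence the only states that can violate completeness are those $q$ that have neither an $\varepsilon$-transition nor a memory recall transition and for which $\delta(q, a)$ is undefined for at least one $a \in \Sigma$; for every such $q$, all existing outgoing transitions are consuming transitions on a proper subset of $\Sigma$.

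Next I would introduce a single fresh, non-accepting sink state $q_d$ and set $\delta(q_d, a) = (q_d, \unchanged, \ldots, \unchanged)$ for every $a \in \Sigma$, making $q_d$ absorbing and itself complete. Then, for each problematic state $q$ as above and each $a \in \Sigma$ with $\delta(q, a)$ undefined, I would add the consuming transition $\delta(q, a) = (q_d, \unchanged, \ldots, \unchanged)$, and call the resulting automaton $M'$. By construction every state of $M'$ now satisfies one of the three cases of completeness, so $M'$ is complete. Moreover $M'$ is still a $\DTMFA$: to the problematic states I only added consuming transitions, and those states had neither an $\varepsilon$-transition nor a memory recall transition, so the determinism condition (which forbids combining a recall- or $\varepsilon$-transition with any other transition) is not violated; the sink $q_d$ likewise carries only consuming transitions.

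It remains to verify $\lang(M') = \lang(M)$, and here the only point that needs care---the main, though mild, obstacle---is checking that routing formerly-missing transitions to the sink creates no new accepting runs and that the memory instructions on these new edges are irrelevant. Observe that $M'$ arises from $M$ purely by adding the state $q_d$ together with transitions into and within $q_d$; no original transition is removed or altered, and the accepting set is unchanged. Thus every accepting computation of $M$ remains an accepting computation of $M'$, giving $\lang(M) \subseteq \lang(M')$. For the converse, let $w \in \lang(M')$; by Proposition~\ref{uniqueComputationProposition} the computation of $M'$ on $w$ is uniquely determined. If it ever enters $q_d$, it can never leave and never reach an accepting configuration, since $q_d$ is absorbing and non-accepting---contradicting $w \in \lang(M')$. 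Hence this computation uses only original transitions and is therefore also an accepting computation of $M$, so $w \in \lang(M)$. Finally, the memory instructions chosen on the new transitions are immaterial, because no final state is reachable once $q_d$ is entered; this is precisely why setting them all to $\unchanged$ is harmless even when some memories are open.
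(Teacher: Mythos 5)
Your proposal is correct and follows essentially the same route as the paper: both add a single fresh non-accepting absorbing sink state with $\unchanged$ instructions on all its transitions, route the missing consuming transitions of states lacking $\varepsilon$- and recall-transitions to it, and observe that since the sink is non-accepting and absorbing the language is unchanged. Your additional checks (that determinism is preserved and that the $\unchanged$ instructions on the new edges are immaterial) are just a more explicit spelling-out of what the paper leaves implicit.
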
 

\begin{proof}
Let $M = (Q, \Sigma, \delta, q_0, F)$. We transform $M$ into $M'$ by adding a new non-accepting state $t$ with $\delta(t, x) = (t, \unchanged, \ldots, \unchanged)$, for every $x \in \Sigma$, and we add transitions for every state $q \in Q$ as follows. If $\delta(q, i)$ is undefined, for every $i$, $1 \leq i \leq k$, and $\delta(q, \varepsilon)$ is undefined, then, for every $x \in \Sigma$ with $\delta(q, x)$ undefined, we set $\delta(q, x) = (t, \unchanged, \ldots, \unchanged)$. On the other hand, if $\delta(q, i)$ is defined, for some $i$, $1 \leq i \leq k$, or $\delta(q, \varepsilon)$ is defined, then we do not add any transition. By definition, $M'$ is complete and, since $t$ is non-accepting, $\lang(M) = \lang(M')$. 
\end{proof}

\begin{remark}\label{epsilonfreeCompletenessRemark}
We note that the construction of the proof of Proposition~\ref{removeEpsilonTransitionsProposition} preserves completeness, i.\,e., if $M$ is a complete $\DTMFA$, then we obtain an equivalent complete $\DTMFA$ without $\varepsilon$-transitions. Moreover, the construction of the proof of Proposition~\ref{makeCompleteProposition} does not introduce $\varepsilon$-transitions; thus, it turns an $\varepsilon$-free $\DTMFA$ into an equivalent complete $\DTMFA$ that is still $\varepsilon$-free.
\end{remark}

We are now ready to show closure of $\langcl(\DTMFA)$ under complementation.

\begin{theorem}\label{closureComplementTheorem}
$\langcl(\DTMFA)$ is closed under complement.
\end{theorem}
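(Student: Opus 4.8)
The plan is to imitate the classical $\DFA$ complementation technique --- toggling the acceptance of every state --- but only after bringing $M$ into a shape where this is sound. State-complementation is correct precisely when every input word drives $M$ through a single, fully determined run that consumes \emph{all} of the input and halts in a well-defined state; if some run stalls (gets stuck with input left, or spins in a non-consuming loop), then the offending word would be rejected by both $M$ and the toggled automaton. By Proposition~\ref{uniqueComputationProposition}, determinism already gives a unique successor configuration, so the only thing left to arrange is that runs never stall. Stalling can be caused by three kinds of non-consuming behaviour: genuine $\varepsilon$-transitions, recalls of \emph{empty} memories (which consume $\varepsilon$), and non-empty memory recalls that cannot be completed because the remaining input is too short; all three must be eliminated or routed to the trap-state (which, being a sink whose verdict depends only on its finality, is exempt from the transformations below).

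First I would apply Proposition~\ref{removeEpsilonTransitionsProposition} and Proposition~\ref{makeCompleteProposition}, together with Remark~\ref{epsilonfreeCompletenessRemark}, to obtain an equivalent complete, $\varepsilon$-free $\DTMFA$. The central obstacle is then the elimination of recalls of empty memories, which behave exactly like $\varepsilon$-transitions but whose status depends on the particular run. To make this information static, I would refine the finite-state control so that each state additionally records, for every memory $i$, whether $i$ is currently empty or non-empty; this refinement costs a blow-up exponential in the number $k$ of memories, but it leaves the accepted language unchanged, since the emptiness bits are updated deterministically alongside the ordinary transitions (using Figure~\ref{fig:MemConfChanges}). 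In the refined automaton every memory-recall transition is statically known to recall either an empty memory --- in which case it consumes nothing and can be rewritten as an $\varepsilon$-transition carrying the same memory instructions --- or a non-empty memory, which is a genuine consuming move. I would then re-apply Proposition~\ref{removeEpsilonTransitionsProposition} and Proposition~\ref{makeCompleteProposition} via Remark~\ref{epsilonfreeCompletenessRemark} to restore an equivalent complete, $\varepsilon$-free $\DTMFA$ in which, in addition, no empty memory is ever recalled, and in which a non-empty recall that cannot be matched against the input is directed to the trap-state rather than being allowed to stall.

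With $M$ in this form, every $w\in\Sigma^*$ admits exactly one run that consumes all of $w$ and halts in a uniquely determined state --- either an ordinary state of $Q$ or the trap-state $\trapstate$ --- because each step either consumes at least one symbol or moves to the trap (which then consumes the rest). Complementation is achieved by toggling the acceptance of every state. Since the transition function is untouched, $\trapstate$ keeps its defining behaviour and merely switches between accepting and rejecting; hence the construction maps $\DTMFArej$ to $\DTMFAacc$ and vice versa, which is exactly the right thing to do to a word that fails on a memory recall (such a word is rejected by $M$ because it reaches a rejecting trap, and should be accepted by the complement). For every other word, the unique full run ends in a state of $Q$ whose finality is flipped, so the toggled automaton accepts $w$ if and only if $M$ rejects it; this produces a $\DTMFA$ for $\Sigma^*\setminus\lang(M)$.

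The step I expect to be the genuine difficulty is the treatment of empty-memory recalls together with the possible non-consuming loops they create. These recalls cannot be handled by the on-the-fly trick of the membership algorithm (Theorem~\ref{thm:dtmfamembership}), because the complement must commit to one deterministic behaviour for all inputs rather than reacting to the memory contents of a single run; this is what forces the emptiness information into the finite control and causes the $2^{k}$ blow-up. Care is also needed for cycles built solely from $\varepsilon$-transitions and empty-memory recalls: such a cycle consumes no input and offers no escape, so the correct treatment is to collapse it into a single sink state that is accepting if and only if the cycle contains a final state (as in the proof of Theorem~\ref{thm:dtmfamembership}), thereby rejecting any word that enters the loop with input remaining while accepting a word that has already been fully consumed exactly when a loop state is final.
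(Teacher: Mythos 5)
Your proposal is correct and follows essentially the same route as the paper's proof: make the automaton complete, lift the emptiness status of each memory into the finite-state control (accepting the $2^{k}$ blow-up), rewrite recalls of statically-known-empty memories as $\varepsilon$-transitions, eliminate those via Proposition~\ref{removeEpsilonTransitionsProposition} while preserving completeness (Remark~\ref{epsilonfreeCompletenessRemark}), and finally toggle acceptance, relying on Proposition~\ref{uniqueComputationProposition} for uniqueness of runs. The only cosmetic difference is that you run the $\varepsilon$-elimination twice instead of once after the refinement, and you spell out the treatment of non-consuming cycles a bit more explicitly; neither changes the substance of the argument.
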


\begin{proof}
Let $M = (Q, \Sigma, \delta, q_0, F) \in \DTMFA$. By Proposition~\ref{makeCompleteProposition}, we can assume that $M$ is complete. Due to Proposition~\ref{uniqueComputationProposition}, for any input $w$, there is a unique computation of $M$ on $w$. Hence, the idea is now to toggle the acceptance of all the states of $M$ in order to obtain a $\DTMFA$ that accepts $\overline{\lang(M)}$.
%The general idea is now to toggle the acceptance of all the states of $M$ in order to obtain a $\DTMFA$ that accepts $\overline{L(M)}$. 
However, this only works if $M$ is $\varepsilon$-free, since otherwise it is possible that some word $w \in \Sigma^*$ cannot be fully consumed by $M$ (for example, if it leads into a loop in which all transitions are $\varepsilon$-transitions and no state is accepting); thus, $w$ is neither accepted by $M$ nor by the $\DTMFA$ obtained by toggling the acceptance of states. While we can remove $\varepsilon$-transitions due to Proposition~\ref{removeEpsilonTransitionsProposition}, we encounter the problem that a memory recall transition with respect to an empty memory behaves just like an $\varepsilon$-transition and, thus, can cause the same problems. Hence, we first have to transform such memory recall transition into ordinary $\varepsilon$-transitions, which can then be removed according to Proposition~\ref{removeEpsilonTransitionsProposition}. \par
To this end, we modify $M$ such that the finite state control stores, for every $i$, $1 \leq i \leq k$, whether or not memory $i$ is open and whether or not memory $i$ stores the empty word. More precisely, we obtain an $M_1 \in \DTMFA$ by modifying $M$ as follows. Every state $q$ is replaced by $2^{2k}$ new states $[q, (r_1, c_1), \ldots, (r_k, c_k)]$, where $r_i \in \{\closed, \opened\}$, $c_i \in \{\varepsilon, \overline{\varepsilon}\}$, $1 \leq i \leq k$, and we change the transitions such that if $M_1$ reaches a configuration with state $[q, (r_1, c_1), \ldots, (r_k, c_k)]$, then, in the current configuration, for every $i$, $1 \leq i \leq k$, $r_i$ is the status of memory $i$ and memory $i$ is empty if and only if $c_i = \varepsilon$. For example, if $M_1$ is in state $[p, (r_1, c_1), \ldots, (r_k, c_k)]$ with $(r_i, c_i) = (\closed, \overline{\varepsilon})$ and $\delta(p, x) = (q, s_1, \ldots, s_k)$ with $x \in \Sigma$ and $s_i = \open$, then, if $x$ is the next symbol of the input, $M_1$ changes to a state $[q, (r'_1, c'_1), \ldots, (r'_k, c'_k)]$ with $(r'_i, c'_i) = (\opened, \overline{\varepsilon})$. We note that $M_1$ is still complete and deterministic. \par
Next, we change $M_1$ into $M_2$ by replacing, for every $i$, $1 \leq i \leq k$, every transition of the form $$\delta([p, (r_1, c_1), \ldots, (r_k, c_k)], i) = ([q, (r'_1, c'_1), \ldots, (r'_k, c'_k)], s_1, \ldots, s_k)$$ with $c_i = \varepsilon$ by an $\varepsilon$-transition $$\delta([p, (r_1, c_1), \ldots, (r_k, c_k)], \varepsilon) = ([q, (r'_1, c'_1), \ldots, (r'_k, c'_k)], s_1, \ldots, s_k)\,.$$ We note that $\lang(M_1) = \lang(M_2)$ and, since $M_1$ is deterministic, this only introduces $\varepsilon$-transitions, such that if $\delta(p, \varepsilon)$ is defined then, for every $y \in (\Sigma \cup \{1, 2, \ldots, k\})$, $\delta(q, y)$ is undefined. Consequently, $M_2$ is still deterministic and it never happens that an empty memory is recalled. Next, by Proposition~\ref{removeEpsilonTransitionsProposition}, we can transform $M_2$ into a complete $M_3 \in \DTMFA$ without $\varepsilon$-transitions (see Remark~\ref{epsilonfreeCompletenessRemark}) that still has the property that no empty memories are recalled. \par
Let $\overline{M} \in \DTMFA$ be obtained from $M_3$ by toggling the acceptance of the states, i.\,e., if $Q_3$ and $F_3$ are the sets of states and accepting states, respectively, of $M_3$, then $\overline{M}$ is obtained from $M_3$ by replacing $F_3$ by $Q_3 \setminus F_3$. Obviously, for every $w \in \Sigma^*$, both $M_3$ and $\overline{M}$, on input $w$, reach the same state and completely consume the input. This directly implies $\lang(\overline{M}) = \overline{\lang(M_3)}$.
\end{proof}

We next discuss expressive power: If there is a constant upper bound on the lengths of contents of memories that are recalled in accepting computations of an $M \in \DTMFA$, then memories can be simulated by the finite state control; thus, $\lang(M) \in\langcl(\REG)$. Consequently, if $\lang(M) \notin \langcl(\REG)$, there is a word $u v w$ that is accepted by recalling some memory with an arbitrarily large content $v$. Moreover, if $\trapstate$ is non-accepting, then no word can be accepted that contains $u$ as a prefix, but not $u v$, since this will cause a memory recall failure. Intuitively speaking, a $\DTMFArej$ for a non-regular language makes arbitrarily large ``jumps'': 
\begin{lemma}[Jumping Lemma]\label{lem:jump}
	Let $L\in\langcl(\DTMFArej)$. Then either $L$ is regular, or for every $m\geq 0$, there exist  $n\geq m$ and $p_n,v_n\in\Sigma^+$ such that
%	\begin{inparaenum}
%		\item  $|v_n|=n$,
%		\item $v_n$ is a factor of $p_n$,
%		\item $p_n v_n$ is a prefix of a word from $L$,
%		\item for all $u\in\Sigma^+$, $p_n u\in L$ only if $v_n$ is  a prefix of $u$.
%	\end{inparaenum}
\begin{enumerate}
	\item  $|v_n|=n$,
	\item $v_n$ is a factor of $p_n$,
	\item $p_n v_n$ is a prefix of a word from $L$,
	\item for all $u\in\Sigma^+$, $p_n u\in L$ only if $v_n$ is  a prefix of $u$.
\end{enumerate}
\end{lemma}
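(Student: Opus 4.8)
The plan is to preprocess $M$ and then split into two cases according to whether the lengths of the memory contents that get recalled in accepting computations are bounded. First I would invoke Propositions~\ref{removeEpsilonTransitionsProposition} and~\ref{makeCompleteProposition} together with Remark~\ref{epsilonfreeCompletenessRemark}, and the empty-memory-elimination technique from the proof of Theorem~\ref{closureComplementTheorem}, to assume without loss of generality that the given $M\in\DTMFArej$ is complete, $\varepsilon$-free, and never recalls an empty memory; none of these transformations changes $\lang(M)$ or turns the trap-state accepting. The payoff is that now every transition consumes at least one input symbol, so on any input the computation (which is unique by Proposition~\ref{uniqueComputationProposition}) proceeds in well-defined steps, each pinned to a position of the input. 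Let $c$ be the supremum of $|u|$ over all memory contents $u$ recalled in some accepting computation of $M$, so $c$ is either a natural number or $\infty$.

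Case $c<\infty$: here I claim $L=\lang(M)$ is regular, and this is the step I expect to be the main obstacle. The idea is to build an $\NFA$ $N$ that simulates $M$ but keeps each memory content inside its finite-state control, recording for memory $i$ either a concrete word of length $\le c$ together with its $\{\opened,\closed\}$-status, or a flag $\top$ meaning ``content already longer than $c$''. On a consuming transition $N$ applies the memory instructions and appends the read symbol to every open memory, switching a memory to $\top$ once its length would exceed $c$; on a recall of memory $i$ holding a concrete content $w_i$ (of length $\le c$) it reads $w_i$ off the input through a constant-size gadget, rejecting into a dead state on any mismatch (mimicking the rejecting $\trapstate$) and appending the consumed symbols to the other open memories, while a recall of a $\top$-memory leads straight to the dead state. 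The delicate point is correctness: a memory may legitimately grow beyond $c$ and simply never be recalled in an accepting run, and the $\top$-flag is exactly what lets $N$ forget such contents harmlessly; conversely, by the definition of $c$, no accepting computation ever recalls a content of length $>c$, so on every $w\in L$ the $\top$-branch is never taken and $N$ faithfully reproduces the accepting computation, whereas on $w\notin L$ the unique computation either ends non-accepting or attempts a forbidden recall, and in both cases $N$ does not accept. Hence $\lang(N)=L\in\langcl(\REG)$.

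Case $c=\infty$: now fix $m\ge 0$. Since $c=\infty$, there is a word $z\in L$ whose unique accepting computation recalls, at some step, the content $v$ of some memory $i$ with $|v|\ge\max\{m,1\}$. Set $v_n\df v$ and $n\df |v_n|$, so $n\ge m$, and let $p_n$ be the prefix of $z$ consumed before this recall. Conditions~(1)--(3) are then immediate: $|v_n|=n$ by choice; the content of a memory equals the factor of the input consumed while that memory was open, which lies entirely within $p_n$ because memory $i$ must be closed before it can be recalled, so $v_n$ is a factor of $p_n$; and since in the accepting computation the recall of $v$ succeeds, the remaining input continues with $v_n$ after $p_n$, whence $p_n v_n$ is a prefix of $z\in L$ (and $p_n\in\Sigma^+$ since it contains the nonempty factor $v_n$).

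For condition~(4) I would exploit determinism together with the rejecting trap-state. Let $u\in\Sigma^+$ with $p_n u\in L$. Because $M$ is deterministic, $\varepsilon$-free and free of empty recalls, its computation on $p_n u$ coincides with that on $z$ along the common prefix $p_n$, and therefore reaches, after consuming exactly $p_n$, the very configuration in which the unique enabled transition is the recall of memory $i$ (content $v_n$) with remaining input $u$. If $v_n$ were not a prefix of $u$, then either $u$ and $v_n$ first disagree on some symbol, so the recall triggers a memory recall failure and $M$ enters the rejecting $\trapstate$, or $u$ is a proper prefix of $v_n$, so no transition applies and the computation gets stuck with input left unread; either way $p_n u\notin L$, a contradiction. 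Thus $v_n$ is a prefix of $u$, which is condition~(4). As $m$ was arbitrary and $n\ge m$, this yields jumps of unbounded size, completing the dichotomy.
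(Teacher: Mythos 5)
Your proposal is correct and follows essentially the same route as the paper's proof: the same dichotomy on whether recalled memory contents are bounded (the paper sketches the regular case via storing bounded contents in the state and eliminating never-recalled memories, which your $\top$-flag NFA makes concrete), and the same choice of $p_n$ and $v_n$ at a long recall, with condition~4 following from determinism forcing the recall transition to be the only one leaving that state. The extra preprocessing and the explicit handling of the ``$u$ is a proper prefix of $v_n$'' subcase are harmless refinements of the paper's argument.
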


%\subsection{Proof of Lemma~\ref{lem:jump}}

%\begin{lemma}\label{lem:jump}
%	Let $L\in\langcl(\DTMFArej)$. Then either $L$ is regular, or for every $m\geq 0$, there exist an $n\geq m$, a $p_n\in\Sigma^*$ with $|p_n|\geq n$, and a $v_n\in\Sigma^+$ with $|v_n|=n$ such that
%	\begin{enumerate}
%		\item $v_n$ is a factor of $p_n$,
%		\item $p_nv_n$ is a prefix of a word from $L$,
%		\item $p_nu\notin L$ for all $u\in\Sigma^+$ such that $v_n$ is not a prefix of $u$.
%	\end{enumerate}
%\end{lemma}

\begin{proof}
	As $L\in\langcl(\DTMFArej)$, there exists an $M\in\DTMFArej$ with $\lang(M)=L$. If there is an $m\geq 0$ such that in every accepting run of $M$, each memory stores only a word of length at most $m$, then $L$ is regular (as we can rewrite $M$ into a $\DFA$ that stores the contents of the memories in its states). Likewise, if memories can store words of unbounded length, but are then never recalled, these memories can be eliminated, which also allows us to turn $M$ into a $\DFA$ for $L$. 
	
	Hence, if $L$ is not regular, $M$ has at least one memory $x$ such that for every $m\geq 0$, there is an accepting run of $M$ on a word $w$ during which $x$ stores a word of length $n\geq m$, and this memory is recalled with this content. Let $p_n$ be the part of the accepting run that $M$ has processed up to a state $q$ where it recalls $x$ at a point where this memory contains a word $v_n$ of length $n$. 
	%Let $v_n$ be this content of $x$ (hence, $|v_n|=n$). 
	As $v_n$ must have been consumed while processing $p_n$, $|p_n|\geq n$ holds, and $v_n$ must be a factor of $p_n$.
	
	If $M$ succeeds at recalling $x$ at this point (i.\,e., it consumes $v_n$), it can continue to accept $w$, which means that $p_nv_n$ is a prefix of $w\in L$. On the other hand, on an input $p_n u$ for some $u\in\Sigma^+$ such that $v_n$ is not a prefix of $u$, $M$ encounters a memory recall failure and rejects. As $M$ is deterministic, the recall transition for $x$ must be the only transition that leaves the state $q$. Hence, $p_n u\notin L$ for $u$ that do not have $v_n$ as prefix.
\end{proof}

The Jumping Lemma is a convenient tool for proving that languages cannot be accepted by a $\DTMFArej$, which shall be illustrated by some examples.

\begin{example}\label{ex:jump}
	Let $L\df \{ww\mid w\in\Sigma^*\}$ with $|\Sigma| \geq 2$, which is well-known to be not regular. Assume  $L\in\langcl(\DTMFArej)$ and choose $m\df 1$. Then there exist $n\geq 1$ and $p_n,v_n\in\Sigma^*$  that satisfy the conditions of Lemma~\ref{lem:jump}. Choose  $a\in\Sigma$ that is not the first letter of $v_n$, and define $u\df a p_n a$. Then $v_n$ is not a prefix of $u$, but  $p_n u = (p_n a)^2\in L$, which is a contradiction. 
\end{example}
\begin{example}\label{ex:jump2}
	Let $L\df\{\ta^i\tb\ta^j \mid i >j\geq 0\}$. Using textbook methods, it is easily shown that $L$ is not regular. Now, assuming that $L\in\langcl(\DTMFArej)$, choose $m\df 4$. Then there exist $n\geq 4$ and $p_n,v_n\in\Sigma^+$  that satisfy the conditions of Lemma~\ref{lem:jump}. As $p_n v_n$ is a prefix of a word in $L$, either $p_n = \ta^i$ or $p_n = \ta^i\tb\ta^j$ with $i,j\geq 0$ (and $i\geq 4$ or $i+j\geq 3$). In the first case, consider $u\df \tb\ta$. Then $p_n u=\ta^i\tb\ta$ with $i\geq 4$; hence, $p_n u \in L$. But $u$ starts with $\tb$, and $v_n$ is a factor of $p_n = \ta^i$, which leads to a contradiction, as $v_n$ cannot be a prefix of $u$. For the second case,  let $u\df \ta$. As $p_n v_n$ is a prefix of a word in $L$, and as $|v_n|=n$, $i > j + n \geq j+4$ must hold. Hence, $p_n u = \ta^i \tb \ta^{j+1}$, and $p_n u \in L$, which, as $v_n$ is not a prefix of $u$, leads again to a contradiction.
\end{example}

For unary languages, there is an alternative to Lemma~\ref{lem:jump} that is easier to apply and that characterizes unary $\DTMFArej$-languages. It is built on the following definition: A  language $L\subseteq \{\ta\}^*$ is an \emph{infinite arithmetic progression} if  $L=\{\ta^{bi+c} \mid i\geq 0\}$  for some $b\geq 1$, $c\geq 0$. 
\begin{lemma}\label{lem:finap}
	Let $L\in\langcl(\DTMFArej)$ be an infinite language with $L\subseteq\{\ta\}^*$.  The following conditions are equivalent:
%	\begin{inparaenum}
%		\item\label{finap1} $L$ is regular.
%		\item\label{finap2} $L$ contains an  infinite arithmetic progression.
%		\item\label{finap3} There is $b\geq 1$ such that, for every $n\geq 0$, $\ta^{bi+c_n} \in L$ for some $c_n \geq 0$ and all $0\leq i \leq n$. 
%	\end{inparaenum}
	\begin{enumerate}
		\item\label{finap1} $L$ is regular.
		\item\label{finap2} $L$ contains an  infinite arithmetic progression.
		\item\label{finap3} There is $b\geq 1$ such that, for every $n\geq 0$, there exists some  $c_n \geq 0$ with  $\ta^{bi+c_n} \in L$ for  all $0\leq i \leq n$. 
	\end{enumerate}
\end{lemma}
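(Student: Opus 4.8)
The plan is to establish the cycle (\ref{finap1})$\Rightarrow$(\ref{finap2})$\Rightarrow$(\ref{finap3})$\Rightarrow$(\ref{finap1}), where all the work sits in the last implication, the only one that uses $L\in\langcl(\DTMFArej)$. For (\ref{finap1})$\Rightarrow$(\ref{finap2}) I would use that the length set of a unary regular language is ultimately periodic: fixing a threshold $t$ and period $p$, infinitude of $L$ forces some residue class $r$ modulo $p$ to contain arbitrarily large lengths, and ultimate periodicity then yields $\ta^{x}\in L$ for \emph{every} $x\geq t$ with $x\equiv r \pmod p$; taking $c$ minimal among these gives the infinite progression $\{\ta^{pi+c}\mid i\geq 0\}\subseteq L$. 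The implication (\ref{finap2})$\Rightarrow$(\ref{finap3}) is pure bookkeeping: if $\{\ta^{bi+c}\mid i\geq 0\}\subseteq L$, then (\ref{finap3}) holds with the \emph{same} $b$ and the constant choice $c_n\df c$ for all $n$.

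For the core implication (\ref{finap3})$\Rightarrow$(\ref{finap1}), fix $M\in\DTMFArej$ with $\lang(M)=L$ and assume (\ref{finap3}) for some $b\geq 1$. I would analyse the unique run of $M$ on the infinite input $\ta^{\omega}$ (i.e.\ $\ta\ta\ta\cdots$). Since $L$ is infinite this run consumes unboundedly many symbols, and because the alphabet is unary and the input never runs out, no memory-recall failure can occur, so the run never reaches $\trapstate$. The run proceeds by \emph{consuming steps} (reading one $\ta$) and \emph{jumps} (recalling a memory of current length $s$, consuming $\ta^{s}$ at once and skipping the intermediate positions). One checks that $N$ lies in the length set of $L$ exactly when, after the $\varepsilon$- and empty-recall closure at position $N$, the run is in an accepting state; in particular such an $N$ is \emph{visited}, and no position strictly inside a jump is ever visited. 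The key device is the \emph{abstract configuration}: the current state together with, for each of the $k$ memories, its status and its length capped at $b$ (values $0,\dots,b$ plus one symbol ``$>b$''). As $M$ is deterministic and the alphabet is unary, the abstract successor is a well-defined function of the abstract configuration; hence the sequence of abstract configurations along the run is ultimately periodic, with a finite preperiod and a cycle of some length $\pi$.

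Now I would argue by a dichotomy on the cycle. If the cycle contains no recall of a memory with capped length ``$>b$'' (no \emph{big jump}), then after the preperiod every jump has size at most $b$, so the exact position advance of each transition is determined by the abstract configuration alone; the advance per cycle is therefore a fixed positive constant, the visited accepting positions are ultimately periodic, and $L$ is regular, giving (\ref{finap1}). Otherwise the cycle contains a big jump, and I claim (\ref{finap3}) fails. Indeed, a big jump then recurs at least once every $\pi$ transitions, and between two consecutive big jumps the run advances by consuming steps and small ($\leq b$) recalls only, over at most $\pi$ transitions, hence by at most $\pi b$ in position; thus any interval of positions longer than $\pi b$ in the periodic part contains the start of a big jump. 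But for each $n$, condition (\ref{finap3}) supplies visited points $c_n,c_n+b,\dots,c_n+nb$; two consecutive ones are at distance $b$ and both visited, so no jump starting between them can exceed $b$, and the whole interval $[c_n,c_n+nb)$ is free of big jumps while having length $nb$. As the preperiod is finite, for large $n$ this big-jump-free interval reaches into the periodic part and is longer than $\pi b$ there, a contradiction. Hence the big-jump case is impossible and we are always in the regular case.

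The hard part is the treatment of \emph{unbounded} memory contents in the cycle. One cannot simply discard large memories, because memories may feed one another (a recalled content is appended to another open memory), so a big jump whose size grows across cycles is genuinely possible; this is exactly the mechanism behind non-regular unary $\DTMFArej$-languages such as $\lang(\alphasq)$. The crucial point that makes the argument close is that such growth cannot fill long gaps: a growing big jump recurs \emph{periodically}, and therefore caps the length of every big-jump-free stretch by the constant $\pi b$, which is precisely what (\ref{finap3}) forbids. The main technical task is thus to set up the capped abstract configuration so that it faithfully (i) computes the exact size of every small recall, (ii) detects when a recall is big, and (iii) updates itself deterministically under consuming steps, openings, closings, and recalls, while remaining a finite-state abstraction; once this is in place, the ultimate periodicity of the abstract run and the dichotomy above deliver the result.
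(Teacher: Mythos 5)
Your proof is correct, and for the two easy implications it coincides with the paper's (which invokes Chrobak's decomposition of unary regular languages into finitely many arithmetic progressions instead of ultimate periodicity of the length set -- the same fact). For the core implication (3)$\Rightarrow$(1) you take a genuinely different route. The paper first notes that a deterministic unary $\TMFArej$ is a chain followed by a state cycle, then uses condition (3) \emph{up front}: along the $I>(b+1)^k$ cycle iterations needed to accept $\ta^{c_n},\ta^{b+c_n},\dots,\ta^{nb+c_n}$, no transition may consume more than $\ta^{b}$ (it would skip a witness), so every recalled memory holds at most $\ta^{b}$; a pigeonhole on the $(b+1)^k$ possible memory-content vectors yields two iterations with identical contents, determinism propagates the repetition forever, memories stay bounded, and $M$ collapses to a $\DFA$. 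You instead build a finite deterministic abstraction (state, memory statuses, lengths capped at $b$ with a ``$>b$'' symbol) of the single run on $\ta^{\omega}$, obtain its ultimate periodicity for free, and only then use condition (3) to exclude a big jump in the abstract cycle, via the observation that a jump starting between two visited positions at distance $b$ has size at most $b$. The ingredients are the same (one run, recalls bounded by $b$ thanks to the gap-$b$ witnesses, pigeonhole on capped memory vectors), but your organization makes explicit what the paper treats somewhat informally, namely why memories whose contents grow without bound cause no harm: your ``$>b$'' cap tracks them and the big-jump dichotomy rules them out. What your route requires in exchange -- that the capped abstraction updates deterministically even when a recall appends its content to other open memories (appending capped length $s$ to capped length $t$ gives $t+s$ capped at $b$, or ``$>b$'' if either operand already is), and that membership of $\ta^{N}$ is read off the non-consuming closure of the infinite run at position $N$ -- both check out, so the plan closes.
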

%\todomcom{Made the last condition clearer.}

%\subsection{Proof of Lemma~\ref{lem:finap}}

%\begin{lemma}\label{lem:finap}
%	Let $L\in\langcl(\DTMFArej)$ be an infinite language with $L\subseteq\{\ta\}^*$. Then the following conditions are equivalent:
%	\begin{enumerate}
%		\item\label{finap1} $L$ is regular.
%		\item\label{finap2} $L$ contains an  infinite arithmetic progression.
%		\item\label{finap3} There is a $b\geq 1$ such that, for every $n\geq 0$, $\ta^{bi+c_n} \in L$ for some $c_n \geq 0$ and $0\leq i \leq n$. 
%	\end{enumerate}
%\end{lemma}
\begin{proof}
	We show that~\ref{finap1} implies~\ref{finap2}, which implies~\ref{finap3}, which implies~\ref{finap1}. The first two of these steps are simple: Assume that $L$ is regular. Every regular language over a single letter alphabet can be expressed as a finite union of arithmetic progressions (cf., e.\,g., Chrobak~\cite{chr:fin,chr:fin-err}). As $L$ is infinite, it must contain an infinite arithmetic progression. But if $L$ contains an infinite arithmetic progression $\ta^{ib+c}$, then the third condition is satisfied by definition.
	
	The step from~\ref{finap3} to~\ref{finap1} is more involved. Before we prove this, note that there are unary languages (which are not $\DTMFA$-languages), for which condition~\ref{finap3} does not imply the existence of an infinite arithmetic progression, see Example~\ref{ex:lex} below.
	
	Assume that $L\subseteq\{\ta\}^*$ is infinite, $L\in\langcl(\DTMFArej)$, and condition~\ref{finap3} is met for some $b\geq 1$. By definition, there is an $M\in\DTMFArej$ with $\lang(M)=L$. As $M$ is deterministic, each of its states can have at most one outgoing transition; and as $L$ is infinite, each state must have exactly one outgoing transition. Hence, like a $\DFA$ for a unary language (see e.\,g.\ the proof of Theorem~\ref{thm:drxunary}, in particular Figure~\ref{lollipopFigure}), $M$ consists of a chain and a cycle. Let $m$ be the number of accepting states on the cycle, and let $k$ be the number of memories that are accessed in the cycle (by recalling them, or by performing memory instructions).
	
%	\todomcom{Fixed the issue you remarked on.}
	Now consider an $n> m(b+1)^k$ such that there exists some $c_n$ with $w_i\df \ta^{bi+c_n} \in L$ for all $0\leq i \leq n$, and reading $w_0=\ta^{c_n}$ takes $M$ into the cycle (as condition~\ref{finap3} holds for all $n$, such a $c_n$ exists for every $n$ that is sufficiently large). Let $q$ be the accepting state that is reached by $w_0$. 

In the following, by an \emph{iteration} of the cycle, we mean the situation that $M$ is in state $q$ and then consumes input symbols until it reaches $q$ for the next time. The iteration of the cycle that starts after having fully consumed $w_0$ is called iteration $1$. Now, for every $j \geq 1$, we define a function $\vec{v}_j\colon \{1,\ldots,k\}\to \mathbb{N}$ that describes the content of each memory after completing iteration $j$. \par
In the remainder of the proof, we show that there is a constant upper bound for the values $\vec{v}_j(x)$, $1 \leq x \leq k$, $j \geq 1$. Note that if the length of the content of each memory is bounded, then $M$ can be rewritten into an equivalent $\DFA$ that simulates all memories in its states. Hence, $L$ must be a regular language, which shows that condition~\ref{finap3} implies condition~\ref{finap1}.\par
As $M$ has to accept all words $w_i$ with $0\leq i\leq n$, and as each iteration of the cycle can accept only $m$ words, we know that $M$ has to perform at least $I\df \frac{n}{m}> (b+1)^k$ iterations of the cycle in order to accept $w_n$. During these iterations, $M$ cannot consume more than $\ta^b$ between each pair of accepting states -- otherwise, $M$ would skip at least one of the $w_i$ (as $M$ is deterministic, the run for $w_n$ must be an extension of each run for a $w_i$ with $i<n$). In particular, this means that each memory that is recalled during these iterations cannot contain more than $\ta^b$; thus, there are only $b+1$ possible contents for each memory. Furthermore, as $M$ is deterministic, we know that each memory that is not recalled during these iterations will not be recalled during any later iterations of the cycle, which means that it can be removed from the cycle (and, as the chain is of finite length, it can also be removed from the chain). Hence, without loss of generality, we can assume that $\vec{v}_j(x) \leq b$, $1 \leq x \leq k'$, $1 \leq j \leq I$, where the cycle contains exactly the memories $1,\ldots,k$.

As $I>(b+1)^k$ and as there are only $(b+1)^k$ possible choices of $\vec{v}_j$, there exist $j,j'$ with $0\leq j' < j' \leq I$ and $\vec{v}_j = \vec{v}_{j'}$. As $M$ is deterministic, this allows us to conclude 	$\vec{v}_{j+l} = \vec{v}_{j'+l}$ for all $l \geq 0$. In other words, the sequence of transitions from iteration $j$ to iteration $j'$ will be repeated forever, using exactly the same memory contents, which means that $\vec{v}_l(x) \leq b$ for all $l\geq 0$ and all $1\leq x \leq k$. As explained above, this concludes the proof.
\end{proof}

Just like the Jumping Lemma, Lemma~\ref{lem:finap} can be used to prove  $\DTMFArej$-inexpressibility for unary languages. See the following examples.

\begin{example}\label{ex:lex}
	We define a $\Lex\subset\{\ta\}^*$ together with its complement $\clex$ in the following way: First, add $\ta$ to $\Lex$, then add the two words $\ta^2 $ and  $\ta^3$ to $\clex$, and the three words $\ta^4$ to $\ta^6$ to $\Lex$, and so on. In other words, in each step $i$, we add the next $i$ words of $\{\ta\}^*$ to one of the languages; namely $\Lex$ if $i$ is odd, and $\clex$ if $i$ is even. Then $\Lex$ satisfies condition~\ref{finap3} of Lemma~\ref{lem:finap}, but it does not contain any  infinite arithmetic progression. Hence, $\Lex\notin\langcl(\DTMFArej)$; and  $\clex\notin\langcl(\DTMFArej)$ follows analogously.
\end{example}

\begin{example}
	Let $\alpha\df \bind{x}{\ta\ta^+}(\rr{x})^+$ (this regex is also known as ``Abigail's expression''~\cite{abigail} in the PERL community). Then $\lang(\alpha)=\{\ta^{mn}\mid m,n\geq 2\}$. In other words, $\alpha$ generates the language of all $\ta^{i}$ such that $i$ is a composite number (i.\,e., not a prime number). As $\lang(\alpha)$ is not regular and contains the arithmetic progression $2i+4$,  Lemma~\ref{lem:finap} yields $\lang(\alpha)\notin\langcl(\DTMFArej)$. 
\end{example}
The following result is a curious consequence of Lemma~\ref{lem:finap}:
\begin{proposition}\label{prop:crash}
Over unary alphabets, we have $$\langcl(\DTMFArej)\cap\langcl(\DTMFAacc)=\langcl(\REG).$$
\end{proposition}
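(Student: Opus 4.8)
The plan is to prove the two inclusions separately, with all the real work in $\langcl(\DTMFArej)\cap\langcl(\DTMFAacc)\subseteq\langcl(\REG)$. For the inclusion $\langcl(\REG)\subseteq\langcl(\DTMFArej)\cap\langcl(\DTMFAacc)$, I would simply read a $\DFA$ as a $\DTMFA$ with no memory operations (adding an isolated trap state): since no memory is ever recalled, the trap state is unreachable and its acceptance status is irrelevant, so every regular language lies in \emph{both} $\langcl(\DTMFArej)$ and $\langcl(\DTMFAacc)$.

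The engine of the converse is the observation that, over any alphabet, $L\in\langcl(\DTMFAacc)$ if and only if $\overline{L}\in\langcl(\DTMFArej)$. This is exactly the complement construction of Theorem~\ref{closureComplementTheorem}: it turns a $\DTMFA$ $M$ into a complete, $\varepsilon$-free equivalent $M_3$ with no recalls of empty memories, and then toggles the acceptance of \emph{every} state. Since the recall-failure transitions are left untouched and still lead to the (tracked copies of the) trap state, toggling flips the trap state's parity as well; hence an accepting trap becomes rejecting, mapping $\DTMFAacc$ to $\DTMFArej$ (and symmetrically). Consequently, if $L\subseteq\{\ta\}^*$ lies in $\langcl(\DTMFArej)\cap\langcl(\DTMFAacc)$, then both $L$ and $\overline{L}$ are unary $\DTMFArej$-languages.

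Now I would argue by contradiction: assume $L$ is not regular. Then $L$ is infinite and, since regularity is preserved under complement, $\overline{L}$ is a non-regular (hence infinite) unary $\DTMFArej$-language as well. Applying Lemma~\ref{lem:finap} to $L$: as $L$ is infinite and non-regular, condition~\ref{finap3} fails, and instantiating its negation with $b=1$ yields an $n_0$ such that every block of $n_0+1$ consecutive lengths contains a non-member of $L$. Thus $L$ contains no $n_0+1$ consecutive words $\ta^{a+1},\dots,\ta^{a+n_0+1}$; that is, the blocks of consecutive words in $L$ have length bounded by $n_0$. On the other hand, applying the Jumping Lemma (Lemma~\ref{lem:jump}) to the non-regular language $\overline{L}$ and specializing to the unary alphabet: for every $m$ one obtains $p_n=\ta^{|p_n|}$ and $v_n=\ta^n$ with $n\ge m$ such that no word of length in $(|p_n|,|p_n|+n)$ lies in $\overline{L}$ (this is the content of condition~4, which forces the ``short extensions'' $p_n u$ with $u$ not extended by $v_n$ out of $\overline{L}$). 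Equivalently, $\ta^{|p_n|+1},\dots,\ta^{|p_n|+n-1}$ all lie in $L$, so $L$ contains blocks of consecutive words of length $n-1\ge m-1$. Choosing $m>n_0+1$ contradicts the bounded-block conclusion for $L$, so $L$ must be regular.

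The main obstacle I anticipate is the reduction step rather than the final combinatorics: one has to check carefully that the complement construction of Theorem~\ref{closureComplementTheorem} genuinely exchanges $\DTMFAacc$ and $\DTMFArej$, i.e.\ that the trap state survives the completion, the emptiness-tracking, the conversion of empty-memory recalls into $\varepsilon$-transitions, and the $\varepsilon$-removal as a single state whose acceptance is then toggled. Once this is in place, the crux is purely to translate the abstract statements of Lemma~\ref{lem:finap} and Lemma~\ref{lem:jump} into the two complementary facts about $L$ — ``bounded blocks'' from the former and ``arbitrarily long blocks'' from the latter — and to notice that they are contradictory.
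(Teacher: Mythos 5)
Your proof is correct and follows essentially the same route as the paper's: establish that $\overline{L}\in\langcl(\DTMFArej)$ via the complement construction of Theorem~\ref{closureComplementTheorem}, then play the Jumping Lemma (Lemma~\ref{lem:jump}) off against Lemma~\ref{lem:finap} with $b=1$ to contradict non-regularity. The only cosmetic difference is that you phrase the contradiction as ``bounded versus arbitrarily long blocks of consecutive words in $L$'' (using the negation of condition~\ref{finap3}), whereas the paper feeds the long blocks produced by the Jumping Lemma directly into condition~\ref{finap3} of Lemma~\ref{lem:finap} to conclude regularity.
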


%\subsection{Proof of Proposition~\ref{prop:crash}}	
\begin{proof}
	We first observe that $\langcl(\REG) \subseteq \langcl(\DTMFArej)\cap\langcl(\DTMFAacc)$ holds by definition. Next, we assume that there is a non-regular language $L\in(\langcl(\DTMFArej)\cap\langcl(\DTMFAacc))$ over $\{\ta\}^*$. In particular, this implies that both $L$ and its complement $\overline{L}\df \{\ta\}^*\mdif L$ are infinite and, furthermore, by Theorem~\ref{closureComplementTheorem}, $L\in\langcl(\DTMFAacc)$ implies $\overline{L}\in\langcl(\DTMFArej)$. Since $\overline{L}\in\langcl(\DTMFArej)$ is a non-regular $\DTMFArej$ language, Lemma~\ref{lem:jump} allows us to conclude that for every $m\geq 0$, there exist an $n\geq m$ and a $p_n\geq n$ such that $\ta^i \in \overline{L}$ for all $p_n\leq i < p_n+n$. Hence, $\overline{L}$ contains finite arithmetic progressions of unbounded length; and as $\overline{L}$  is infinite, Lemma~\ref{lem:finap} states that $\overline{L}$ is regular, which is a contradiction. 
	%
	%	
	%	Due to Theorem~\ref{closureComplementTheorem}, the assumption $L\in\langcl(\DTMFAacc)$ implies $\overline{L}\in\langcl(\DTMFArej)$. Lemma~\ref{lem:jump} allows us to conclude that for every $m\geq 0$, there exist an $n\geq m$ and a $p_n\geq n$ such that $\ta^i\notin L$ for all $p_n\leq i < p_n+n$. Hence, $\overline{L}$ contains  finite arithmetic progressions of unbounded length; and as  $\overline{L}$  is infinite, Lemma~\ref{lem:finap} states that $\overline{L}$ is regular. Hence, $L$ is also regular.
\end{proof}

%!TEX root=det_SIAM.tex
\section{Deterministic Regex}\label{sec:detregex}
In order to define deterministic regex as an extension of deterministic regular expressions, we first extend the notion of a \emph{marked alphabet} that is commonly used for the latter: For every alphabet $A$, let $\markpos{A}\df \{a_{(n)}\mid a\in A, n\geq 1\}$. For every $\alpha\in\RX$, we define $\markpos{\alpha}$ as a regex that is obtained by taking $\alphar$ (the proper regular expression over $\Sigma\cup\Xi\cup\Gamma$ that generates the ref-language $\refl(\alpha)$), and  marking  each occurrence of $\chi\in(\Sigma\cup \Xi\cup \Gamma)$  by a unique number (to make this well-defined, we assume that the markings start at 1 and are increased stepwise). For example, if $\alpha\df \bind{y}{(\ta\ror\rr{x})^* \cdot(\emptyword\ror \tb \cdot \ta)}\cdot\rr{y}$, then  $\markpos{\alpha}=\vop{y}_{(1)}(\ta_{(2)}\ror x_{(3)})^* \cdot(\emptyword\ror \tb_{(4)} \cdot \ta_{(5)})	\vcl{y}_{(6)} \cdot y_{(7)}$. We also use these markings in the  ref-words: For example, $\vop{y}_{(1)}\ta_{(2)}\ta_{(2)} x_{(3)} \ta_{(2)}\vcl{y}_{(6)} y_{(7)}\in\refl(\markpos{\alpha})$. 

Before we explain this definition and use it to define deterministic regex, we first discuss the special case of deterministic  regular expressions: A proper regular expression $\alpha$ is \emph{not} deterministic if there exist words $u,v_1,v_2\in \markpos{\Sigma}^*$, a terminal $a\in\Sigma$ and positions $i\neq j$ such that $u a_{(i)}v_1$ and $u a_{(j)} v_2$ are elements of $\lang(\markpos{\alpha})$ (see e.\,g.\ \cite{bru:one,gro:det}). Otherwise, it is a \emph{deterministic proper regular expression} (or, for short, just \emph{deterministic regular expression}).

The intuition behind this definition is based on the Glushkov construction for the conversion of regular expressions into finite automata, as a  regular expression $\alpha$ is deterministic if and only if its \emph{Glushkov automaton} $\glush(\alpha)$ is deterministic. Given a regular expression $\alpha$, we define  $\glush(\alpha)$ in the following way: First,  we use the marked regular expression $\markpos{\alpha}$ to construct its \emph{occurrence graph} $G_{\markpos{\alpha}}$, a directed graph that has a source node $\src$, a sink node $\snk$, and one node for each $a_{(i)}$ in $\markpos{\alpha}$.\footnote{Most literature, like~\cite{bru:one}, defines the occurrence graph only implicitly by using sets $\gfirst$, $\glast$, and $\gfollow$, which correspond to  the edge from $\src$, the edges to $\snk$, or to the other edges of the graph, respectively. The explicit use of a graph is taken from the  $k$-occurrence automata by Bex et al.~\cite{bex:lea}. We shall see that an  advantage of graphs  is that they can be easily extended by adding memory actions to the edges.} The edges are constructed in the following way: Each node $a_{(i)}$ has an incoming edge from $\src$ if $a_{(i)}$ can be the first letter of a word in $\lang(\markpos{\alpha})$, and an outgoing edge to $\snk$ if it can be the last letter of such a word. Furthermore, for each factor $a_{(i)}b_{(j)}$ that occurs in a word of $\lang(\markpos{\alpha})$, there is an edge from $a_{(i)}$ to $b_{(j)}$. As a consequence, there is a one-to-one-correspondence between marked words in $\lang(\markpos{\alpha})$ and paths from $\src$ to $\snk$ in $G_{\markpos{\alpha}}$. To obtain $\glush(\alpha)$, we directly interpret $G_{\markpos{\alpha}}$ as $\NFA$ over $\Sigma$: The source $\src$ is the starting state, each node $a_{(i)}$ is a state $q_i$, and an edge from $a_{(i)}$ to $b_{(j)}$ corresponds to a transition from $q_i$ to $q_j$ when reading $b$. The sink $\snk$ does not become a state; instead, each node with an edge to $\snk$ is a final state (hence, $\glush(\alpha)$ contains the source state, and  one state for every terminal in $\alpha$). This interpretation allows us to treat occurrence graphs as an alternative notation for a subclass of $\NFA$ (namely those where the starting state is not reachable from other states, and for each state~$q$, there is a characteristic terminal $a_q$ such that all transitions to $q$ read $a_q$). When doing so, we usually omit the occurrence markings on the nodes in graphical representations. 

Intuitively, $\glush(\alpha)$ treats each terminal of $\alpha$ as a state. Recall that $\alpha$ is not deterministic if there exists words  $u a_{(i)}v_1$ and $u a_{(j)} v_2$ in $\lang(\markpos{\alpha})$ with $i\neq j$. This corresponds to the situation where, after reading $u$, $\glush(\alpha)$ has to decide between states $a_{(i)}$ and $a_{(j)}$ for the input letter $a$. 
\begin{example}
Let $\alpha\df (\emptyword\ror ((\ta\ror\tb)^+\ta))$. Then $\markpos{\alpha}=(\emptyword\ror ((\ta_{1}\ror\tb_{2})^+\ta_{3}))$, and the Glushkov automaton $\glush(\alpha)$ of $\alpha$ is defined as follows:
	\begin{center}
		\begin{tikzpicture}[node distance=10mm,on grid,>=stealth',auto, 
		state/.style={rectangle,draw=black,inner sep=2pt,minimum size=4mm}]
		\node         (start)                 				{};
		\node[state]  (a1)     [right=of start] 				{\vphantom{$\tb$}$\ta_{(1)}$};
		\node[state]  (b2)     [below=of a1]     	{$\tb_{(2)}$};
		\node[state]  (a3)     [right=of a1]		 	{\vphantom{$\tb$}$\ta_{(3)}$};
		\node 				(end)   [right=of a3]	   	{};
		
		\draw[fill=black] (start) circle (0.5mm);
		\draw[fill=black] (end) circle (0.5mm);
		\draw 						(end) circle (1mm);
		
		\draw[->]  (start.center) -- (a1);
		\draw[->]  (start.center) -- (b2.north west);
		\draw[->]  (a1) edge[bend left] (b2);
		\draw[->]  (b2) edge[bend left] (a1);
		\draw[->] (a1) -- (a3);
		\draw[->] (b2) -- (a3);
		\draw[->]  (a3) -- (end);
		\draw[->] (a1) edge [loop above] (a1);
		\draw[->] (b2) edge [loop right] (b2);
		\path[->]  (start.center) edge [bend left=70] (end.north);
		\end{tikzpicture}\hspace{15mm}
		\begin{tikzpicture}[node distance=10mm,on grid,>=stealth',auto, 
		state/.style={circle,draw=black,inner sep=0pt,minimum size=4mm}]
		\node[state,initial by arrow,initial text={},accepting] (q_0){}; 
		\node[state](q_1) [right=of q_0] {1};
		\node[state,accepting](q_3) [right=of q_1] {3}; 
		\node[state](q_2) [below=of q_1] {2};	
		
		\path[->] 
		(q_0) edge node[near start] {$\ta$} (q_1)
		(q_0) edge[bend right=10] node[below,very near start] {$\tb$} (q_2)
		(q_1) edge[bend left=10] node[right, pos=0.4]  {$\tb$} (q_2)
		(q_2) edge[bend left=10] node[ pos=0.5] {$\ta$} (q_1)
		(q_2) edge[bend right=10] node[right, pos=0.5] {$\ta$} (q_3)
		(q_1) edge[] node {$\ta$} (q_3)
		(q_1) edge[loop above] node[] {$\ta$} (q_1)
		(q_2) edge[loop right] node[right] {$\tb$} (q_2)
		;
		\end{tikzpicture}
	\end{center}	
	To the left, $\glush(\alpha)$ is represented as an occurrence graph, to the right in standard $\NFA$ notation. Then $\glush(\alpha)$ and $\alpha$ are both not deterministic: For $\glush(\alpha)$, consider state~1; for $\alpha$, consider $u = \ta_{(1)}$, $v_1 = \ta_{(3)}$, $v_2 = \emptyword$, and the words $u \ta_{(1)} v_1$ and $u \ta_{(3)} v_2$.
\end{example}
As shown in~\cite{bru:one}, $\langcl(\DREG)\subset\langcl(\REG)$ (also see \cite{cze:dec, lu:dec}, or Lemma~\ref{lem:drxvsdrmfa} below). Like for determinism of  regular expressions, the key idea behind our definition of deterministic regex  is that a matcher for the expression treats terminals (and variable references) as states. Then an expression is deterministic if the current symbol of the input word always uniquely determines the next state and all necessary variable actions. For  regular expressions, non-determinism can only occur when the matcher has to decide between two occurrences of the same terminal symbol; but as regex also need to account for non-determinism that is caused by variable operations or references, their definition of  non-determinism is  more complicated.
\begin{definition}\label{def:drx}
		An $\alpha\in\RX$ is \emph{not deterministic} if there exist $\rho_1, \rho_2\in \refl(\markpos{\alpha})$ such that any of the following conditions is met for some  $r,s_1,s_2\in(\markpos{\Sigma}\cup\markpos{\Xi}\cup\markpos{\Gamma})^*$ and $\gamma_1,\gamma_2\in\markpos{\Gamma}^*$:
		\begin{compactenum}
			\item\label{def:drx:c1} $\rho_1 = r\cdot \gamma_1\cdot  a_{(i)}\cdot  s_1$ and  $\rho_2 = r\cdot  \gamma_2\cdot  a_{(j)} \cdot s_2$  with $a\in\Sigma$ and $i\neq j$, 
			\item\label{def:drx:c2} $\rho_1 = r\cdot  \gamma_1\cdot  x_{(i)}\cdot  s_1$ and $\rho_2 = r\cdot \gamma_2\cdot  \chi_{(j)}\cdot s_2$ with $x\in\Xi$, $\chi\in(\Sigma\cup\Xi)$ and $i\neq j$, 
			\item\label{def:drx:c3} $\rho_1=r\cdot \gamma_1\cdot \chi_{(i)}\cdot  s_1$ and $\rho_2=r\cdot \gamma_2\cdot \chi_{(i)}\cdot  s_2$ with  $\chi\in(\Sigma\cup\Xi)$ and $\gamma_1\neq \gamma_2$,
			\item\label{def:drx:c4} $\rho_1=r\cdot \gamma_1$ and  $\rho_2=r\cdot \gamma_2$ with $\gamma_1\neq \gamma_2$.
		\end{compactenum}
	Otherwise, $\alpha$ is \emph{deterministic}. We use $\DRX$ to denote the set of all deterministic regex, and define $\DREG\df \DRX\cap\REG$ as the set of deterministic regular expressions (as discussed below, Condition~\ref{def:drx:c1} of  Definition~\ref{def:drx} covers the case of deterministic proper regular expression).
\end{definition}
\begin{example}\label{ex:drx}
	We define $\alpha_1\df (\bind{x}{\ta}\ror\ta)$, $\alpha_2\df (\ta\ror\rr{x})$, $\alpha_3\df (\bind{x}{\emptyword}\ror\emptyword)\ta$ and $\alpha_4\df (\bind{x}{\emptyword}\ror\emptyword)$. None of these regex are deterministic, as each $\alpha_i$ meets the $i$-th condition of Definition~\ref{def:drx}. We discuss this for $\alpha_1$: Observe $\markpos{\alpha}_1=(\vop{x}_{(1)}\ta_{(2)}\vcl{x}_{(3)})\ror\ta_{(4)}$. Then choosing $\rho_1 = \vop{x}_{(1)}\ta_{(2)}\vcl{x}_{(3)}$ and $\rho_2 = \ta_{(4)}$, with $r=\emptyword$, $\gamma_1 = \vop{x}_{(1)}$, $s_1 = \vcl{x}_{(3)}$, and $\gamma_2=s_2=\emptyword$ shows  the condition is met.
	
	Let $\beta_1\df \bind{x}{(\ta\ror\tb)^*}\tc\cdot \rr{x}$ and $\beta_2\df \bigl(\bind{x}{\rr{y}}\bind{y}{\rr{x}\cdot \ta}\bigr)^*$. Both regex are deterministic, with 
	$\lang(\beta_1)\df \{w\tc w\mid w\in\{\ta,\tb\}^*\}$ and  $\lang(\beta_2)=\{\ta^{n^2}\mid n\geq 0\}$ (see Example~\ref{refWordExample}). 
\end{example}

Condition~\ref{def:drx:c1} of  Definition~\ref{def:drx} describes cases where non-determinism is caused by two occurrences of the same terminal ($\gamma_1$ and $\gamma_2$ are included for cases like $\alpha_1$ in Example~\ref{ex:drx}). If restricted to  regular expressions,  it is equivalent to the usual definition of deterministic  regular expressions. 
Condition~\ref{def:drx:c2} expresses that the matcher has to decide between a variable reference and any other symbol (this may be a terminal, a different variable or the same variable, but with a different index); while in condition~\ref{def:drx:c3}, the symbol is unique, but there is a non-deterministic choice between variable operations. 
Finally, condition~\ref{def:drx:c4} describes cases where the behavior of variables is non-deterministic after the end of the word (while one could consider this edge case deterministic, this choice simplifies recursive definitions). In conditions~\ref{def:drx:c3} and~\ref{def:drx:c4}, the definition not only requires that it is clear which variables are reset, but also  that it is clear which part of the regex acts on the variables. Hence,  $(\bind{x}{\emptyword}\ror\bind{x}{\emptyword})$ is also not deterministic. This is similar to the notion of strong determinism for regular expressions, see~\cite{gel:reg}. As one might expect, some non-deterministic regexes define $\DRX$-languages:
\begin{example}\label{ex:chunk}
	Let $\Sigma =\{\mathtt{0},\mathtt{1}\}$ and $\alpha\df \mathtt{1}^+ \bind{x}{\mathtt{0}^*} (\mathtt{1}^+ \rr{x})^* \mathtt{1}^+$. This regex was introduced by Fagin et al.~\cite{fag:spa}, who call its language the ``uniform-0-chunk language''. Obviously, $\alpha$ is not deterministic (in fact, it satisfies conditions~\ref{def:drx:c1}, \ref{def:drx:c2}, and \ref{def:drx:c3} of Definition~\ref{def:drx}). Nonetheless, it is possible to express $\lang(\alpha)$ with the deterministic regex  $\mathtt{1} \bigl(\mathtt{1}^+ \ror \bigl(0\bind{x}{0^*} 1^+(0\cdot\rr{x}\cdot1^+)^*\bigr)\bigr)$. 
\end{example}
We now discuss the conversion from $\DRX$ to $\DTMFArej$, which generalizes the Glushkov construction of $\glush(\alpha)$ for  regular expressions. The core idea is extending the occurrence graph to a \emph{memory occurrence graph} $G_{\markpos{\alpha}}$, which has two crucial differences: First, instead of only considering terminals, each terminal and each variable reference of a regex $\alpha$ becomes a node. Second, each edge is labeled with a ref-word from $\mGamma^*$ that describes the memory actions (hence, there can be multiple edges from one node to another). In analogy to the occurrence graph,  each memory occurrence graph can be directly interpreted as an $\emptyword$-free~$\TMFArej$. 
\begin{theorem}\label{thm:glushkov}
Let $\alpha\in\RX$, and let  $n$ denote the number of occurrences of terminals and variable references in $\alpha$.
We can construct an $n+2$ state $\TMFArej$~$\glush(\alpha)$ with $\lang(\glush(\alpha))=\lang(\alpha)$ that is deterministic if and only if $\alpha$ is deterministic. 
In  time $O(|\Sigma||\alpha|n)$, the algorithm either
\begin{inparaenum}
	\item computes $\glush(\alpha)$   if $\alpha$ is deterministic, or 
	\item detects  that $\alpha$ is not deterministic.
\end{inparaenum}
\end{theorem}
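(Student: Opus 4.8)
The plan is to generalize the Glushkov construction for proper regular expressions to regex, turning the marked ref-language $\refl(\markpos{\alpha})$ into a \emph{memory occurrence graph} $\ograph{\alpha}$ and then reading this graph off as an $\emptyword$-free $\TMFArej$. First I would fix the nodes: a source $\src$, a sink $\snk$, and one node for each of the $n$ occurrences of a terminal or a variable reference in $\markpos{\alpha}$ (the \emph{consuming} symbols, i.e.\ the symbols from $\mSigma\cup\mXi$; the marked memory operations from $\mGamma$ do not become nodes). To describe edges, I would generalize the classical sets $\gfirst$, $\gfollow$ and $\glast$ so that they additionally record, for each relevant pair of consuming symbols, the (marked) sequence $\gamma\in\mGamma^*$ of memory operations that sits between them in the ref-words of $\refl(\markpos{\alpha})$: an edge $\src\to v_\chi$ labelled $\gamma$ whenever some ref-word starts with $\gamma\,\chi$, an edge $v_\chi\to v_{\chi'}$ labelled $\gamma$ whenever $\chi\,\gamma\,\chi'$ occurs, and an edge $v_\chi\to\snk$ labelled $\gamma$ whenever $\chi\,\gamma$ ends a ref-word. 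These sets are computed by the usual bottom-up recursion over the structure of $\alpha$, now carrying the intervening memory operations along. Interpreting $\ograph{\alpha}$ as a $\TMFArej$, the source becomes the initial state, each consuming node becomes a state, and I add the trap state $\trapstate$; together with the source this yields the two extra states (the sink does not become a state), so there are $n+2$ states. Each edge becomes a single consuming transition that first applies the net effect of its label $\gamma$ — expressed as one instruction from $\{\open,\close,\reset,\unchanged\}$ per memory, noting that an open-then-close from an empty binding collapses to $\reset$ — and then either reads the terminal of the target node or recalls the referenced memory; a node carrying a sink-edge becomes accepting, and every memory-recall mismatch is sent to $\trapstate$.

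For the language equivalence I would establish the bijection between marked ref-words in $\refl(\markpos{\alpha})$ and $\src$--$\snk$ paths in $\ograph{\alpha}$ (exactly as in the classical case, now with the memory-operation labels determined by the path). The key observation is that running the automaton along the path for a ref-word $r$ on the input $\deref(r)$ reproduces precisely the memory actions encoded in $r$, so that every terminal is read and every recall consumes exactly the stored content without a recall failure; hence that run accepts $\deref(r)$. Conversely, any accepting run traces out an $\src$--$\snk$ path and therefore a ref-word $r$ with $\deref(r)$ equal to the input. Because $\glush(\alpha)\in\TMFArej$, a run whose recall does not match the input is discarded at the rejecting trap, which is harmless since acceptance is the union over all ref-words. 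Together this gives $\lang(\glush(\alpha))=\{\deref(r)\mid r\in\refl(\alpha)\}=\lang(\alpha)$.

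The determinism correspondence is where Definition~\ref{def:drx} must be matched condition-by-condition against determinism of the constructed automaton, read off $\ograph{\alpha}$. Condition~\ref{def:drx:c1} corresponds exactly to two outgoing edges of a state that read the same terminal to different target nodes; condition~\ref{def:drx:c2} corresponds to a memory-recall edge coexisting with any other outgoing edge, which is forbidden by the $\DTMFA$ rule that a recall transition must be the only transition of its state; and conditions~\ref{def:drx:c3} and~\ref{def:drx:c4} correspond to two outgoing edges with the same consumed symbol (respectively two sink-edges) whose \emph{marked} memory-operation labels $\gamma_1\neq\gamma_2$ differ. I expect this last point to be the main obstacle: two distinct marked labels $\gamma_1\neq\gamma_2$ can have the \emph{same} net instruction vector (e.g.\ two different empty bindings of one variable both collapse to $\reset$), so naively folding labels into net instructions would merge the two transitions and lose the non-determinism of $\alpha$. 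The fix is to decide determinism on the marked occurrence graph itself — treating two edges with distinct marked labels as a genuine conflict — and only then, in the deterministic case where no such conflict exists, to collapse each label to its net instruction and obtain a bona fide $\DTMFArej$. Verifying that ``no conflicting edge in $\ograph{\alpha}$'' is equivalent to ``none of the four conditions holds'' is the heart of the argument and is where the bookkeeping of the marked memory operations is indispensable.

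Finally, for the running time I would compute the generalized $\gfirst$, $\gfollow$ and $\glast$ by a single recursive pass over $\alpha$ while checking the four conflict types on the fly. For each of the $n$ consuming nodes, testing whether its incoming and outgoing edges violate one of the conditions costs $O(|\Sigma|\,|\alpha|)$ (scanning the relevant part of the structure once per terminal symbol), so the whole determinism test, and the emission of $\glush(\alpha)$ when it succeeds, stays within $O(|\Sigma|\,|\alpha|\,n)$; as soon as a conflict is found the algorithm stops and reports that $\alpha$ is not deterministic.
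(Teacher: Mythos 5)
Your proposal follows essentially the same route as the paper: the same memory occurrence graph with marked labels from $\mGamma^*$ on the edges, the same condition-by-condition matching of Definition~\ref{def:drx} against edge conflicts, the same observation that distinct marked labels can collapse to the same net instruction vector (so determinism must be decided on the marked graph before folding labels into per-memory instructions), and the same complexity accounting. The only detail you gloss over is the Kleene-plus case where the body can generate $\emptyword$ while performing memory actions --- there the set of intervening operation sequences between two consecutive consuming symbols is a priori infinite, and the paper compacts it to a finite set of net-equivalent representatives; but this only arises for non-deterministic regex and does not change the approach.
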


%\subsection{Proof of Theorem~\ref{thm:glushkov}}
\begin{proof}
	
	We construct $\glush(\alpha)$ by first constructing a graph $G_{\markpos{\alpha}}$ from the marked regex $\markpos{\alpha}$. As $G_{\markpos{\alpha}}$ is a generalization of the occurrence graphs for proper regular expressions, we call this the \emph{memory occurrence graph}. Analogously to proper regular expressions, this graph can be directly interpreted as an $\glush(\alpha)\in\TMFArej$ that is deterministic if and only if $\alpha$ is deterministic.

	\subparagraph*{Memory occurrence graph $\ograph{\alpha}$:} 	Given a marked regex $\markpos{\alpha}$, we define a memory occurrence graph $\ograph{\alpha}\df(\onodes{\alpha},\oedges{\alpha})$ with a source node $\src$, a sink node  $\snk$, and one node for each marked variable reference or terminal. The labeled edges are of the form $(u,\glab,v)$, where $u,v\in \onodes{\alpha}$, and each label $\glab$ is a marked ref-word $\glab\in\mGamma^*$. We use marked ref-words instead of unmarked ref-words to fulfill the promise that $\glush(\alpha)$ is deterministic if and only if $\alpha$ is deterministic. If $\alpha$ has $n$ occurrences of variable references and terminals, $\glush(\alpha)$ has $n+2$ states: the initial state, the state~$\trapstate$ for memory recall failures, and one state for each of the~$n$ occurrences in $\alpha$.
	
	If we only want to construct an algorithm that turns a deterministic regex into a $\DTMFArej$ and rejects non-deterministic regexes, we can use  unmarked edge labels instead (see the section at the end of this proof).
	
	When interpreting $\ograph{\alpha}$ as a $\TMFA$~$\glush(\alpha)$, we first remove the markings from the edge labels, and interpret these  as memory actions of a $\TMFA$, e.\,g., $\vop{x}$ corresponds to opening the memory for $x$. In order to simplify the construction, we take into account that  different ref-words over $\Gamma$ can have the same net effect on variables, and can be represented by the same single transition in  a $\TMFA$. For example, $\vcl{x}\vop{x}\vcl{x}\vop{x}$ and $\vcl{x}\vop{x}$ and $\vop{x}$ all have the same effect as performing $\open$ on the memory for $x$. Following this intuition, given a ref-word $\glab\in\Gamma^*$, we define the \emph{net variable action of $\glab$} as a function $\netvar{\glab}\colon \Xi\to\{\open,\close,\reset,\unchanged\}$, where for each $x\in\Xi$, $\netvar{\glab}(x)\df \unchanged$ if no element of  $\Gamma_x \df \{\vop{x}, \vcl{x}\}$ occurs in $\glab$, and $\netvar{\glab}(x)\df  \open$ if the rightmost occurrence of an element of $\Gamma_x$ is a $\vop{x}$. Furthermore, if the rightmost occurrence of an element of $\Gamma_x$ is $\vcl{x}$, we define $\netvar{\glab}(x)\df \reset$ if $\glab$ contains $\vop{x}$, and $\netvar{\glab}(x)\df \close$ otherwise. In the construction further down, we also consider concatenations of labels. We observe the following for all $\glab,\glab_1,\glab_2\in\Gamma^*$ and all $x\in \Xi$: If $\netvar{\glab}(x)=\unchanged$, then $\netvar{\glab_1\cdot\glab}(x)=\netvar{\glab_1}(x)$ and $\netvar{\glab\cdot\glab_2}=\netvar{\glab_2}(x)$. If $\netvar{\glab_2}(x)\in\{\open,\reset\}$, then $\netvar{\glab_1\cdot \glab_2}(x)=\netvar{\glab_2}(x)$.
	
	We also use the following notion of minimal representations: For all $\glab\in\Gamma^*$ and $x\in \Xi$, we define  $\gmin_x(\glab)\in \Gamma^*$ by $\gmin_x(\glab)\df\vop{x} $ if $\netvar{\glab}(x)=\open$, 
	$\gmin_x(\glab)\df\vcl{x} $ if $\netvar{\glab}(x)=\close$,
	$\gmin_x(\glab)\df\vop{x}\vcl{x} $ if $\netvar{\glab}(x)=\reset$, and 
	$\gmin_x(\glab)\df\emptyword $ if $\netvar{\glab}(x)=\unchanged$.
	% $$
	% \gmin_x(\glab)\df \begin{cases}
	% \vop{x} & \text{ if $\netvar{\glab}(x)=\open$},\\
	% \vcl{x} & \text{ if $\netvar{\glab}(x)=\close$},\\
	% \vop{x}\vcl{x} & \text{ if $\netvar{\glab}(x)=\reset$},\\
	% \emptyword & \text{ if $\netvar{\glab}(x)=\unchanged$,}
	% \end{cases}
	% $$
	For any $\glab\in\Gamma^*$, its minimal representation $\gmin(\glab)$ is defined as any concatenation of all $\gmin_x(\glab)$ for all $x\in \Xi$ (as $\netvar{\glab}\neq\unchanged$ holds only for finitely many $x\in\Xi$, this is not problematic). In other words, for each $\glab\in\Gamma^*$, $\gmin(\glab)$ is one of the shortest words in $\Gamma^*$ that satisfies $\netvar{\gmin(\glab)}=\netvar{\glab}$. %For any set $G\subseteq \Gamma^*$, we define $\gmin(G)\df \{\gmin(\glab)\mid \glab\in G\}$. 
By $\unmark\colon(\mSigma\cup\mXi\cup\mGamma)\to(\Sigma\cup\Xi\cup\Gamma)^*$, we denote the morphism that removes the markings from marked letters.
		
	By using $\netvar{}$, we can directly interpret a memory occurrence graph $\ograph{\alpha}$ as a $\TMFA$ $\glush(\alpha) \df (Q,$ $\Sigma, \delta, \src, F)$, analogously to the occurrence graph for proper regular expressions. The components of $\glush(\alpha)$ are obtained as follows: First, we rename the variables such that $\ograph{\alpha}$ contains exactly the variables  $\{1,\ldots,k\}$ for some $k\geq 0$ (hence, for each $1\leq i\leq k$, there is a variable $x_i\in\var(\alpha)$ such that $x_i$ is represented by $i$). We then define 
	
\begin{align*}
		Q&\df (\onodes{\alpha}\mdif\{\snk\})\cup\{\trapstate\},\\
		F&\df \{u\in Q \mid (u,\glab,\snk)\in \oedges{\alpha} \text{ for some $\glab$} \}.
	\end{align*}
	In other words, all nodes except $\snk$ are states, and all nodes that have an edge to $\snk$ are final states (as in the occurrence graph).  Following this intuition, each edge $(u,\glab,v)$ with $v \neq \snk$ corresponds to a transition from state $u$ to state $v$, while performing the memory actions of $\netvar{\glab}(x)$ on each $x\in \var(\alpha)$. In order to allow recursive applications of the construction, each edge $(u,\glab,\snk)$ not only marks that $u$ is an accepting state, but also that the memory actions of $\glab$ need to be performed before accepting.
	Formally, we define $\delta$ to include exactly the following transitions:
	\begin{enumerate}
		\item If $(u,\glab,a_{(i)})\in \oedges{\alpha}$ with $a\in\Sigma$, then $(a_{(i)},s_1,\ldots,s_k)\in\delta(u,a)$.
		\item If $(u,\glab,x_{(i)})\in \oedges{\alpha}$ with $x\in \var(\alpha)$, then $(x_{(i)},s_1,\ldots,s_k)\in\delta(u,x)$,
	\end{enumerate}
	where for each $1\leq i \leq k$, $s_i \df \netvar{\unmark(\glab)}(x_i)$ (unless the transition recalls memory $i$; then we choose $s_i \df \close$ as required by Definition~\ref{TMFADefinition}). 
%Recall that $\unmark\colon(\mSigma\cup\mXi\cup\mGamma)\to(\Sigma\cup\Xi\cup\Gamma)^*$\todo{MS: $\unmark$ is not previously defined!} is the morphism that removes the markings from marked letters.
 As we shall see, in order to satisfy the condition that $\glush(\alpha)$ is deterministic \emph{only if} $\alpha$ is deterministic, we need to slightly adapt this definition.
	
	Following this interpretation, we say that a memory occurrence graph $\ograph{\alpha}$ is \emph{not deterministic} if there exists a $u\in \onodes{\alpha}$  such that any of the following conditions is met:
	\begin{enumerate}
		\item\label{dmoa:c1} $\oedges{\alpha}$ contains edges $(u,\glab_1,a_{(i)})$ and $(u,\glab_2,a_{(j)})$ with $i\neq j$ and $a\in\Sigma$.
		\item\label{dmoa:c2} $\oedges{\alpha}$ contains edges $(u,\glab_1,x_{(i)})$ and $(u,\glab_2,\chi_{(j)})$ with $i\neq j$, $x\in \Xi$, $\chi\in(\Xi\cup\Sigma)$,
		\item\label{dmoa:c3} $\oedges{\alpha}$ contains edges $(u,\glab_1,\chi_{(i)})$ and $(u,\glab_2,\chi_{(i)})$ with $\glab_1\neq \glab_2$ and $\chi\in(\Xi\cup\Sigma)$,
		\item\label{dmoa:c4} $\oedges{\alpha}$ contains edges $(u,\glab_1,\snk)$ and $(u,\glab_2,\snk)$ with $\glab_1\neq \glab_2$.				
	\end{enumerate}
	Otherwise, we call $\ograph{\alpha}$ \emph{deterministic}. It is easily seen that if $\ograph{\alpha}$ is deterministic, $\glush(\alpha)$ is also deterministic. For the other direction, we need to account for two problems: First, it is possible that two labeled ref-words $\glab_1$ and $\glab_2$ map to the same memory action $\netvar{\unmark(\glab_1)}=\netvar{\unmark(\glab_2)}$, e.\,g., $\glab_1 = \vop{x}_{(1)}\vcl{x}_{(2)}$ and $\glab_2= \vop{x}_{(3)}\vcl{x}_{(4)}\vop{x}_{(5)}\vcl{x}_{(6)}$, which  can occur in regex like $\alpha_1\df \bigl(\bind{x}{\emptyword} \ror (\bind{x}{\emptyword}\bind{x}{\emptyword})\bigr)\ta$.  Second, as $\glush(\alpha)$ has no $\emptyword$-transitions, it does not model the difference between distinct edges to $\snk$, as they appear when converting regex like $\alpha_2\df(\emptyword\ror \bind{x}{\emptyword})$. As $\DTMFA$ cannot detect explicitly that the end of the input has been reached, they cannot simulate the memory actions of edges to $\snk$, which means that the construction ignores this. 
	
	In both cases, the accepted language is correct; but this has the side effect that the resulting  $\glush(\alpha)$ is deterministic, although $\alpha$ and $\ograph{\alpha}$ are not. 
	Hence, to ensure that $\glush(\alpha)$ is deterministic only if $\alpha$ is deterministic, we proceed as follows: If $\ograph{\alpha}$ contains any of these edges, we pick any transition $\delta(q, b) \ni (p, s_1, \ldots, s_k)$ with $b \in \Sigma \cup \{1, 2, \ldots, k\}$. We then  add a new state $p_{\cheat}$, a transition $\delta(q, b) \ni (p_{\cheat}, s_1, \ldots, s_k)$, and $p_{\cheat}$ has the same outgoing transitions as $p$. If we want to construct an algorithm that rejects non-deterministic regex, we can simply omit this technical crutch, and detect these cases in the construction of $\ograph{\alpha}$ as discussed below.
	
	\subparagraph*{Constructing $G_{\markpos{\alpha}}$:}
	We now define $G_{\markpos{\alpha}}=(\onodes{\alpha},\oedges{\alpha})$ recursively. %During the construction, we also observe that the application of recursive rules preserves the following invariant: For each edge $(u,\glab,v)$, $u=\src$ implies $\netvar{\glab}(x)\neq \close$ and $v=\snk$ implies $\netvar{\glab}(x)\neq\open$ for all $x\in \Xi$. In particular, this implies that for each edge $(\src,\glab,\snk)$, $\netvar{\glab}(x)\in\{\reset,\unchanged\}$ for all $x\in \Xi$. 
	\begin{enumerate}
		\item \descitem{Empty word:} If $\markpos{\alpha} = \emptyword$, we define 
		\begin{align*}
			\onodes{\alpha}&\df\{\src,\snk\},\\ 
			\oedges{\alpha}&\df\{(\src,\emptyword,\snk)\}.	
		\end{align*}
		
		%			\begin{tikzpicture}[node distance=10mm,on grid,>=stealth',auto, 
		%			state/.style={rectangle,draw=black,inner sep=0pt,minimum size=4mm}]
		%			\node         (start)                 				{};
		%			\node 				(end)   [right=of start]	   	{};
		%			
		%			\draw[fill=black] (start) circle (0.5mm);
		%			\draw[fill=black] (end) circle (0.5mm);
		%			\draw 						(end) circle (1mm);
		%			
		%			\draw[->]  (start.center) -- (end.west);
		%			\end{tikzpicture}
		This case is completely straightforward: An edge from $\src$ to $\snk$ is how occurrence graphs model $\emptyword$, and the marking $\emptyword$ means that this transition performs no memory actions.
		
		\item \descitem{Terminals and variable references:} If $\markpos{\alpha} = \chi_{(i)}$  with $\chi\in(\Sigma\cup\Xi)$, we define
		\begin{align*}
			\onodes{\alpha}&\df\{\src,\chi_{(i)},\snk\},\\
			\oedges{\alpha}&\df\{(\src,\emptyword,\chi_{(i)}),(\chi_{(i)},\emptyword,\snk)\}.
		\end{align*}
		
		Similar to the case for $\emptyword$, this models that the terminal is read, or that a variable  reference is processed, by recalling the appropriate memory.
		
		\item \descitem{Variable bindings:} If $\markpos{\alpha} = (\vop{x}_{(i)}\markpos{\beta}\vcl{x}_{(j)})$ with $x\in \Xi$, we define $	\onodes{\alpha}\df V_{\markpos{\beta}}$ and
		\begin{align*}
			\oedges{\alpha} \df 
			\:&\{(\src,\vop{x}_{(i)}\cdot\glab_{\tin},v) \mid (\src,\glab_{\tin},v)\in E_{\markpos{\beta}}, v\neq \snk\}\\
			&\cup 	\{(u,\glab,v) \mid (u,\glab,v)\in E_{\markpos{\beta}}, u\neq \src, v\neq \snk\}\\
			&\cup 	\{(u,\glab_{\tout}\cdot \vcl{x}_{(j)},\snk) \mid (u,\glab_{\tout},\snk)\in E_{\markpos{\beta}}, u\neq \src\}\\
			&\cup \{(\src,\vop{x}_{(i)}\cdot\glab_{\emptyword}\cdot\vcl{x}_{(j)},\snk) \mid (\src,\glab_{\emptyword},\snk)\in E_{\markpos{\beta}}\},
		\end{align*}
		Less formally, we take the memory occurrence graph for $\beta$ and add opening (and closing) of $x$ to all edges from $\src$ (and to $\snk$, respectively); while all other edges remain unchanged.
		Note that for edges from $\src$ to $\snk$, we could also use $\glab_{\emptyword}\cdot\vop{x}\vcl{x}$ or $\vop{x}\vcl{x}\cdot \glab_{\emptyword}$, as  by Definition~\ref{def:rx}, $\bind{x}{\beta}$ is only a regex if $x\notin\var(\beta)$, which implies that no marked $\vop{x}$ or $\vcl{x}$ occurs in $\glab_{\emptyword}$.
		\item \descitem{Disjunction:} If $\markpos{\alpha} = (\markpos{\beta} \ror \markpos{\gamma})$, we define $\onodes{\alpha}\df V_{\markpos{\beta}}\cup V_{\markpos{\gamma}}$ and $\oedges{\alpha}\df E_{\markpos{\beta}}\cup E_{\markpos{\gamma}}.$ 
		
		As the markings define  a one to one correspondence between the nodes in $\onodes{\alpha}$ and the terminals and the variable references in $\alpha$, we know that $V_{\markpos{\beta}}\cap V_{\markpos{\gamma}}=\{\src,\snk\}$. Therefore, the resulting memory occurrence graph $\ograph{\alpha}$ computes the union of $G_{\markpos{\beta}}$ and $G_{\markpos{\gamma}}$. %Furthermore, as there are no new edges, the invariant remains unaffected.
		
		\item \descitem{Concatenation:} If  $\markpos{\alpha} = (\markpos{\beta} \cdot \markpos{\gamma})$, we define
		\begin{align*}
			\onodes{\alpha}&\df V_{\markpos{\beta}}\cup V_{\markpos{\gamma}},\\
			\oedges{\alpha}&\df 
			\{(u,\glab,v)\mid (u,\glab,v)\in E_{\markpos{\beta}}, v\neq \snk\} \\
			&\cup \{(u,\glab,v)\mid (u,\glab,v)\in E_{\markpos{\gamma}}, u\neq \src\} \\
			&\cup \{(u,(\glab_{1}\cdot \glab_{2}),v)\mid (u,\glab_1,\snk)\in E_{\markpos{\beta}}, (\src,\glab_2,v)\in E_{\markpos{\gamma}}\},
		\end{align*}
		Again, we use the fact that $V_{\markpos{\beta}}\cap V_{\markpos{\gamma}}=\{\src,\snk\}$. The memory occurrence graph $G_{\markpos{\alpha}}$ first simulates $G_{\markpos{\beta}}$, until the latter would accept by processing an edge $(u,\glab_1,\snk)\in E_{\markpos{\beta}}$. Instead of following this edge to $\snk$, $G_{\markpos{\alpha}}$ then starts its simulation of $G_{\markpos{\gamma}}$, by picking any edge $(\src,\glab_2,v)\in E_{\markpos{\gamma}}$, which is merged with $(u,\glab_1,\snk)$ 
		into a single edge from $u$ to $v$, and its label is $(\glab_1\cdot \glab_2)$. Hence, it is easy to see that $G_{\markpos{\alpha}}$ computes the concatenation of  $G_{\markpos{\beta}}$ and  $G_{\markpos{\gamma}}$. 
		
		%The only edges that could affect the invariant are the new edges that were formed from edges $(u,\glab_1,\snk)\in E_{\markpos{\beta}}$ and  $(\src,\glab_2,v)\in E_{\markpos{\gamma}}$; and this only if $u=\src$ or $v=\snk$. If $u=\src$, the invariant requires $\netvar{\glab_1}(x)\in\{\reset,\unchanged\}$ for all $x\in \Xi$. As it also demands $\netvar{\glab_2}(x)\neq \close$, we obtain $\netvar{\glab_1\cdot\glab_2}(x)\in\{\open,\reset,\unchanged\}$ (i.\,e., $\netvar{\glab_1\cdot\glab_2}(x)\neq\close$) for all $x\in \Xi$. Likewise, if $v=\snk$, we can conclude $\netvar{\glab_1\cdot \glab_2}(x)\neq \open$ for all $x\in \Xi$. Hence, the invariant is preserved.
		
		\item \descitem{Kleene plus:} Assume $\markpos{\alpha} = \markpos{\beta}^+$. 
		This case requires some additional definitions.  
		Let  $N_{\emptyword}$ denote the set of all $\glab$ with $(\src,\glab,\snk)\in E_{\markpos{\beta}}$, and let $N^{(*)} \df \{\gmin(\glab)\mid \glab\in N^*_{\emptyword}\}$, where we assume that the elements of $N^{(*)}$ have some arbitrary markings (as we shall see, this definition matters only for non-deterministic regex, which means that we do not need markings to detect non-determinism). 
		
		Note that, as $N_{\emptyword}$ is finite, there are only finitely many $x\in\Xi$ such that $\netvar{\glab}(x)\neq \unchanged$ for a $\glab\in N_{\emptyword}$, which implies that $N^{(*)}$ is finite. In order to avoid hiding non-determinism in some very special cases, we assume that $N^{(*)}$ always contains at least two elements (this is possible without loss of generality, as we can always add some $\glab^2$ for a $\glab\in N^{(*)}$ without changing the behavior).
		We now define $\onodes{\alpha}\df V_{\markpos{\beta}}$, as well as
		\begin{align*}
			\oedges{\alpha}\df\:&E_{\markpos{\beta}} 
			\cup \{(\src,\hat{\glab}\cdot \glab_{\tin},v)\mid (\src,\glab_{\tin},v)\in E_{\markpos{\beta}}, \hat{\glab}\in N^{(*)}\} \\
			&\cup \{(u,\glab_{\tout}\cdot \hat{\glab},\snk)\mid (u,\glab_{\tout},\snk)\in E_{\markpos{\beta}}, \hat{\glab}\in N^{(*)} \} \\
			&\cup \{(u,\glab_{\tout}\cdot \hat{\glab}\cdot \glab_{\tin},v)\mid (u,\glab_{\tout},\snk)\in E_{\markpos{\beta}}, (\src,\glab_{\tin},v)\in E_{\markpos{\beta}}, \hat{\glab} \in N^{(*)}\}.
		\end{align*}
		Similar to the construction for concatenation, the idea is that $G_{\markpos{\alpha}}$ simulates $G_{\markpos{\beta}}$; and whenever the latter could accept by taking an edge to $\snk$, the former can loop back to the beginning. The only difficult part is when $G_{\markpos{\beta}}$ contains edges from $\src$ to $\snk$ with memory actions. As the Kleene plus allows us to use an arbitrary amount of these edges before taking an edge from $\src$ or to $\snk$, we need to include $N_{\emptyword}^*$ in the functions. This set is generally infinite; but it can be compacted to the finite set $N^{(*)}$.  
		
		%	As $\netvar{\glab}(x)\in\{\reset,\unchanged\}$ for all $x\in \Xi$ and all $\glab\in N_{\emptyword}$, the invariant is preserved. 
		
		For deterministic regex, this construction collapses to a far simpler case that does not use $N^{(*)}$: First, note that if $\alpha$ is deterministic and contains $\beta^+$,  $(\src,\glab,\snk)\in \oedges{\beta}$  implies $\glab=\emptyword$, as  otherwise, $\beta$ would satisfy condition~\ref{def:drx:c4} of Definition~\ref{def:drx}, and $\alpha$ would satisfy condition~\ref{def:drx:c3} or~\ref{def:drx:c4}. Hence, if $\alpha$ is deterministic, we can assume that $N_{\emptyword}=\{\emptyword\}$ or $N_{\emptyword}=\emptyset$; both cases lead to $N^{(*)}= \{\emptyword\}$. This allows us to use the following simplified definition:
		\begin{align*}
			\oedges{\alpha}&\df E_{\markpos{\beta}} \cup \{(u,\glab_{\tout}\cdot\glab_{\tin},v)\mid (u,\glab_{\tout},\snk)\in E_{\markpos{\beta}}, (\src,\glab_{\tin},v)\in E_{\markpos{\beta}}\}.
		\end{align*}
		Hence, when constructing $\ograph{\alpha}$ inductively, we first check if $\oedges{\beta}$ contains an edge $(u,\glab,v)$ with $\glab\neq \emptyword$. If this is the case, we can reject $\alpha$ as not deterministic. Otherwise, we use this simplified definition.
	\end{enumerate}
	
	\subparagraph*{Correctness and determinism:} The correctness of the construction is easily seen by a lengthy but straightforward induction, using the explanations provided with the definitions above. In particular, note that if $\alpha$ does not contain a Kleene plus (or contains a Kleene plus and is deterministic), each path from $\src$ to $\snk$ through $\ograph{\alpha}$ corresponds to a marked ref-word from $\refl(\alpha)$, and vice versa. If $\alpha$ is not deterministic and contains a Kleene plus, the correspondence is a little bit less strict, as ref-words $\gamma\in\Gamma^+$ are compressed to the equivalent $\gmin(\gamma)$. 
	
	To see that $\glush(\alpha)$ is deterministic if and only if $\alpha$ is deterministic, recall that we established above that $\glush(\alpha)$ is deterministic if and only if $\ograph{\alpha}$ is deterministic. Hence, it suffices to show that determinism in $\ograph{\alpha}$ is equivalent to determinism in $\alpha$. 	But this follows immediately from our observation that there is a one-to-one correspondence between paths  in $\ograph{\alpha}$ and the marked ref-words in $\refl(\alpha)$, and the fact that each node $\chi_{(i)}\in\onodes{\alpha}\mdif\{\src,\snk\}$ corresponds to the same $\chi_{(i)}$ in $\markpos{\alpha}$. Thus, if $\ograph{\alpha}$ satisfies a condition $i$ for non-determinism, $\alpha$ satisfies the same condition $i$ in Definition~\ref{def:drx}, and vice versa.
	
	\subparagraph*{Complexity:} Given a regex $\alpha$, let $n$ denote the number of occurrences of terminals and variable references in $\alpha$. We examine two steps of the computation: Computing $\ograph{\alpha}$, and converting it to $\glush(\alpha)$.
	
	For the first step, observe that  $\ograph{\alpha}$ has $n+2$ nodes, and if $\alpha$ is deterministic, each node has at most $\min(n,|\Sigma|)$ outgoing edges, which means that we can bound this number with $|\Sigma|$.  Hence, if $\alpha$ is deterministic, $\ograph{\alpha}$ can be computed in time $O(|\Sigma||\alpha| n)$ by directly following the recursive definition of $\ograph{\alpha}$: If $\alpha$ is represented as a tree, it has at most $|\alpha|$ nodes, which means that the recursive rules have to be applied $O(|\alpha|)$ times. Each rule application requires the creation of at most $O(|\Sigma|n)$ edges, each of which uses a concatenation.
	
	For the conversion, we need to process each edge $(u,\glab,v)\in\oedges{\alpha}$, and compute its function $\netvar{\unmark(\glab)}$. From the recursive definition, we can immediately conclude that $|\glab|\in O(|\alpha|)$ (as $\alpha$ is deterministic, we do not even need to take into account that the definition for Kleene plus uses $\gmin$). Hence, each edge can be turned into a transition in time $O(|\alpha|)$. As $\alpha$ is deterministic, there are $O(|\Sigma|n)$ edges, which gives us a total time of $O(|\Sigma||\alpha|n)$ for this step.
	
	As we have the same estimation for both steps, we conclude that the total running time is $O(|\Sigma||\alpha| n)$.
	
	If $\alpha$ is not deterministic, this can be discovered during the construction, as soon as the recursive definition computes a non-deterministic memory occurrence graph $\ograph{\beta}$ for a non-deterministic subexpression $\beta$ of $\alpha$, or if hidden non-determinism is detected.

	\subparagraph*{Unmarked edge labels:} 
	As mentioned above, if the goal is not to construct an $\glush(\alpha)$ that is deterministic if and only if $\alpha$ is deterministic, but to turn every deterministic $\alpha$ in a deterministic $\glush(\alpha)$ and to reject non-deterministic $\alpha$, we can construct $\ograph{\alpha}$ by using unmarked ref-words on the labels. The only cases where using unmarked ref-words can hide non-determinism (in the sense that $\ograph{\alpha}$ is deterministic, although $\alpha$ is not) is in the rule for union. For example, consider $\alpha \df \bind{x}{\emptyword} \ror \bind{x}{\emptyword}$, which satisfies condition~\ref{def:drx:c4} of Definition~\ref{def:drx}, as $\lang(\markpos{\alpha})$ contains $\vop{x}_{(1)} \vcl{x}_{(2)}$ and $\vop{x}_{(3)} \vcl{x}_{(4)}$, due to  $\markpos{\alpha} = (\vop{x}_{(1)} \vcl{x}_{(2)})\ror(\vop{x}_{(3)} \vcl{x}_{(4)})$. If we use unmarked ref-words, $\ograph{\alpha}$ consists only of a single edge from $\src$ to $\snk$ with label $\vop{x}\vcl{x}$, which is clearly deterministic. Nonetheless, we can detect this hidden non-determinism when recursively constructing $\ograph{\alpha}$, by checking whether there exist edges $(\src,\glab_1,\snk)\in \oedges{\beta}$ and $(\src,\glab_2,\snk)\in\oedges{\gamma}$ with $\glab_1\neq \emptyword$ or $\glab_2\neq\emptyword$. Hence, if the conversion algorithm encounters this case, it can reject the regex as non-deterministic.
	
	Note that 	concatenation  cannot hide non-determinism: For  $u\in\onodes{\beta}$ and $v\in\onodes{\gamma}$, define $N_u \df \{\glab \mid (u,\glab,\snk)\in\oedges{\beta}\}$ and $N_v \df \{\glab \mid (\src,\glab,v)\in\oedges{\gamma}\}$. Assume that at least one of the two sets $N_u$ and $N_v$ contains more than one element. Then $\oedges{\alpha}$ contains at least two edges from $u$ to $v$, which means that $\ograph{\alpha}$ is not deterministic. Finally, Kleene star is also unaffected by this change, as the presence of any edge  $(\src,\glab,\snk)$ with $\glab\neq\emptyword$ causes non-determinism regardless of whether $\glab$ is marked or not.
	
	Furthermore, note that an  implementation of this construction can also represent each label $\glab$ in reduced form as $\netvar{\gmin(\glab)}$, if it ensures  that no hidden non-determinism is present. 
\end{proof}

Let us illustrate the construction of the proof of Theorem~\ref{thm:glushkov} by an example.

\begin{example}\label{ex:glush}
	Consider the deterministic regex $\alpha\df \bind{x}{(\ta\ror\tb)^+}\cdot \td\cdot \rr{x}$. Applying the markings yields $\markpos{\alpha}\df \vop{x}_{(1)}(\ta_{(2)}\ror\tb_{(3)})^+\vcl{x}_{(4)}\cdot \td_{(5)}\cdot x_{(6)}$, and $\glush(\alpha)$ is the following automaton: 
	\begin{center}
		\begin{tikzpicture}[node distance=10mm,on grid,>=stealth',auto, 
		state/.style={rectangle,draw=black,inner sep=2pt,minimum size=4mm}]
		\node         (start)                 				{};
		\node[state]  (a1)     [above right=of start,xshift=10mm] 				{\vphantom{$\tb$}$\ta_{(2)}$};
		\node[state]  (b)     [below right=of start,xshift=10mm]     	{$\tb_{(3)}$};
		\node[state]	(d)	[below right=of a1,xshift=10mm]		 	{$\td_{(5)}$};
		\node[state]	(x)	[right = of d,xshift=3mm] {\vphantom{$\tb$}$x_{(6)}$};	 	
		\node 				(end)   [right=of x]	   	{};
		
		\draw[fill=black] (start) circle (0.5mm);
		\draw[fill=black] (end) circle (0.5mm);
		\draw 						(end) circle (1mm);
		
		\path[->]
		(start.center) edge node[above, left=5pt,very near end,yshift=1mm] {$\vop{x}_{(1)}$} (a1)
		(start.center) edge node[below, left=5pt,very near end,yshift=-1mm] {$\vop{x}_{(1)}$} (b)
		(a1) edge[loop above] node {} (a1)
		(b) edge[loop below] node {} (b)
		(a1) edge[bend left=15] node {} (b)
		(b) edge[bend left=15] node {} (a1)
		(a1) edge node[ above, right=5pt, very near start,yshift=1mm] {$\vcl{x}_{(4)}$} (d)
		(b) edge node[below, right=5pt,  very near start,yshift=-2mm] {$\vcl{x}_{(4)}$} (d)
		(d) edge node {} (x)
		(x) edge node{} (end.west);
		\end{tikzpicture}\hspace{5mm}
		\begin{tikzpicture}[node distance=10mm,on grid,>=stealth',auto, 
		state/.style={circle,draw=black,inner sep=0pt,minimum size=4mm}]
		
		\node[state,initial,initial by arrow,initial text={}]         (start)                 				{};
		\node[state]  (a1)     [above right=of start,xshift=1cm]   {2};
		\node[state]  (b)     [below right=of start,xshift=1cm]    {3};
		\node[state]	(d)	[below right=of a1,xshift=1cm]		 	{5};
		\node[state,accepting]	(x)	[right = of d] {6};	 	
		
		\path[->] 
		(start) edge node[above, left=3pt, very near end] {$\ta,\open$} (a1)
		(start) edge node[below, left=3pt,very near end] {$\tb,\open$} (b)
		(a1) edge[loop above] node[right,near end] {$\ta,\unchanged$} (a1)
		(b) edge[loop below] node[right,near start] {$\tb,\unchanged$} (b)
		(a1) edge[bend left=15] node[right,midway] {$\tb,\unchanged$} (b)
		(b) edge[bend left=15] node[left,midway] {$\ta,\unchanged$} (a1)
		(a1) edge node[ above, right=4pt, pos=0.1] {$\td,\close$} (d)
		(b) edge node[below, right=6pt,  pos=0.1] {$\td,\close$} (d)
		(d) edge node {$1,\close$} (x)
		;
		\end{tikzpicture}
	\end{center}	
	To the left, $\glush(\alpha)$ is represented as the memory occurrence graph $\ograph{\alpha}$, to the right as the  $\DTMFA$ that can be directly derived from this graph (which uses memory 1 for~$x$). 
\end{example}
The construction from the proof of Theorem~\ref{thm:glushkov} behaves like the Glushkov construction for  regular expressions, with one important difference: On regex that are not deterministic, its  running time may be exponential in the number of variables; as there are non-deterministic regex where conversion into a $\TMFA$ without $\emptyword$-transitions requires an exponential amount of transitions.  For example, for $k\geq 1$, let $\alpha\df \ta\cdot (\emptyword\ror\bind{x_1}{\emptyword})\cdots(\emptyword\ror\bind{x_k}{\emptyword})\cdot\tb$ and $\beta \df \ta \bigl(\bigror_{1\leq i\leq k}\bind{x_i}{\emptyword}\bigr)^* \tb$. An automaton that is derived with  a Glushkov style conversion then contains states $q_{1}$ and $q_2$ that correspond to the terminals; and between these two states, there must be $2^k$ different transitions to account for all possible combinations of actions on the variables.
This suggests that converting a regex into a $\TMFA$ without $\emptyword$-edges is only efficient for deterministic regex; while in  general, it is probably advisable to use a construction with $\emptyword$-edges.

By combining Theorems~\ref{thm:glushkov} and~\ref{thm:dtmfamembership}, due to $n\leq |\alpha|$, we immediately obtain the following:
\begin{theorem}\label{thm:drxmembership}
	Given  $\alpha \in \DRX$ with $n$ occurrences of terminal symbols or variable references and $k$ variables, and  $w \in \Sigma^*$, we can decide in time $O(|\Sigma||\alpha|n+ k|w|)$, whether  $w \in \lang(\alpha)$.
\end{theorem}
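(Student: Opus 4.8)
The plan is to obtain this result as a direct composition of Theorem~\ref{thm:glushkov} and Theorem~\ref{thm:dtmfamembership}, so the proof is essentially bookkeeping of the three running-time contributions. First I would invoke Theorem~\ref{thm:glushkov}: since $\alpha\in\DRX$ is by definition deterministic, the Glushkov construction produces $\glush(\alpha)$, a $\TMFArej$ with exactly $n+2$ states, satisfying $\lang(\glush(\alpha))=\lang(\alpha)$, and this $\glush(\alpha)$ is deterministic precisely because $\alpha$ is. Hence $\glush(\alpha)\in\DTMFArej\subseteq\DTMFA$. The cost of this construction step is $O(|\Sigma|\,|\alpha|\,n)$, again by Theorem~\ref{thm:glushkov}.

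Next I would feed $\glush(\alpha)$ into Theorem~\ref{thm:dtmfamembership}, instantiating its state-count parameter with $n+2$ and its memory-count parameter with $k$ (the number of variables of $\alpha$, which is exactly the number of memories of $\glush(\alpha)$). The second bound of that theorem then decides $w\in\lang(\glush(\alpha))=\lang(\alpha)$ in time $O(k|w|)$ after an $O((n+2)^2)=O(n^2)$ preprocessing. Summing the three phases gives a total running time of
\begin{equation*}
O(|\Sigma|\,|\alpha|\,n) + O(n^2) + O(k|w|).
\end{equation*}

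The only point requiring a moment of care — and the closest thing to an obstacle, though it is routine — is absorbing the $O(n^2)$ preprocessing term into the stated bound. Here I would use $n\leq|\alpha|$ together with $|\Sigma|\geq 2$ (in fact $|\Sigma|\geq 1$ suffices) to conclude $n^2\leq n\,|\alpha|\leq |\Sigma|\,|\alpha|\,n$, so that the preprocessing cost is dominated by the construction cost. This collapses the sum to $O(|\Sigma|\,|\alpha|\,n + k|w|)$, which is exactly the claimed bound, completing the argument.
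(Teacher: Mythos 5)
Your proposal is correct and follows exactly the paper's route: the paper derives Theorem~\ref{thm:drxmembership} precisely by composing Theorem~\ref{thm:glushkov} with the second bound of Theorem~\ref{thm:dtmfamembership} and absorbing the $O(n^2)$ preprocessing via $n\leq|\alpha|$. Your accounting of the three cost contributions is the same bookkeeping the paper leaves implicit.
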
	
If we ensure that recalled variables  never contain~$\emptyword$ (or that only a bounded number of variables references are possible in a row), we can even drop the factor $k$. For comparison, the membership problem for $\DREG$ can be decided in time $O(|\Sigma||\alpha|+|w|)$ when using optimized versions of the Glushkov construction (see \cite{bru:reg,pon:new}), and in  $O(|\alpha|+ |w|\cdot \log \log |\alpha|)$ with the algorithm by Groz and Maneth~\cite{gro:det} that  does not compute an automaton.
%!TEX root=det_SIAM.tex
\section{Expressive Power}\label{sec:expressive}
Although C\^ampeanu, Salomaa, Yu~\cite{cam:afo} as well as  Carle and Narendran~\cite{car:one} state pumping lemmas for a class of regex, these do not apply to regex as defined in this paper (see Section~\ref{sec:choices}). However, Lemmas~\ref{lem:jump}~and~\ref{lem:finap}, introduced in Section~\ref{sec:TMFA}, shall be helpful for proving inexpressibility. A consequence of Lemma~\ref{lem:finap} is that there are infinite unary $\DTMFArej$-languages that are not pumpable (in the sense that certain factors can be repeated arbitrarily often), as this would always lead to an arithmetic progression. It is also possible to demonstrate this phenomenon on larger alphabets, without relying on a trivial modification of the unary case.

%\begin{example}\label{ex:fibonacci}
%\todo[inline, color=blue!30]{MS: Some restructuring w.r.t. the Fibonacci stuff was done here, since the full proof is now in the paper.}
%\todomcom{Thanks. But next time, please check whether this breaks references. ;) I~made some changes.}
The Fibonacci word $\fibword$ is the infinite word that is the limit of the sequence of words $F_{0}\df \tb$, $F_{1}\df \ta$, and $F_{n+2}\df F_{n+1}\cdot F_n$ for all $n\geq 0$. The Fibonacci word has a number of curious properties. In particular, it is \emph{cube-free}, which means that it does not contain cubes (i.\,e., factors~$www$, with $w\neq \emptyword$). This and various other properties are explained throughout Lothaire~\cite{lot:alg}. In the following, we demonstrate that an infinite subset $L$ of $\{F_i \mid i \geq 0\}$ can be generated by a deterministic regex. Since the cube-freeness of $\fibword$ particularly implies that all $F_i$, $i \geq 0$, are cube-free, this demonstrates that $L$ is a $\DRX$-language that cannot be pumped by repeating factors of sufficiently large words arbitrarily often. This is a rather counter-intuitive phenomenon with respect to a class of formal languages that is characterised by a natural extension of classical finite automata.
%	
%	This and various other properties are explained throughout Lothaire~\cite{lot:alg}.\par
%
%
%	We define the following regex $\alpha$:
%
%\begin{align*}
%\alpha \df  &\ta \bind{x_0}{\tb}\bind{x_1}{\ta}\\
%&\bigl( \bind{x_2}{\rr{x_1}\rr{x_0}} \bind{x_3}{\rr{x_1}\rr{x_0}\rr{x_1}} \bind{x_0}{\rr{x_3}\rr{x_2}} \bind{x_1}{\rr{x_3}\rr{x_2}\rr{x_3}}       \bigr)^*
%\end{align*}
%	
%	$$\alpha \df  \ta \bind{x_0}{\tb}\bind{x_1}{\ta}\bigl( \bind{x_2}{\rr{x_1}\rr{x_0}} \bind{x_3}{\rr{x_1}\rr{x_0}\rr{x_1}} \bind{x_0}{\rr{x_3}\rr{x_2}} \bind{x_1}{\rr{x_3}\rr{x_2}\rr{x_3}}       \bigr)^*.$$
%	Then $\lang(\alpha)= \{F_{4i+3}\mid i\geq 0\}$. Hence, the words of $\lang(\alpha)$ converge towards $\fibword$. The  proof of this equivalence is straightforward, but  long. It uses that $F_{n+3} = F_{n+1}\cdot  F_n \cdot F_{n+1}$ holds for all $n\geq 0$
%	As $\fibword$ contains no cube, the same applies to all $F_n$. Thus, $\lang(\alpha)$  is a $\DRX$-language that cannot be pumped by repeating factors of sufficiently large words arbitrarily often.
%\end{example}

%\subsection{Lemma~\ref{lem:fibonacci} (for Example~\ref{ex:fibonacci})}

\begin{lemma}\label{lem:fibonacci}
	Let $L\df \{F_{4i+3}\mid i\geq 0\}$. Then $L=\lang(\beta)$ holds for the deterministic regex
%	\begin{multline*}
%\beta \df  \ta \bind{x_0}{\tb}\bind{x_1}{\ta}\\\cdot\bigl( \bind{x_2}{\rr{x_1}\rr{x_0}} \bind{x_3}{\rr{x_1}\rr{x_0}\rr{x_1}} \bind{x_0}{\rr{x_3}\rr{x_2}} \bind{x_1}{\rr{x_3}\rr{x_2}\rr{x_3}}       \bigr)^*.
%	\end{multline*}
\begin{align*}
\beta &\df  \ta \bind{x_0}{\tb}\bind{x_1}{\ta} \cdot (\beta_{\mathsf{shift}})^*\,,\\
\beta_{\mathsf{shift}} &\df \bind{x_2}{\rr{x_1}\rr{x_0}} \bind{x_3}{\rr{x_1}\rr{x_0}\rr{x_1}} \bind{x_0}{\rr{x_3}\rr{x_2}} \bind{x_1}{\rr{x_3}\rr{x_2}\rr{x_3}}\,.
\end{align*}

\end{lemma}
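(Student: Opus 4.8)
The plan is to prove the two halves of the claim separately: that $\lang(\beta)=L$, and that $\beta\in\DRX$. For the language equality I would track the contents of the four variables as the star over $\beta_{\mathsf{shift}}$ is unrolled, maintaining an invariant that ties these contents to Fibonacci words. Concretely, I claim that after reading the prefix $\ta\bind{x_0}{\tb}\bind{x_1}{\ta}$ followed by exactly $i\geq 0$ iterations of $\beta_{\mathsf{shift}}$, the variables satisfy $x_0=F_{4i}$ and $x_1=F_{4i+1}$, and the word produced so far is $F_{4i+3}$. The base case $i=0$ is immediate: the prefix stores $x_0=F_0=\tb$, $x_1=F_1=\ta$, and outputs $\ta\tb\ta=F_3$.

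For the inductive step I would substitute the invariant into the four bindings of $\beta_{\mathsf{shift}}$, using that each $\rr{x_j}$ dereferences to the nearest binding of $x_j$ to its left. With $x_0=F_{4i}$, $x_1=F_{4i+1}$ on entry, the iteration computes, in order,
\begin{align*}
x_2 &= \rr{x_1}\rr{x_0}=F_{4i+1}F_{4i}=F_{4i+2}, & x_3 &= \rr{x_1}\rr{x_0}\rr{x_1}=F_{4i+2}F_{4i+1}=F_{4i+3},\\
x_0 &= \rr{x_3}\rr{x_2}=F_{4i+3}F_{4i+2}=F_{4i+4}, & x_1 &= \rr{x_3}\rr{x_2}\rr{x_3}=F_{4i+4}F_{4i+3}=F_{4i+5},
\end{align*}
applying the word recurrence $F_{n+2}=F_{n+1}F_n$ at each equality; this re-establishes the invariant at index $i+1$. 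The output freshly produced in this iteration is $F_{4i+2}F_{4i+3}F_{4i+4}F_{4i+5}$, and appending it to $F_{4i+3}$ and collapsing via the recurrence (first $F_{4i+3}F_{4i+2}=F_{4i+4}$, then $F_{4i+4}F_{4i+3}=F_{4i+5}$, then $F_{4i+5}F_{4i+4}=F_{4i+6}$) yields $F_{4i+6}F_{4i+5}=F_{4i+7}=F_{4(i+1)+3}$. Since $\beta$ outputs exactly the words reachable after some $i\geq 0$ iterations, this gives $\lang(\beta)=\{F_{4i+3}\mid i\geq 0\}=L$.

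For determinism I would argue from the shape of $\refl(\markpos{\beta})$ (equivalently, by inspecting the memory occurrence graph of Theorem~\ref{thm:glushkov}). Both the prefix and $\beta_{\mathsf{shift}}$ are free of disjunction and Kleene plus, so each contributes exactly one marked ref-word; hence every marked ref-word of $\beta$ has the form $P\cdot S^i$, where $P$ marks the prefix and $S$ marks $\beta_{\mathsf{shift}}$. Two distinct such ref-words $PS^i$ and $PS^j$ with $i<j$ agree up to the point where the shorter one ends and the longer continues with the opening $\vop{x_2}$ and the reference $x_1$ beginning $S$; because the shorter ref-word has no terminal or reference at that position, none of Conditions~\ref{def:drx:c1}--\ref{def:drx:c3} of Definition~\ref{def:drx} can apply, and Condition~\ref{def:drx:c4} fails since the differing tail is not a word over $\markpos{\Gamma}$. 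Moreover $S$ always contains references, so $\beta_{\mathsf{shift}}$ admits no empty iteration and no $\src$-to-$\snk$ edge with nontrivial label, and the Kleene-star case of the construction collapses to its deterministic form. Thus $\beta$ is deterministic.

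The main obstacle I anticipate is not the arithmetic but pinning down the rebinding semantics exactly: $\beta_{\mathsf{shift}}$ overwrites $x_0$ and $x_1$ in the same iteration in which their old values are read, so I must verify carefully that $x_2,x_3$ are formed from the \emph{old} $x_0,x_1$ while $x_0,x_1$ are then rewritten from the \emph{just-computed} $x_2,x_3$, all consistent with the ``nearest binding to the left'' rule on ref-words. A secondary subtlety is that the marked symbols of $\beta_{\mathsf{shift}}$ are shared across all iterations; I must confirm this sharing does not silently trigger Condition~\ref{def:drx:c3}, which I would handle by noting that distinct occurrences of a shared marked symbol are always reached through distinct common prefixes $r$, so the equal-symbol-with-equal-$r$ case never arises.
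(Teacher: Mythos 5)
Your proof is correct and follows essentially the same route as the paper's: an induction on the number of iterations of $(\beta_{\mathsf{shift}})^*$ with the invariant that after $i$ iterations $x_0=F_{4i}$, $x_1=F_{4i+1}$ and the word generated so far is $F_{4i+3}$, closed using the recurrence $F_{n+2}=F_{n+1}F_n$ exactly as in the paper (which phrases the same computation in terms of the ref-words $r_i$ and $\deref$). Your explicit verification of determinism is a welcome addition that the paper's proof leaves implicit, and your argument for it is sound.
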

\begin{proof}
%	First, we observe that $\refl(\beta)=\{r_i\mid i\geq 0\}$, where the ref-words $r_i$ are defined by 
%	\begin{align*}
%		r_0 &\df \ta \vop{x_0}\tb\vcl{x_0} \vop{x_1} \ta\vcl{x_1} ,\\
%		\hat{r} &\df \vop{x_2} x_1  x_0 \vcl{x_2} \vop{x_3} x_1  x_0  x_1 \vcl{x_3} \vop{x_0} x_3  x_2 \vcl{x_0} \vop{x_1} x_3  x_2  x_3 \vcl{x_1},
%	\end{align*}
%	and $r_{i+1} \df r_i \cdot \hat{r}$ for all $i\geq 0$. 
First, we observe that $\refl(\beta)=\{r_i\mid i\geq 0\}$, where the ref-words $r_i$ are defined by $r_0 \df \ta \vop{x_0}\tb\vcl{x_0} \vop{x_1} \ta\vcl{x_1}$ and $r_{i+1} \df r_i \cdot \hat{r}$ for all $i\geq 0$, where
\begin{equation*}
\hat{r} \df \vop{x_2} x_1  x_0 \vcl{x_2} \vop{x_3} x_1  x_0  x_1 \vcl{x_3} \vop{x_0} x_3  x_2 \vcl{x_0} \vop{x_1} x_3  x_2  x_3 \vcl{x_1}\,.
\end{equation*}
We now prove by induction that, for each $i\geq 0$, $\deref(r_i) = F_{4i+3}$, and the rightmost values that are assigned to $x_0$ and $x_1$ are $F_{4i}$ and $F_{4i+1}$, respectively. For $i=0$, this is obviously true: $\deref(r_0) = \mathtt{aba} = F_3$, $x_0$ is assigned $\tb=F_0$, and $x_1$ is assigned $\ta = F_1$.
	
	Now assume that the claim holds for some $i\geq 0$, and consider $r_i$. Then we can observe that 
	$\deref(r_{i+1}) = \deref(r_i \cdot \hat{r}) = \deref(r_i) \cdot \deref (s)$, 
	where the ref-word $s$ is obtained from $\hat{r}$ by replacing $x_0$ and $x_1$ with their respective values $F_{4i}$ and $F_{4i+1}$. Hence,
	\begin{align*}
		s &= \vop{x_2} F_{4i+1} \cdot F_{4i} \vcl{x_2} \vop{x_3} F_{4i+1} \cdot  F_{4i}\cdot  F_{4i+1} \vcl{x_3} \vop{x_0} x_3  x_2 \vcl{x_0} \vop{x_1} x_3  x_2  x_3 \vcl{x_1}\\
		&= \vop{x_2} F_{4i+2} \vcl{x_2} \vop{x_3} F_{4i+3} \vcl{x_3} \vop{x_0} x_3  x_2 \vcl{x_0} \vop{x_1} x_3  x_2  x_3 \vcl{x_1}.
	\end{align*}
	The second part of this equation uses that $F_{n+3} = F_{n+2}\cdot  F_{n+1} = F_{n+1}\cdot  F_n \cdot F_{n+1}$ holds for all $n\geq 0$. We now construct a ref-word $t$ by replacing the variables $x_2$ and $x_3$ in $s$ with their respective values. Hence,
	\begin{align*}
		t &=  \vop{x_2} F_{4i+2} \vcl{x_2} \vop{x_3} F_{4i+3} \vcl{x_3} \vop{x_0} F_{4i+3}\cdot  F_{4i+2}  \vcl{x_0} \vop{x_1} F_{4i+3} \cdot  F_{4i+2} \cdot   F_{4i+3} \vcl{x_1}\\
		&=  \vop{x_2} F_{4i+2} \vcl{x_2} \vop{x_3} F_{4i+3} \vcl{x_3} \vop{x_0} F_{4i+4}  \vcl{x_0} \vop{x_1} F_{4i+5} \vcl{x_1}.
	\end{align*}
	Then $\deref(r_{i+1}) = \deref(r_i \cdot t)$ holds, and $r_{i+1}$ assigns $x_0$ and $x_1$ as $t$ does. Hence, $x_0$ is assigned $F_{4(i+1)}$, and $x_1$ is assigned $F_{4(i+1)+1}$, as required by the claim. To see that $\deref(r_{i+1})=F_{4(i+1)+3}$, we observe that
	\begin{align*}
		\deref(r_{i+1}) &= \deref(r_i \cdot t) \\
		&= F_{4i+3} \cdot \deref(\vop{x_2} F_{4i+2} \vcl{x_2} \vop{x_3} F_{4i+3} \vcl{x_3} \vop{x_0} F_{4i+4}  \vcl{x_0} \vop{x_1} F_{4i+5} \vcl{x_1})\\
		&= F_{4i+3} \cdot  F_{4i+2} \cdot F_{4i+3} \cdot  F_{4i+4}\cdot   F_{4i+5} \\
		&= F_{4i+5} \cdot  F_{4i+4}\cdot   F_{4i+5} \\
		&= F_{4i+7}=   F_{4(i+1)+3}.
	\end{align*}
	This concludes the proof. 
\end{proof}

%\todomcom{Made comments regarding $M$  more explicit, please check whether you agree.}

Also note that we can construct an $M\in\DTMFArej$ with $\lang(M)=L_F\df \{F_n\mid n\geq 1\}$ from $\glush(\beta)$. More specifically, this can be achieved by making the states that have an outgoing transition that opens a variable accepting. In terms of regex, this corresponds to being able to accept before each variable binding (the correctness of this construction follows from the proof of Lemma~\ref{lem:fibonacci}). Note that the cycle in $M$ has multiple accepting states. The authors conjecture that there is no $\DTMFA$ for $L_F$ that has a cycle with exactly one accepting state, and that $L_F$ is not a $\DRX$-language.

For further separations, we use the following language:
\begin{example}\label{ex:drxCrazy}	
	Let $\alpha\df \ta^2\cdot \bind{x}{\ta^2} \cdot\bigl( \bind{y}{\rr{x}\cdot \rr{x}}\cdot \bind{x}{\rr{y}\cdot\rr{y}}  \bigr)^*$. Then $\lang(\alpha)=\{\ta^{4^i}\mid i\geq 1 \}$. This can be proven with a straightforward induction that is left to the reader. 
\end{example}
From this, we define an  $L\in\langcl(\TMFA)$ with neither  $L\in\langcl(\DTMFArej)$, nor  $L\in\langcl(\DTMFAacc)$:
\begin{lemma}\label{lem:unarySep}
	Let 
	$L\df \{\ta^{4i+1}\mid i\geq 0\}\cup \{\ta^{4^i}\mid i\geq 1\}$. Then $L\in\langcl(\TMFA)\mdif\langcl(\DTMFA)$.
\end{lemma}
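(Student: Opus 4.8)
The plan is to establish the two memberships separately: $L \in \langcl(\TMFA)$ by exhibiting a regex for $L$, and $L \notin \langcl(\DTMFA) = \langcl(\DTMFArej) \cup \langcl(\DTMFAacc)$ by ruling out both the rejecting and the accepting variant, in each case via Lemma~\ref{lem:finap}. For the membership $L \in \langcl(\TMFA)$ I would write $L$ as the language of a single regex: the arithmetic progression $\{\ta^{4i+1}\mid i\geq 0\}$ is regular and defined by $\ta(\ta\ta\ta\ta)^*$, while the powers of four $\{\ta^{4^i}\mid i\geq 1\}$ are defined by the regex $\alpha$ from Example~\ref{ex:drxCrazy}. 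Their disjunction is a regex for $L$, so $L\in\langcl(\RX)$, and hence $L\in\langcl(\TMFA)$ by Theorem~\ref{thm:TMFAisMFA}.

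The core of the argument is $L \notin \langcl(\DTMFA)$, and the first fact I would record is that $L$ is not regular. Intersecting $L$ with the regular language $\lang((\ta\ta\ta\ta)^+)$ isolates exactly the multiples of $4$ in $L$: the progression contributes only lengths $\equiv 1 \pmod 4$, so this intersection is precisely $\{\ta^{4^i}\mid i\geq 1\}$, a textbook non-regular language. Since $\langcl(\REG)$ is closed under intersection, $L$ cannot be regular. For $L\notin\langcl(\DTMFArej)$ I then argue by contradiction: $L$ is an infinite unary language containing the infinite arithmetic progression $\{\ta^{4i+1}\mid i\geq 0\}$, so if $L$ were a $\DTMFArej$-language, the equivalence of conditions~\ref{finap2} and~\ref{finap1} in Lemma~\ref{lem:finap} would force $L$ to be regular, contradicting the previous step.

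For the remaining case $L \notin \langcl(\DTMFAacc)$, the idea is to pass to the complement. Assuming $L\in\langcl(\DTMFAacc)$, Theorem~\ref{closureComplementTheorem} (in the form used in the proof of Proposition~\ref{prop:crash}) gives $\overline{L}\in\langcl(\DTMFArej)$, where $\overline{L}\df\{\ta\}^*\mdif L$. The design of $L$ is tailored to this step: since the lengths occurring in $L$ lie only in residue classes $0$ and $1$ modulo $4$, the residue class $2$ is left untouched, so $\overline{L}$ contains the entire infinite arithmetic progression $\{\ta^{4i+2}\mid i\geq 0\}$; moreover $\overline{L}$ is non-regular because $L$ is, and $\overline{L}$ is infinite. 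Applying Lemma~\ref{lem:finap} to $\overline{L}$ then yields the same contradiction as before. Combining the two cases gives $L\notin\langcl(\DTMFArej)\cup\langcl(\DTMFAacc)=\langcl(\DTMFA)$.

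The main obstacle, and the point on which the argument hinges, is arranging a single language that simultaneously (i) is non-regular yet regex-expressible, (ii) contains an infinite arithmetic progression, and (iii) has a complement that also contains an infinite arithmetic progression. The residue-class bookkeeping modulo $4$ is exactly what makes all three hold at once: the power-of-four part supplies non-regularity to both $L$ and $\overline{L}$, residue class $1$ supplies the progression inside $L$, and residue class $2$ supplies the progression inside $\overline{L}$. The complementation step is then what allows the $\DTMFAacc$ case to be reduced to the already-handled $\DTMFArej$ case, rather than requiring a separate, more delicate pumping-style argument.
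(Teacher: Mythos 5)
Your proposal is correct and follows essentially the same route as the paper's proof: the same disjunction of $\ta(\ta^4)^*$ with the regex from Example~\ref{ex:drxCrazy} for membership, the same non-regularity argument via intersection with the multiples of four, and the same two-case application of Lemma~\ref{lem:finap} to $L$ (using the progression in residue class $1$) and, after complementing via Theorem~\ref{closureComplementTheorem}, to $\overline{L}$ (using residue class $2$). The paper phrases the case split as ``there is an $A\in\DTMFArej$ accepting $L$ or $\overline{L}$'' rather than treating $\DTMFArej$ and $\DTMFAacc$ separately, but this is only a cosmetic difference.
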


%\subsection{Proof of Lemma~\ref{lem:unarySep}}
\begin{proof}
	To show that $L\in\langcl(\TMFA)$, we construct a regex $\alpha$ for $L$, by $\alpha\df \alpha_1 \ror \alpha_2$, where $\alpha_1\df \ta(\ta^4)^*$, and $\alpha_2$ is the deterministic regex with $\lang(\alpha_2)=\{\ta^{4^i}\mid i\geq 1 \}$ from Example~\ref{ex:drxCrazy}.
	
	Next, observe that $L$ contains the arithmetic progression $\{\ta^{4i+1}\mid i\geq 0\}$, and $\overline{L}\df \{\ta\}^*\mdif L$ contains the arithmetic progression $\{\ta^{4i+2}\mid i\geq 0\}$. Assume that $L$ is a $\DTMFA$-language. Then there is an $A\in\DTMFArej$ that accepts $L$ or $\overline{L}$. Moreover, by Lemma~\ref{lem:finap}, $L$ is regular (note that in case $\lang(A)=\overline{L}$, we also use that the class of regular languages is closed under complementation). This leads to a contradiction, since, as we shall see next, $L$ is not regular. 
%	But as $L$ is not regular, this is a contradiction. 
To show that $L$ is not regular, first assume the contrary. Then $\lang(\alpha_2)=L\cap \{\ta^4\}^*$ would be regular, as the class of regular languages is closed under intersection. But $\lang(\alpha_2)$ is not regular, as for every pair $i \neq j$, $\ta^{4^i}$ and $\ta^{4^j}$ are not Nerode-equivalent. 
\end{proof}

While  inexpressibility through $\DTMFArej$ provides us with a powerful sufficient criterion for $\DRX$-inexpressibility, it is not powerful enough to cover all cases of $\DRX$-inexpressibility. 
In particular, there are even regular languages that are no $\DRX$-languages:
\begin{lemma}\label{lem:drxvsdrmfa}
	Let $L\df \lang\bigl((\ta\tb)^*(\ta\ror\emptyword)\bigr)=\{(\mathtt{ab})^{\frac{1}{2}i}\mid i\geq 0\}$. Then $L\in \langcl(\REG)\mdif \langcl(\DRX)$. 
\end{lemma}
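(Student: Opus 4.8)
The regularity of $L$ is immediate: the expression $(\ta\tb)^*(\ta\ror\emptyword)$ defining it is a proper regular expression, so the entire content is the separation $L\notin\langcl(\DRX)$. The plan is to argue by contradiction. Assume some $\alpha\in\DRX$ with $\lang(\alpha)=L$, and pass to the Glushkov automaton $M\df\glush(\alpha)$, which by Theorem~\ref{thm:glushkov} is a deterministic, $\emptyword$-free $\DTMFArej$ with $\lang(M)=L$. The feature of $L$ that I would exploit throughout is that it is prefix-closed and consists of exactly the prefixes of the single infinite word $\ta\tb\ta\tb\cdots$.

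First I would establish a ``no long jumps'' property of the unique run of $M$ on the prefixes of $(\ta\tb)^\omega$. Consider the run of $M$ on a long prefix $(\ta\tb)^n$; since $M$ is deterministic and $\emptyword$-free, this run is unique and each transition consumes at least one symbol, and I claim that none consumes two or more. Indeed, suppose some memory-recall transition out of a state $q$ consumes a factor $v$ with $|v|\geq 2$, carrying the computation from input position $p$ to $p+|v|$. As $q$ carries a recall transition, determinism of $\DTMFArej$ forbids any other transition out of $q$. Now run $M$ on the prefix of length $p+1$, which is again a word of $L$: the computation is identical up to $q$, but there the remaining input is a nonempty proper prefix of the memory content $v$, so neither the recall (the content is not a prefix of the input) nor the trap transition (there is no next symbol to mismatch, in the sense of Definition~\ref{TMFADefinition}) applies. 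Hence $M$ gets stuck with input left over and rejects a word of $L$, a contradiction. Therefore $M$ advances one symbol at a time along these words, and every state it visits must be accepting (otherwise some prefix in $L$ would be rejected).

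Next I would eliminate the variables. Because every word of $L$ lies on the single word $(\ta\tb)^\omega$, the accepting computations of $M$ essentially follow one run, and each occurrence node $\chi_{(i)}\in\onodes{\alpha}$ is entered only after prefixes of one fixed parity: a state reached after an even prefix must allow $\ta$ and forbid $\tb$ next, while a state reached after an odd prefix must do the opposite, so determinism prevents any state from occurring at both parities. Combined with the single-symbol property above, this forces each memory recalled at a reference node to always hold the same single letter, and each binding to match one fixed letter. Replacing, in $\alpha$, every reference $\rr{x}$ by the fixed letter it recalls and every binding $\bind{x}{\beta}$ by the fixed letter it matches then yields a proper regular expression $\alpha''$ with $\lang(\alpha'')=L$; moreover the conditions of Definition~\ref{def:drx} transfer verbatim, since the relabelling respects the occurrence graph $\ograph{\alpha}$, whence $\alpha''\in\DREG$. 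Finally I would derive the contradiction from the classical theory: the minimal DFA of $L$ has two accepting states $q_0=[\emptyword]$ and $q_1=[\ta]$ forming a single nontrivial strongly connected component together with a dead state, yet $q_0$ leaves this component (into the dead state) on $\tb$ while $q_1$ leaves it on $\ta$. Thus the orbit property of Br\"uggemann-Klein and Wood~\cite{bru:one} fails, so $L$ is not $1$-unambiguous and $L\notin\langcl(\DREG)$, contradicting $\alpha''\in\DREG$.

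I expect the elimination of variables to be the main obstacle. A priori, back-references could be abused to realise the alternating $\ta\tb$-structure through changing memory contents rather than through distinct occurrences, thereby sidestepping the obstruction that rules out proper deterministic regular expressions. The two ingredients that defeat this are precisely the ``no long jumps'' analysis, which prevents any memory from ever being recalled with more than one letter, and the parity/determinism argument, which pins each occurrence to a fixed letter; together they reduce the variable-enabled case to the classical one, where the orbit property settles the matter.
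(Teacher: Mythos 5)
Your reduction to the classical (BKW) case hinges on the claim that every occurrence node of $\ograph{\alpha}$ --- in particular every variable-reference node --- is visited only at positions of one fixed parity, so that every recall always consumes the same single letter. For terminal nodes this is fine (a node $\ta_{(i)}$ always consumes $\ta$, hence always sits at an odd position). But for a reference node the justification ``a state reached after an even prefix must allow $\ta$ and forbid $\tb$ next, so determinism prevents any state from occurring at both parities'' does not go through: a recall transition consumes whatever the memory currently contains, and determinism of the automaton in no way prevents the same reference state from being reached at an even position with memory content $\ta$ and later at an odd position with memory content $\tb$. This is exactly the mechanism by which deterministic regex \emph{do} express languages of this shape --- see Example~\ref{ex:drxvsdrmfa}, where $\ta\bind{y}{\tb}\bind{x}{\ta}\bigl(\bind{z}{\rr{y}}\bind{y}{\rr{x}}\bind{x}{\rr{z}}\bigr)^*$ defines $\{(\ta\tb)^{\frac{3}{2}i}\mid i\geq 0\}\notin\langcl(\DREG)$ precisely because its reference nodes are visited at alternating parities with alternating memory contents. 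So the parity claim is the crux rather than a consequence of determinism, and the two ingredients you cite do not close it: ``no long jumps'' only bounds each recalled content by one letter; it does not prevent that letter from changing between visits.

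What is missing is the argument the paper supplies at this point: since $L$ contains \emph{every} prefix of $(\ta\tb)^{\omega}$, the cycle of $\glush(\alpha)$ (which must arise from a Kleene plus, with a unique node carrying an edge to $\snk$) consumes exactly one terminal per full iteration; hence all cycle nodes are reference nodes, and in an iteration that consumes only $\ta$ no recalled memory contains $\tb$ and no memory can be (re)bound to a word containing $\tb$ --- so the next iteration, which recalls the same memories, cannot consume $\tb$. Your prefix-closedness observation is sound and close in spirit to the ``one letter per iteration'' step (modulo the fact that recalls of empty memories consume nothing, so it is not true that every transition consumes at least one symbol). Without an argument of this kind, the reduction to $\DREG$ and the appeal to the orbit property are unsupported; note also that replacing a whole subexpression $\bind{x}{\beta}$ by a single letter is not merely a relabelling of $\ograph{\alpha}$ and would itself require justification that the language and determinism are preserved.
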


%\subsection{Proof of Lemma~\ref{lem:drxvsdrmfa}}
\begin{proof}
	Before we assume the existence of an $\alpha\in\DRX$ with $\lang(\alpha)=L$ (and use this to obtain a contradiction), we first examine the structure of any $M\in\DTMFArej$ with $\lang(M)=L$. We observe that, with the exemption of states that are unreachable or cannot reach an accepting state, $M$ must consist of a chain (which might be empty) that is followed by a cycle that contains at least one final state (like a $\DFA$ for a unary language, see the proof of Theorem~\ref{thm:drxunary}, in particular Figure~\ref{lollipopFigure}).
	
	This is for the following reason: First, like for every $\DTMFA$, each state of $M$ that has an outgoing memory recall transition cannot have any other outgoing transitions. The same holds for $\emptyword$-transitions. Furthermore, due to the structure of $L$, in $M$ no state can have an outgoing transition that consumes $\ta$ and an outgoing transition that consumes $\tb$ at the same time. For $\DTMFA$, this is not problematic. In fact, as $L$ is regular, we can interpret any $\DFA$ for $L$ as a $\DTMFA$ for $L$. For example, consider the following minimal incomplete $\DFA$ for $L$, and its corresponding notation as an occurrence graph (without markings):
	\begin{center}
		\begin{tikzpicture}[node distance=16mm,on grid,>=stealth',auto, 
		state/.style={circle,draw=black,inner sep=1pt,minimum size=8mm}]
		
		\node[state,initial,initial by arrow,initial text={},accepting]         (qb) {};
		\node[state,accepting, right=of qb]         (qa) {};
		
		\path[->] 
		(qb) edge[bend left] node[] {$\ta$} (qa)
		(qa) edge[bend left] node[] {$\tb$} (qb)
		;
		\end{tikzpicture}
		\hspace{15mm}
		\begin{tikzpicture}[node distance=10mm,on grid,>=stealth',auto, 
		state/.style={rectangle,draw=black,inner sep=0pt,minimum size=4mm}]
		\node         (start)                 				{};
		\node[state]  (a)     [right=of start] 				{$\ta$};
		\node[state]  (b)     [below=of a]     	{$\tb$};
		\node 				(end)   [right=of a]	   	{};
		
		\draw[fill=black] (start) circle (0.5mm);
		\draw[fill=black] (end) circle (0.5mm);
		\draw 						(end) circle (1mm);
		
		\path[->]
		(start.center) edge node[] {} (a)
		(start.center) edge[bend left=60] node {} (end)
		(a) edge node{} (end)
		(b) edge node{} (end)
		(a) edge[bend left] node {} (b)
		(b) edge[bend left] node {} (a)
		%		 	 	(a1) edge[loop above] node {} (a1)
		%		 	 	(b) edge[loop below] node {} (b)
		%		 	 	(a1) edge[bend left=15] node {} (b)
		%		 	 	(b) edge[bend left=15] node {} (a1)
		%		 	 	(a1) edge node[ above, right, pos=0.1] {$\close(x)$} (d)
		%		 	 	(b) edge node[below, right,  pos=0.1] {$\close(x)$} (d)
		%		 	 	(d) edge node {} (x)
		%		 	 	(x) edge node{} (end.west);
		;
		\end{tikzpicture}
	\end{center}
	
	If we consider the occurrence graph notation, we see that this $\DFA$ cannot be obtained from a deterministic regular expression (at least not using the Glushkov construction, which -- in the absence of variables -- is identical to the construction from the proof of Theorem~\ref{thm:glushkov}). Note that the states for $\ta$ and $\tb$ belong to the same strongly connected component. Hence, if there is an $\alpha\in\DREG$ such that this automaton is $\glush(\alpha)$, then $\alpha$ must contain a subexpression $\beta^+$ with the occurrences $\ta_{(i)}$ and $\tb_{(j)}$ that correspond to these states. Then $\ograph{\beta}$ must contain  edges from $\ta_{(i)}$ and $\tb_{(j)}$ to $\snk$, and from $\src$ to $\ta_{(i)}$. Using the rule for Kleene plus from the proof of Theorem~\ref{thm:glushkov}, we see that $\ograph{\alpha}$ must contain an edge from $\ta_{(i)}$ to itself. This is a contradiction.  Of course, this argument only shows that this $\DFA$ cannot be obtained from a deterministic regular expression; but it can be generalized to show that there is no $\alpha\in\DREG$ with $\lang(\alpha)=L$ (see e.\,g.~\cite{bru:one}, and in particular~\cite{car:gen}, which explains how to apply the technique from~\cite{bru:one} on this language).
	
	We shall now use a similar line of reasoning to obtain a contradiction from the assumption that there is an  $\alpha\in\DRX$ with $\lang(\alpha)=L$. As explained above, the $\DTMFA$ $\glush(\alpha)$ must consist of a chain and a cycle (by definition, each state of $\glush(\alpha)$ is reachable; and from each state, we can reach an accepting state). This means that  $G_{\markpos{\alpha}}$ contains a cycle $v_1, \ldots, v_n$ for some $n\geq 1$ and $v_i\in (\markpos{\Sigma}\cup\markpos{\Xi})$ such that there is an edge from $v_i$ to $v_{i+1}$ for $1\leq i < n$ and from $v_{n}$ to $v_1$. Hence, each $v_i$ has at most two outgoing edges: One to the next node in the cycle, and (if it is an accepting state) one to the sink node $\snk$. Furthermore, exactly one $v_i$ has an incoming edge from outside the cycle, let this be $v_1$. 
	
	From the construction of $G_{\markpos{\alpha}}$, this cycle must have been generated from a Kleene plus in $\alpha$. But this allows us to conclude that only exactly one $v_i$ can have an edge to $\snk$; and furthermore, that this must be $v_n$. This can be concluded from the following reasoning: If there existed nodes $v_i,v_j$ with $i\neq j$, and both have an edge to $\snk$, then the construction for Kleene plus would require edges from both $v_i$ and $v_j$ to $v_1$, which would break the cycle structure. Likewise, if $i\neq n$, then there must be an edge from $v_i$ to $v_1$, and from $v_i$ to $v_{i+1}$, which is a contradiction to our previous observations.
	
	Hence,  each iteration of the cycle must consume exactly one terminal letter (otherwise, we would skip over words of $L$), alternating between $\ta$ and $\tb$. Thus,  $v_i\in\markpos{\Xi}$ must hold for all $1\leq i \leq n$ (as $v_i = a_{(j)}$ with $a_{(j) }\in\markpos{\Sigma}$ would consume $a$ in every iteration, which would contradict the fact that the iterations alternate between consuming $\ta$ and $\tb$). 
	%\todomcom{Added explanation you asked for. Enough?}
	
	Now assume that we enter an iteration that consumes $\ta$ (the same reasoning shall hold for $\tb$). Then no variable that is recalled can contain $\tb$, and no variable can be bound to $\tb$, as otherwise, the iteration would consume more than $\ta$. But in the next iteration, the same variables are recalled, and as neither of them contains $\tb$, the iteration cannot consume $\tb$. Therefore, we arrive at a contradiction, and conclude that there is no $\alpha\in\DRX$ with $\lang(\alpha)=L$.
\end{proof}

The language $L$ from Lemma~\ref{lem:drxvsdrmfa} is also known to be a non-deterministic regular language (see e.\,g.\ \cite{bru:one}). Our proof can be seen as taking the idea behind the characterization of deterministic regular languages from~\cite{bru:one}, applying it to the specific language $L$, and also taking variables into account. While this accomplishes the task of proving  that deterministic regex share some of the limitations of deterministic regular expressions, the approach does not generalize (at least not in a straightforward manner). In particular, deterministic regex can express regular languages that are not deterministic regular, and are also quite similar to $L$:
\begin{example}\label{ex:drxvsdrmfa}
	Let $L\df \{(\ta\tb)^{\frac{3}{2}i}\mid i\geq 0\}$. Then $L$ is generated by the non-deterministic regular expression $(\mathtt{ababab})^* (\emptyword\ror (\mathtt{aba}))$, and one can show that $L$ is not a deterministic regular language by using the BKW-algorithm~\cite{bru:one} (also~\cite{cze:dec,lu:dec}) on the minimal $\DFA$~$M$ for~$L$. But for 
\begin{equation*}
\alpha \df \ta\bind{y}{\tb}\bind{x}{\ta}  \bigl(\bind{z}{\rr{y}}\bind{y}{\rr{x}}\bind{x}{\rr{z}}\bigr)^*\,, 
\end{equation*}
we have $\alpha\in\DRX$ and $\lang(\alpha)=L$ (to see this, note that in the iterations of the Kleene-star, the variable contents alternate between $z = \tb$, $y = \ta$, $x = \tb$ and $z = \ta$, $y = \tb$, $x = \ta$).
\end{example}
The ``shifting gadget'' that is used in Example~\ref{ex:drxvsdrmfa} can be extended to show a far more general result for unary languages. Considering that $\langcl(\DREG)\subset\langcl(\REG)$ holds even over unary alphabets  (cf.\ Losemann et al.~\cite{los:clo}),
 the following result might seem surprising:
\begin{theorem}\label{thm:drxunary}
	For every regular language $L$ over a unary alphabet, $L\in\langcl(\DRX)$.
\end{theorem}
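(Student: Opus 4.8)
The plan is to mimic, at the level of the Glushkov automaton, the shape of the minimal DFA of a unary regular language. Every $L\subseteq\{\ta\}^*$ is eventually periodic, so there are a threshold $t\geq 0$, a period $p\geq 1$, and a residue set $S\subseteq\{0,\ldots,p-1\}$ such that $L$ consists of a finite "short'' part (lengths below $t$) together with the lengths $n\geq t$ with $(n-t)\bmod p\in S$. Equivalently, the minimal DFA of $L$ is a lollipop: a finite chain feeding into a single cycle, exactly as in Figure~\ref{lollipopFigure}. I would therefore construct an $\alpha\in\DRX$ whose $\glush(\alpha)$ (Theorem~\ref{thm:glushkov}) realizes this lollipop, treating the chain and the cycle separately and then splicing them.

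The chain and the finite part are the easy ingredients. A chain with an arbitrary subset of accepting states is expressible by a deterministic regex through nested concatenation of the form $\ta^{g_1}\bigl(\emptyword\ror\ta^{g_2}(\emptyword\ror\cdots)\bigr)$, where the $g_i$ are the gaps between consecutive accepting positions. Since no Kleene plus is involved here, each state has a unique successor and may be made accepting freely; one checks directly that none of the conditions of Definition~\ref{def:drx} is triggered, so this fragment is deterministic. All the difficulty is thus concentrated in the cyclic part, and in particular in the case $|S|\geq 2$ of several accepting residues.

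For the cycle I would extend the shifting gadget of Example~\ref{ex:drxvsdrmfa}. The point to overcome is that a deterministic regular expression can place only one accepting state on a Kleene-plus cycle: as in the proof of Lemma~\ref{lem:drxvsdrmfa}, making an interior cycle node accepting forces the Kleene-plus rule of Theorem~\ref{thm:glushkov} to insert a loop-back edge that collides with that node's continue-edge, violating conditions~\ref{def:drx:c1}--\ref{def:drx:c2} of Definition~\ref{def:drx}. Back-references let me avoid marking several residues accepting at all. Instead I would keep a \emph{single} structural accept point (the loop point of one Kleene plus), and let each iteration consume a variable block whose content is rotated by a shifting gadget so that its length cycles through the successive gaps $d_1,\ldots,d_m$ between the residues of $S$. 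One acceptance per iteration then lands on $s_1,\ s_1+d_1,\ s_1+d_1+d_2,\ldots$ in turn, reproducing $S$ after the appropriate tail. The variables carry the "phase'' that a deterministic regular expression would have to encode in extra accepting states, which is precisely what was impossible without them.

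The step I expect to be the main obstacle is engineering this variable bookkeeping so that three things hold simultaneously: each iteration consumes \emph{exactly} the current gap $d_j$ (a naive rotation recalls every stored block and hence reads a whole period $\ta^{p}$ per iteration, collapsing $S$ to a single residue), the blocks rotate so that $d_1,\ldots,d_m$ recur cyclically, and at every state the next occurrence is uniquely determined so that $\alpha$ stays deterministic under all four conditions of Definition~\ref{def:drx}. Getting the per-iteration consumption to equal the current gap, rather than a sum of consecutive gaps, is the delicate bit, since every variable reference consumes input and there is no way to transfer content between memories for free. Once the cyclic gadget is in place, I would prepend the deterministic chain, choose the entry residue so that the chain's final residue matches the cycle's start, and finish by verifying that no ambiguity is introduced at the junction.
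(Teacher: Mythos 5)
Your plan follows the paper's proof of Theorem~\ref{thm:drxunary} step for step: the lollipop decomposition of the unary $\DFA$ (Figure~\ref{lollipopFigure}), the chain handled by nested concatenation $\ta^{g_1}(\emptyword\ror\ta^{g_2}(\emptyword\ror\cdots))$, and a single Kleene plus whose iterations use a rotating-variable gadget in the style of Example~\ref{ex:drxvsdrmfa} to emit the successive gaps $d_1,\ldots,d_m$ between accepting residues. However, you stop exactly at the point you yourself flag as ``the delicate bit'': how one iteration consumes exactly the current gap $d_j$ rather than a whole period, given that every variable reference consumes input. That gadget is the entire substance of the proof, so as written the argument has a genuine gap rather than being a complete proof.

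The resolution in the paper is worth stating, because it answers your worry about ``naive rotation reads $\ta^p$ per iteration'' directly: the invariant is that at every moment \emph{exactly one} of the cycle variables $x_1,\ldots,x_\ell$ contains $\ta$ and all the others contain $\emptyword$. The body of the plus is $\alpha_{\mathsf{shift}}\cdot\alpha_{\mathsf{cont}}$ with
$\alpha_{\mathsf{shift}}=\bind{x_0}{\rr{x_{\ell}}}\bind{x_{\ell}}{\rr{x_{\ell-1}}}\cdots\bind{x_2}{\rr{x_1}}\bind{x_1}{\rr{x_0}}$
and
$\alpha_{\mathsf{cont}}=\rr{x_1}^{d_1-2}\,\rr{x_2}^{d_2-1}\cdots\rr{x_{\ell}}^{d_{\ell}-1}$.
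Because references to empty variables consume nothing, the ``full rotation'' in $\alpha_{\mathsf{cont}}$ reads only $\ta^{d_j-1}$ (resp.\ $\ta^{d_1-2}$) where $j$ is the current phase; there is no collapsing to a single residue. The one $\ta$ that the shift itself must consume while relocating the token is exactly the missing unit in the count, and the auxiliary variable $x_0$ absorbs the wrap-around case (this is why the exponent is $d_1-2$ and why one assumes $d_1\geq 2$, achievable by unrolling the cycle into the chain; if all gaps equal $1$ the cycle is just $\ta^*$). Determinism then holds because the body of the plus contains no disjunction and no inner plus, so none of the conditions of Definition~\ref{def:drx} can fire. Without this invariant and the off-by-one bookkeeping, your construction cannot be completed as described, so you should regard the gadget itself --- not the surrounding architecture --- as the part still to be supplied.
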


%\subsection{Proof of Theorem~\ref{thm:drxunary}}
\begin{proof}
	Assume that $L \in \langcl(\REG)$ with $L\subseteq\{\ta\}^*$. Our goal is to construct an $\alpha\in\DRX$ with $\lang(\alpha)=L$. For technical reasons, we assume that $\emptyword\notin L$ (this is no problem, as for any $\alpha\in\DRX$ with $\emptyword\notin\lang(\alpha)$, $(\alpha\ror\emptyword)\in\DRX$). If $L$ is finite, $L\in\langcl(\DREG)$, and hence $L\in\langcl(\DRX)$. (As we shall see, our construction can also be used for finite languages, by replacing $\alpha_{\mathsf{cycle}}$ in $\alpha^{\mathsf{chain}}_k$ below with $\emptyword$. But to streamline the argument, we only consider infinite $L$.)
	
	Let $M$ be a $\DFA$ with $\lang(M)=L$. Assume that  all states of $M$ are reachable, and that from each state, an accepting state can be reached. Then $M$ has the  form as shown in Figure~\ref{lollipopFigure}.
		%\todomcom{Moved the explanation into the figure.}
	\begin{figure}
	\begin{center}
		\begin{tikzpicture}[node distance=16mm,on grid,>=stealth',auto, 
		state/.style={circle,draw=black,inner sep=1pt,minimum size=8mm}]
		
		\node[state,initial,initial by arrow,initial text={}]         (p0) {$p_0$};
		\node[state]  (p1)     [right=of start]   {$p_1$};
		\node[]  (dots1)     [right=of p1]    {$\cdots$};
		\node[state]	(pm)	[right=of dots1]		 	{$p_m$};
		\node[state]	(q1)	[above right=of pm]		 	{$q_1$};
		\node[state]	(q2)	[right=of q1]		 	{$q_2$};
		\node[]	(dots2)	[below right=of q2]		 	{$\vdots$};
		\node[state]	(qnm1)	[below right=of pm]		 	{$q_{n-1}$};
		\node[state]	(qnm2)	[right=of qnm1]		 	{$q_{n-2}$};
		%	 	\node[state,accepting]	(x)	[right = of d] {};	 	
		%	 	
		\path[->] 
		(p0) edge node[] {$\ta$} (p1)
		(p1) edge node[] {$\ta$} (dots1)
		(dots1) edge node[] {$\ta$} (pm)
		(pm) edge[bend left] node[] {$\ta$} (q1)
		(q1) edge node[] {$\ta$} (q2)
		(q2) edge[bend left] node[] {$\ta$} (dots2.north)
		(dots2.south) edge[bend left] node[] {$\ta$} (qnm2)
		(qnm2) edge[] node[] {$\ta$} (qnm1)
		(qnm1) edge[bend left] node[] {$\ta$} (pm)
		;
		\end{tikzpicture}
	\end{center}
	\caption{Illustration of the unary DFA in the proof of Theorem~\ref{thm:drxunary}. Note that here, we do not distinguish between accepting and non-accepting states}
	\label{lollipopFigure}
	\end{figure}
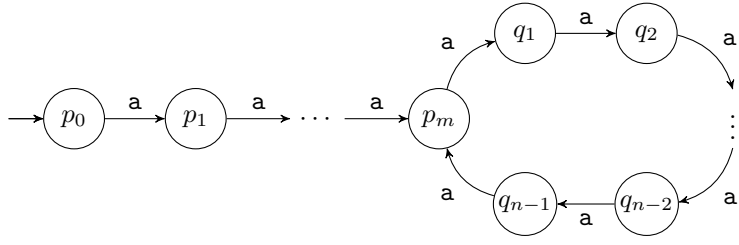
	We refer to the states $p_0$ to $p_m$ as the \emph{chain}, and to the states $q_1,\ldots,q_{n-1}$ and $p_m$ as the \emph{cycle}.
	Without loss of generality, we can assume that $p_m$ is accepting (the cycle contains at least an accepting state; and as the automaton does not have to be minimal, we can extend the chain of $p_i$ by unrolling the cycle until it starts with an accepting state). %For convenience, we also define $q_n\df p_m$. 
	
	Now there exists a number $k\geq 1$ and $c_1,\ldots,c_k\geq 1$ such that the words $\ta^{c_1}$, $\ta^{c_1+c_2}, \ldots, \ta^{c_1+\cdots+c_k}$ are exactly the words that are accepted in the chain (recall that we assume that $p_m$ is accepting, and that $p_0$ is not accepting, as $\emptyword\notin L$). Furthermore, there exists an $\ell\geq 1$ and $b_1,\ldots,b_{\ell}\geq 1$ such that the words $\ta^{b_1}, \ta^{b_1+b_2}, \ldots, \ta^{b_1+\cdots+b_{\ell}}$ are exactly the words that advance the cycle from $p_m$ to each of the accepting states. These conditions also imply $m=\sum_{i=1}^{k} c_i$ and $n=\sum_{i=1}^{\ell}b_i$.
	
	As an additional restriction, we assume that $b_1 \geq 2$. This is possible for the following reasons: If $b_i=1$ for all $i$, we can  replace the cycle with the deterministic regular expression $\ta^*$ and are done. Furthermore, if $b_1=1$, but there is a $b_i \geq 2$, we can unroll the cycle into the chain until $b_1\geq 2$ (for the ``new'' $b_1$).
	
	We define $\alpha\df \alpha_{1}^{\mathsf{chain}}$, where, for $1\leq i < k$, 
	\begin{align*}
		\alpha^{\mathsf{chain}}_i &\df \ta^{c_i}(\emptyword \ror \alpha^{\mathsf{chain}}_{i+1}),\\
		\alpha^{\mathsf{chain}}_k &\df \bind{x_{\ell}}{\ta}\ta^{c_{k}-1}(\emptyword \ror \alpha_{\mathsf{cycle}}). 
	\end{align*}
	Before we define $\alpha_{\mathsf{cycle}}$, note that if we disregard the words that can be generated by the subexpression $ \alpha_{\mathsf{cycle}}$, $\lang(\alpha)$ contains exactly the words that are accepted by the chain. Furthermore, if we assume that $\alpha_{\mathsf{cycle}}$ is a deterministic regex and that its language does not contain $\emptyword$, we can conclude $\alpha\in\DRX$. Finally, note that if we first enter $\alpha_{\mathsf{cycle}}$, the variable $x_{\ell}$ contains $\ta$.
	
	The central part of the construction is defining $\alpha_{\mathsf{cycle}}$ in such a way that it is deterministic and it simulates the cycle of the DFA.
	% by generating exactly the words  $\ta^{b_1}, \ta^{b_1+b_2}, \ldots, \ta^{b_1+\cdots+b_{\ell}}$. 
	We define 
	\begin{align*}
		\alpha_{\mathsf{cycle}}&\df \bigl(\alpha_{\mathsf{shift}} \cdot \alpha_{\mathsf{cont}} \bigr)^+,\\
		\alpha_{\mathsf{shift}}&\df \bind{x_0}{\rr{x_{\ell}}} \bind{x_{\ell}}{\rr{x_{\ell-1}}}\cdots \bind{x_2}{\rr{x_1}}\bind{x_1}{\rr{x_0}},\\
		\alpha_{\mathsf{cont}}&\df \rr{x_1}^{b_1-2}\cdot\rr{x_2}^{b_2-1}\cdot \cdots \cdot\rr{x_{\ell}}^{b_{\ell}-1}.
	\end{align*}
The idea behind this definition is as follows. The expression $\alpha_{\mathsf{cycle}}$ should be able to generate the words that are generated in the cycle of the DFA, i.\,e., all the words $\ta^{i}$, where $i = r \sum^{\ell'}_{j = 1} b_j$ for some $r \geq 1$ and $\ell'$ with $1 \leq \ell' \leq \ell$ (more precisely, $r$ is the iteration of the cycle and in the current iteration, the word terminates in the $(\ell')^{\text{th}}$ accepting state). This is done by producing, in every iteration of the expression $(\alpha_{\mathsf{shift}} \cdot \alpha_{\mathsf{cont}})^+$, another factor $\ta^{b_j}$ as follows: 
%
%$\ta^{b_1}, \ta^{b_1+b_2}, \ldots, \ta^{b_1+\cdots+b_{\ell}}$
%
%$\ta^{b_1}, \ta^{b_1+b_2}, \ldots, \ta^{b_1+\cdots+b_{\ell}}, \ta^{2b_1+\cdots+b_{\ell}}, \ta^{2b_1+2b_2+\cdots+b_{\ell}}, \ldots$
%The idea behind this definition is as follows: Before the first iteration of the Kleene plus, $x_{\ell}$ contains $\ta$, and all other variables default to $\emptyword$. Now, note that passing through $\alpha_{\mathsf{shift}}$ shifts the $\ta$ from $x_{\ell}$ to $x_1$, or from $x_i$ to $x_{i+1}$ for $1\leq i < \ell$. As this cyclic shift needs an extra variable, if $x_1$ is set to $\ta$, $x_0$ is also set to $\ta$ (which is overwritten in the next iteration of the plus).
%The idea behind this definition is as follows: 
Before the first iteration,
%of the Kleene plus, 
$x_{\ell}$ contains $\ta$, and all other variables default to $\emptyword$. Now, note that passing through $\alpha_{\mathsf{shift}}$, the $\ta$ from $x_{\ell}$ is shifted to $x_1$, or from $x_i$ to $x_{i+1}$ for $1\leq i < \ell$. This means that in every application of $\alpha_{\mathsf{cont}}$, exactly one of the variables $x_1, x_2, \ldots, x_{\ell}$ contains $\ta$, which is responsible for producing the next $\ta^{b_j}$ factor, while all the other variables are empty. However, this shifting mechanism needs an additional variable, namely $x_0$, which, in each iteration that charges $x_1$ with $\ta$, produces an additional occurrence of $\ta$ (this explains the assumption $b_1\geq 2$ above and that $\rr{x_1}$ is repeated for only $b_1-2$ times in $\alpha_{\mathsf{cont}}$ instead of $b_i-1$, as the other variable references). In summary, in the $i$-th iteration of the plus, the variable $x_j$ with $j\df((i-1)\bmod \ell) + 1$ is set to $\ta$, and if $j=1$, then $x_0$ is also set to $\ta$. All other variables are set to $\emptyword$. This means that $\alpha_{\mathsf{shift}}$ produces $\ta^2$ in the $i$-th iteration if $(i\bmod \ell) = 1$, and $\ta$ in all other iterations. 
%This is the reason we ensured that $b_1\geq 2$ above, as we now use $\alpha_{\mathsf{cont}}$ to produce the remaining letters. 
If $j=1$, then the $\rr{x_1}^{b_1-2}$ in $\alpha_{\mathsf{cont}}$ produces $\ta^{b_1-2}$, which means that in this iteration, $\alpha_{\mathsf{shift}} \cdot \alpha_{\mathsf{cont}}$ produces $\ta\ta\cdot \ta^{b_1-2}=\ta^{b_1}$ (recall that all other variables are set to $\emptyword$). If $j>1$, then  $\alpha_{\mathsf{cont}}$ used $\rr{x_j}^{b_j-1}$ to produce $\ta^{b_j-1}$, which means that $\alpha_{\mathsf{shift}} \cdot \alpha_{\mathsf{cont}}$ produces $\ta\cdot \ta^{b_j-1}=\ta^{b_j}$.
%
%This explains 
%
%As this cyclic shift needs an extra variable, if $x_1$ is set to $\ta$, $x_0$ is also set to $\ta$ (which is overwritten in the next iteration of the plus).
%	
%	Hence, in the $i$-th iteration of the plus, the variable $x_j$ with $j\df((i-1)\bmod \ell) + 1$ is set to $\ta$, and if $j=1$, then $x_0$ is also set to $\ta$. All other variables are set to $\emptyword$. This means that $\alpha_{\mathsf{shift}}$ produces $\ta^2$ in the $i$-th iteration if $(i\bmod \ell) = 1$, and $\ta$ in all other iterations. This is the reason we ensured that $b_1\geq 2$ above, as we now use $\alpha_{\mathsf{cont}}$ to produce the remaining letters. If $j=1$, then the $\rr{x_1}^{b_1-2}$ in $\alpha_{\mathsf{cont}}$ produces $\ta^{b_1-2}$, which means that in this iteration, $\alpha_{\mathsf{shift}} \cdot \alpha_{\mathsf{cont}}$ produces $\ta\ta\cdot \ta^{b_1-2}=\ta^{b_1}$ (recall that all other variables are set to $\emptyword$). If $j>1$, then  $\alpha_{\mathsf{cont}}$ used $\rr{x_j}^{b_j-1}$ to produce $\ta^{b_j-1}$, which means that $\alpha_{\mathsf{shift}} \cdot \alpha_{\mathsf{cont}}$ produces $\ta\cdot \ta^{b_j-1}=\ta^{b_j}$. 
	
	In conclusion, the $i$-th iteration of the Kleene plus in $\alpha_{\mathsf{cycle}}$ adds the word $\ta^{b_j}$ for $j\df((i-1)\bmod \ell) + 1$; which means that $\alpha_{\mathsf{cycle}}$ simulates the cycle. Hence, $\lang(\alpha)=L$. As  $\alpha_{\mathsf{cycle}}$  contains no disjunctions or Kleene plus (except for the surrounding plus),  it  is deterministic. As remarked above, this allows us to conclude $\alpha\in\DRX$.
\end{proof}

As, in the proof of Theorem~\ref{thm:drxunary}, a $\DFA$ with $n$ states is converted into a deterministic regex of length $O(n)$, this construction is even efficient.
We  summarize our observations (also see Figure~\ref{fig:hierarchyay}):
\begin{theorem}\label{thm:hierarchyay}
	$$\langcl(\DREG)\subset \langcl(\DRX) \subset \langcl(\DTMFArej)\subset \langcl(\DTMFA) \subset \langcl(\TMFA)=\langcl(\RX)$$
	
	The following pairs of classes are incomparable: $\langcl(\DRX)$ and $\langcl(\REG)$, $\langcl(\DRX)$ and $\langcl(\DTMFAacc)$, as well as $\langcl(\DTMFArej)$ and $\langcl(\DTMFAacc)$.   
\end{theorem}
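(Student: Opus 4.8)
The plan is to dispatch the chain of inclusions first, then establish strictness with witness languages that are already available, and finally derive the three incomparabilities, all of which turn out to reduce to a single unary example together with the complement machinery. Most of the non-strict inclusions are immediate: $\langcl(\DREG)\subseteq\langcl(\DRX)$, $\langcl(\DTMFArej)\subseteq\langcl(\DTMFA)$, and $\langcl(\DTMFA)\subseteq\langcl(\TMFA)$ hold directly from the definitions of these classes (recall $\DREG=\DRX\cap\REG$ and $\DTMFA=\DTMFArej\cup\DTMFAacc$); the inclusion $\langcl(\DRX)\subseteq\langcl(\DTMFArej)$ is exactly the Glushkov construction of Theorem~\ref{thm:glushkov}, which converts every deterministic regex into an equivalent $\DTMFArej$; and $\langcl(\TMFA)=\langcl(\RX)$ is Theorem~\ref{thm:TMFAisMFA}.

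For strictness I would cite one separating language per step. For $\langcl(\DREG)\subset\langcl(\DRX)$, Example~\ref{ex:drxvsdrmfa} supplies a language in $\langcl(\DRX)\mdif\langcl(\DREG)$ (alternatively Theorem~\ref{thm:drxunary} together with the known $\langcl(\DREG)\subsetneq\langcl(\REG)$ over unary alphabets). For $\langcl(\DRX)\subset\langcl(\DTMFArej)$, Lemma~\ref{lem:drxvsdrmfa} gives a regular $L\in\langcl(\REG)\mdif\langcl(\DRX)$; since any $\DFA$ is a $\DTMFArej$, we have $L\in\langcl(\DTMFArej)$, so $L$ witnesses strictness. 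For $\langcl(\DTMFA)\subset\langcl(\TMFA)$, Lemma~\ref{lem:unarySep} provides a language in $\langcl(\TMFA)\mdif\langcl(\DTMFA)$.

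The remaining strict inclusion $\langcl(\DTMFArej)\subset\langcl(\DTMFA)$ and the incomparability of $\langcl(\DTMFArej)$ and $\langcl(\DTMFAacc)$ are where the real work lies, and they share one construction. The key object is $\lang(\alphasq)=\{\ta^{n^2}\mid n\geq 0\}$, which is a deterministic regex language by Example~\ref{ex:drx}, hence lies in $\langcl(\DTMFArej)$, yet is non-regular. By Proposition~\ref{prop:crash}, over a unary alphabet $\langcl(\DTMFArej)\cap\langcl(\DTMFAacc)=\langcl(\REG)$, so non-regularity forces $\lang(\alphasq)\notin\langcl(\DTMFAacc)$. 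Invoking closure under complement (Theorem~\ref{closureComplementTheorem}), which sends $\DTMFArej$-languages to $\DTMFAacc$-languages and back, we get $\overline{\lang(\alphasq)}\in\langcl(\DTMFAacc)$, while $\overline{\lang(\alphasq)}\notin\langcl(\DTMFArej)$ (otherwise $\lang(\alphasq)$ would be a $\DTMFAacc$-language). Thus $\overline{\lang(\alphasq)}\in\langcl(\DTMFA)\mdif\langcl(\DTMFArej)$, proving the strict inclusion, and the pair $\lang(\alphasq)$, $\overline{\lang(\alphasq)}$ simultaneously witnesses the incomparability of $\langcl(\DTMFArej)$ and $\langcl(\DTMFAacc)$.

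The two incomparabilities involving $\langcl(\DRX)$ then follow from the same examples using $\langcl(\DRX)\subseteq\langcl(\DTMFArej)$. For $\langcl(\DRX)$ versus $\langcl(\REG)$: $\lang(\alphasq)\in\langcl(\DRX)$ is non-regular, and Lemma~\ref{lem:drxvsdrmfa} gives a regular language outside $\langcl(\DRX)$. For $\langcl(\DRX)$ versus $\langcl(\DTMFAacc)$: the language $\lang(\alphasq)$ lies in $\langcl(\DRX)\mdif\langcl(\DTMFAacc)$ as shown, while $\overline{\lang(\alphasq)}\in\langcl(\DTMFAacc)$ cannot lie in $\langcl(\DRX)\subseteq\langcl(\DTMFArej)$. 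I expect the main obstacle to be purely a matter of bookkeeping the direction of the complement closure correctly and ensuring Proposition~\ref{prop:crash} is applied to a genuinely non-regular unary language in the right class; once that single unary separation is set up, every other part of the statement reduces to quoting the appropriate separating example.
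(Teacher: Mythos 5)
Your proposal is correct and follows essentially the same route as the paper's proof: the inclusions come from the definitions, Theorem~\ref{thm:glushkov}, and Theorem~\ref{thm:TMFAisMFA}, and the separations and incomparabilities all rest on Lemma~\ref{lem:drxvsdrmfa}, Lemma~\ref{lem:unarySep}, and the combination of $\{\ta^{n^2}\}$ with Proposition~\ref{prop:crash} and the complement construction. The only differences are cosmetic choices of witnesses (e.g., Example~\ref{ex:drxvsdrmfa} instead of a non-regular $\DRX$-language for $\langcl(\DREG)\subset\langcl(\DRX)$, and $\overline{\lang(\alphasq)}$ instead of $\{(\ta\tb)^{\frac{1}{2}i}\}$ for $\langcl(\DTMFAacc)\mdif\langcl(\DRX)$), both of which are valid.
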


\begin{figure}
	%Now we are real language theorists:
	\begin{center}
		\begin{tikzpicture}[on grid]
		\node (MFA) {$\langcl(\TMFA)=\langcl(\RX)$};
		\node[below=of MFA] (DTMFA) {$\langcl(\DTMFA)$};
		\node[below left= of DTMFA,xshift=-5mm] (DTMFArej) {$\langcl(\DTMFArej)$};
		\node[below right = of DTMFA,xshift=5mm] (DTMFAacc) {$\langcl(\DTMFAacc)$};
		\node[below left= of DTMFArej,xshift=-5mm] (DRX) {$\langcl(\DRX)$};
		\node[below right= of DTMFArej,xshift=5mm] (REG) {$\langcl(\REG)$};
		\node[below right= of DRX,xshift=3mm] (DREG) {$\langcl(\DREG)$};
		\path[->,>=stealth']
		(DREG) edge [] node {} (DRX)
		(DREG) edge [] node {} (REG)
		(DRX) edge [] node {} (DTMFArej)
		(REG) edge [] node {} (DTMFArej)
		(REG) edge [] node {} (DTMFAacc)	
		(DTMFArej) edge [] node {} (DTMFA)
		(DTMFAacc) edge [] node {} (DTMFA)	
		(DTMFA) edge [] node {} (MFA)	
		;
		\end{tikzpicture}
	\end{center}
	%			\begin{tikzpicture}[on grid]
	%			\node (DREG) {$\langcl(\DREG)$};
	%			\node[above right= of DREG,xshift=10mm,yshift=-5mm] (DRX) {$\langcl(\DRX)$};
	%			\node[below right=of DREG,xshift=10mm,yshift=5mm] (REG) {$\langcl(\REG)$};
	%			\node[below right=of DRX,anchor=west,yshift=5mm] (DTMFA) {$\langcl(\DTMFA)\subset\langcl(\TMFA)=\langcl(\RX)$};
	%			
	%			\path[]
	%			(DREG.north east) edge[draw=none] node [sloped, allow upside down] {$\subset$} (DRX.west)
	%			(DREG.south east) edge[draw=none] node [sloped, allow upside down] {$\subset$} (REG.west)
	%			(DRX) edge[draw=none] node [sloped, allow upside down,midway] {$\subset$} (DTMFA)
	%			(REG) edge[draw=none] node [sloped, allow upside down] {$\subset$} (DTMFA)
	%			;		
	%	\end{tikzpicture}
	\caption{The proper inclusions from Theorem~\ref{thm:hierarchyay}. Arrows point from sub- to superset.}\label{fig:hierarchyay}
\end{figure}
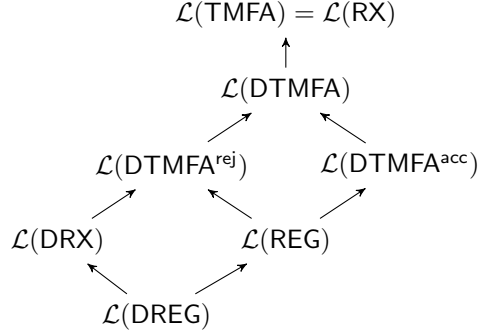
\begin{proof}
	This follows from our previous observations as follows:
	\begin{enumerate}
		\item $\langcl(\DREG)\subset \langcl(\DRX)$: The inclusion follows from the fact that our definition of determinism for regex is an extension of the notion of determinism for proper regular expressions. To see that the inclusion is proper, we recall any of the  non-regular $\DRX$-languages that we have seen, for example  $\{w\tc w\mid w\in\{\ta,\tb\}^*\}$ and $\{\ta^{n^2}\mid n\geq 0\}$ from Example~\ref{ex:drx}.
		\item $\langcl(\DRX)\subset \langcl(\DTMFArej)$: The inclusion follows from Theorem~\ref{thm:glushkov}, it is proper due to the language $\{(\ta\tb)^{\frac{1}{2}i}\mid i\geq 0\}$, see Lemma~\ref{lem:drxvsdrmfa}.
		\item $\langcl(\DTMFArej)\subset \langcl(\DTMFA)$: The inclusion holds by definition. It is proper as the inclusion $\langcl(\DTMFAacc)\subset \langcl(\DTMFA)$ also holds by definition, and as $\langcl(\DTMFArej)$ and $\langcl(\DTMFAacc)$ are incomparable (see below).
		\item $\langcl(\DTMFA) \subset \langcl(\TMFA)$: Again, the inclusion holds by definition. Languages that separate the two classes are for example  the language of all $ww$ (where $w$ is from a non-unary alphabet, see Example~\ref{ex:jump}), and the language $\{\ta^{4i+1}\mid i\geq 0\} \cup \{\ta^{4^i}\mid i\geq 1\}$	from Lemma~\ref{lem:unarySep}.
		\item $ \langcl(\TMFA)=\langcl(\RX)$ is the statement of Theorem~\ref{thm:TMFAisMFA}.
		\item $\langcl(\DRX)$ and $\langcl(\REG)$ are incomparable: Again, we can use $\{(\ta\tb)^{\frac{1}{2}i}\mid i\geq 0\}$ from Lemma~\ref{lem:drxvsdrmfa}, and a non-regular $\DRX$-language, like  $\{w\tc w\mid w\in\{\ta,\tb\}^*\}$.
		\item $\langcl(\DTMFArej)$ and $\langcl(\DTMFAacc)$  are incomparable: Due to Proposition~\ref{prop:crash},  over a unary alphabet, for every non-regular language $L\in\langcl(\DTMFArej)$, we have $L\in\langcl(\DTMFArej)\mdif\langcl(\DTMFAacc)$, and $\overline{L}\in\langcl(\DTMFAacc)\mdif\langcl(\DTMFArej)$. Hence, we can choose e.\,g.\ $L=\{\ta^{n^2}\mid n\geq 0\}$ (which, as shown in Example~\ref{ex:drx}, is in $\langcl(\DRX)\subset \langcl(\DTMFArej)$) and its complement to show the two classes to be incomparable. 
		\item $\langcl(\DRX)$ and $\langcl(\DTMFAacc)$ are incomparable: Since $\langcl(\REG) \subseteq \langcl(\DTMFAacc)$, the language $\{(\ta\tb)^{\frac{1}{2}i}\mid i\geq 0\}$ is in the class $\langcl(\DTMFAacc)$, but, due to Lemma~\ref{lem:drxvsdrmfa}, not in $\langcl(\DRX)$. Moreover, 
		$$L = \{\ta^{n^2}\mid n\geq 0\} \in \langcl(\DRX),$$ see Example~\ref{ex:drx}, but if $L \in \langcl(\DTMFAacc)$, then, due to Theorem~\ref{thm:glushkov}, also $L \in \langcl(\DTMFArej)\cap\langcl(\DTMFAacc)$, which, by Proposition~\ref{prop:crash}, leads to the contradiction $L \in \langcl(\REG)$.
		%This follows with  $\{(\ta\tb)^{\frac{1}{2}i}\mid i\geq 0\}$  and $\{\ta^{n^2}\mid n\geq 0\}$.
	\end{enumerate}
	This concludes the proof.
\end{proof}

%\begin{figure}
%	%Now we are real language theorists:
%	\begin{center}
%				\begin{tikzpicture}[on grid]
%				
%		\node[] (DREG) {$\langcl(\DREG)$};		
%		\node[above right= of DREG, xshift=10mm,yshift=-2mm] (DRX) {$\langcl(\DRX)$};	
%		\node[below right= of DREG, xshift=10mm,yshift=2mm] (REG) {$\langcl(\REG)$};
%		\node[right= of DRX.east,xshift=-5mm] (DTMFArej) {$\langcl(\DTMFArej)$};	
%		\node[right = of REG.east,xshift=-5mm] (DTMFAacc) {$\langcl(\DTMFAacc)$};	
%		\node[right=of DREG.east,xshift=45mm] (DTMFA) {$\langcl(\DTMFA)$};			
%		\node[right of=DTMFA, xshift=3mm,anchor=west] (MFA) {$\langcl(\TMFA)=\langcl(\RX)$};
%		
%		\path[->,>=stealth']
%		(DREG) edge [] node {} (DRX.west)
%		(DREG) edge [] node {} (REG.west)
%		(DRX) edge [] node {} (DTMFArej)
%		(REG) edge [] node {} (DTMFArej.south west)
%		(REG) edge [] node {} (DTMFAacc)	
%		(DTMFArej.east) edge [] node {} (DTMFA.north west)
%		(DTMFAacc.east) edge [] node {} (DTMFA.south west)	
%		(DTMFA) edge [] node {} (MFA)	
%		;
%		\end{tikzpicture}
%	\end{center}\caption{The proper inclusions from Theorem~\ref{thm:hierarchyay}. Arrows point from sub- to superset.}\label{fig:hierarchyay2}
%\end{figure}
We can also use the examples from this section to show that the classes $\langcl(\DRX)$ and $\langcl(\DTMFArej)$ are not closed under most of the commonly studied operations on languages:
\begin{theorem}\label{thm:closure}
	 $\langcl(\DRX)$ and $\langcl(\DTMFArej)$ are  not closed under the following operations: union, concatenation, reversal,  complement, homomorphism, and inverse homomorphism.  $\langcl(\DRX)$ is also not closed under  intersection, and intersection with $\DREG$-languages.
\end{theorem}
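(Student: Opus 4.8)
The plan is to reuse the separating languages of this section together with the hierarchy and the two inexpressibility tools (the Jumping Lemma~\ref{lem:jump} and Lemma~\ref{lem:finap}). Throughout I use that $\langcl(\DRX)\subseteq\langcl(\DTMFArej)$ (Theorem~\ref{thm:glushkov}), that $\langcl(\REG)\subseteq\langcl(\DTMFArej)$, and that $\DREG\subseteq\DRX$. Hence, for each operation and each of the two classes, it suffices to exhibit witnesses inside $\langcl(\DRX)$ whose image lies outside $\langcl(\DTMFArej)$ (or outside $\langcl(\DTMFA)$); the same witness then settles both classes at once. The lower bounds rest on three ready-made facts: $\{\ta^i\tb\ta^j\mid i>j\geq 0\}\notin\langcl(\DTMFArej)$ (Example~\ref{ex:jump2}), $\{\ta^{4i+1}\mid i\geq 0\}\cup\{\ta^{4^i}\mid i\geq 1\}\notin\langcl(\DTMFA)$ (Lemma~\ref{lem:unarySep}), and the unary crash Proposition~\ref{prop:crash}.

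For the six operations claimed for \emph{both} classes I argue as follows. \emph{Union}: the language of Lemma~\ref{lem:unarySep} is $\{\ta^{4i+1}\}\cup\{\ta^{4^i}\}$, the union of the unary regular $\{\ta^{4i+1}\}\in\langcl(\DRX)$ (Theorem~\ref{thm:drxunary}) and $\{\ta^{4^i}\}\in\langcl(\DRX)$ (Example~\ref{ex:drxCrazy}), yet it is not in $\langcl(\DTMFA)$. \emph{Complement}: take $\{\ta^{n^2}\}\in\langcl(\DRX)$ (Example~\ref{ex:drx}); as it is non-regular, Proposition~\ref{prop:crash} (together with Theorem~\ref{closureComplementTheorem}) forbids its complement from lying in $\langcl(\DTMFArej)$, for otherwise $\{\ta^{n^2}\}$ would fall into $\langcl(\DTMFArej)\cap\langcl(\DTMFAacc)=\langcl(\REG)$. \emph{Reversal}: $\bind{x}{\ta^*}\tb\rr{x}\ta^+\in\DRX$ defines $\{\ta^m\tb\ta^n\mid n>m\geq 0\}$, whose reversal is exactly $\{\ta^i\tb\ta^j\mid i>j\geq 0\}\notin\langcl(\DTMFArej)$. \emph{Concatenation}: with $\ta^*\in\DREG$ and $\ta\bind{x}{\ta^*}\tb\rr{x}\in\DRX$ (defining $\{\ta^{j+1}\tb\ta^j\mid j\geq 0\}$), the product is $\{\ta^{t+j+1}\tb\ta^j\mid t,j\geq 0\}=\{\ta^i\tb\ta^j\mid i>j\geq 0\}$, again the language of Example~\ref{ex:jump2}. \emph{Homomorphism}: the disjunction $\tb\ta(\ta^4)^*\ror\alpha_2$, where $\alpha_2$ is the deterministic regex of Example~\ref{ex:drxCrazy} with $\lang(\alpha_2)=\{\ta^{4^i}\mid i\geq 1\}$, is deterministic because its two alternatives begin with the distinct terminals $\tb$ and $\ta$; applying $h(\tb)=\emptyword$, $h(\ta)=\ta$ erases the marker and yields $\{\ta^{4i+1}\}\cup\{\ta^{4^i}\}\notin\langcl(\DTMFA)$. \emph{Inverse homomorphism}: for $h(\ta)=h(\tb)=\ta$ we obtain $h^{-1}(\{\ta^{n^2}\})=\{w\in\{\ta,\tb\}^*\mid |w|\text{ is a square}\}$; since membership depends only on $|w|$, the Jumping Lemma~\ref{lem:jump} fails (given $p_n,v_n$, choose $u$ of the appropriate length whose first letter differs from $v_n[1]$), so this preimage is not in $\langcl(\DTMFArej)$, while it is clearly non-regular.

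For \emph{intersection} and \emph{intersection with} $\DREG$, claimed only for $\langcl(\DRX)$, I would realize a \emph{regular} language lying outside $\langcl(\DRX)$ — whose existence is guaranteed by Lemma~\ref{lem:drxvsdrmfa}, e.g.\ $\{(\ta\tb)^{\frac12 i}\mid i\geq 0\}$ — as an intersection $L\cap R$ with $L\in\DRX$ and $R\in\DREG$. Since $R\in\DREG\subseteq\DRX$, the same pair witnesses non-closure under plain intersection as well. Because the resulting language is regular and hence in $\langcl(\DTMFArej)$, this construction neither contradicts nor establishes any closure of $\langcl(\DTMFArej)$, which explains why intersection appears only in the list for $\langcl(\DRX)$.

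I expect this last item to be the main obstacle. Whereas each of the other operations has essentially a one-line witness, the intersection target cannot be obtained by merging residue classes (intersection only shrinks), so one must \emph{split} the positional alternation constraint of $\{(\ta\tb)^{\frac12 i}\}$ — informally ``odd positions carry $\ta$'' and ``even positions carry $\tb$'' — into two factors and then verify that they are deterministic; this is delicate precisely because, by Lemma~\ref{lem:drxvsdrmfa}, the alternation is itself the feature that defeats determinism, so the factorisation must place the deterministic-regular part and the genuinely $\DRX$ part carefully. The remaining checks are routine: that every displayed regex is deterministic (a direct application of Definition~\ref{def:drx} via the memory occurrence graph of Theorem~\ref{thm:glushkov}), and that the square-length preimage in the inverse-homomorphism case is non-regular.
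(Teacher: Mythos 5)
Your treatment of the six operations claimed for both classes is sound and, for union, concatenation, reversal, complement, and homomorphism, essentially coincides with the paper's proof (same witnesses up to trivial variation, same reductions to Lemma~\ref{lem:unarySep}, Example~\ref{ex:jump2}, and Proposition~\ref{prop:crash}). For inverse homomorphism you take a genuinely different route: the paper pulls back $\{\ta^i\tb\ta^i\}$ along $g(\ta)=g(\tb)=\ta$, $g(\tc)=\tb$ to get $\{u\tc v\mid |u|=|v|\}$ and then runs a two-case Jumping Lemma argument, whereas you pull back $\{\ta^{n^2}\}$ along $h(\ta)=h(\tb)=\ta$ to get the length-is-a-square language. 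Your version is arguably cleaner, since membership depends only on $|w|$ and the Jumping Lemma contradiction is immediate (pick $u\in\Sigma^+$ of a length completing $|p_n u|$ to a square, with first letter differing from $v_n[1]$); the paper's version has the mild advantage of not needing to argue non-regularity separately, but both are correct.

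The genuine gap is the intersection case, which you yourself flag as ``the main obstacle'' and then leave unresolved. Identifying the target $L_8=\{(\ta\tb)^{\frac{1}{2}i}\mid i\geq 0\}\notin\langcl(\DRX)$ (Lemma~\ref{lem:drxvsdrmfa}) and the strategy of writing it as $L\cap R$ with $R\in\DREG\subseteq\DRX$ is the easy half; the content of the proof is exhibiting the factorisation, and without it the claim is not established. The paper supplies it explicitly, with \emph{both} factors deterministic regular expressions (adapted from Caron, Han, Mignot): $\beta_1\df(\ta(\tb\ror\emptyword))^*$ and $\beta_2\df\emptyword\ror\bigl(\ta(\tb(\ta\ror\emptyword))^*\bigr)$, for which $\lang(\beta_1)\cap\lang(\beta_2)=L_8$. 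Note that this is simpler than what you anticipate: no ``genuinely $\DRX$'' factor with variables is needed, and one does not split the alternation into an odd-position and an even-position constraint; rather, $\beta_1$ forces every $\tb$ to be immediately preceded by an $\ta$ while $\beta_2$ forces the complementary adjacency discipline, and only their conjunction pins down strict $\ta\tb$-alternation. Since both factors lie in $\DREG$, one witness settles non-closure under intersection with $\DREG$-languages and under plain intersection simultaneously, exactly as your plan intends. To complete your proof you must either reproduce such a pair and verify (via Definition~\ref{def:drx} or the Glushkov construction of Theorem~\ref{thm:glushkov}) that both are deterministic, or cite the known fact that $\langcl(\DREG)$ is not closed under intersection together with a concrete pair whose intersection is $\DRX$-inexpressible; merely asserting that a suitable factorisation should exist does not suffice.
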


%\subsection{Proof of Theorem~\ref{thm:closure}}
\begin{proof}
	The proofs for the operations that apply to both classes follow the same basic scheme: We start with one (or more) $\DRX$-language(s), and show that applying the operation yields a language that is not a $\DTMFArej$-language:
	\begin{description}
		\item[Union:] We use $L_1\df \{\ta^{4i+1}\mid i\geq 0\}$ and $L_2\df \{\ta^{4^i}\mid i\geq 1\}$, which are defined by the deterministic regex $\alpha_1 \df \ta (\ta^4)^*$, and $\alpha_2\df \ta^2\cdot \bind{x}{\ta^2} \cdot\bigl( \bind{y}{\rr{x}\cdot \rr{x}}\cdot \bind{x}{\rr{y}\cdot\rr{y}}  \bigr)^*$ (see Example~\ref{ex:drxCrazy}). Then $L_1\cup L_2 \notin\langcl(\DTMFArej)$, as shown in Lemma~\ref{lem:unarySep}.
		\item[Concatenation:] Define the deterministic regexes $\alpha_3 \df \ta^+$ and $\alpha_4\df \bind{x}{\ta^*}\cdot\tb\cdot\rr{x}$. Then $\lang(\alpha_4)=\{\ta^i\tb\ta^i\mid i\geq 0\}$, and $\lang(\alpha_3)\cdot \lang(\alpha_4) = \{\ta^i \tb \ta^j \mid i>j\geq 0\}$. As shown in Example~\ref{ex:jump2}, $\lang(\alpha_3)\cdot \lang(\alpha_4)$ is not a $\DTMFArej$-language.
		\item[Reversal:] Let $\alpha_5 \df \bind{x}{\ta^*}\cdot\tb\cdot\rr{x}\cdot \ta^+$. Then $\alpha_5\in\DRX$, and $\lang(\alpha_5)=\{\ta^j \tb \ta^i \mid i>j\geq 0\}$. Reversing $\lang(\alpha_5)$ again gives us the language from Example~\ref{ex:jump2}.
		\item[Complement:] This follows directly from Proposition~\ref{prop:crash}. Consider e.\,g.\ $\{\ta^{n^2}\mid n\geq 0\}$.
		\item[Homomorphism:] Let $\alpha_6\df ((\tc\cdot \alpha_1)\ror(\td\cdot \alpha_2))$. Then $\alpha_6\in \DRX$, but $h(\lang(\alpha_6)) = L_1\cup L_2$ for the morphism $h$ that is defined by $h(x)\df x$ if $x\in\{\ta,\tb\}$ and $h(x)\df \emptyword$ if $x\in\{\tc,\td\}$.
		\item[Inverse homorphism:] Define a morphism $g$ by $g(\ta)\df g(\tb) \df  \ta$, and $g(\tc)\df \tb$. Then let $L_{7}\df g^{-1}(\lang(\alpha_4))=\{u\cdot\tc\cdot v\mid u,v\in\{\ta,\tb\}^*, |u|=|v|\}$. We use Lemma~\ref{lem:jump} to show that $L_{7}\notin\langcl(\DTMFArej)$. Assume to the contrary that it is, and choose $m\geq 1$. Then there exist $n\geq m$ and words $p_n, v_n$ that satisfy the conditions of Lemma~\ref{lem:jump}. We now distinguish the following cases: First, assume that $p_n$ does not contain the letter $\tc$. Then choose a $d\in\{\ta,\tb\}$ that is not the first letter of $v_n$, and define $u\df d\cdot \tc\cdot \ta^{|p_n|+1}$. Then $p_n u \in L_7$; but as $v_n$ is not a prefix of $u$, this contradicts Lemma~\ref{lem:jump}. Now assume that $p_n$ contains $\tc$. Then $p_n = w_1 \tc w_2$ with $w_1,w_2\in\{\ta,\tb\}^*$, and $|w_1| \geq  |w_2| + n\geq |w_2| + m$. Again, choose $d\in\{\ta,\tb\}$ such that $d$ is not the first letter of $v_n$, and define $u \df d \cdot \ta^{|w_1| - |w_2| -1}$. Then $p_n u =  w_1 \tc w_3$ for $w_3 = w_2 d\cdot \ta^{|w_1| - |w_2| -1}$, and $|w_3| = |w_1|$. Hence, $p_n u\in L_{7}$, but as $v_n$ is not a prefix of $u$, this contradicts Lemma~\ref{lem:jump}.
	\end{description}
	\subparagraph*{Intersection:} In order to show both claims on the intersection of $\lang(\DRX)$, it suffices to show that we can obtain a language that is not a $\DRX$-language by intersecting  two deterministic regular languages. Accordingly, we define deterministic regular expressions $\beta_1\df (\ta (\tb\ror\emptyword))^*$  and $\beta_2 \df \emptyword\ror\bigl(\ta(\tb(\ta\ror \emptyword))^*\bigr)$ (these expressions have been obtained by very minor modifications to the expressions that Caron, Han, Mignot~\cite{car:gen} use to show that $\lang(\DREG)$ is not closed under intersection).
	
	Let  $L_{8}=\{(\ta \tb)^{\frac{1}{2}i}\mid i\geq 0\}$. To show that $L_{8}= \lang(\beta_1)\cap \lang(\beta_2)$, we follow the approach from~\cite{car:gen},  and first consider $\glush(\beta_1)$ and the corresponding minimal incomplete $\DFA$:
	\begin{center}
		\begin{tikzpicture}[node distance=10mm,on grid,>=stealth',auto, 
		state/.style={rectangle,draw=black,inner sep=0pt,minimum size=4mm}]
		\node         (start)                 				{};
		\node[state]  (a)     [right=of start] 				{$\ta$};
		\node[state]  (b)     [below=of a]     	{$\tb$};
		\node 				(end)   [right=of a]	   	{};
		
		\draw[fill=black] (start) circle (0.5mm);
		\draw[fill=black] (end) circle (0.5mm);
		\draw 						(end) circle (1mm);
		
		\path[->]
		(start.center) edge node[] {} (a)
		(start.center) edge[bend left=80] node {} (end)
		(a) edge node{} (end)
		(a) edge[loop above] node{} (a)
		(b) edge node{} (end)
		(a) edge[bend left] node {} (b)
		(b) edge[bend left] node {} (a)
		;
		\end{tikzpicture}
		\hspace{15mm}
		\begin{tikzpicture}[node distance=16mm,on grid,>=stealth',auto, 
		state/.style={circle,draw=black,inner sep=1pt,minimum size=8mm}]
		
		\node[state,initial,initial by arrow,initial text={},accepting]         (qb) {};
		\node[state,accepting, right=of qb]         (qa) {};
		
		\path[->] 
		(qb) edge[bend left] node[] {$\ta$} (qa)
		(qa) edge[loop above] node[] {$\ta$} (qa)
		(qa) edge[bend left] node[] {$\tb$} (qb)
		;
		\end{tikzpicture}
	\end{center}
	Likewise, we consider $\glush(\beta_2)$ and the corresponding minimal incomplete $\DFA$ (which merges the two states for $\ta$):
	\begin{center}
		\begin{tikzpicture}[node distance=10mm,on grid,>=stealth',auto, 
		state/.style={rectangle,draw=black,inner sep=0pt,minimum size=4mm}]
		\node         (start)                 				{};
		\node[state]  (a1)     [right=of start] 				{$\ta$};
		\node[state]  (b)     [right=of a1]     	{$\tb$};
		\node[state] (a2) [below=of b] {$\ta$}; 
		\node 				(end)   [right=of b]	   	{};
		
		\draw[fill=black] (start) circle (0.5mm);
		\draw[fill=black] (end) circle (0.5mm);
		\draw 						(end) circle (1mm);
		
		\path[->]
		(start.center) edge node[] {} (a1)
		(start.center) edge[bend left=80] node {} (end)
		(a1) edge[bend left=80] node {} (end)
		(a1) edge node{} (b)
		(a2) edge node{} (end)
		(b) edge node{} (end)
		(a2) edge[bend left] node {} (b)
		(b) edge[bend left] node {} (a2)
		(b) edge[loop above] node {} (b)
		;
		\end{tikzpicture}
		\hspace{15mm}
		\begin{tikzpicture}[node distance=16mm,on grid,>=stealth',auto, 
		state/.style={circle,draw=black,inner sep=1pt,minimum size=8mm}]
		
		\node[state,initial,initial by arrow,initial text={},accepting]         (q0) {};
		\node[state,accepting, right=of q0]         (qa) {};
		\node[state,accepting, right=of qa]         (qb) {};
		
		\path[->] 
		(q0) edge[] node[] {$\ta$} (qa)
		(qb) edge[bend left] node[] {$\ta$} (qa)
		(qa) edge[bend left] node[] {$\tb$} (qb)
		(qb) edge[loop above] node[] {$\tb$} (qb)
		;
		\end{tikzpicture}		
	\end{center}
	Now it is easily seen that $L_{8}= \lang(\beta_1)\cap \lang(\beta_2)$. From Lemma~\ref{lem:drxvsdrmfa}, we know that $L_{8}\notin\langcl(\DRX)$. Hence, the class of deterministic regex languages is not closed under intersection with deterministic regular languages, which also implies that it is not closed under intersection.
\end{proof}

We leave open whether $\langcl(\DTMFArej)$ is closed under intersection (with itself or with $\langcl(\DREG)$), but we conjecture that this is not the case. In this regard, note that while $\langcl(\TMFA)$ is closed under intersection with $\langcl(\REG)$ (as show in ~\cite{sch:cha}), it is open whether $\langcl(\TMFA)$ closed under intersection with itself. 
%\todomcom{The missing reference for the non-closure of the class under intersection was from a time when we still thought that your paper talks about the same class as CSY. For actual regex, closure under intersection is open.} 

We also leave open whether $\langcl(\DRX)$ and $\langcl(\DTMFArej)$ are closed under Kleene plus or star.
%!TEX root=det_SIAM.tex
\section{Static Analysis}\label{sec:static}
In this section, we examine a restriction  $\DRX$ and $\DTMFA$, which we motivate with the following observation:
As shown by Carle and Narendran~\cite{car:one}, the intersection emptiness problem for regex is undecidable\footnote{Although that proof refers to a subclass of our definition of regex, see Section~\ref{sec:choices}, it directly translates to $\RX$}. For $\DRX$, that proof cannot be used, but the result still holds (and by Theorem~\ref{thm:glushkov}, this extends to $\DTMFA$):
\begin{theorem}\label{thm:intersectundec}
	Given $\alpha,\beta\in\DRX$, it is undecidable whether $\lang(\alpha)\cap\lang(\beta)=\emptyset$.
\end{theorem}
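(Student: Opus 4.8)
The plan is to reduce from an undecidable problem about Turing machines or an equivalent combinatorial problem, exploiting the fact that deterministic regex can still encode Angluin-style pattern languages (as hinted in the choices-behind-the-definition section). The standard route for undecidability of intersection-emptiness for regex-like classes is to simulate a computation: encode valid computation histories of a Turing machine (or solutions to Post's Correspondence Problem) and arrange two expressions whose intersection is nonempty exactly when a computation accepts / a PCP instance has a solution. The genuinely new constraint here is that both expressions must be \emph{deterministic} in the sense of Definition~\ref{def:drx}. So the core difficulty is not the reduction idea itself, which is classical, but ensuring that each of the two expressions I write down is deterministic.

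First I would fix a master undecidable source. I would use Post's Correspondence Problem (PCP) over an alphabet, since its instances $(u_1,\dots,u_m)$, $(w_1,\dots,w_m)$ translate cleanly into concatenation structure. The idea is to build one regex $\alpha$ that generates all words of the form $i_1 \sepA i_2 \sepA \cdots \sepA i_\ell \sepB u_{i_1} u_{i_2}\cdots u_{i_\ell}$ (an index sequence, a separator, then the concatenation of the corresponding top blocks), and another regex $\beta$ that generates $i_1 \sepA \cdots \sepA i_\ell \sepB w_{i_1}\cdots w_{i_\ell}$ with the bottom blocks, over the same index-sequence prefix. Then $\lang(\alpha)\cap\lang(\beta)\neq\emptyset$ iff there is an index sequence producing equal top and bottom concatenations, i.e., iff the PCP instance has a solution. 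The key technical move is to use a back-referenced variable to \emph{copy} the guessed index sequence into the block-concatenation phase, so that the same sequence drives both halves: the index list is recorded once (e.g. in a variable $x$ storing the whole sequence, or the construction is driven by a per-step loop), and the concatenation part is generated by a deterministic loop over the recorded indices. This is exactly the kind of ``shifting factors between variables in loops'' the introduction advertises as the mechanism for deterministic-but-non-regular constructions.

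The main obstacle, as anticipated, is \textbf{enforcing determinism of both $\alpha$ and $\beta$}. A naive encoding uses a disjunction $\bigror_{i=1}^m (\text{record } i)$ inside a Kleene plus to guess the sequence, which is fine for determinism as long as the $m$ index symbols are \emph{distinct terminals} so that the next input symbol uniquely selects the branch (Condition~\ref{def:drx:c1} is then satisfied because the branches start with different letters). The harder part is the second phase, where I must regenerate $u_{i_1}\cdots u_{i_\ell}$ (resp.\ $w$) \emph{in lockstep with} the recorded indices while keeping the expression deterministic. I would structure the recording so that each index symbol $i$ in the first phase simultaneously opens/extends a variable that will later be dereferenced; the delicate point is that references $\rr{x}$ and terminals must never compete for the same input position in a way that violates Conditions~\ref{def:drx:c2}--\ref{def:drx:c3}. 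The cleanest design is to have the separator $\sepB$ act as a hard synchronization marker (a unique terminal) so that the matcher's state after $\sepB$ is unambiguous, and then have the second phase be a single deterministic pass (possibly itself a loop guided by re-reading the index symbols, which are forced by the already-consumed prefix — but since regex cannot re-read input, I instead store the per-index data in variables during the first phase and concatenate their dereferences deterministically).

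I would close by verifying the two required properties: (i) \emph{correctness} — that $\lang(\alpha)\cap\lang(\beta)\neq\emptyset$ holds exactly when the PCP instance is solvable, which follows because the shared index-prefix forces both words to encode the same sequence and the intersection then equates the top and bottom concatenations; and (ii) \emph{determinism} — checked against all four conditions of Definition~\ref{def:drx}, relying on the distinct index terminals and the unique separators to rule out Conditions~\ref{def:drx:c1} and~\ref{def:drx:c2}, and on a careful single-assignment discipline for the variables (each variable bound in exactly one syntactic place per loop iteration, no $\bind{x}{\cdot}\ror\bind{x}{\cdot}$ collisions) to rule out Conditions~\ref{def:drx:c3} and~\ref{def:drx:c4}. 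Since the excerpt already notes (in the discussion preceding the theorem) that the Carle--Narendran proof does \emph{not} transfer but the result nonetheless holds for $\DRX$, I expect the author's actual construction to hinge on precisely this determinism-preserving encoding trick; my plan follows the same strategic outline, and the bulk of the real work is the bookkeeping that makes both expressions pass the determinism test.
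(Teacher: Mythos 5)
Your choice of reduction source (PCP) and the overall goal match the paper, but the specific encoding you propose cannot be realized by any regex, deterministic or otherwise, and the fix is exactly the idea you are missing. You want $\alpha$ to generate $i_1\sepA\cdots\sepA i_\ell\sepB u_{i_1}\cdots u_{i_\ell}$, with the second phase ``driven by the recorded indices.'' But a regex variable stores a single contiguous factor of the already-consumed input, and rebinding under a Kleene star overwrites it; there is no append operation and no way to store an unbounded \emph{sequence} of index-dependent choices in finitely many variables. Consequently, after $\sepB$ you cannot reproduce $u_{i_1}\cdots u_{i_\ell}$: the order and multiplicity of the blocks depend on the unbounded index sequence, which is not recoverable from a bounded number of stored factors. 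Your own caveat (``regex cannot re-read input, so I store the per-index data in variables and concatenate their dereferences'') is precisely the step that fails — the concatenation of the $u$-blocks in the right order is not the dereference of any fixed finite collection of variables. Your correctness argument (shared index prefix forces equal sequences, so the intersection equates top and bottom concatenations) would be fine \emph{if} the two expressions existed; the obstruction is that they do not.

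The paper's proof avoids this by never asking either expression to hold the whole solution. The witness word is a chain of blocks $w_1\sepC\cdots\sepC w_n\sepC$ where each $w_j$ carries the tile marker $a_{i_j}$, the full remaining top and bottom concatenations $u_{i_j}\cdots u_{i_n}$ and $v_{i_j}\cdots v_{i_n}$, and then the same two strings with the $j$-th tile stripped off. One deterministic regex, $\alpha\df\bigl(\bigror_{i=1}^k a_i\sepA u_i\bind{x}{\Sigma^*}\sepA v_i\bind{y}{\Sigma^*}\sepB\rr{x}\sepA\rr{y}\sepC\bigr)^*$, checks \emph{within} each block that the correct prefixes $u_i,v_i$ are split off; the other, $\beta$, checks that the ``after'' strings of block $j$ equal the ``before'' strings of block $j+1$, and that block $1$ starts with equal top and bottom strings. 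Each expression only ever relates a bounded window of the word, so two or three variables rebound in every iteration suffice, and determinism falls out almost for free because each disjunct is keyed by a distinct tile letter $a_i$ and each loop iteration is delimited by distinct separators. This telescoping decomposition — local consistency in one expression, adjacent-block consistency in the other — is the essential content of the construction and is what your outline lacks; without it, the ``bookkeeping'' you defer is not merely tedious but impossible.
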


%\subsection{Proof of Theorem~\ref{thm:intersectundec}}
\begin{proof}
	We show this with a reduction from Post's Correspondence Problem (PCP, for short). Let $(u_1,v_1),\ldots,(u_k,v_k)\in \Sigma^*\times\Sigma^*$, $k\geq 1$, be a PCP instance. Our goal is to construct $\alpha,\beta\in\DRX$ such that $\lang(\alpha)\cap\lang(\beta)\neq \emptyset$ if and only if there exists a sequence $i_1,\ldots,i_n$, $n\geq 1$ and $1\leq i_j \leq k$, such that $u_{i_1}\cdots u_{i_n} = v_{i_1}\cdots v_{i_n}$. To do so, we first introduce an alphabet $A\df\{a_1\ldots,a_k\}$ such that $A$, $\Sigma$, and $\{\sepA,\sepB,\sepC\}$ are pairwise disjoint (at the end of the proof, we discuss how this construction can be adapted to binary terminal alphabets). We then define
	\begin{align*}
		\alpha &\df \Bigl(\bigror_{i=1}^k a_i \sepA u_i \bind{x}{\Sigma^*} \sepA v_i \bind{y}{\Sigma^*} \sepB \rr{x} \sepA \rr{y} \sepC   \Bigr)^*,\\
		\beta &\df A\sepA \bind{z}{\Sigma^+} \sepA \rr{z} \sepB \Bigl( \bind{x}{\Sigma^+} \sepA \bind{y}{\Sigma^+} \sepC A\sepA \rr{x} \sepA \rr{y} \sepB   \Bigr)^* \sepA\sepC. 
	\end{align*}	
	To see that $\alpha$ is deterministic, note that the disjunction ranges over the letters from $A$. For $\beta$, we observe that after each iteration of the starred subexpression, we read either a letter from $\Sigma$, and start a new iteration, or $\sepA$, which means that this was the last iteration.
	
	We now claim that $w\in \lang(\alpha)\cap\lang(\beta)$ if and only if there exist an $n\geq 1$ and $i_1,\ldots,i_n$ with $1\leq i_j \leq k$ such that  $u_{i_1}\cdots u_{i_n} = v_{i_1}\cdots v_{i_n}$ and
	$w = w_1 \sepC w_2\sepC \cdots w_n \sepC$, where 
	$$w_j = a_{i_j} \sepA u_{i_j}\cdots u_{i_n} \sepA v_{i_j}\cdots v_{i_n} \sepB u_{i_{j+1}}\cdots u_{i_n} \sepA v_{i_{j+1}}\cdots v_{i_n}.$$
	Take note that $w_n$ always ends on $\sepB\sepA$.
	Informally explained, $w$ encodes how a  solution of the PCP instance is constructed, where the finished solution is in $w_1$, and the start of the construction is at $w_n$. Starting at $w_1$, the sequence of $w_j$ can be understood as splitting off pairs of prefixes $(u_j,v_j)$ from the solution, where each word $w_j$ also encodes which tuple $(u_j,v_j)$ is processed (by using the preceding symbol $a_j$ as a marker), and the words before and after the pair is split off (to the left and right of $\sepB$, respectively). 
	
	Here, $\alpha$ ensures that in each $w_j$, $u_j$ and $v_j$ are split off correctly, while $\beta$ ensures that the ``after'' words of $w_j$ are the ``before'' words of $w_{j+1}$. Hence, such a $w$ exists if and only if the instance of the PCP has a solution. As the existence of the latter is undecidable (see e.\,g.\ Hopcroft and Ullman~\cite{hop:int}), deciding  $\lang(\alpha)\cap\lang(\beta)\neq \emptyset$ is also undecidable.
	
	To adapt the construction to a binary alphabet (say, $\{\ta,\tb\}$), we use a morphism $h\colon (A\cup\Sigma\cup\{\sepA,\sepB,\sepC\})^*\to\{\ta,\tb\}^*$ that is defined as follows (we assume an arbitrary ordering on $\Sigma$ with $\Sigma=\{b_1,\ldots,b_{|\Sigma|}\}$):
	\begin{itemize}
		\item $h(a_i)\df \ta \tb^i \ta$ for all $a_i \in A$,
		\item $h(b_i)\df \ta \tb^{i} \ta$, %where we assume an arbitrary ordering on $\Sigma$ with $\Sigma=\{b_1,\ldots,b_{|\Sigma|}\}$,
		\item $h(\sepA)\df \tb\ta\tb$, $h(\sepB)\df \tb\ta^2\tb$, and $h(\sepC)\df \tb\ta^3\tb$.
	\end{itemize}
	If we apply $h$ to $\alpha$ and $\beta$ by applying $h$ to each terminal, we obtain regex $h(\alpha)$ and $h(\beta)$ such that $\lang(h(\alpha))\cap \lang(h(\beta))\neq \emptyset$ if and only if the instance of the PCP has a solution. The only problem is that these regex are not deterministic, as there are disjunctions that start with the same terminal letter. But each of these disjunctions can be rewritten into a deterministic disjunction by nesting the branches. For example, consider the disjunction $(a_1\ror a_2 \ror a_3)$. Using $h$, this becomes $(\ta\tb\ta\ror \ta\tb^2\ta \ror \ta\tb^3\ta)$, which is  not deterministic, but can be rewritten to the equivalent
	$(\ta\tb (\ta \ror (\tb  (\ta \ror \tb\ta)   )) )$.
	
	Now, note that if we apply this rewriting to the disjunctions  to $h(\alpha)$ and $h(\beta)$ (including the disjunctions that are hidden in shorthand notations $A$ and $\Sigma$), we obtain deterministic regex. In particular, note that Kleene plus and Kleene star are only used on elements of $A\cup\Sigma$, and are always followed by either $\sepA$ or $\sepC$. As the encodings of the former start with $\ta$, while the encodings of the latter start with $\tb$, rewriting the disjunctions is enough to ensure determinism.
\end{proof}

As a consequence, $\DTMFA$~intersection emptiness problem is also undecidable. Theorem~\ref{thm:intersectundec} applies even to very restricted $\DRX$, as no variable binding contains a reference to another variable, $|\var(\alpha)|=2$, and $|\var(\beta)|=3$. Hence, bounding the number of variables does not make the problem decidable. Instead, the key part seems to be that the variables occur under Kleene stars, which means that they can be reassigned an unbounded amount of times. Following similar observations, Freydenberger and Holldack~\cite{fre:doc} introduced the following concept: A regex is \emph{variable-star-free (vstar-free)} if each of its plussed sub-regexes contains neither variable references, nor variable bindings. Analogously, we call a  $\TMFA$  \emph{memory-cycle-free} if it contains no cycle with a \emph{memory transition} (a transition in a $\TMFA$ that is a memory recall, or that contains memory actions other than $\unchanged$).
Let $\fRX$ be the set of all vstar-free regex, and $\fDRX= \fRX\cap \DRX$. Let $\fTMFA$ be the set of all memory-cycle-free $\TMFA$, and define $\fDTMFA$, $\fTMFArej$,\ldots\ analogously. The proof of Theorem~\ref{thm:glushkov} allows us to conclude that $\glush(\alpha)\in\fDTMFA$ holds for every $\alpha\in\fDRX$. Likewise, we can use the proof of Theorem~\ref{thm:TMFAisMFA} to conclude $\langcl(\fTMFA)=\langcl(\fRX)$.
Note that for $\emptyword$-free $\fDTMFA$, the membership problem can be decided in time $O(|Q|+|w|)$, as the preprocessing step of Theorem~\ref{thm:dtmfamembership} is not necessary (as only a bounded number of variable references is possible in each run). Likewise, we can  drop the factor $k$ from Theorem~\ref{thm:drxmembership} when restricted to $\fDRX$.

As shown by Freydenberger~\cite{fre:splog}, it is decidable in $\PSPACE$ whether $\bigcap_{i=1}^n \lang(\alpha_i)=\emptyset$ for  $\alpha_1,\ldots,\alpha_n\in\fRX$. By combining the proof for this  with some ideas from another construction from~\cite{fre:splog}, we encode the intersection emptiness problem for $\fTMFA$ in the \emph{existential theory of concatenation with regular constraints} (a $\PSPACE$-decidable, positive logic on words, see Diekert~\cite{die:mak}, Diekert, Je{\.z}, Plandowski~\cite{die:fin}). This yields the following: 
\begin{theorem}\label{thm:intersect}
	Given $M_1, \ldots,M_n\in\fTMFA$,  we can decide $\bigcap_{i=1}^n \lang(M_i)=\emptyset$  in $\PSPACE$.
	The problem is  $\PSPACE$-hard, even  if restricted to $\lang(\alpha)\cap\lang(\beta)$,  $\alpha\in\fDRX$ and $\beta\in\DREG$ (if the size of $\Sigma$ is not bounded), or to $\lang(\alpha)\cap\lang(M)$,  $\alpha\in\fDRX$ and $M\in\DFA$.
\end{theorem}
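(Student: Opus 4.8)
The plan is to establish the two directions separately. For the $\PSPACE$ upper bound, the strategy is to encode the intersection-emptiness question as an (un)satisfiability question for the existential theory of concatenation with regular constraints, which is $\PSPACE$-decidable by Diekert, Je{\.z}, Plandowski~\cite{die:fin}; since $\PSPACE$ is closed under complement, this suffices. For $\PSPACE$-hardness, I would reduce from the intersection-nonemptiness problem for $\DFA$ (a classical $\PSPACE$-complete problem), using a single back-reference to force several equal copies of a candidate word, together with a second, deterministic language that verifies each copy against a different automaton.

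\textbf{The upper bound.} Given $M_1,\dots,M_n\in\fTMFA$, I would build a single existential formula $\Phi$ over word variables, with regular constraints, such that $\Phi$ is satisfiable if and only if $\bigcap_{i=1}^n\lang(M_i)\neq\emptyset$, and $|\Phi|$ is polynomial in $\sum_i|M_i|$. The formula uses one central variable $W$ for the common word and, for each $M_i$, a conjunct $\phi_i(W)$ expressing $W\in\lang(M_i)$. The construction of $\phi_i$ rests on the defining property of $\fTMFA$: since no cycle carries a memory transition, every memory transition is taken at most once in any run, so an accepting run performs only $O(|M_i|)$ memory operations. Consequently each run splits into a bounded number of memory-free segments, whose consumed factors lie in regular languages (captured directly by regular constraints), linked by the boundedly many memory operations, whose effect is captured by word equations: the factor consumed at a recall of a memory equals the concatenation of the factors consumed while that memory was open. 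Combining this with the encoding from~\cite{fre:splog} (which already treats the variable-bounded case) yields a polynomial-size $\phi_i$, hence a polynomial-size $\Phi$; $\PSPACE$-decidability of the theory then settles the bound. The main obstacle here is to encode the control flow of $M_i$ \emph{compactly}, i.e.\ without a disjunction over its (possibly exponentially many) runs; this is exactly where memory-cycle-freeness is essential, as it confines all non-regular behaviour to a polynomial-size ``skeleton'' while the regular constraints absorb the remaining, possibly cyclic, memory-free parts.

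\textbf{The hardness reduction.} Let $M_1,\dots,M_n\in\DFA$ be over $\Sigma$, and let $\sepA\notin\Sigma$. I would set
\[
\alpha \df \bind{x}{\Sigma^*}\cdot\sepA\cdot\rr{x}\cdot\sepA\cdot\rr{x}\cdots\sepA\cdot\rr{x}\,,
\]
with exactly $n-1$ occurrences of $\rr{x}$ written out (a literal concatenation, \emph{not} a Kleene iteration), so that $\lang(\alpha)=\{\,w\,\sepA\,w\cdots\sepA\,w\mid w\in\Sigma^*\,\}$ consists of $n$ identical $\Sigma$-blocks. As $x$ is bound once and referenced outside any plus, $\alpha\in\fRX$; and since $\sepA\notin\Sigma$ resolves every choice between continuing the binding and closing it (exactly as for $\beta_1$ in Example~\ref{ex:drx}), $\alpha$ is deterministic, so $\alpha\in\fDRX$. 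For the variant with $M\in\DFA$, I would take $\beta$ to be the $\DFA$ for $\lang(M_1)\,\sepA\,\lang(M_2)\cdots\sepA\,\lang(M_n)$ obtained by chaining the $M_i$: on $\sepA$ one moves from an accepting state of $M_i$ to the start of $M_{i+1}$, which keeps the automaton deterministic and of size $O(\sum_i|M_i|)$. Then $w\,\sepA\cdots\sepA\,w\in\lang(\alpha)\cap\lang(\beta)$ iff the $i$-th block satisfies $w\in\lang(M_i)$ for all $i$, i.e.\ iff $\bigcap_i\lang(M_i)\neq\emptyset$, which gives the reduction.

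\textbf{The deterministic-regular variant.} For $\beta\in\DREG$ the only extra difficulty is that the chained language above need not be expressible by a \emph{deterministic} regular expression, since an individual $\lang(M_i)$ may fail to be deterministic-regular. Here I would exploit the freedom of an unbounded $\Sigma$: I would reduce from the acceptance problem of a polynomial-space (deterministic) Turing machine, encoding a computation as the word $w$ and letting each constraint $M_i$ verify a single local window of the tableau. Over a sufficiently rich alphabet each such local check is $1$-unambiguous, so each $\lang(M_i)$, and hence the separated concatenation $\lang(M_1)\,\sepA\cdots\sepA\,\lang(M_n)$, can be written as a deterministic regular expression, while the copy-forcing gadget $\alpha$ stays unchanged. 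The delicate point is precisely this determinisation: one must phrase the local checks so that their Glushkov automata are deterministic while their conjunction still captures a full accepting computation. Together with the polynomial size of the reduction, this yields $\PSPACE$-hardness for both stated restricted cases, $\alpha\in\fDRX$ with $\beta\in\DREG$ and $\alpha\in\fDRX$ with $M\in\DFA$.
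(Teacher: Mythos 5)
Your overall architecture matches the paper's: the upper bound goes through the existential theory of concatenation with regular constraints, exploiting that memory-cycle-freeness bounds the number of memory transitions per run, and the $\DFA$ hardness variant uses exactly the paper's gadget $\bind{x}{\Sigma^*}\sepA\rr{x}\sepA\cdots$ against the chained automaton $\lang(M_1)\sepA\cdots\sepA\lang(M_n)$. However, there are three concrete gaps. First, in the upper bound you assert that each $M_i$ yields a \emph{single} polynomial-size formula $\phi_i(W)$ equivalent to $W\in\lang(M_i)$, and you name the obstacle (exponentially many run skeletons) without resolving it. The paper does not build such a formula: it nondeterministically \emph{guesses} one ``condensed run'' per automaton (the sequence of at most $|Q_i|-1$ memory transitions together with the states before and after each), builds a formula only for that guess, and appeals to $\PSPACE=\NPSPACE$. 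Your version needs either this fix or an actual compact encoding of all skeletons, which you do not supply. Second, you ignore automata with accepting trap states: for $M\in\fTMFAacc$ a word can be accepted via a memory recall failure, so ``$W\in\lang(M)$'' is not captured by runs reaching a final state; the paper adds an explicit disjunct describing the two ways a recall can fail (the consumed prefix is a proper nonempty prefix of the memory content, or the two disagree at some position). Since the theorem is stated for $\fTMFA$, not $\fTMFArej$, this case cannot be dropped.

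Third, for the $\DREG$ variant you propose to re-derive $\PSPACE$-hardness from scratch via a Turing-machine tableau whose local window checks are claimed to be $1$-unambiguous over a large alphabet, and you explicitly leave the determinisation of those checks open. This is precisely the content of the known result you could instead cite: Martens, Neven, and Schwentick~\cite{mar:com} prove that intersection emptiness for deterministic regular expressions is $\PSPACE$-complete over unbounded alphabets, and the paper simply plugs their expressions $\beta_1,\ldots,\beta_n$ into $\beta\df\beta_1\sepA\cdots\sepA\beta_n\sepA$ (which stays deterministic because $\sepA\notin\Sigma$) and reuses the same $\alpha$. As written, your third part is an unfinished proof of a theorem that is available off the shelf; either cite it or complete the tableau construction.
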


\begin{proof}
	We begin with the first lower bound:  As shown by Martens, Neven, and Schwentick~\cite{mar:com} (Theorem 3.10), the intersection emptiness problem for deterministic regular expressions is $\PSPACE$-complete (if $|\Sigma|$ is not bounded; the paper does not discuss the unbounded case, and the proof cannot be adapted directly). This problem is defined as follows: Given $\beta_1,\ldots,\beta_n\in \DREG$ for some $n\geq 2$, is $\bigcap_{i=1}^{n}\lang(\beta_i)=\emptyset$? We use a new terminal letter $\sepA\notin\Sigma$, and define 
	\begin{align*}
		\alpha&\df \bind{x}{\Sigma^*}\sepA \bigl(\rr{x} \sepA\bigr)^{n-1}\\
		\beta &\df \beta_1 \sepA \beta_2 \sepA \cdots \beta_n \sepA.
	\end{align*}
	First, observe that $\alpha$ and $\beta$ are deterministic, as $\sepA\notin\Sigma$. Now  $w\in\lang(\alpha)$ holds if and only if $w = \bigl(\hat{w}\sepA\bigr)^n$ for some $\hat{w}\in \Sigma^*$, and  $w\in \lang(\beta)$ if and only there exist $w_1,\ldots,w_n\in\Sigma^*$ with $w_i\in\lang(\beta_i)$ and $w = w_1 \sepA w_2\sepA \cdots w_n\sepA$. Hence, $(\lang(\alpha)\cap\lang(\beta))\neq\emptyset$ if and only if $\bigcap_{i=1}^{n}\lang(\beta_i)\neq\emptyset$. As this problem is $\PSPACE$-complete, deciding $(\lang(\alpha)\cap\lang(\beta))\stackrel{?}{=}\emptyset$ is $\PSPACE$-hard.
	
	The second lower bound is a reduction from the  intersection emptiness problem for $\DFA$, which is defined as follows: Given $M_1,\ldots,M_n\in\DFA$ with $n\geq 2$, is there a $w\in\Sigma^*$ with  $w\in \lang(M_i)$ for all $1\leq i\leq n$? This problem is $\PSPACE$-complete (cf.\ Kozen~\cite{koz:low}). We take a new terminal symbol $\sepA\notin\Sigma$, define $\alpha$ as above, and choose $M$ to be the $\DFA$ for the language $\lang(M_1)\sepA\lang(M_2)\sepA \cdots \sepA \lang(M_n)$ (as $\sepA$ does not occur in the languages of the $\DFA$, this is trivially possible). The reasoning continues as above; but as the $\DFA$ can be defined on a binary alphabet, this proof does not require an unbounded alphabet.
	
	The upper bound takes more work, including further definitions. Our goal is to encode the intersection emptiness problem for $\fTMFA$ in $\ECreg$, the existential theory of concatenation with regular constraints, which we now introduce (for a more detailed definition and examples on $\ECreg$, see for example Freydenberger~\cite{fre:splog}).
	
	One of the basic elements of $\ECreg$-formulas are word equations: A \emph{pattern} is a word $\alpha\in(\Sigma\cup\Xi)^*$, and a \emph{word equation} is a pair of patterns $(\eta_L,\eta_R)$, which can also be written as $\eta_L=\eta_R$ (hence the name equation).   A \emph{pattern substitution} is a homomorphism $\sigma\colon (\Xi\cup\Sigma)^*\to\Sigma^*$ with $\sigma(a)=a$ for all $a\in\Sigma$.  It is a \emph{solution} of a word equation $(\eta_L,\eta_R)$ if $\sigma(\eta_L)=\sigma(\eta_R)$, and we write this as $\sigma\models (\eta_L,\eta_R)$. Less formally, a pattern substitution replaces all variables with terminal words (where multiple occurrences of the variable have to be substituted in the same way), and it is a solution of an equation if both sides have the same terminal word as a result.
	
	The other basic building block are \emph{constraint symbols}: For every $\eNFA$~$A$ and every $x\in \Xi$, we can use a constraint symbol $\const{A}{x}$. A pattern substitution $\sigma$ satisfies $\const{A}{x}$ if $\sigma(x)\in\lang(A)$. We write this as $\sigma\models\const{A}{x}$.
	
	The \emph{existential theory of concatenation with regular constraints} $\ECreg$ is obtained by combining word equations and constraint symbols using $\land$, $\lor$ and existential quantification over variables. Semantics  are defined canonically: We have $\sigma\models (\varphi_1\land\varphi_2)$ if  $\sigma\models (\varphi_1)$ and $\sigma\models\varphi_2$; and $\sigma\models (\varphi_1\lor\varphi_2)$ if $\sigma\models (\varphi_1)$ or $\sigma\models\varphi_2$. Finally, $\sigma\models (\exists x\colon \varphi)$ if there exists a $w\in\Sigma^*$ such that $\sigma_{[x\to w]}\models \varphi$, where the pattern substitution $\sigma_{[x\to w]}$ is defined by $\sigma_{[x\to w]}(x)\df w$, and $\sigma_{[x\to w]}(y)\df\sigma(y)$ if $y\neq x$. In slight abuse of notation, we also write $w\models \varphi(x)$ if $\sigma\models \varphi(x)$ holds for the pattern substitution $\sigma(x)\df w$.
	
	For example, let $\varphi(x)\df \exists y\colon \bigl( (x=y\tb y) \land \const{A}{y} \bigr)$, where $A$ is an $\NFA$ with $\lang(A)=\{\ta^*\}$. Then $w\models\varphi(x)$ if and only if $w=\ta^n \tb\ta^n$ for some $n\geq 0$.	
	
	Given an $\ECreg$-formula $\varphi$, deciding the existence of a pattern substitution $\sigma$ with $\sigma\models\varphi$ is $\PSPACE$-complete, cf.\ Diekert\cite{die:mak}.
	
	We first prove the claim only for automata with rejecting trap states (as we shall see further down, the case for accepting trap states requires only a small modification).
	Before we proceed to the main idea of the construction, we first take a closer look at the accepting runs of  memory-cycle-free $\TMFA$.
	
	Let $M\in\fTMFArej$ with $M=(Q,\Sigma,\delta,q_0,F)$ and memories $\{1,\ldots,k\}$, and consider any accepting run of $M$. As $M$ is memory cycle free, whenever it takes a memory transition from a state $p$ to a state $q$, we know that $p$ cannot occur anywhere else in the run. Otherwise, it would be possible to repeat the memory transition from $p$ to $q$ arbitrarily often, which would contradict the assumption that $M$ is memory cycle free. Hence, we know that every accepting run of $M$ can use at most $|Q|-1$ memory transitions. 
	
	This allows us to condense any accepting run of $M$ by considering only its memory transitions. Formally,   for some $0\leq \ell < |Q|$, we define  a \emph{condensed run} $\kappa = (\vec{q},\vec{p},\vec{\tau})$ of length $\ell$ as follows: 
	\begin{enumerate}
		\item $\vec{q}=(q_0,\ldots,q_{\ell})$ is a sequence of states, where $q_0$ is the starting state of $M$, 
		\item $\vec{p}=(p_0,\ldots,p_{\ell})$ is a sequence of states, with $p_{\ell}\in F$,
		\item $\vec{\tau}= (\tau_1,\ldots,\tau_{\ell})$ is a sequence of memory transitions, where for each $1\leq i \leq \ell$, either
		\begin{align*}
			\tau_i &= (p_i, x_i, q_{i+1},  s_{i,1}, \ldots, s_{i,k}) \text{ with $x_i\in\{1,\ldots, k\}$, or }\\
			\tau_i &= (p_i, b_i, q_{i+1}, s_{i,1}, \ldots, s_{i,k}) \text{ for some $b_i\in(\Sigma\cup\{\emptyword\})$}
		\end{align*}
		and $s_{i,j}\in\{\open, \close, \reset, \unchanged\}$, $1\leq j\leq k$. In the second case, we also require that there is at least one $j$ with $s_{i,j}\neq \unchanged$.
		\item $p_i$ is reachable from $q_i$ without using memory transitions for all $0\leq i \leq \ell$.
	\end{enumerate}
	The intuition behind this construction is that we condense the  run to a sequence of states  $(q_0, p_0, q_1, p_1, \ldots, q_{\ell}, p_{\ell})$ that only contains the starting state $q_0$, a final state $p_{\ell}$, and the states before and after each memory transition (as $M$ runs from $q_i$ to $p_i$ only without memory transitions, and each memory transition $\tau_i$ takes the automaton from $p_i$ to $q_{i+1}$).
	
	As we shall see, each condensed run $\kappa$ can be converted in polynomial time into an $\ECreg$ formula $\varphi_{\kappa}(w)$ that defines exactly the language of all $w\in\lang(M)$ for which there is an accepting run of $M$ that can be condensed to $\kappa$. In particular, the parts of the run between each pair of states $q_i$ and $p_i$ (which involve no memory transitions) shall be handled by appropriate regular constraints.
	
	Now, given $M_1,\ldots,M_n\in \fTMFArej$, we proceed as follows to decide whether $\bigcap_{i=1}^{n}\lang(M_i)=\emptyset$. First, we guess a condensed run $\kappa_i$ for each $M_i$ (as the length of each sequence is bounded by the number of states of $M_i$ and as $\PSPACE=\NPSPACE$, this is allowed). Next, we convert each $\kappa_i$ into an $\ECreg$-formula $\varphi_i(w)$ (as we shall see, this is possible in polynomial time). Finally, we combine these into the formula $\varphi(w)\df \bigland_{i=1}^{n}\varphi_i(w)$, and decide whether $\varphi$ is satisfiable (as mentioned above, this is possible in $\PSPACE$, see Diekert~\cite{die:mak}). As $\varphi$ is satisfiable if and only if there exists a $w\in \bigcap_{i=1}^{n}\lang(M_i)$, this proves the claim for $\fTMFArej$ (as mentioned above, we shall discuss the case of accepting failure states at the end of the proof).
	
	We now discuss how to construct $\varphi_{\kappa}$ from $\kappa$. Consider a word $w\in\lang(M)$, and the condensed run $\kappa$ of length $\ell$ for any accepting run of $M$ on $w$. Then $w$ can be decomposed into $w= u_0 v_1 u_1 \cdots v_{\ell} u_{\ell}$ with $u_i, v_i\in\Sigma^*$ such that $u_i$ is the word that $M$ consumes when processing from $q_i$ to $p_i$ (without using memory tranisitons), and $v_i$ is the word that is consumed when processing from $p_{i-1}$ to $q_i$ (using the memory transition $\tau_i$). This is illustrated by the following picture:
	\begin{center}
		\begin{tikzpicture}[on grid, node distance=16mm]
		
		\node (q0) {$q_0$};
		\node[right=of q0] (p0) {$p_0$};
		\node[right=of p0] (q1) {$q_1$};
		\node[right=of q1] (p1) {$p_1$};
		\node[right=of p1] (dots) {\hphantom{h}$\cdots$\hphantom{h}};
		\node[right=of dots] (plm1) {$p_{\ell-1}$};
		\node[right=of plm1] (ql) {$q_{\ell}$};
		\node[right=of ql] (pl) {$p_{\ell}$};		
		\path[->]
		(q0) edge[] node[above] {$u_0$} (p0)
		(p0) edge[] node[above] {$v_1$} node[below] {$\tau_1$} (q1)
		(q1) edge[] node[above] {$u_1$} (p1)			
		(p1) edge[] node[above] {$v_2$} node[below] {$\tau_2$} (dots)
		(dots) edge[] node[above] {$u_{\ell-1}$} (plm1)
		(plm1) edge[] node[above] {$v_{\ell}$} node[below] {$\tau_{\ell}$} (ql)
		(ql) edge[] node[above] {$u_{\ell}$} (pl)			
		;
		\end{tikzpicture}
	\end{center}
	Following this intuition, we define
	\begin{multline*}
		\varphi_{\kappa}(w)\df \exists u_0,\ldots,u_{\ell},v_1,\ldots,v_{\ell}\colon\\
		(w=u_0 v_1 u_1 \cdots v_{\ell} u_{\ell})  \land \bigland_{i=0}^{\ell}\const{M_{q_i,p_i}}{u_i} \land \bigland_{i\in T} (v_i = b_i) \land \bigland_{i\in R} (v_i = \eta_i),
	\end{multline*}
	where the following holds:
	\begin{enumerate}
		\item for $p,q\in Q$, $M_{q,p}$ is the $\eNFA$ that is obtained from $M$ by removing all memory transitions, using $q$ as starting and $p$ as only finite state,
		\item $T\subseteq  \{1,\ldots,\ell\}$ is the set of all $i$ such that $\tau_i$ is not a memory recall  (i.\,e., $\tau_i$ consumes a terminal symbol $b_i$),
		\item $R\subseteq \{1,\ldots,\ell\}$ is the set of all $i$ such that $\tau_i$ is a memory recall  (i.\,e., $\tau_i$ recalls memory $x_i$),
		\item 	for each $i\in R$, we define $\eta_i$ to describe  the current content for $x_i$ (as the memory actions are completely determined by the $\tau_j$, this is directly possible by checking the $\tau_j$ with $j\leq i$). There are three possible cases:
		\begin{enumerate} 
			\item if $x_i$ was never changed in a memory transition $\tau_j$ with $j < i$, then its value defaults to $\emptyword$, and we define $\eta_i \df \emptyword$,
			\item if $x_i$ was reset in a memory transition $\tau_j$ with  $j< i$, and not changed between $\tau_j$ and $\tau_i$, we define $\eta_i \df \emptyword$,
			\item otherwise, there exist well-defined $1\leq j < j' \leq i$ such that $x_i$ was opened in transition $\tau_j$ and closed in $\tau_{j'}$, and not changed between $\tau_j$ and $\tau_i$. Hence, we define $\eta_i \df v_j u_j \cdots u_{j'}$.
		\end{enumerate}
	\end{enumerate}
	The constraints $\const{M_{q_i,p_i}}{u_i}$ check that each $u_i$ conforms to a sequence of transitions that takes $M$ from $q_i$ to $p_i$, without using memory transitions. The conjunction over the $i\in T$ ensures that the memory actions that are not memory recalls consume terminals correctly, and the conjunction over the $i\in R$ ensures that each memory recall refers to the right part of the consumed input. Hence, for all $w\in\Sigma^*$, $w\models\varphi_{\kappa}$ if and only if $w\in\lang(M)$, and there is an accepting run of $M$ on $w$ that can be condensed to $\kappa$. It is easily seen that $\varphi_{\kappa}$ can be constructed in polynomial time (and, hence, its size is polynomial in $|Q|$): We need to construct $\ell+1\leq |Q|$ automata $M_{q_i,p_i}$, each of which has at most $|Q|$ states and at most $|Q|^2$ transitions. Determining each $\eta_i$ is also possible in time $O(|Q|)$, be checking the previous transitions~$\tau_j$.
	
	This concludes the proof for the case of $M_i\in \fTMFArej$. For the case where the failure state is accepting, we need to add a small extension: Instead of only considering condensed runs for runs that reach a final state, we also need to consider the runs that end in a memory recall that fails. Hence, we consider a condensed run $\kappa$ of length $\ell< |Q|$, such that $\tau_{\ell}$ is a memory recall transition for a variable $x\in\{1,\ldots,k\}$. Then we construct $\varphi_{\kappa}$ almost as explained above. The only difference is that we replace the equation $(v_{\ell} = \eta_{\ell})$ in the conjunction over the elements of $R$ with the following formula:
	\begin{multline*}
		\Bigl(\exists z\colon (\eta_{\ell} = v_{\ell}z)  \land \const{A}{v_{\ell}} \land \const{A}{z}    \Bigr) \\ \lor \Bigl(\biglor_{a\in\Sigma} \biglor_{b\in \Sigma\mdif\{a\}}\exists y,z_1,z_2\colon (v_{\ell} = yaz_1) \land (\eta_{\ell} = ybz_2)  \Bigr),
	\end{multline*}
	where $A$ is the minimal $\DFA$ for $\lang(A)=\Sigma^+$. The left part of this disjunctions describes all cases where $v_{\ell}$ is non-empty and a proper prefix of the content of $x$; the right part describes all cases where $v_{\ell}$ and the content of $x$ differ at at least one position (recall that the content of $x$ at this point of the run is represented by $\eta_{\ell}$). Hence, this formula describes all cases where this condensed run ends in a memory recall failure.
	
	This shows that the approach also works for $M\in\fTMFAacc$. Hence, given $M_1,\ldots,M_k\in\fTMFA$, we can decide whether the intersection of all $\lang(M_i)$ is empty by guessing a condensed run $\kappa_i$ for each $M_i$. If $M_i\in\fTMFArej$, we only need to consider runs that end in final states; if $M_i\in\fTMFAacc$, we also need to consider runs that end in memory recall failures. Either way, the length of $\kappa_i$ is bounded by the number of states in $M_i$. We then transform in polynomial time each $\kappa_i$ into an $\ECreg$-formula $\varphi_i$, and combine these into $\varphi \df \bigland_{i=1}^n \varphi_i$. Then $w\models \varphi$ if and only if $w\in\lang(M_i)$ for all $i$; and as satisfiability of $\ECreg$-formulas can be decided in $\PSPACE$, intersection emptiness is also decidable in $\PSPACE$. As we already showed hardness at the very beginning of this proof, we conclude that the problem is $\PSPACE$-complete.
\end{proof}
%\todomcom{Integrated the proposition.}
The unbounded size of $\Sigma$ comes from the  $\PSPACE$-hardness of the intersection emptiness problem for $\DRX$ by Martens~et~al.~\cite{mar:com}, which has the same requirement. Using the existential theory of concatenation for the upper bound might seem conceptually excessive, considering how complicated even the satisfiability problem for word equations is  (see Diekert~\cite{die:mak,die:mor} for a detailed and a recent survey).  But even intersection emptiness for $\fDRX$ is at least as hard as the satisfiability problem for word equations: 
\begin{proposition}\label{prop:wehard}
	Given a word equation $\eta$ over $\Sigma$, we can construct in linear time $\alpha_L,\alpha_R\in\fDRX$ over $\Sigma\cup\{\sepA\}$ such that $\lang(\alpha_L)\cap\lang(\alpha_R)\neq \emptyset$ holds if and only if $\eta$ has a solution.
\end{proposition}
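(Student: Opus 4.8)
The plan is to reduce the satisfiability of a word equation directly to the non-emptiness of an intersection of two vstar-free deterministic regex, reusing the ``scaffolding'' idea from the proof of Theorem~\ref{thm:intersectundec}: all variables are bound up front in a prefix delimited by the fresh separator $\sepA\notin\Sigma$, and each side of the equation is then reproduced via references. Formally, let $\eta=(\eta_L,\eta_R)$ with $\eta_L,\eta_R\in(\Sigma\cup\Xi)^*$, let $x_1,\ldots,x_m$ be the variables occurring in $\eta$, and for a pattern $\pi$ let $\widehat{\pi}$ denote the regex obtained from $\pi$ by replacing every occurrence of a variable $x_j$ with the reference $\rr{x_j}$ (leaving terminals unchanged). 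I would set
\begin{align*}
\alpha_L &\df \bind{x_1}{\Sigma^*}\sepA \cdots \bind{x_m}{\Sigma^*}\sepA \cdot \widehat{\eta_L},\\
\alpha_R &\df \bind{x_1}{\Sigma^*}\sepA \cdots \bind{x_m}{\Sigma^*}\sepA \cdot \widehat{\eta_R}.
\end{align*}
Treating the shorthand $\Sigma^*$ as atomic, both expressions have size linear in $|\eta|$, so the construction is clearly computable in linear time.

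The correctness of the reduction is the easy part. Every $w\in\lang(\alpha_L)$ has the shape $w=a_1\sepA\cdots a_m\sepA\cdot \deref(\text{suffix})$, where $a_j\in\Sigma^*$ is the content bound to $x_j$ and the suffix dereferences to $\sigma(\eta_L)$ for the substitution $\sigma$ with $\sigma(x_j)=a_j$; symmetrically for $\alpha_R$. Since $\sepA\notin\Sigma$ and all $a_j$ as well as $\sigma(\eta_L),\sigma(\eta_R)$ lie in $\Sigma^*$, each such word contains exactly $m$ occurrences of $\sepA$, so its block decomposition is unique. Hence $w\in\lang(\alpha_L)\cap\lang(\alpha_R)$ forces the first $m$ blocks of the two factorisations to coincide (i.e.\ both sides use the same $\sigma$) and the final block to satisfy $\sigma(\eta_L)=\sigma(\eta_R)$; conversely, every solution $\sigma$ yields such a $w$. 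Thus $\lang(\alpha_L)\cap\lang(\alpha_R)\neq\emptyset$ iff $\eta$ is solvable.

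The main work, and the step I expect to be the real obstacle, is verifying that $\alpha_L,\alpha_R\in\fDRX$. Vstar-freeness is immediate, since the only Kleene operator occurs inside the shorthand $\Sigma^*$, whose plussed sub-regex $\Sigma=\bigror_{a\in\Sigma}a$ contains neither variable bindings nor references. For determinism I would argue on the marked ref-language $\refl(\markpos{\alpha_L})$: the binding markers $\vop{x_j},\vcl{x_j}$, the separators $\sepA$, and the entire suffix $\widehat{\eta_L}$ occur in the same marked form in every ref-word, so any two distinct marked ref-words must first differ inside one of the blocks $\bind{x_j}{\Sigma^*}$. It then suffices to check the four conditions of Definition~\ref{def:drx} at these divergence points. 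Inside a block the usual determinism of $\Sigma^*$ applies (each letter carries a single marking, so Condition~\ref{def:drx:c1} cannot fire), the variable-marker context at any fixed position is identical across all ref-words (ruling out Conditions~\ref{def:drx:c3} and~\ref{def:drx:c4}), and the suffix contributes only shared symbols (ruling out Condition~\ref{def:drx:c2}). The one genuinely delicate point is the boundary between a block $\bind{x_j}{\Sigma^*}$ — which may bind the empty word — and the following $\sepA$: there a ref-word either continues with a $\Sigma$-letter or closes $x_j$ and reads $\sepA$, and since $\sepA\notin\Sigma$ these are distinct terminals, so no choice is hidden. The argument for $\alpha_R$ is identical, which completes the proposed proof.
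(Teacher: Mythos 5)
Your construction is exactly the paper's: bind all variables to $\Sigma^*$ in a $\sepA$-delimited prefix and append the reference-translated side of the equation, with the same correctness argument via the unique $\sepA$-block decomposition. The proposal is correct and matches the paper's proof, merely spelling out the determinism check (which the paper dispatches in one sentence) in more detail.
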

\begin{proof}
	Let $\eta=(\eta_L,\eta_R)$, and assume that $x_1,\ldots,x_k$ are the variables that occur in $\eta$. Let $\sepA$ be a new terminal letter, $\sepA\notin\Sigma$, and define
	\begin{align*}
	\alpha_L \df \bind{x_1}{\Sigma^*} \sepA \bind{x_2}{\Sigma^*}\sepA \cdots \bind{x_k}{\Sigma^*}\sepA \beta_L,\\
	\alpha_R \df \bind{x_1}{\Sigma^*} \sepA \bind{x_2}{\Sigma^*}\sepA \cdots \bind{x_k}{\Sigma^*}\sepA \beta_R,
	\end{align*}
	where $\beta_L$ and $\beta_R$ are obtained from $\eta_L$ and $\eta_R$ (respectively) by replacing each occurrence of a variable $x_i$ with the reference $\rr{x_i}$. As $\sepA\notin\Sigma$, and as $\beta_L$ and $\beta_R$ consist only of a chain of terminals and variable references, $\alpha_L,\alpha_R\in\DRX$. Furthermore, $w\in \lang(\alpha_L)\cap \lang(\alpha_R)$ holds if and only if there is a homomorphism  $\sigma\colon (\Sigma\cup\Xi)^* \to \Sigma^*$ with $\sigma(a)=a$ for all $a\in\Sigma$ such that 
	\begin{align*}
	w &= \sigma(x_1)\sepA \sigma(x_2)\sepA \cdots \sigma(x_k)\sepA \sigma(\eta_L)\\
	&= \sigma(x_1)\sepA \sigma(x_2)\sepA \cdots \sigma(x_k)\sepA \sigma(\eta_R),
	\end{align*} 
	which holds if and only if there is a solution $\sigma$ of $\eta$. 
\end{proof}

We now combine the proofs of Theorems~\ref{closureComplementTheorem} and~\ref{thm:intersect}, and observe:
\begin{theorem}\label{thm:inc}
	Given  $M_1,M_2\in\fDTMFA$, $\lang(M_1)\subseteq \lang(M_2)$ can be decided in $\PSPACE$.
\end{theorem}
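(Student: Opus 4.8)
The plan is to reduce the inclusion problem to the intersection emptiness problem of Theorem~\ref{thm:intersect} via complementation. Since $\lang(M_1)\subseteq\lang(M_2)$ holds if and only if $\lang(M_1)\cap\overline{\lang(M_2)}=\emptyset$, I would first apply the construction behind Theorem~\ref{closureComplementTheorem} to $M_2$, obtaining an $\overline{M_2}\in\DTMFA$ with $\lang(\overline{M_2})=\overline{\lang(M_2)}$, and then decide $\lang(M_1)\cap\lang(\overline{M_2})=\emptyset$ with the $\ECreg$-based algorithm of Theorem~\ref{thm:intersect}. Two points need checking for this to work: that $\overline{M_2}$ is again memory-cycle-free (so that Theorem~\ref{thm:intersect} applies), and that the whole procedure stays in $\PSPACE$ despite the fact that complementation is exponential in the number $k$ of memories.

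For the first point, I would argue that memory-cycle-freeness is preserved by every step of the complement construction. Making $M_2$ complete only adds a trap state whose sole transitions are $\unchanged$-loops; the emptiness-tracking step merely refines each state of $M_2$ into the $2^{2k}$ annotated copies $[q,(r_1,c_1),\ldots,(r_k,c_k)]$; converting recalls of empty memories into $\varepsilon$-transitions and removing them (Proposition~\ref{removeEpsilonTransitionsProposition}) composes transitions but connects no states that were not already joined by a path; and toggling acceptance changes no transition. In each case, projecting a run (or a hypothetical cycle carrying a memory transition) of $\overline{M_2}$ back to its underlying $M_2$-state component yields a run (or a cycle with a memory transition) of $M_2$, contradicting that $M_2$ is memory-cycle-free. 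Hence $\overline{M_2}\in\fDTMFA\subseteq\fTMFA$.

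The main obstacle is the second point: $\overline{M_2}$ has $2^{\Theta(k)}\cdot\mathrm{poly}(|M_2|)$ states, so invoking Theorem~\ref{thm:intersect} as a black box would only yield $\mathsf{EXPSPACE}$. I would instead run the algorithm of Theorem~\ref{thm:intersect} on $\overline{M_2}$ \emph{lazily}, never materialising it, and exploit three facts. First, every state $[q,(r_1,c_1),\ldots,(r_k,c_k)]$ of $\overline{M_2}$ has a description of size $O(|M_2|+k)$, so individual states are cheap to handle. Second, projecting onto the $M_2$-component shows that the memory transitions of $\overline{M_2}$ are exactly bridges in the SCC-DAG of $M_2$; hence any run of $\overline{M_2}$ uses fewer than $|Q_2|$ memory transitions, so a condensed run has polynomial length and can be guessed in $\NPSPACE=\PSPACE$. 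Third, and most delicately, the regular constraints $\const{M_{q_i,p_i}}{u_i}$ in the formula $\varphi_{\kappa}$ are of polynomial size: once the condensed run fixes the memory statuses of $q_i$ and $p_i$, the non-memory segment it constrains keeps all statuses frozen, and each emptiness flag $c_j$ can flip (from $\varepsilon$ to $\overline{\varepsilon}$) at most once, namely after the first consumed symbol; thus only $O(|Q_2|)$ annotated states are reachable inside the segment, and $M_{q_i,p_i}$ can be built explicitly in polynomial space.

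With these observations, the condensed runs $\kappa_1$ for $M_1$ and $\kappa_2$ for $\overline{M_2}$ are both of polynomial length, the resulting $\ECreg$-formula $\varphi\df\varphi_{\kappa_1}\land\varphi_{\kappa_2}$ is of polynomial size, and its satisfiability is decidable in $\PSPACE$ by Diekert~\cite{die:mak}. Since $\varphi$ is satisfiable exactly when $\lang(M_1)\cap\lang(\overline{M_2})\neq\emptyset$, i.e.\ exactly when $\lang(M_1)\not\subseteq\lang(M_2)$, this decides the inclusion in $\PSPACE$. I expect the polynomial bound on the constraint automata $M_{q_i,p_i}$ to be the technical heart of the argument, as it is precisely what prevents the exponential complementation blowup from propagating into the space bound.
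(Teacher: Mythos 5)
Your proposal is correct, and it reaches the right destination, but it takes a genuinely different route from the paper's. The paper explicitly considers and \emph{rejects} the path you take -- composing Theorem~\ref{closureComplementTheorem} with Theorem~\ref{thm:intersect} -- precisely because of the $2^{2k}$ state blowup, and instead never constructs a complement automaton at all: it guesses a condensed run $\kappa$ of $M_2$ itself that ends in a \emph{non-accepting} state $p_{\ell}$, together with a function $f$ assigning to each memory a guess from $\{M_{\emptyword},M_{\Sigma^+}\}$ for whether its content is empty at the end of the run; it then checks outside the formula that no accepting state is reachable from $p_{\ell}$ using only $\emptyword$-transitions and recalls of memories guessed empty, and verifies the guess $f$ inside the formula by adding a conjunct $\exists y_1,\ldots,y_k\colon \bigland_{x}\bigl((y_x=\hat{\eta}_x)\land\const{f(x)}{y_x}\bigr)$. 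Your emptiness annotations $[q,(r_1,c_1),\ldots,(r_k,c_k)]$ play exactly the role of the paper's guessed $f$, but distributed over the whole run rather than checked once at the end, which is why you then have to prove the three structural facts (preservation of memory-cycle-freeness under the complement construction, at most $|Q_2|$ memory transitions per run via projection onto the $M_2$-component, and at most two annotation vectors per memory-free segment so that each $M_{q_i,p_i}$ stays polynomial) to keep the lazy simulation in $\PSPACE$; these facts are all correct, and you rightly identify the bound on the segment automata as the crux. What your route buys is modularity -- Theorems~\ref{closureComplementTheorem} and~\ref{thm:intersect} are reused almost as black boxes -- at the price of a heavier analysis of an exponential object that is never materialised; the paper's route buys a shorter argument by characterising rejection of $M_2$ directly in terms of a condensed run plus a single end-of-run emptiness guess.
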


%\subsection{Proof of Theorem~\ref{thm:inc}}
\begin{proof}
	We use the simple fact that $L_1\subseteq L_2$ holds if and only if $L_1\cap (\Sigma^* \mdif L_2)=\emptyset$. Due to Proposition~\ref{makeCompleteProposition}, we can assume that $M_2$ is  complete (see Section~\ref{sec:DTMFA}).
	
	While we could use Theorem~\ref{thm:intersect} together with Theorem~\ref{closureComplementTheorem} to show that the inclusion problem is decidable, the proof of Theorem~\ref{closureComplementTheorem} uses a construction that can lead to an exponential blowup in the number of states. The reason for this is that even in a complete $\DTMFA$, we cannot simply toggle the acceptance behaviour of states, as the automaton might continue its computation by recalling memories that contain $\emptyword$.
	
	But as we shall see, it is possible to adapt the proof of Theorem~\ref{thm:intersect} to handle this as well.
	First, note that we do not need to consider how to handle memory recall failures, as this is already part of the proof (we can simply add or remove the modifications that we discussed for $\DTMFAacc$). The first modification is that the algorithm now guesses a condensed run $\kappa$ that ends in an state $p_{\ell}$ that is not accepting. But to ensure that we can treat this state as an accepting state, we need to ensure that no accepting state can be reached from it. Instead of putting this into the formula, we make an additional guess: For each variable $x\in\{1,\ldots,k\}$, we also guess a language $L_x$ such that $L_x = \{\emptyword\}$ or $L_x=\Sigma^+$. Formally, in addition to $\kappa$ and $\ell$, the algorithm guesses a function $f\colon \{1,\ldots,k\}\to \{M_{\emptyword}, M_{\Sigma^+} \}$, where $M_{\emptyword}$ and $M_{\Sigma^+}$ are $\NFA$ with $\lang(M_{\emptyword})=\{\emptyword\}$ and $\lang(M_{\Sigma+})=\Sigma^+$.
	
	It then checks whether it is possible to reach an accepting state from $q$, using only $\emptyword$-transitions and memory recalls for variables $x$ with $f(x)=M_{\emptyword}$. If that is the case, the algorithm rejects the guess. Otherwise, it constructs a formula $\varphi_{\kappa,f}$, which is obtained from $\varphi_{\kappa}$ by adding the following formula to the conjunction:
	$$
	\exists y_1,\ldots,y_k\colon \bigland_{x\in\{1,\ldots,k\}} \bigl((y_x = \hat{\eta}_x)\land \const{f(x)}{y_i}\bigr),
	$$
	where each $\hat{\eta}_{x}$ is chosen to represent the content of the variable $x$ in $p_{\ell}$, like the $\eta_j$ in the proof of Theorem~\ref{thm:intersect}. Hence, this formula checks whether the contents of the variables when reaching the state $p_{\ell}$ conform to the guessed function $f$.
	
	Hence, for every word that would be rejected by $M_2$, we can guess appropriate $\ell$, $\kappa$, and $f$, which allows us to decide the intersection emptiness of $\lang(M_1)$ and $(\Sigma^*\mdif \lang(M_2))$ in $\PSPACE$ as in the proof of Theorem~\ref{thm:intersect}. Hence, inclusion is decidable in $\PSPACE$.
\end{proof}

Obviously, this implies that equivalence for $\fDTMFA$ is decidable in $\PSPACE$, and, furthermore, this also holds for $\fDRX$, which is an interesting contrast to non-deterministic $\fRX$: As shown by Freydenberger~\cite{fre:ext}, equivalence (and, hence, inclusion and minimization) are undecidable for $\fRX$ (while~\cite{fre:ext} does not explicitly mention the concept, the regex in that proof are vstar-free, as discussed in~\cite{fre:doc}). 
Hence, Theorem~\ref{thm:inc} also yields a minimization algorithm for $\fDRX$ and $\fDTMFA$ that works in $\PSPACE$ (enumerate all smaller candidates and check equivalence). We leave open whether this is optimal, but observe that even for $\DREG$, minimization is $\NP$-complete, see Niewerth~\cite{nie:thesis}. 

\section{A Relaxation of Determinism}\label{sec:relax}

%Next, we discuss a potential extension (or rather relaxation) of determinism. 
One could argue that Definition~\ref{def:drx} is overly restrictive; e.\,g., consider  $\alpha\df \bind{x}{\ta^+}\bind{y}{\tb^+}\tc (\rr{x}\ror\rr{y})$. Then $\alpha$ is not deterministic; but as the contents of $x$ and $y$ always start with $\ta$ or $\tb$ (respectively), deterministic choices between $\rr{x}$ and $\rr{y}$ are possible by looking at the current letter of the input word. Analogous observations can be made for $\TMFA$. These observations lead to the following definition of $\ell$-determinism.\par
Let $\ell \geq 1$ and let $u, v \in \Sigma^*$. The words $u$ and $v$ are \emph{$\ell$-prefix equivalent}, denoted by $u \prefequi_{\ell} v$, if $u$ is a prefix of $v$, $v$ is a prefix of $u$ or their longest common prefix has a size of at least $\ell$. By $u \nprefequi_{\ell} v$, we denote that $u$ and $v$ are not $\ell$-prefix equivalent. Note that in order to check for two words $u$ and $v$ whether or not $u \prefequi_k v$, it is sufficient to compare the first $\min\{k, |u|, |v|\}$ symbols of $u$ and $v$. \par
Based on this, we define the notion of \emph{$\ell$-deterministic} $\TMFA$ (for short: $\ellDTMFA$) as a relaxation of the criteria of $\DTMFA$: In contrast to the latter, an $\ell$-deterministic $M \in \TMFA(k)$ can have states $q$ with multiple outgoing memory recall-transitions, as long as 

\begin{enumerate}
	\item these recall distinct memories, and
	\item for every reachable configuration $(q, v, (u_1, r_1), \ldots, (u_k, r_k))$ of $M$, $u_i \nprefequi_{\ell} u_j$ holds for all $i\neq j$ that appear on the recall transitions of $q$.
\end{enumerate}
 For technical reasons, we define $0$-$\DTMFA$ to coincide with $\DTMFA$. \par
%More precisely, we define the notion of \emph{$\ell$-deterministic} $\TMFA$ as a relaxation of the criteria of $\DTMFA$: In contrast to the latter, an $\ell$-deterministic $\TMFA$ can have states $q$ with multiple memory recall-transitions, as long as these recall distinct memories, and if $q$ is reached in some computation, then for each pair of these recalled memories, the contents differ in the first $\ell$ positions. 
Next, we note that this relaxation from determinism to $\ell$-determinism does not increase the expressive power of $\DTMFA$.
%(intuitively, storing the length $\ell$ prefixes of the memory contents allows  making $\ell$-deterministic memory recall transitions deterministic):
\begin{proposition}\label{prop:ldet}
%	Let $\ell\geq 1$. For every $\ell$-deterministic $M \in \DTMFA$, there is an $M' \in \DTMFA$ with $\lang(M) = \lang(M')$.
$\DTMFA = \bigcup_{\ell \geq 0}\ellDTMFA$.
\end{proposition}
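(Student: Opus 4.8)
The plan is to read the claimed identity at the level of accepted languages (i.e.\ expressive power, as in the surrounding discussion). One inclusion is immediate: since $0$-$\DTMFA$ is defined to coincide with $\DTMFA$, the class $\DTMFA$ occurs as a member of the union, so $\langcl(\DTMFA)\subseteq\bigcup_{\ell\geq 0}\langcl(\ellDTMFA)$ holds trivially. The whole content therefore lies in the converse: I would show that for every fixed $\ell$ and every $M\in\ellDTMFA$ there is an equivalent $M'\in\DTMFA$. Note that the \emph{only} source of nondeterminism in an $\ellDTMFA$ is a state $q$ carrying several outgoing memory-recall transitions; by the definition of $\ell$-determinism these recall pairwise distinct memories $i_1,\dots,i_t$, and in every reachable configuration their contents $u_{i_1},\dots,u_{i_t}$ satisfy $u_{i_a}\nprefequi_\ell u_{i_b}$ for $a\neq b$. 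Hence no $u_{i_a}$ is a prefix of another and any two of them already disagree on some position $p\le\ell$; this is exactly the information that lets a deterministic machine choose the correct recall after a bounded lookahead.

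The first ingredient is to make this lookahead available to the finite-state control. Since $\ell$ and $k$ are fixed constants, I would enlarge the state set so that it records, for each memory $i$, the first $\min(\ell,|u_i|)$ symbols of its current content; this is a constant blow-up and is maintained exactly like the empty/non-empty bookkeeping in the proof of Theorem~\ref{closureComplementTheorem} (on $\open$ the recorded prefix is reset, consumed symbols are appended up to length $\ell$, and on a recall of $j$ the known prefix of $u_j$ is appended to every open memory that still has fewer than $\ell$ symbols). In parallel I would split every memory $i$ at position $\ell$ into a \emph{head}, kept in the control as above, and a \emph{tail} memory holding $u_i[\ell+1\,..]$; this fixed-position split is the analogue of the memory factorisation in the proof of Theorem~\ref{thm:TMFAisMFA}, and the instructions $\open,\close,\reset,\unchanged$ are propagated to head and tail in the obvious way.

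With this data available, I would replace each multi-recall state $q$ by a deterministic decision gadget. Entering with candidate memories $i_1,\dots,i_t$ whose heads $P_1,\dots,P_t$ are known from the control, the gadget consumes the remaining input one symbol at a time, branching on the symbol read and discarding every candidate whose head disagrees with the symbols consumed so far. Because the $u_{i_a}$ pairwise diverge at a position $\le\ell$ that lies within both contents, after at most $\ell$ consumed symbols at most one candidate $i_s$ survives; if none survives the gadget moves to $\trapstate$, which is correct, since then every genuine recall would fail. For the surviving $i_s$, the $d\le\ell$ consumed symbols equal $P_s[1\,..d]$; the gadget then consumes the remaining known head symbols $P_s[d+1\,..\min(\ell,|u_{i_s}|)]$ by ordinary $\Sigma$-transitions (a mismatch sending it to $\trapstate$) and, if $|u_{i_s}|>\ell$, recalls the tail memory of $i_s$, before finally applying the memory instructions of the original $\delta(q,i_s)$ and moving to its target. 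Every resulting state either branches on distinct input letters, is a single $\Sigma$-transition, or is a single tail recall, so the automaton meets the $\DTMFA$ conditions of Section~\ref{sec:DTMFA}.

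Correctness follows because $\ell$-determinism guarantees that at most one of the original recalls can ever succeed on a given input: if $u_{i_s}$ is a prefix of the remaining input, every other $u_{i_b}$ disagrees with it, hence with the input, within $\ell$ positions, so its recall fails. The gadget commits precisely to this unique $i_s$ (or to $\trapstate$ when all fail), so the accepted language is unchanged; a straightforward induction on run length makes this precise. I expect the main obstacle to be the bookkeeping that couples the decision phase with the actual recall: once the gadget has consumed a prefix of $u_{i_s}$ it can no longer fire the native recall, which would re-consume that prefix. This is exactly why the head/tail split is needed --- the already-consumed prefix is matched against the control-stored head, and only the genuinely unread suffix is recalled from the tail --- so the bulk of the careful argument is verifying that this split, together with the prefix tracking, is maintained correctly through $\open$, $\close$, $\reset$, content growth, and nested recalls.
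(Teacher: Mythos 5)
Your proposal is correct and follows essentially the same route as the paper's proof: both split each memory into a length-$\le\ell$ ``head'' kept in the finite control (the paper's state-memories) and a tail kept in the actual memory, and both resolve a multi-recall state by matching the input symbol-by-symbol against the stored heads, using $\nprefequi_{\ell}$ to guarantee that at most one candidate survives before any head is fully matched. The bookkeeping issues you flag at the end (maintaining the head/tail invariant through $\open$, head saturation mid-recall, and nested recalls) are exactly the details the paper's proof spells out.
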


%\subsection{Proof of Proposition~\ref{prop:ldet}}

\begin{proof}
The inclusion $\DTMFA \subseteq \bigcup_{\ell \geq 0}\ellDTMFA$ holds by definition. In order to show the converse inclusion, let $M \in \ellDTMFA(k)$, for some $\ell \geq 1$. We transform $M$ into an equivalent $\DTMFA$ $M'$ as follows. We implement $k$ auxiliary memories (called \emph{state-memories} in the following) in the finite state control, which can store words of length at most $\ell$, i.\,e., we replace every state $q$ by states $[q, m_1, m_2, \ldots, m_k]$, where $m_i \in \Sigma^*$, $|m_i| \leq \ell$, $1 \leq i \leq k$. The general idea is that $M'$ simulates $M$ in such a way that whenever $M$ reaches a configuration $(q, v, (u_1, r_1), \ldots, (u_k, r_k))$, then $M'$ reaches the configuration with state $[q, m_1, m_2, \ldots, m_k]$ and memory configurations $(u'_i, r_i)$, $1 \leq i \leq k$, such that, for every $i$, $1 \leq i \leq k$, $u_i = m_i u'_i$ and if $u'_i \neq \varepsilon$, then $|m_i| = k$. This can be achieved as follows.\par
	Initially, all memories and state-memories are empty and closed (to this end, the finite state control contains a flag for each state-memory, indicating whether or not it is open). If $M$ recalls memory $i$, then $M'$ consumes the content of the state-memory $i$ from the input, symbol by symbol, and then applies a memory recall instruction on memory $i$ (note that memory $i$ might be empty). If the consumption of the content of state-memory $i$ fails, i.\,e., it is not a prefix of the remaining input, then we move to the state $\trapstate$.\par
	Whenever $M$ opens memory $i$, $M'$ empties the state-memory $i$ and marks it as open, but does not yet open memory $i$. The scanned input is now stored as follows. If a single symbol is read and the state-memory currently stores a word of length at most $\ell - 1$, then this symbol is appended to the state-memory (furthermore, if the new symbol exhausts the state-memory's capacity, then memory $i$ is opened), and if the state-memory already stores a word of length $\ell$, then the symbol is automatically stored in the open memory $i$. \par
	On the other hand, if $M$ consumes a prefix $u$ of the input by a memory recall instruction for some memory $j$, i.\,e., in $M'$, the state-memory $j$ stores some $u'$ and memory $j$ stores some $u''$ with $u = u' u''$, then this is simulated by $M'$ as follows. We start consuming $u'$ symbol by symbol and store every symbol in state-memory $i$. If this is possible without exhausting the capacity of state-memory $i$ (i.\,e., state-memory $i$ now stores a word of length at most $\ell-1$), then $|u'| < \ell$, which implies $u'' = \varepsilon$ and we are done. On the other hand, if the consumption of $u'$ exhausts the state-memories capacity, i.\,e., $u' = v' v''$, where $v'$ is the largest prefix that fits in state-memory $i$ (note that $v' = u$ is possible), then we open memory $i$ and fill it with $v'' u''$ by first consuming $v''$ symbol by symbol and then consulting memory $j$.\par
	We implement the modifications from above in such a way that whenever in $M$ there is a nondeterministic choice of the form that, for some state $q$ and several $i_1, i_2, \ldots, i_s$, $1 \leq i_j \leq k$, $1 \leq j \leq s$, each $\delta(q, i_j)$, $1 \leq j \leq s$, is defined (note that, since $M$ is $\ell$-deterministic, these are the only possible non-deterministic choices), then this is implemented in $M'$ by $s$ many $\varepsilon$-transitions from the states $[q, m_1, m_2, \ldots, m_k]$. Since the modifications from above do not require any nondeterminism, there is a one-to-one correspondence between the nondeterministic choices of $M$ and $M'$. We further note that, for every $j$, $1 \leq j \leq s$, the $\varepsilon$-transition for consulting memory $i_j$ is followed by a path of states, in which the content of state-memory $i_j$ is consumed symbol by symbol, followed by a recall of memory $i_j$ (and, simultaneously, for every open memory $i$, the state-memory is filled with the consumed symbols until it is full and then memory $i$ is opened). The memory recall performed by this path of states either fails, which can happen in the phase where the content of the state-memory is matched with the input or in the actual recall of the memory, or it successfully simulates the memory recall. We shall now describe how the nondeterministic choices of $M'$ can be removed.\par
	Instead of nondeterministically choosing one of these paths, we carry them out in parallel as follows. We start consuming a prefix of the remaining input and compare it, symbol by symbol, with the contents of the state-memories $i_j$, $1 \leq j \leq s$. Whenever the next input symbol does not match the next symbol of a state-memory $i_j$, we mark this memory as \emph{inactive} and ignore it from now on. If all memories are inactive, we change to state $\trapstate$ and if there is exactly one active memory $i_j$ left, we conclude the consultation of this memory (i.\,e., we match the remaining part of the state-memory $i_j$ with the input and then consult memory $i_j$). In particular, we note that if a state-memory has been completely and successfully matched with a prefix of the input and there is a another memory still active, then the contents of these memories are $\ell$-prefix equivalent, which is a contradiction to the $\ell$-determinism of $M$. Consequently, we encounter the situation that either all memories are inactive or that exactly one active one is left, before a state-memory is completely matched with a prefix of the input. Obviously, this procedure is completely deterministic and it results in an equivalent automaton. 
\end{proof}

%For the sake of the argument, let $\alpha\in\RX$ be $\ell$-deterministic if and only if $\glush(\alpha)$ is. 

Next, we consider the problem to decide whether a given $\TMFA$ is $\ell$-deterministic for some given $\ell \geq 1$, and we shall see that this is a hard problem (even for $\fTMFA$). Moreover, extending the notion of $\ell$-determinism to $\RX$ by defining $\alpha\in\RX$ to be $\ell$-deterministic if and only if $\glush(\alpha)$ is, we shall see that deciding $\ell$-determinism is also hard for $\RX$ (and even $\fRX$).

%Moreover, extending the notion of $\ell$-determinism to $\RX$ by defining $\alpha\in\RX$ to be $\ell$-deterministic if and only if $\glush(\alpha)$, 
%
%(note that one of the main results of this paper is that determinism of a  )

\begin{proposition}\label{prop:ldetcomplexity}
	For every $\ell\geq 1$, deciding whether a $\TMFA$ is $\ell$-deterministic is $\PSPACE$-complete. The problem is $\coNP$-complete if the input is restricted to $\fTMFA$. These lower bounds hold even if we restrict the input to $\RX$ and $\fRX$, respectively.
\end{proposition}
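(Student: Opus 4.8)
The plan is to treat the upper and lower bounds separately, and throughout to work with the complement of $\ell$-determinism, namely the existence of a reachable configuration violating one of the two conditions of the definition of $\ellDTMFA$. Condition~(1) (two recall transitions of one state recalling the same memory) and the requirement that $M$ otherwise be as deterministic as a $\DTMFA$ are purely syntactic and checkable in polynomial time; so the interesting part is condition~(2): is there a reachable configuration $(q,v,(u_1,r_1),\ldots,(u_k,r_k))$ and indices $i\neq j$ on recall transitions of $q$ with $u_i \prefequi_{\ell} u_j$?

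For the upper bounds I would first note that whether $u_i \prefequi_{\ell} u_j$ holds depends only on a constant-size \emph{truncated descriptor} of each memory: its status, whether it is empty, its prefix of length $\min(\ell,\cdot)$, and a flag recording whether its length exceeds $\ell$. Since $\ell$ is a fixed constant, this descriptor has constant size, and one checks that it transforms correctly under every transition (including recalls, which append a truncated content to the open memories, and where the overflow flag suffices to decide the new prefixes). Hence the reachable truncated configurations form a graph of exponential size but polynomial description, and deciding reachability of a configuration witnessing a violation of~(2) is implicit-graph reachability, which lies in $\NPSPACE=\PSPACE$; as $\PSPACE$ is closed under complement, $\ell$-determinism is in $\PSPACE$. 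For $\fTMFA$ I would instead invoke the condensed-run analysis from the proof of Theorem~\ref{thm:intersect}: a memory-cycle-free automaton uses at most $|Q|-1$ memory transitions in any run, so the truncated contents of the two memories at the witnessing state are fixed by a sequence of at most $|Q|-1$ memory events. One can therefore guess a condensed run together with the relevant truncated contents as a polynomial certificate and verify it in polynomial time, placing the violation problem in $\NP$ and hence $\ell$-determinism in $\coNP$.

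For the lower bounds I would reduce Turing-machine acceptance to the existence of a violation of~(2), exploiting that such a violation is an existential, reachability-style property. For $\PSPACE$-hardness I start from a polynomial-space deterministic machine $T$ and input $x$, and for $\coNP$-hardness from a polynomial-time nondeterministic machine; in both cases the plan is to build a regex that \emph{deterministically verifies} an input encoding a full accepting computation. The $p(|x|)$ tape cells are represented by $p(|x|)$ unit memories $c_1,\ldots,c_{p(|x|)}$, each storing one symbol, while the single head position and single control state of $T$ are tracked in the finite control (so there is no product blow-up). Each simulated step reads the next configuration and checks it: for every cell outside the constant-size window around the head the new symbol must equal the old one, enforced by a single recall of $c_j$ (an equality test, compatible with the recall mechanism); for the $O(1)$ head-window cells the new symbol is recomputed from the transition rule using the control state and verified against the input. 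Exponentially long computations are obtained by wrapping this step gadget in a Kleene plus enclosing the memory operations, which is exactly what general $\RX$ permits; for the vstar-free case I would instead unroll polynomially many steps, so that no variable operation lies under a plus and the regex stays in $\fRX$.

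The trigger for non-$\ell$-determinism is planted at the end: once the control certifies that an accepting state is reached, the regex continues with a gadget $\ldots(\rr{x}\ror\rr{y})$ in which $x$ and $y$ were bound to the same nonempty word, so their contents are $\ell$-prefix equivalent for every $\ell\geq 1$. Because every other choice in the construction is driven by a terminal read from the input and every other recall is the unique transition of its state, the resulting $\glush$-automaton is deterministic except at this one state, which is therefore the sole possible source of a violation of~(2). Consequently the constructed regex is non-$\ell$-deterministic precisely when $T$ accepts, and since $\PSPACE$ and $\coNP$ are closed under the required complementation, this yields $\PSPACE$-hardness for $\RX$ and $\coNP$-hardness for $\fRX$, matching the upper bounds. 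I expect the main obstacle to be exactly this determinism engineering: arranging the tape-preservation and head-update checks so that the verifier accepts only genuine computations while staying deterministic everywhere except at the planted recall branch, and confirming that the Kleene-plus-versus-unrolling distinction cleanly separates the $\PSPACE$ case from the $\coNP$ case.
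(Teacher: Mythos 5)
Your upper-bound arguments are essentially the paper's: the general case is decided by tracking, along a guessed run, a constant-size descriptor (length-$\ell$ prefix plus status/overflow information) per memory, giving $\NPSPACE=\PSPACE$ for the violation and hence $\PSPACE$ for $\ell$-determinism; the memory-cycle-free case uses condensed runs exactly as in the proof of Theorem~\ref{thm:intersect} to get an $\NP$ certificate for a violation. Your trigger gadget $\ldots(\rr{z_1}\ror\rr{z_2})$ with $\ell$-prefix-equivalent contents exactly when the source instance is positive is also the same device the paper uses.

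The lower bounds, however, are where your proposal has a genuine gap. You reduce from Turing-machine acceptance and claim that ``the single head position and single control state of $T$ are tracked in the finite control (so there is no product blow-up).'' A regex has no finite control in this sense: the body of a Kleene plus is a single fixed subexpression, and in the Glushkov automaton every iteration re-enters the same set of positions. If the step gadget selects the head position by a disjunction $\bigror_h(\tc_h\cdot\alpha_{\mathsf{step},h})$, then after completing $\alpha_{\mathsf{step},h}$ the loop edges go back to \emph{all} branches $\tc_{h'}$, so nothing forces $h'=h\pm 1$ --- the input may teleport the head, and the constructed regex then witnesses ``accepting configurations'' that do not correspond to computations of $T$, breaking the correctness of the reduction (not merely its determinism). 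Repairing this requires either duplicating the step gadget per head position across steps (exponential blow-up, and fatal for the vstar-free unrolled version as well, since sharing sub-gadgets in a tree-shaped regex forces duplication) or encoding the head position in a memory and verifying its unit increment/decrement against the announced position, which is a substantial additional construction you have not supplied. The paper avoids this entirely by choosing source problems whose ``configurations'' have no positional structure: $\DFA$ intersection emptiness (a tuple of current states, one per automaton, each encoded as an $\emptyword$-versus-$\ta$ flag in a memory, with transitions announced by terminals and verified by recalls) for the $\PSPACE$ case, and 3-SAT (a one-shot assignment gadget) for the $\coNP$ case. As written, your hardness argument does not go through without resolving the head-tracking issue.
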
 

%\subsection{Proof of Proposition~\ref{prop:ldetcomplexity}}\label{app:ldetcomplexity}
\begin{proof}
	Before we proceed to the actual proof, we  briefly discuss why it is possible to treat non-deterministic regex as an input for the problem, considering that the number of transitions in $\glush(\alpha)$ can be exponential (in the number of variables of $\alpha$). While this is true in general, the non-deterministic regex that have these blowups are also not $\ell$-deterministic: As soon as an $M\in\TMFA$ has more than one transition from one state to another, it is not $\ell$-deterministic. Hence, we can use an algorithm that decides $\ell$-determinism for $\TMFA$ to decide $\ell$-determinism for $\RX$ by converting every input $\alpha\in\RX$ into $\glush(\alpha)$ according to the proof of Theorem~\ref{thm:glushkov}, but aborting if $\ograph{\alpha}$ contains nodes $u$ and $v$ with at least two edges from $u$ and $v$ (if these occur, $\alpha$ can be rejected as not $\ell$-deterministic, regardless which $\ell$ was chosen).
	
	\subparagraph*{Upper bounds:}
	In order to  prove the upper bounds, let $M\in\TMFA$ and $\ell\geq 1$. Assume that $M$ is not deterministic, but only violates the criteria by having states with multiple outgoing memory recall transitions for different variables (if any other violation of the criteria occurs, $M$ cannot be $\ell$-deterministic). Now, $M$ is not $\ell$-deterministic if and only if there exists a state $q$  in $M$ that has outgoing memory recall transitions for two different variables $x$ and $y$, and there is a run of $M$ that reaches $q$ while $x$ and $y$ contain words $w_x$ and $w_y$ (respectively) such that $w_x \prefequi_{\ell} w_y$. We show this property can be decided in $\PSPACE$ in general, and in $\NP$ if $M$ is memory-cycle-free. The claim of the Proposition follows then directly if $M$ is memory-cycle-free, and from the closure of $\PSPACE$ under complementation in the general case.
	
	We first consider the general case: The $\PSPACE$ algorithm guesses a state $q$ that has two outgoing memory recall transitions for variables $x$ and $y$. It then guesses its way from $q_0$ through the automaton, while storing for each variable $z$ of $M$ (not just $x$ and $y$) the first $\ell$ letters of the stored word $w_z$ (in order to determine these for a memory $z$, it suffices to know all terminal edges that are traversed while $z$ is open, and at most $\ell$ letters of each memory $z'$ that is referenced while $z$ is open). If the algorithm reaches $q$ while $w_x\prefequi_{\ell} w_y$, the algorithm correctly identifies $M$ as not $\ell$-deterministic. Hence, this can be decided in $\PSPACE$.
	
	For the memory-cycle-free case, we combine this with the condensed runs from the proof of Theorem~\ref{thm:intersect}. The $\NP$-algorithm first guesses a condensed run $\kappa$ of $M$ that ends at $q$ with outgoing memory recall transitions for $x$ and $y$. In order to determine the first $\ell$ letters of each variable, it then guesses a prefix $u_i$ of length at most $\ell$ for each transition from a $q_i$ to a $p_i$, and  checks whether there is a word in $\lang(M_{q_i,p_i})$ that has $u_i$ as a prefix (where the $\eNFA$ $M_{q_i,p_i}$ is obtained as in the proof of Theorem~\ref{thm:intersect}: Remove all memory transitions from $M$, and take $q_i$ as starting and $p_i$ as only accepting state). It then computes the first $\ell$ letters of $w_x$ and $w_y$, which suffice to determine whether $w_x\prefequi_{\ell} w_y$. If $w_x\prefequi_{\ell} w_y$, the algorithm correctly identifies $M$ as not $\ell$-deterministic. Hence, for memory-cycle-free $\TMFA$, the absence of $\ell$-determinism  can be decided in $\NP$, which means that $\ell$-determinism can be decided in $\coNP$.
	
	\subparagraph*{Lower bound for $\fTMFA$ and $\fRX$:} We prove this claim with a reduction from  the 3-satisfiability problem, which is well-known to be $\NP$ complete (cf.\ Garey and Johnson \cite{gar:com}). Let $\varphi$ be a formula in 3-conjunctive normal form, with variables $V = \{v_1,\ldots,v_k\}$, $k\geq 1$,  where $\varphi\df \bigland_{i=1}^n \varphi_i$ with $\varphi_i = (\lambda_{i,1} \lor \lambda_{i,2} \lor \lambda_{i,3})$, and $\lambda_{i,j}\in \{v,\neg v\mid v\in V\}$ for all $1\leq i \leq n$ and $1\leq j \leq 3$.
	
	Our goal is to construct a $\beta\in\fRX$ that is not $\ell$-deterministic if and only if there is an assignment to the variables in $V$ that satisfies $\varphi$. As the latter problem is $\NP$-complete, deciding whether a vsf-regex is $\ell$-deterministic is $\coNP$-hard. To this end, we first construct an $\alpha\in\fDRX$ that has a variable $z$ that can only contain $\emptyword$ if $\varphi$ has a satisfying assignment, and that otherwise contains a word from $\{\ta\}^+$.
	
	We then define $\beta\df \alpha\cdot \bind{z_1}{\tb^{\ell}}\bind{z_2}{\rr{z}\:\tb^{\ell}}(\rr{z_1}\ror \rr{z_2})$. Note that $\beta$ is not $\ell$-deterministic if and only if $\emptyword$ can be assigned to $z$; as otherwise, $z$ always contains some word from $\ta^{+}\tb$, which means that $z_1$ and $z_2$ already differ on the first letter.
	
	We implement this by modeling each variable $v_i\in V$ of $\varphi$ with two variables $x_i$ and $\hat{x_i}$ in $\alpha$, where an assignment of $1$ to $v_i$ is modeled by setting $x_i$ to $\emptyword$ and $\hat{x}_i$ to $\ta$, while  assigning $0$ is modeled by setting $x_i$ to $\ta$ and $\hat{x}_i$ to $\emptyword$.  Keeping this in mind, we define $\alpha \df \alpf{init} \cdot \alpf{sat}$, where
	\begin{align*}
		\alpf{init} &\df \alpf{init}^1 \cdots \alpf{init}^k,\\
		\alpf{init}^i &\df \bigl(   (\ta \bind{x_i}{\emptyword}\bind{\hat{x}_i}{\ta}  ) \ror    (\tb \bind{x_i}{\ta}\bind{\hat{x}_i}{\emptyword}  )    \bigr)\\
		\intertext{for $1\leq i \leq k$,  as well as }
		\alpf{sat} &\df \alpf{sat}^1\cdots \alpf{sat}^n \cdot \bind{z}{\rr{y_1}\cdots\rr{y_n}},\\
		\alpf{sat}^i &\df   (\ta \cdot \alpf{lit}^{i,1}) \ror  \Bigl(\tb  \bigl(  (\ta \cdot \alpf{lit}^{i,2})     \ror  (\tb \cdot \alpf{lit}^{i,3})   \bigr)            \Bigr),\\
		\alpf{lit}^{i,j}&\df \begin{cases}
			\bind{y_i}{\rr{x_l}} & \text{if $\lambda_{i,j}=v_l$,}\\
			\bind{y_i}{\rr{\hat{x}_l}} & \text{if $\lambda_{i,j}=\neg v_l$}
		\end{cases}
	\end{align*}
	for $1\leq i \leq n$, and $1\leq j\leq 3$. 
	
	Now, observe that $\alpha$ is deterministic, as each part of a disjunction starts with a unique first letter ($\ta$ or $\tb$); and $\alpha$ is obviously vstar-free. To see that $\alpha$ can assign $\emptyword$ to $z$ if and only if $\varphi$ has a satisfying assignment, we read $\alpha$ from left to right: First, $\alpf{init}$ ensures that for each pair of variables $x_i$ and $\hat{x}_i$, exactly one is bound to $\emptyword$, and the other to $\ta$ (recall that setting $x_i$ to $\emptyword$ corresponds to assigning $1$ to $v_i$). Next, for each clause $\varphi_i$,  $\alpf{sat}^i$ stores the value of one of the literals $\lambda_{i,j}\in\{v_l,\neg v_l\}$ under the chosen assignment  $y_i$, by recalling the appropriate $x_l$ or $\hat{x}_l$. Thus, $y_i$ can only contain $\emptyword$ if the assignment satisfies $\varphi_i$. Finally, all $y_i$ are concatenated, and the result is stored in $z$. Hence, $z$ can only contain $\emptyword$ if all clauses $\varphi_i$ are satisfied, which means that $\varphi$ is satisfied. Likewise, each satisfying assignment can be used to make the appropriate choices in the $\alpf{init}^i$ and $\alpf{sat}^j$ such that $z$ contains $\emptyword$.
	
	Hence, as explained above, $\beta$ is $\ell$-deterministic if and only if $\varphi$ has no satisfying assignment, which means that deciding whether a vstar-free regex is $\ell$-deterministic is $\coNP$-hard. As we already showed the matching upper bound, the problem is $\coNP$-complete.
	
	\subparagraph*{Lower bound for $\TMFA$ and $\RX$:} We show this with a reduction from the intersection emptiness problem for $\DFA$, which is defined as follows: Given  $M_1,\ldots,M_n\in\DFA$ for some $n\geq 2$, is there a $w\in\Sigma^*$ with  $w\in \lang(M_i)$ for all $1\leq i\leq n$? This problem is $\PSPACE$-complete (cf.\ Kozen~\cite{koz:low}). 
	
	As in the case for vstar-free regex, we first construct an $\alpha\in\DRX$ that has a variable $z$ such that it is possible to assign $\emptyword$ to $z$ if and only if the intersection of the $\lang(M_i)$ is not empty (and which is set to a word from $\{\ta\}^+$ otherwise), and then  define 
	$$\beta\df \alpha\cdot \bind{z_1}{\tb^{\ell}}\bind{z_2}{\rr{z}\:\tb^{\ell}}(\rr{z_1}\ror \rr{z_2}).$$
	Again, $\beta$ is not $\ell$-deterministic if and only if $\emptyword$ can be assigned to $z$; as otherwise, $z$ always contains some word from $\ta^{+}\tb$.
	
	Consider  $M_1,\ldots,M_n\in\DFA$ with $M_i = (\Sigma,Q_i,q_{i,0}, \delta_i,F_i)$. In order to simplify the construction, we assume $Q_i = \{q_{i,0},\ldots, q_{i,m}\}$ for some $m\geq 1$, and  $\Sigma\supseteq \{\ta,\tb,\tc_0,\ldots,\tc_{\max(m,n)}\}$. We shall discuss in the proof how the construction can be adapted to a binary alphabet, but using an unbounded alphabet is simpler.
	
	The main idea of the construction is that each state $q_{i,j}$ is represented by a variable $x_{i,j}$, which can take either $\ta$ or $\emptyword$ as values. The regex $\alpha$ contains a subexpression $\alpf{iter}$ which uses a Kleene star to simulate  all $M_i$ in parallel on the same input. In particular, it ensures that $x_{i,j}$ can be set to $\emptyword$ if and only if $M_i$ can enter state $q_{i,j}$ at the current point of the parallel simulation. Using this, we shall see that it is possible to set $z$ to $\emptyword$ if and only if all $M_i$ can reach an accepting state at the same time. We define
	$$\alpha \df \alpf{init}\cdot \bigl(\ta \cdot \alpf{iter}  \bigr)^* \cdot \tb \cdot \alpf{acc}$$
	Before we define the subexpressions of $\alpha$, we observe that the use of the  Kleene star does not affect determinism, as the terminals $\ta$ and $\tb$ signal whether there should be another iteration of the star or not (respectively).
	The subexpressions of $\alpha$ are defined as follows:
	\begin{align*}
		\alpf{init}&\df \alpf{init}^1\cdots \alpf{init}^n\,,\\
		\alpf{init}^i &\df \bind{x_{i,0}}{\emptyword} \bind{x_{i,1}}{\ta}\cdots \bind{x_{i,m}}{\ta}\,,
	\end{align*}		
%			\intertext{for all $1\leq i\leq n$. This represents that each automaton $M_i$ is in its starting state $q_{0,i}$. Furthermore, to simulate the behaviour of the automata $M_i$, we define }
for all $1\leq i\leq n$. This represents that each automaton $M_i$ is in its starting state $q_{0,i}$. Furthermore, to simulate the behaviour of the automata $M_i$, we define 
	\begin{align*}
		\alpf{iter}&\df \bigl(   (\ta\cdot\alpf{step}^{\ta}) \ror (\tb\cdot\alpf{step}^{\tb})     \bigr)\cdot\alpf{switch},\\
		\alpf{step}^{d}&\df  \alpf{step}^{d,1} \cdots \alpf{step}^{d,n},\\
		\alpf{step}^{d,i}&\df \bigror_{0\leq j \leq m} \Bigl(  \tc_j\cdot \bigl(   \bigror_{\substack{0\leq l \leq m,\\ \delta_i(q_{l,i},d)=q_{j,i}}}  \tc_l\cdot  \bind{\hx_{i,j}}{\rr{x_{i,l}}}      \bigr)\cdot \alpf{dump}^{i,j}     \Bigr),\\
		\alpf{dump}^{i,j}&\df \bind{\hx_{i,0}}{\ta} \cdots \bind{\hx_{i,j-1}}{\ta} \bind{\hx_{i,j+1}}{\ta}   \cdots \bind{\hx_{i,m}}{\ta},\\
		\alpf{switch}&\df \alpf{switch}^1 \cdots \alpf{switch}^n,\\
		\alpf{switch}^i&\df \bind{x_{i,0}}{\rr{\hx_{i,0}}}\cdots \bind{x_{i,m}}{\rr{\hx_{i,m}}}
	\end{align*}
	for $d\in\{\ta,\tb\}$, $1\leq i \leq n$,  and $0\leq j \leq m$. Each $\alpf{step}^{d,i}$ picks a pair of states $q_{i,j}$ and  $q_{i,l}$ of $M_i$, such that $\delta(q_{i,l},d)=q_{i,j}$. Less formally, $q_{i,j}$ is the successor state of $q_{i,l}$ on input $d$. The temporary variable $\hx_{i,j}$ is then set to the content of $x_{i,l}$, while all other temporary variables $\hx_{i,j'}$ with $j'\neq j$ are set to $\ta$, using $\alpf{dump}^{i,j}$. 
	
	Hence, each iteration of $\alpf{step}^d$ can set $\hx_{i,j}$ to $\emptyword$ if and only if $q_{i,j}$ is the successor state on input $d$ for a state $q_{i,l}$ such that $x_{i,l}$ contains $\emptyword$. In other words, each iteration of $\alpf{iter}$ uses a subexpression $\alpf{step}^d$ to simulates all $M_i$ in parallel on the input letter $d\in\{\ta,\tb\}$,  and $\alpf{switch}$ sets each $x_{i,j}$ to the same content as its corresponding temporary variable $\hx_{i,j}$. 
	
	As an aside, note that it is possible to adapt the construction to a binary terminal alphabet. To do so, one replaces the disjunctions over the terminals $\tc_j$ and $\tc_l$ with nested disjunctions over $\ta$ and $\tb$, as in the expressions $\alpf{sat}^i$ in the proof for the lower bound for $\fRX$ above. 
	
	Regardless of the number of terminal letters, we define the remaining subexpressions as follows:
	\begin{align*}
		\alpf{acc}&\df \alpf{acc}^1 \cdots \alpf{acc}^n \cdot \bind{z}{\rr{y_1}\cdots \rr{y_n}},\\
		\alpf{acc}^i&\df \bigror_{\substack{0\leq j \leq m, \\ q_{i,j}\in F_i}} \tc_j\cdot \bind{y_i}{\rr{x_{i,j}}}
	\end{align*}
	for all $1\leq i\leq n$. Again, this disjunction can be adapted to a binary terminal alphabet, as described above. 
	
	It is possible to set $z$ to $\emptyword$ if and only if every $y_i$ can be set to $\emptyword$. In turn, this is possible if and only if for every $M_i$, there is an accepting state $q_{j,i}$ such that $x_{j,i}$ can be set to $\emptyword$. As established above, $\alpf{iter}$ ensures that this is only possible if these states can be reached by simulating all $M_i$ in parallel on the same input. Hence, $z$ can be set to $\emptyword$ if and only if the intersection of all $\lang(M_i)$ is not empty.
	
	As discussed above,  $\alpha$ is deterministic (we discussed the use of the Kleene star above, and all branches disjunctions start with characteristic letters). Hence, $\beta$ is  $\ell$-deterministic if and only if the intersection of the $\lang(M_i)$ is  empty. As $\beta$ can obviously be constructed in polynomial time, this shows that deciding whether a regex is $\ell$-deterministic is $\PSPACE$-hard. As we already established the matching upper bound, this concludes the whole proof.
\end{proof}
Furthermore, while this definition of $\ell$-determinism is only concerned with choices between different variables, it is also possible to adapt the notion of $1$-determinism to include the distinction between a variable and a terminal. For example, the expression $\bind{x}{\ta^+}\tb (\tb\ror \rr{x})^*$ is not deterministic; but as the content of $x$ always starts with $\ta$, such cases could be considered $1$-deterministic. Propositions~\ref{prop:ldet} and~\ref{prop:ldetcomplexity} can be directly adapted to this extended notion of $1$-determinism.

%Hence, while we can decide efficiently whether a $\TMFA$ or a regex is deterministic, detecting $\ell$-determinism is costly, even for $\ell=1$. The same holds if we adapt the definition to distinguish between variables and terminals (see Section~A.21 in the full version of the paper~\cite{fre:drx}).
%\todomcom{Removed paragraph that became useless due to things being copied from Appendix. We really both need to do a final careful proofreading run.}
%!TEX root=det.tex
\section{Conclusions and Further Directions}\label{sec:conclusions}
Based on $\TMFA$, an automaton model for regex, we extended the notion of determinism from regular expressions to regex. Although the resulting language class cannot express all regular languages, it is still rich; and by using a generalization of the Glushkov construction, deterministic regex can be converted into a $\DTMFA$, and the membership problem can then be solved quite efficiently. Although we did not discuss this, the construction is also compatible with the Glushkov construction with counters by Gelade, Gyssens, Martens~\cite{gel:reg}. Hence, one can add  counters to $\DRX$ and $\DTMFA$ without affecting the complexity of  membership. 
 
Many challenging questions remain open, for example: Can the more advanced results for $\DREG$ be adapted to $\DRX$, i.\,e., can $\glush(\alpha)$ be computed more efficiently (as in ~\cite{bru:reg,pon:new}), or is it even possible, like in~\cite{gro:det}, to avoid computing $\glush(\alpha)$? 
Is effective minimization possible for $\DTMFA$ or $\DRX$?   
Is it decidable whether a $\DTMFA$ defines a $\DRX$-language?
Are inclusion and equivalence decidable for $\DRX$ or $\DTMFA$? 
Can determinism be generalized to larger classes of regex without making the membership problem intractable?

 \section*{Acknowledgements}
The authors thank Wim Martens for helpful feedback,  Matthias Niewerth, for pointing out that~$v_n$ must be a factor of~$p_n$ in the jumping lemma, and Martin Braun, for creating a library and tool for $\DRX$ and $\DTMFA$ (available at~\cite{bra:moar}).

%\bibliographystyle{plain}
%\bibliography{det_arxiv}

\end{document}